\theoremstyle{plain}
\newtheorem{thm}{Theorem}[section]
\newtheorem{prop}[thm]{Proposition}
\newtheorem{lem}[thm]{Lemma}
\theoremstyle{definition}
\newtheorem{defn}[thm]{Definition}
\newtheorem{exmp}[thm]{Example}
\theoremstyle{remark}
\newcommand{\tr}[2]{\text{Tr}_{#1}\left(#2\right)}
\newcommand{\ket}[1]{| #1 \rangle}
\newcommand{\bra}[1]{\langle #1 |}
\newcommand{\inprod}[2]{\bra{#1}#2\rangle}
\newcommand{\ketbra}[1]{\ket{#1}\bra{#1}}
\newcommand{\identity}{\mathbb{I}}
\newcommand{\hilbert}{\mathcal{H}}
\newcommand{\Hi}{\mathcal{H}}
\newcommand{\trn}[1]{\text{Tr}_{#1}}
\newcommand{\complex}{\mathbb{C}}
\newcommand{\neigh}{\mathcal{N}}
\begin{document}

\title{Exact stabilization of entangled states in finite time by dissipative quantum circuits}

\author{Peter D. Johnson}
\affiliation{\mbox{Department of Physics and Astronomy, Dartmouth 
College, 6127 Wilder Laboratory, Hanover, NH 03755, USA}}
\affiliation{\mbox{Department of Chemistry and Chemical Biology
Harvard University, 12 Oxford Street, Cambridge, MA 02138, USA}}

\author{Francesco Ticozzi}
\affiliation{Dipartimento di Ingegneria dell'Informazione,
Universit\`a di Padova, via Gradenigo 6/B, 35131 Padova, Italy} 
\affiliation{\mbox{Department of Physics and Astronomy, Dartmouth 
College, 6127 Wilder Laboratory, Hanover, NH 03755, USA}}

\author{Lorenza Viola}
\affiliation{\mbox{Department of Physics and Astronomy, Dartmouth 
College, 6127 Wilder Laboratory, Hanover, NH 03755, USA}}

\begin{abstract}
Open quantum systems evolving according to discrete-time dynamics are capable, unlike continuous-time counterparts, to converge to a stable equilibrium in finite time with {\em zero error}. We consider dissipative quantum circuits consisting of sequences of quantum channels subject to specified quasi-locality constraints, and determine conditions under which stabilization of a pure multipartite entangled state of interest may be exactly achieved in finite time. Special emphasis is devoted to characterizing scenarios where finite-time stabilization may be achieved {\em robustly} with respect to the order of the applied quantum maps, as suitable for unsupervised control architectures. We show that if a decomposition of the physical Hilbert space into virtual subsystems is found, which is compatible with the locality constraint and relative to which the target state factorizes, then robust stabilization may be achieved by independently cooling each component. We further show that if the same condition holds for a scalable class of pure states, a continuous-time quasi-local Markov semigroup ensuring {\em rapid mixing} can be obtained. Somewhat surprisingly, we find that the commutativity of the canonical parent Hamiltonian one may associate to the target state does {\em not} directly relate to its finite-time stabilizability properties, although in all cases where we can guarantee robust stabilization, a (possibly non-canonical) commuting parent Hamiltonian may be found. Beside graph states, quantum states amenable to finite-time robust stabilization include a class of universal resource states displaying two-dimensional symmetry-protected topological order, along with tensor network states obtained by generalizing a construction due to Bravyi and Vyalyi. Extensions to representative classes of mixed graph-product and thermal states are also discussed.
\end{abstract}

\pacs{03.65.Ud, 03.67.-a, 03.65.Ta, 03.67.Mn}

\date{\today}
\maketitle 

\section{Introduction}
\label{sec:intro}

Convergence of a dynamical system to a stable equilibrium point can only arise from
irreversible, dissipative behavior.
For quantum dynamics, characterizing the stability properties of equilibrium states of both naturally 
occurring and controlled dissipative evolutions is a fundamental problem, whose significance ranges 
from mathematical aspects of open-quantum system theory \cite{davies,alicki-lendi} and non-equilibrium 
quantum statistical mechanics \cite{JaksicReview,EisertReview}, to dissipative quantum control and quantum 
engineering \cite{Zoller1996,Viola2001,Rabitz2007,TicozziTutorial}. Within quantum information processing 
(QIP) \cite{Nielsen-Chuang:10}, a main motivation for investigating {\em quantum stabilization} problems is 
provided by the key task of preparing a target quantum state from an arbitrary initial condition. 
Notably, highly entangled pure states are a resource for measurement-based quantum computation 
\cite{Raussendorf2003, Miyake2011,Wei2015} as well as quantum communication technologies \cite{gisin}; 
likewise, the preparation of both ground and thermal (Gibbs) states of physically relevant Hamiltonians 
is a prerequisite for quantum simulation algorithms \cite{Lloyd1996,Ward2009,Temme2011,Brandao2016}.  
From a condensed-matter physics standpoint, methods for preparing many-body quantum states may unlock 
new possibilities for accessing exotic phases of synthetic quantum matter in controlled laboratory 
settings \cite{Diehl2008,Diehl2013}.

Compared to the standard approach to pure-state preparation \cite{Nielsen-Chuang:10} -- namely, 
the initialization of the system in a fiducial product state via a fixed (necessarily dissipative) ``cooling'' 
mechanism, followed by a unitary quantum circuit -- the use of {\em tailored} dissipative dynamics affords 
two important practical advantages: not only is precise initialization no longer needed, but, any ``transient'' 
noise effect is effectively re-absorbed without the need for active intervention, as long as the target state is 
globally attractive. 
Crucially, the {\em invariance} requirement that the dissipative dynamics must obey for the target state to be 
not only prepared but, additionally, stabilized, allows for a further important advantage: once reached, the  
desired state may be accessed at any time afterward -- which is especially 
beneficial in scenarios where the retrieval time is not (or cannot) be precisely specified in advance. 
As a result, methods for engineering dissipation are garnering increasing attention across different experimental 
QIP platforms.  In particular, steady-state entanglement generation has been 
successfully demonstrated in systems as diverse as atomic ensembles \cite{Cirac2011}, trapped ions 
\cite{Barreiro-Nature:11,Yin2013}, superconducting qubits \cite{Devoret,Siddiqi}, and electron-nuclear spins in 
diamond \cite{Wrachtrup2016}.

It is important to appreciate that the problem of designing stabilizing dynamics is both physically relevant and 
mathematically non-trivial only in the presence of {\em constraints} on the available dynamical resources: 
if arbitrary completely-positive trace-preserving (CPTP) quantum maps \cite{kraus,Nielsen-Chuang:10} 
are able to be implemented, then any desired quantum state (pure or mixed) may be made 
invariant and attractive in a single time step \cite{remarkOnestep}.
Similarly, for continuous-time Markovian quantum dynamics described by a Lindblad master 
equation \cite{lindblad,alicki-lendi}, one may show that application of a time-independent Hamiltonian 
together with a single noise operator suffices to achieve stabilization in the generic case in principle 
\cite{ticozzi-markovian,ticozzi-schirmer}.
For multipartite quantum systems of relevance to both QIP and statistical mechanics, an important constraint 
stems from the fact that physical Hamiltonians and noise (Kraus or Lindblad) operators typically 
represent couplings that affect non-trivially a ``small'' number of subsystems at a time; 
mathematically, they are required to be {\em quasi-local} (QL) relative to the underlying tensor-product 
decomposition, in an appropriate sense. 
To date, significant theoretical effort has been devoted to investigating  
QL state stabilization problems under continuous-time Lindblad dynamics, 
both in specific physically motivated settings -- see e.g. \cite{Kraus2008,Diehl2008,Cirac2009,Sorensen2011,DallaTorre,Clerk,Reeb2016,WeimerRydberg,WeimerHubbard,Hartmann,Znidaric} 
for a partial list of contributions -- and within a general system-theoretic framework \cite{Ticozzi2012,TicozziQIC2014,switching,Johnson2016}.

In this work, we consider the problem of stabilizing a pure quantum state of interest using 
{\em time-dependent discrete-time dynamics}, 
as implemented by sequences of CPTP maps, subject to a specified QL constraint. 
Such a setting is most natural from a QIP perspective, as it embodies  
a dissipative quantum circuit picture that directly generalizes the unitary quantum circuit model 
and is ideally suited for ``digital'' open-system quantum simulation \cite{Barreiro-Nature:11,Blatt2013,Pan}; 
further to that, it is also fundamentally more general: it is well known that there exists indivisible CPTP
dynamics, which {\em cannot} be obtained from exponentiating continuous time-dependent Markovian 
dynamics \cite{WolfCirac}, as also emphasized in recent approaches to quantum channel construction \cite{Liang2016}.  Most importantly to our scope,
discrete-time dynamics support a different type of convergence to equilibrium with respect to continuous-time 
counterparts: {\em exact} convergence in {\em finite time}, as opposed to asymptotic convergence -- in which case 
the target state can be reached only {\em approximately} at any finite time and which, as we shall see, is the 
only possibility for Lindblad dynamics. 

While finite-time (or ``dead-beat'') controllers have been extensively 
analyzed and exploited in the context of classical digital control systems \cite{bellman,philips}, 
they have received far less attention in quantum engineering as yet.  
A general scheme for pure-state stabilization in finite time has been proposed in \cite{BT-TAC:10}; 
however, no QL constraint is explicitly incorporated and feedback capabilities are assumed. 
Building on our complementary analysis of asymptotic convergence properties of time-dependent 
sequences of CPTP maps in \cite{Ticozzi-alternating}, our main focus here is open-loop QL 
{finite-time stabilization} (FTS) of a target quantum state: 
What ensures that stabilization may be attained in finite time under the prescribed QL constraint?  
Further to that, what properties may enable FTS to be achieved {\em robustly}, 
in a way that is independent upon the order of implementation of the applied CPTP maps?  Clearly, the 
possibility of robust finite-time stabilization (RFTS) is especially appealing from both a control-theoretic 
and a practical perspective, as it allows for ``unsupervised'' control implementation or, equivalently, for the  
dissipative quantum circuit to be applied ``asynchronously'' -- thus recovering a desirable feature of 
time-independent stabilization schemes.

With the above questions in mind, the content of the paper and our main results may be summarized as 
follows. In Sec. \ref{sec:prelim} we introduce the necessary background and mathematical tools, with emphasis 
on spelling out the relevant stability notions.  In particular, we explicitly show (see Sec. \ref{sub:nogo}) that no 
continuous-time Lindblad master equation can converge exactly to a globally attractive equilibrium in finite time. 
In Sec. \ref{sec:fts} we develop both necessary and sufficient conditions for determining if a target state can, 
in principle, be FT-stabilized. In the case that a state is verified to be FTS, we explicitly demonstrate the 
existence of QL stabilizing dynamics, which entails the repeated application of a fixed cooling map, suitably 
interspersed with unitary dynamics.  We stress that, despite superficial similarity, FTS bears important differences
from dissipative quantum circuits implementing ``sequential generation'' \cite{Schon2005}, whereby the 
system of interest is sequentially coupled to an ancilla, and a matrix product state representation of the target 
state is used to obtain a sequence of CPTP maps as the ancilla is traced over: not only does the joint system plus 
ancilla pair require proper (pure-state) initialization, but no invariance is guaranteed in general. 
Rather, our FTS scheme may be thought of as a QL generalization of the ``splitting-subspace'' 
approach introduced in Ref. \cite{BT-TAC:10}.   

Sections \ref{sec:necessary} and \ref{sec:sufficient}, which form the core of the paper, are devoted to 
presenting several necessary and, respectively, sufficient conditions for RFTS. In particular, we show how  
RFTS requires the correlations present in the target state to be restricted in mathematically precise ways. 
Interestingly, while our necessary RFTS conditions bear similarity with criteria on clustering of correlations 
which have recently been proved to ensure efficient preparation of thermal (Gibbs) states using dissipative QL 
circuits \cite{Brandao2016disc}, a main difference is the invariance requirement on the target, which is central 
in our approach. As emphasized above and in \cite{Ticozzi-alternating}, one implication of the invariance 
property is that repeating the stabilization protocol (or even portions of it) may be used as a means to 
maintain the system in the target state over time, if so desired.  
In developing sufficient conditions for RFTS, we leverage the observation that product states 
are (trivially) RFTS to seek a description of the target state in terms of a {\em virtual subsystem decomposition}  
\cite{Knill2000,Zanardi2001}, relative to which it may factorize, in a sense that we make precise. 
We find that, counterintuitively, a pure state may be FT-stabilizable -- albeit {\em not} RFTS -- even when its 
``natural'', frustration-free QL parent Hamiltonian is {\em non-commuting}; at the same time, we also uncover 
examples of states, which {\em are} RFTS and whose natural parent Hamiltonian is non-commuting -- albeit in those 
cases a different, {\em commuting} parent Hamiltonian may also be identified. Beside providing conditions that ensure 
the RFTS task to be possible for a given target and locality constraints, our results may alternatively be 
used to construct classes of non-trivially entangled target states that are guaranteed to be RFTS for a given 
QL constraint.  In particular (see Sec. \ref{sub:nonconstructive}), we introduce a class of tensor network states 
\cite{Orus2014} that are RFTS relative to a QL structure determined by the underlying graph, upon generalizing a 
construction due to Bravyi and Vyalyi \cite{Bravyi2005} {\em beyond} the original two-body setting. While our 
primary focus throughout the present analysis is on pure target states, we isolate in Sec. \ref{sub:mixed} those 
results that are directly applicable or extend to mixed target states; in particular, we exhibit a class of RFTS Gibbs states. 

In Sec. \ref{sec:efficiency} we explore the efficiency of the proposed FTS schemes in both the non-robust and 
robust settings, by providing, in particular, an upper bound to the circuit complexity of RFTS protocols for QL 
constraints defined on a lattice.  Finally, since FT convergence is a particularly strong form of convergence, 
it is natural to explore the extent to which it may be related to ``rapidly mixing'' QL continuous-time dynamics, 
which is able to efficiently prepare an equilibrium state \cite{Reeb2012,Temme2014,Lucia2015}.  
In Sec. \ref{sec:rapid}, we indeed show that as long as the sufficient conditions for a target state to be RFTS are obeyed, there always exists a QL Liouvillian which is rapidly mixing with respect to the target. Nonetheless, it is possible for a state to admit rapidly mixing dynamics that asymptotically prepares it, while violating the necessary conditions for RFTS.  We conclude in Sec. \ref{sec:end} by highlighting open problems and directions for further investigation.  In order to progressively build insight and maintain continuity in the presentation flow, we have emphasized illustrative examples to the extent possible and deferred 
all of the technical proofs to an appendix at the end of the paper.

%%%%%%%%%%%%%%%%%%%%%%%%%%%%%%%%%%%%%%%%%%%%%%%%%%%%%%%

\section{Preliminaries}
\label{sec:prelim}

\subsection{Discrete-time quasi-local Markov dynamics }
\label{sub:dynamics}

We consider a finite-dimensional multipartite target system $S$, consisting of $N$ 
distinguishable subsystems and described on a Hilbert space $\hilbert\simeq \bigotimes_{i=1}^N 
\hilbert_i$, with dim$(\hilbert)\equiv D$ and each $\hilbert_i \simeq \complex^{d_i}$. 
We shall denote by $\mathcal{B}(\hilbert)$ the space of all linear operators on $\hilbert$. 
The state of $S$ at each time is a density operator in the space of positive-semidefinite, trace-one linear operators, 
denoted $\mathcal{D}(\hilbert) \subset \mathcal{B}(\hilbert)$. 
We assume the time evolution of $S$ to be modeled by non-homogeneous discrete-time Markov dynamics. 
Such dynamics are represented by sequences of CPTP maps $\{\mathcal{E}_t\}_{t\geq0}$ \cite{kraus}, 
whereby the evolution of the state $\rho_t$ from step $t$ to $t+1$ is given by
\begin{equation}
\rho_{t+1}=\mathcal{E}_{t}(\rho_t), \quad t =0,1,2,\ldots,
\label{eq:dyn}
\end{equation}
and we further denote the evolution map, or propagator, from $s$ to $t$ as
\begin{equation}
\mathcal{E}_{t,s} \equiv \mathcal{E}_{t-1}\circ\mathcal{E}_{t-2}\circ \ldots \circ \mathcal{E}_s, \quad t > s \geq 0. 
\label{eq:prop}
\end{equation}
In practice, a variety of constraints may restrict the available control, hence the set of possible quantum maps.
In particular, as mentioned, we require that each map acts \emph{quasi-locally}. 
Following our previous work \cite{Ticozzi2012,TicozziQIC2014,Johnson2016,Ticozzi-alternating}, the notion of 
quasi-locality we consider may be formally described by a \emph{neighborhood structure}, $\neigh$, on the multipartite Hilbert space. That is, $\neigh$ is specified by a list of subsets of indexes, $\neigh_k \subseteq \{1,\ldots,N\}$, for $k=1,\ldots,K$, encompassing a variety of physically relevant ``coupling topologies'' between subsystems (see also Fig. \ref{fig:nn}).

\begin{figure}[t]
\begin{center}
\includegraphics[width=0.85\columnwidth]{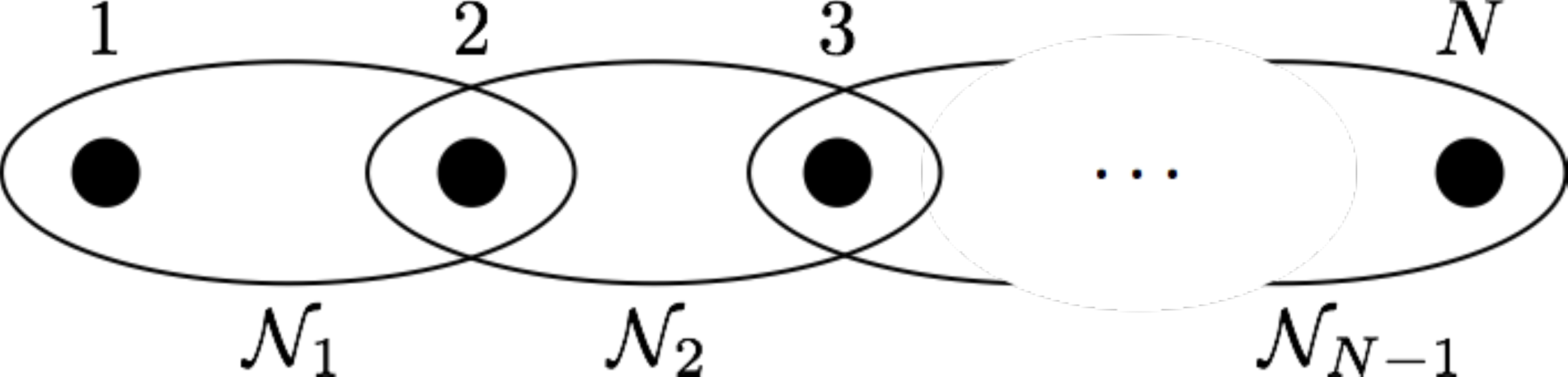}
\vspace*{-1mm}
\caption{Neighborhood structure corresponding to two-body nearest-neighbor (NN) couplings 
in one dimension (1D), ${\cal N}_j \equiv \{j, j+1\}$, $j=1,\ldots, N-1$.
Note that, except for the two boundary neighborhoods, associated to $j=1$ and $j=N-1$, a three-body neighborhood 
structure, ${\cal N}_j \equiv \{j-1,j, j+1\}$ corresponds instead to graph states on the line, 
as described in Example \ref{exmp:graphstate}.  }
\label{fig:nn} 
\end{center}
\end{figure}

\begin{defn}
\label{def:neigh-map}
A CPTP map $\mathcal{E}$ is a \emph{neighborhood map} with respect to a neighborhood  ${\neigh}_k$ if
\begin{equation}
\mathcal{E}=\mathcal{E}_{\neigh_k}\otimes {\mathcal I}_{\overline{\neigh}_k},
\end{equation}
where $\mathcal{E}_{\neigh_k}$ is the restriction of $\mathcal{E}$ to operators on the subsystems in $\neigh_k$ 
and ${\mathcal I}_{\overline{\neigh}_k}$ is the identity map for operators on $\hilbert_{\overline{\neigh}_k}$.  
The sequence $\{ \mathcal{E}\}_{t\geq 0}$ is \emph{quasi-local} with respect to a neighborhood structure $\neigh$ if, 
for each $t$, $\mathcal{E}_t$ is a neighborhood map for some $\neigh_k \in\neigh$.
\end{defn} 

A useful tool for analyzing the neighborhood-wise features of a quantum state is the ``Schmidt span''
of a linear object (vector, operator, or tensor) \cite{Johnson2016}:

\begin{defn}
Given the tensor product of two finite-dimensional inner-product spaces $W_1\otimes W_2$
and a vector $v\in W_1\otimes W_2$ with Schmidt decomposition $v=\sum_i s_i v_1^i\otimes v_2^i$, 
the {\em Schmidt span} of $v$ with respect to $W_1$ is $\Sigma_1(v)\equiv \textup{span}\{v_1^i\}$.
The corresponding \emph{extended Schmidt span} is defined as 
$\overline{\Sigma}_{1}(v)\equiv\Sigma_1(v)\otimes W_2$.
\end{defn}

\noindent  
We will mostly make use of the extended Schmidt span of the target state $\ket{\psi}$ with respect to neighborhood Hilbert spaces, namely, $\overline{\Sigma}_{\neigh_k}(\ket{\psi})=\Sigma_{\neigh_k}(\ket{\psi})\otimes\hilbert_{\overline{\neigh}_k}$.

\subsection{Convergence notions}

The task we focus on is the design of dynamics which drive $S$ towards a 
target state, subject to specified QL constraints. The following definitions provide the relevant 
stability notions in the Schr\"{o}dinger picture \cite{remarkHeisenberg}:

\begin{defn}
\label{def:gas}
A state $\rho\in \mathcal{D}(\hilbert)$ is 
\emph{globally asymptotically stable} (GAS) for the dynamics described by $\{ \mathcal{E}_t \}_{t\geq 0}$ 
if it is {\em invariant} and {\em attractive}, that is, if 
\begin{eqnarray}
&& \; \mathcal{E}_t (\rho) = \rho, \quad \forall t \geq 0, 
\label{inv} \\
&& \lim_{t\rightarrow \infty}|\mathcal{E}_{t,s}(\sigma)-\rho |=0,\quad \forall \sigma \in  \mathcal{D}(\hilbert), \forall s\geq 0. 
\label{attract}
\end{eqnarray}
\end{defn}

Following \cite{Ticozzi-alternating}, we define a notion of GAS with respect to the QL discrete-time dynamics 
given in Eqs. (\ref{eq:dyn})-(\ref{eq:prop}):

\begin{defn}
A target state $\rho$ is discrete-time \emph{quasi-locally stabilizable} (QLS) with respect to a neighborhood 
structure $\neigh$ if there exists a sequence $\{\mathcal{E}_t\}_{t\geq 0}$ of neighborhood maps rendering $\rho$ GAS.
\end{defn}

A main result in \cite{Ticozzi-alternating} (Theorem 8) establishes the following necessary and sufficient condition 
for determining whether a target {\em pure} state is QLS. Adapting the notation to the present context, we have:

\begin{thm} [\cite{Ticozzi-alternating}]
\label{thm:asymptotic}
A target pure state $\rho=\ketbra{\psi}$ is discrete-time QLS if and only if
\begin{equation}
\label{eq:qls}
\textup{span}(\ket{\psi})=\bigcap_{k}\overline{\Sigma}_{\neigh_k}(\ket{\psi}).
\end{equation}
\end{thm}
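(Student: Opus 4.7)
The plan is to prove both directions by leveraging a structural lemma that I would establish first: a neighborhood map $\mathcal{E} = \mathcal{E}_{\neigh_k}\otimes \mathcal{I}_{\overline{\neigh}_k}$ satisfies $\mathcal{E}(\ketbra{\psi})=\ketbra{\psi}$ if and only if $\mathcal{E}_{\neigh_k}$ acts as the identity on every operator in $\mathcal{B}(\Sigma_{\neigh_k}(\ket{\psi}))$. This follows by expanding $\mathcal{E}(\ketbra{\psi})=\ketbra{\psi}$ in a Schmidt decomposition of $\ket{\psi}$ relative to the bipartition $\neigh_k\,|\,\overline{\neigh}_k$: linear independence of the Schmidt vectors on the $\overline{\neigh}_k$ side forces $\mathcal{E}_{\neigh_k}$ to fix each rank-one outer product built from the $\neigh_k$-Schmidt basis, hence the whole algebra $\mathcal{B}(\Sigma_{\neigh_k}(\ket{\psi}))$. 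An immediate corollary is that any such $\mathcal{E}$ fixes every density operator whose support lies in $\overline{\Sigma}_{\neigh_k}(\ket{\psi})=\Sigma_{\neigh_k}(\ket{\psi})\otimes\hilbert_{\overline{\neigh}_k}$.

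For necessity, assume $\{\mathcal{E}_t\}_{t\geq 0}$ is a QL sequence rendering $\ketbra{\psi}$ GAS. The invariance requirement in Definition \ref{def:gas} forces each $\mathcal{E}_t$ to fix $\ketbra{\psi}$, and the lemma then implies that each $\mathcal{E}_t$ fixes every density operator supported in the corresponding $\overline{\Sigma}_{\neigh_{k_t}}(\ket{\psi})$. Consequently, every density operator supported in $V := \bigcap_k \overline{\Sigma}_{\neigh_k}(\ket{\psi})$ is stationary under the whole propagator $\mathcal{E}_{t,0}$. If $V$ properly contained $\textup{span}(\ket{\psi})$, one could pick a unit vector $\ket{\phi}\in V$ orthogonal to $\ket{\psi}$; then $\ketbra{\phi}$ would be a stationary state distinct from $\ketbra{\psi}$, contradicting attractivity. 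Hence $V=\textup{span}(\ket{\psi})$.

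For sufficiency, I would exhibit an explicit cyclic QL sequence. For each $\neigh_k$, let $P_k$ project onto $\Sigma_{\neigh_k}(\ket{\psi})$, fix a unit vector $\ket{\eta_k}\in\Sigma_{\neigh_k}(\ket{\psi})$, and choose an orthonormal basis $\{\ket{e_\alpha}\}$ of the complement of $\Sigma_{\neigh_k}(\ket{\psi})$ in $\hilbert_{\neigh_k}$. The CPTP map on $\hilbert_{\neigh_k}$ with Kraus operators $\{P_k\}\cup\{\ket{\eta_k}\bra{e_\alpha}\}_\alpha$ acts as the identity on $\mathcal{B}(\Sigma_{\neigh_k}(\ket{\psi}))$ while draining any overlap with the complement into $\ket{\eta_k}$; tensoring with $\mathcal{I}_{\overline{\neigh}_k}$ yields a neighborhood map $\mathcal{F}_k$ whose fixed-point set is exactly the set of density operators supported in $\overline{\Sigma}_{\neigh_k}(\ket{\psi})$. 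Cycling $\mathcal{F}_1,\ldots,\mathcal{F}_K$ repeatedly, the common fixed-point set is the set of density operators supported in $\bigcap_k \overline{\Sigma}_{\neigh_k}(\ket{\psi}) = \textup{span}(\ket{\psi})$, which contains only $\ketbra{\psi}$. Convergence to this unique fixed state then follows from the general asymptotic-convergence machinery for cyclic compositions of CPTP maps sharing a common invariant state developed in \cite{Ticozzi-alternating}.

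The principal obstacle is the Schmidt-span lemma at the outset; once it is in hand, the necessity direction reduces to straightforward support bookkeeping across the intersection of extended Schmidt spans, and the sufficiency direction amounts to assembling a standard cooling channel per neighborhood together with an invocation of the cyclic-convergence theorem. A subtlety worth flagging is that each individual $\mathcal{F}_k$ does not deposit support into $V$ in one step: applying $\mathcal{F}_k$ contracts support only into $\overline{\Sigma}_{\neigh_k}(\ket{\psi})$, which a subsequent $\mathcal{F}_{k'}$ acting on an overlapping neighborhood may partially undo. Convergence therefore emerges only asymptotically under repeated cycles, which is precisely why the alternating-sequence framework, rather than a one-shot contraction estimate, is needed to close the argument.
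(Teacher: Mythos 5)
First, a point of reference: the paper does not actually prove Theorem \ref{thm:asymptotic} --- it is imported verbatim as Theorem 8 of \cite{Ticozzi-alternating}, so there is no in-paper proof to compare against. Judged on its own terms, your necessity direction is correct, and it coincides with the reasoning the paper does spell out in the first paragraph of its proof of Theorem \ref{thm:ftsnec}: your Schmidt-span lemma (invariance of $\ketbra{\psi}$ under a neighborhood map forces that map to act as the identity on all operators supported in $\overline{\Sigma}_{\neigh_k}(\ket{\psi})$) is right, and the conclusion that any $\ket{\phi}\in\bigcap_k\overline{\Sigma}_{\neigh_k}(\ket{\psi})$ outside $\textup{span}(\ket{\psi})$ yields a spurious stationary state contradicting attractivity is exactly the standard argument.

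The sufficiency direction has a genuine gap at its final step. The inference ``the maps $\mathcal{F}_k$ share $\ketbra{\psi}$ as an invariant state and their common fixed-point set is $\textup{span}(\ket{\psi})$, hence cycling them converges'' is false as a general principle about CPTP maps, even idempotent ones of exactly the form you build. Take a three-dimensional space with basis $\ket{0},\ket{1},\ket{2}$, target $\ket{\psi}=\ket{0}$, and subspaces $S_1=\textup{span}\{\ket{0},\ket{1}\}$, $S_2=\textup{span}\{\ket{0},\ket{2}\}$ with projectors $P_1,P_2$. Your recipe with $\ket{\eta_1}=\ket{1}$, $\ket{\eta_2}=\ket{2}$ gives $\mathcal{F}_1(X)=P_1XP_1+\ket{1}\bra{2}X\ket{2}\bra{1}$ and $\mathcal{F}_2(X)=P_2XP_2+\ket{2}\bra{1}X\ket{1}\bra{2}$: both are CPTP idempotents fixing $\ketbra{0}$, their fixed-point sets are precisely the states supported on $S_1$ and $S_2$, and $S_1\cap S_2=\textup{span}(\ket{0})$ --- yet $\mathcal{F}_2\circ\mathcal{F}_1(\ketbra{2})=\mathcal{F}_2(\ketbra{1})=\ketbra{2}$, so the cycle has a spurious fixed point and never converges. (Choosing $\ket{\eta_k}=\ket{0}$ repairs this instance, which shows both that the freedom you allow in picking $\ket{\eta_k}$ is already too much and that convergence cannot be read off from the fixed-point sets alone.) The ``machinery'' of \cite{Ticozzi-alternating} you invoke is not a theorem about arbitrary CPTP maps with a common invariant state; their alternating-projection results need additional hypotheses (self-adjointness of the maps with respect to a suitable inner product, or a direct spectral/Lyapunov analysis exploiting the tensor structure of the extended Schmidt spans), and verifying such hypotheses for a concrete family of neighborhood maps is the substantive content of their Theorem 8. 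To close your argument you would need either to check those hypotheses for your $\mathcal{F}_k$, or to prove directly that the cycle $\mathcal{F}_K\circ\cdots\circ\mathcal{F}_1$ has no peripheral spectrum besides the eigenvalue $1$ of $\ketbra{\psi}$ --- for instance by combining the monotonicity $\bra{\psi}\mathcal{F}_k(\rho)\ket{\psi}\geq\bra{\psi}\rho\ket{\psi}$ (which your maps do satisfy, since $\mathcal{F}_k^{\dagger}(\ketbra{\psi})\geq\ketbra{\psi}$) with a strict-increase argument over a full cycle; the qutrit example shows monotonicity alone does not suffice.
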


While the above characterizes {\em asymptotic} convergence, our aim in this work is to determine further 
conditions on the target state which enable \emph{finite-time} QL stabilization, in a sense 
made precise in the following:

\begin{defn}
A target state $\rho$ is quasi-locally \emph{finite-time stabilizable} (FTS) in $T$ steps 
with respect to a neighborhood structure $\neigh$ if there exists a finite sequence 
$\{\mathcal{E}_t\}_{t=1}^{T}$ of neighborhood maps satisfying
\begin{eqnarray}
&&\hspace*{-4mm} \mathcal{E}_t (\rho) = \rho, \quad  t =1, \ldots, T, 
\label{invFT} \\
&& \hspace*{-5mm} \mathcal{E}_{T,1}(\sigma)  
=(\mathcal{E}_{T}\circ \mathcal{E}_{T-1} \circ \ldots \circ \mathcal{E}_1) (\sigma)
= \rho, \; \forall \sigma \in  \mathcal{D}(\hilbert),  
\label{attractFT}
\end{eqnarray}
where $T\geq 0$ is the smallest integer for which attractivity holds. 
Furthermore, $\rho$ is \emph{robustly finite-time stabilizable} (RFTS) if $(\ref{attractFT})$ 
holds for any permutation of the $T$ maps.
\end{defn}

{\subsection{No-go for exact finite-time convergence with Lindblad dynamics}
\label{sub:nogo}

In continuous time, the counterpart to the discrete-time non-homogeneous Markovian dynamics 
defined in Eqs. (\ref{eq:dyn})-(\ref{eq:prop}) may be expressed as
\begin{equation}
\dot{\rho}_t = {\cal L}_t (\rho_t), \quad t \geq 0, 
\label{eq:ct}
\end{equation}
with formal solution given by the time-ordered propagator
${\cal E}_{t,s} \equiv {\mathcal T}\exp\{ \int_s^t ds\, {\cal L}_s\}, $
and where the Liouvillian generator ${\cal L}_t$ has the canonical Gorini-Kossakowskii-Sudarshan-Lindblad 
form \cite{gks,lindblad,alicki-lendi} ($\hbar=1$): 
\begin{equation}
{\cal L}_t = -i [H(t), \cdot ]  +
\sum_k \Big( L_k(t) \cdot L_k(t)^{\dagger}-\frac{1}{2}\{L_k^{\dagger}(t)L_k(t),\cdot\} \Big). 
\label{Lind}
\end{equation} 
Here, $H(t)$ and $\{L_k (t)\}$ represent an Hermitian (effective) Hamiltonian operator and arbitrary 
noise (Lindblad) operators, respectively, that are allowed to be time-dependent in general.  

Given a target state $\rho$, the property of GAS 
may be defined in analogy to Definition \ref{def:gas}, by noting that the invariance condition in (\ref{inv}) 
may be equivalently restated as ${\cal E}_{t,s} (\rho) = \rho$, for all $t > s\geq 0$, or also as a kernel condition, 
${\cal L}_t (\rho)= 0$, for all $t$.  Following \cite{Johnson2016}, quasi-locality constraints 
may be imposed by requiring that the Liouvillian $\mathcal{L}_t$ be 
expressible at any time in the form $\mathcal{L}_t \equiv \sum_k \mathcal{L}_{t, {\neigh_k}}
\otimes\mathcal{I}_{\overline{\neigh}_k}$. 
Previous work has extensively explored asymptotic QL stabilization in the case of 
{\em homogeneous} (time-invariant) continuous-time dynamics 
\cite{Kraus2008,Ticozzi2012,TicozziQIC2014,Johnson2016}, in which case each neighborhood generator 
${\cal L}_{\neigh_k}$ is time-independent and the propagator 
simplifies to a one-parameter semigroup of CPTP maps $\{\mathcal{E}_t=e^{{\cal L}t}\}_{t\geq 0}$.  
In particular, for a pure target state $\rho=|\psi\rangle\langle \psi |$, the necessary and sufficient conditions 
for asymptotic QL stabilization with discrete-time dynamics, Eq. (\ref{eq:qls}), are {\em formally} identical to those
characterizing asymptotic QL stabilization by purely dissipative Lindblad dynamics, namely, one where 
the task may be achieved by a generator with $H \equiv 0$. 
%% LV: We should say in the canonical form... but it seems an overkill here?

While for a time-independent Lindblad master equation the impossibility of exact FTS may be expected 
from the fact that the propagator $e^{t\cal L}$ converges exponentially to its steady state, a stronger no-go result
holds for arbitrary Markovian master equations, as in Eqs. (\ref{eq:ct})-(\ref{Lind}) -- and in fact, more generally, 
for non-Markovian {\em time-local} master equations \cite{kossakowski}.
This follows from a general result on {\em linear} time-varying dynamical systems:  

\begin{prop}
\label{prop:no-go}
Consider a dynamics driven by a (time-varying) linear equation on a linear space ${\cal X}$:
\[ \dot X_t = {\cal L}_t (X_t), \quad X_0=x_0. \]
Assume that ${\cal S}\subset {\cal X}$ is an invariant and attractive subspace for ${\cal L}_t$,
and that ${\cal L}_t$ is modulus-integrable, that is, $\int_0^t | {\cal L}_s|\,ds < \infty$, for all finite $t$.  
Then if $X_0$ does not belong to ${\cal S}$, $X_t$ will not be in ${\cal S}$ for all finite $t,$ 
namely, there cannot be exact convergence in finite time. 
\end{prop}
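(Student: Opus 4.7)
The plan is to reduce the claim to invertibility of the propagator for a linear ODE on the quotient space $\mathcal{X}/\mathcal{S}$. The key observation is that the invariance hypothesis $\mathcal{L}_t(\mathcal{S}) \subseteq \mathcal{S}$ for every $t$ implies that $\mathcal{L}_t$ descends to a well-defined linear operator $\tilde{\mathcal{L}}_t$ on $\mathcal{X}/\mathcal{S}$, so the ``transverse'' component $Y_t \equiv \pi(X_t)$ of the solution (with $\pi: \mathcal{X} \to \mathcal{X}/\mathcal{S}$ the canonical projection) obeys a \emph{closed} linear equation
\[
\dot Y_t = \tilde{\mathcal{L}}_t(Y_t), \qquad Y_0 = \pi(x_0).
\]
Because $X_0 \notin \mathcal{S}$, we have $Y_0 \neq 0$, and the statement ``$X_t \in \mathcal{S}$'' is equivalent to ``$Y_t = 0$''. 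Thus it suffices to show that the zero trajectory of the reduced equation cannot be reached in finite time from a nonzero initial condition.

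Next, I would invoke standard Carath\'eodory existence-and-uniqueness theory for linear time-varying ODEs: since $\mathcal{L}_t$ is modulus-integrable on every bounded interval, so is $\tilde{\mathcal{L}}_t$, and this is exactly the integrability assumption under which the fundamental propagator $\tilde\Phi_{t,s}$ exists, is bounded, and -- crucially -- is invertible for every finite $t \geq s$. Invertibility can be argued either from the Peano--Baker (Dyson) series giving $\tilde\Phi_{t,s}^{-1} = \tilde\Phi_{s,t}$ by running the equation backward in time, or, in the finite-dimensional case relevant here, directly from Liouville's formula $\det\tilde\Phi_{t,0} = \exp\!\bigl(\int_0^t \mathrm{tr}\,\tilde{\mathcal{L}}_s\,ds\bigr)$, which is nonzero precisely because $\int_0^t |\tilde{\mathcal{L}}_s|\,ds < \infty$.

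From $Y_t = \tilde\Phi_{t,0}\,Y_0$ and invertibility of $\tilde\Phi_{t,0}$, one concludes that $Y_t = 0$ iff $Y_0 = 0$; this contradicts $X_0 \notin \mathcal{S}$ at any finite $t$, so $X_t \notin \mathcal{S}$ for all $t < \infty$, which is the desired no-go. The step I expect to be the main obstacle is justifying existence and invertibility of the propagator in the stated generality: without modulus-integrability, finite-time blow-up or degeneracy of $\tilde\Phi_{t,0}$ is in principle possible and would break the argument, so care is needed in quoting (or sketching) the appropriate Carath\'eodory-type result. In the time-homogeneous Lindblad setting the same conclusion follows trivially from $\tilde\Phi_{t,0} = e^{t\tilde{\mathcal{L}}}$ being invertible, which appears as the expected special case, while non-Markovian time-local generators of the form discussed in \cite{kossakowski} are covered by exactly the same argument as long as the integrability condition holds.
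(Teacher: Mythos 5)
Your proof is correct and follows essentially the same route as the paper's: the paper restricts the dynamics to the orthogonal complement $\mathcal{S}^{\perp}$ (where invariance, i.e. $\Pi_S^{\perp}\mathcal{L}_t\Pi_S=0$, closes the equation) and then cites the classical result that the propagator of a modulus-integrable linear time-varying system is invertible at all finite times, which is exactly your argument with the quotient $\mathcal{X}/\mathcal{S}$ playing the role of $\mathcal{S}^{\perp}$. The only difference is presentational -- your Liouville-formula/Peano--Baker justification of invertibility spells out what the paper delegates to a reference -- so no gap remains.
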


In the case at hand, the above Proposition may be applied with 
${\cal S} \equiv \{\lambda  \rho,\,\lambda\in {\mathbb C} \},$ 
the one-dimensional subspace associated to the target state $\rho$.  A crucial element entering the 
proof is the structure of dynamics on the orthogonal complement ${\cal S}^\perp$, that stems from the 
invariance requirement \cite{ticozzi-DID}.   Thus, no FTS of $\rho$ is possible 
with continuous time-local dynamics in general.

\subsection{Canonical parent Hamiltonian for asymptotically stabilizable pure states}

For pure target states obeying the conditions for asymptotic stability under either discrete-time or 
continuous-time QL Markovian dynamics (Theorem \ref{thm:asymptotic}),
physical insight can be gained by picturing the dissipative process as effectively cooling the system 
into the ground state of an appropriate Hamiltonian \cite{Kraus2008,Ticozzi2012,Ticozzi-alternating}.

Recall that a Hamiltonian is QL if it may be expressed as 
a sum of neighborhood-acting terms, 
$H\equiv\sum_k H_k=\sum_k H_{\neigh_k}\otimes\identity_{\overline{\neigh}_k},$ and it is 
\emph{frustration-free} (FF) if its ground space is contained in the ground state space of each 
such term $H_k$; that is, if $\ket{\psi}$ has minimal energy with respect to $H$, it has 
minimal energy with respect to each $H_k$.
In particular, a corollary in \cite{Ticozzi-alternating} shows that  $\ket{\psi}$ is 
discrete-time QLS with respect to $\neigh$ if and only if it is the {\em unique} ground state of some 
FF QL ``parent'' Hamiltonian $H$.  Accordingly, the QL stabilizing dynamics may be thought of as 
each neighborhood map ``locally cooling'' $S$ with respect to $H_k$: these 
local coolings collectively achieve global cooling to $\ket{\psi}$ by virtue of the FF property.

Among QL FF parent Hamiltonians that a given pure state may admit, 
one can be constructed in a {\em canonical way} from the state itself as follows:
\begin{defn}
\label{def:canham}
Given a neighborhood structure $\neigh =\{ {\cal N}_k \}$, the \emph{canonical FF parent Hamiltonian} 
associated to $|\psi\rangle$ is defined as
\begin{equation}
H_{\ket{\psi}} \equiv \sum_{k} (\identity - \Pi_{\neigh_k}\otimes\identity_{\overline{\neigh}_k}) 
\equiv \sum_k (\identity- \Pi_k),
\label{canonicalH}
\end{equation}
in terms of the 
projectors $\Pi_{\neigh_k}$ and $\Pi_k$ associated to 
the Schmidt span $\Sigma_{\neigh_k}(\ket{\psi})$ and the extended Schmidt span 
$\overline{\Sigma}_{\neigh_k}(\ket{\psi})$, respectively.
\end{defn}

\noindent 
This canonical Hamiltonian satisfies the following ``universal'' property: if there exists a QL FF Hamiltonian 
with $\ket{\psi}$ as its unique ground state, then $\ket{\psi}$ is the unique ground state of $H_{\ket{\psi}}$. 
Thus, $\ket{\psi}$ is QLS if and only if it is the unique ground state of its canonical FF parent Hamiltonian. 
A QL Hamiltonian such as $H_{\ket{\psi}}$ is referred to as {\em commuting} if the projectors $\Pi_k$ are mutually 
commuting.  While asymptotic stabilization is known to be possible independent of whether $H_{\ket{\psi}}$ is 
commuting or not \cite{Kraus2008,Ticozzi2012,Johnson2016}, for continuous-time dynamics, the existence of 
a commuting structure is also known to play a key role in influencing the speed of convergence to the 
steady state \cite{Reeb2012,Temme2014,Brandao2016} (cf. Sec. \ref{subsec:rapidL}).
It is thus natural to explore what implications commutativity of $H_{\ket{\psi}}$ may have in the context of FTS, 
and RFTS in particular. 

%%%%%%%%%%%%%%%%%%%%%%%%%%%%%%%%%%%%%%%%%%%%%%%%%%%%%

\section{Finite-time stabilization}
\label{sec:fts}

\subsection{Necessary conditions} 
\label{subsec:ftscond}

We begin the analysis of FT stabilization by providing a necessary condition for a 
pure target state to be FTS under specified QL constraints. 

\begin{thm}[\emph{\bf Small Schmidt-span condition}]
\label{thm:ftsnec}
A pure state $\ket{\psi}$ is FTS with respect to $\neigh$ only if 
it is QLS [Eq. (\ref{eq:qls})] 
and there exists at least one neighborhood $\neigh_k\in\neigh$ for which 
\begin{equation}
\label{eq:sss}
2\,\textup{dim}(\Sigma_{\neigh_k}(\ket{\psi})) \leq  \textup{dim}(\hilbert_{\neigh_k}).
\end{equation}
\end{thm}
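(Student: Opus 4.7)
The implication FTS $\Rightarrow$ QLS is immediate: extending any finite-time stabilizing sequence by additional maps that leave $\rho$ invariant yields an asymptotically convergent infinite sequence, so Theorem \ref{thm:asymptotic} supplies Eq. (\ref{eq:qls}). The remainder of the plan addresses the Schmidt-span bound.

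Fix an FTS sequence $\{\mathcal{E}_t\}_{t=1}^T$ of minimal length $T$; the last map $\mathcal{E}_T$ is a neighborhood map on some $\neigh_k\in\neigh$, and I will show that (\ref{eq:sss}) holds for this particular $\neigh_k$. Picking any Kraus decomposition of the $\neigh_k$-restriction of $\mathcal{E}_T$ with operators $\{M_i\}_i$, introduce
\[
V^* \equiv \bigl\{\ket{\varphi}\in\hilbert : (M_i\otimes\identity_{\overline{\neigh}_k})\ket{\varphi}\in\textup{span}(\ket{\psi})\ \forall i\bigr\}.
\]
Invariance of $\rho$ gives $\ket{\psi}\in V^*$, and for every unit $\ket{\varphi}\in V^*$ trace preservation yields $\mathcal{E}_T(\ketbra{\varphi})=\ketbra{\psi}$. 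A standard Kraus argument using purity of $\rho$ further shows that any $\sigma\in\mathcal{D}(\hilbert)$ with $\mathcal{E}_T(\sigma)=\rho$ must have $\textup{supp}(\sigma)\subseteq V^*$. Hence the image of the partial propagator $\mathcal{E}_{T-1}\circ\cdots\circ\mathcal{E}_1$ consists of states supported in $V^*$, and minimality of $T$ forces this image to contain some $\sigma'\neq\rho$, whence $\dim V^*\geq 2$.

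The geometric core is the bound $\dim V^*\leq \dim(\hilbert_{\neigh_k})/r$ with $r\equiv\dim\Sigma_{\neigh_k}(\ket{\psi})$. Because $\mathcal{E}_T$ acts as the identity on $\overline{\neigh}_k$, every unit $\ket{\varphi}\in V^*$ has reduced state $\textup{Tr}_{\neigh_k}(\ketbra{\varphi})=\rho_{\overline{\neigh}_k}\equiv\textup{Tr}_{\neigh_k}(\ketbra{\psi})$, which has rank $r$. A phase-superposition trick then does the work: for any orthonormal pair $\ket{\varphi_1},\ket{\varphi_2}\in V^*$, the vector $(\ket{\varphi_1}+e^{i\theta}\ket{\varphi_2})/\sqrt{2}$ also lies in $V^*$ and must share the same reduced state for every phase $\theta$, which forces the cross term $\textup{Tr}_{\neigh_k}(\ket{\varphi_1}\bra{\varphi_2})$ to vanish identically. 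Expanding each $\ket{\varphi}\in V^*$ as $\sum_l\sqrt{\mu_l}\ket{c_l^{(\varphi)}}\ket{b_l}$ against a fixed spectral basis $\{\ket{b_l}\}$ of $\textup{supp}(\rho_{\overline{\neigh}_k})$---with $\{\ket{c_l^{(\varphi)}}\}_l$ orthonormal in $\hilbert_{\neigh_k}$---the vanishing cross-trace translates directly into orthogonality of $\textup{span}\{\ket{c_l^{(1)}}\}_l$ and $\textup{span}\{\ket{c_l^{(2)}}\}_l$ in $\hilbert_{\neigh_k}$. Iterating over an orthonormal basis of $V^*$ packs $\dim V^*$ pairwise orthogonal $r$-dimensional subspaces into $\hilbert_{\neigh_k}$, giving $(\dim V^*)\cdot r\leq\dim(\hilbert_{\neigh_k})$; together with $\dim V^*\geq 2$ this yields (\ref{eq:sss}).

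I expect the main technical obstacle to be handling possibly degenerate spectra of $\rho_{\overline{\neigh}_k}$, where Schmidt bases of individual $\ket{\varphi}\in V^*$ are not unique. Fixing a common spectral basis $\{\ket{b_l}\}$ of $\textup{supp}(\rho_{\overline{\neigh}_k})$ upfront---and noting that any purification of $\rho_{\overline{\neigh}_k}$ admits the canonical form $\sum_l\sqrt{\mu_l}\ket{c_l^{(\varphi)}}\ket{b_l}$ with orthonormal $\{\ket{c_l^{(\varphi)}}\}_l$---circumvents the ambiguity and reduces the cross-trace identity to elementary linear algebra.
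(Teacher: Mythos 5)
Your proof is correct, but it takes a genuinely different route from the paper's for the Schmidt-span bound. The paper purifies both the input state and the collapsing map (via a Stinespring isometry), extracts structural constraints on that isometry from invariance and trace preservation, and then converts the resulting vector identity into a matrix equation by partial transposition, so that the bound follows from a comparison of ranks. You instead work directly with the Kraus operators of the last map of a minimal sequence, introduce the splitting subspace $V^*$ of pure states that this map sends to the target, and exploit a fact absent from the paper's argument: every unit vector of $V^*$ has the same marginal $\trn{\neigh_k}(\ketbra{\psi})$ on $\overline{\neigh}_k$, because $\mathcal{E}_T$ preserves that marginal while mapping $\ketbra{\varphi}$ to $\ketbra{\psi}$. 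The phase-averaging step then forces the $\neigh_k$-side Schmidt spans of an orthonormal basis of $V^*$ to be mutually orthogonal, packing $\textup{dim}(V^*)$ pairwise orthogonal $r$-dimensional subspaces into $\hilbert_{\neigh_k}$. This is shorter, avoids the block decompositions of the paper's proof, yields the sharper inequality $\textup{dim}(V^*)\cdot\textup{dim}(\Sigma_{\neigh_k}(\ket{\psi}))\leq\textup{dim}(\hilbert_{\neigh_k})$, and dovetails with the splitting-subspace construction used for sufficiency in Theorem \ref{thm:fts}, where the orthogonal copies $\Sigma^i$ play exactly the role of your orthogonal Schmidt spans.

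One step needs tightening. For FTS $\Rightarrow$ QLS you cannot append arbitrary invariance-preserving maps: Definition \ref{def:gas} requires attractivity from \emph{every} starting time $s$, and a tail of, say, identity maps fails this for $s>T$. Repeating the finite block periodically repairs the argument, since from any $s$ a full consecutive copy of $\mathcal{E}_T\circ\cdots\circ\mathcal{E}_1$ is eventually applied, after which invariance holds the state at $\rho$. (The paper instead argues directly: a violation of Eq. (\ref{eq:qls}) produces a state $\ket{\phi}\notin\textup{span}(\ket{\psi})$ in $\bigcap_k\overline{\Sigma}_{\neigh_k}(\ket{\psi})$ that is fixed by every invariance-preserving neighborhood map.)
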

\noindent
Intuitively, and as formalized in the proof, the necessity of a small Schmidt span 
may be understood from the fact that, in order for the sequence  
$\mathcal{E}_{T}\circ \mathcal{E}_{T-1} \circ \ldots \circ \mathcal{E}_1$ to stabilize $\ket{\psi}$,  
there must exist a neighborhood map $\mathcal{E}_{k}$ able to  
take a state $\sigma \ne \ket{\psi}\bra{\psi}$ into the target, while leaving the latter invariant. In
terms of quantum error correction, this action can be viewed as correcting a neighborhood-acting error on
$\ket{\psi}$. If $\sigma_{\neigh_k}(\ket{\psi})$ is too large, however, no neighborhood-acting errors can map
$\ket{\psi}$ to a non-trivial correctable state.
The existence of states which are stabilizable in infinite time but violate the 
small Schmidt span condition of Eq. (\ref{eq:sss}) is explicitly demonstrated in the following example. Thus, 
FTS states are a strict subset of QLS states, as one may intuitively expect. 

\begin{exmp}[\textbf{Spin-3/2 AKLT state}]
\label{ex:AKLTnotSSS}
The spin-3/2 Affleck-Kennedy-Lieb-Tasaki (AKLT) state $\ket{\text{AKLT}^N_{3/2}}$ \cite{AKLT} is typically defined in the thermodynamic limit on a system of spins arranged on a two-dimensional (2D) honeycomb lattice. More generally, given any degree-three graph with a spin-3/2 particle on each vertex, the corresponding AKLT state may be defined as the unique ground state of the two-body Hamiltonian $H= \sum_{\langle i, j\rangle} P^{(J=3)}_{ij}$, where $P^{(J=3)}_{ij}$ projects into the spin-3 subspace of particles $i$ and $j$, and the summation is carried out over each pair of adjacent vertices. With respect to the two-body neighborhood structure defined by $H$, the corresponding spin-3/2 AKLT state $\ket{\text{AKLT}^N_{3/2}}$ satisfies Eq. (\ref{eq:qls}) (which also follows from analysis in \cite{kraus}), and is QLS for every 
$N$. Consider the specific case of the $N=6$ spin-3/2 AKLT state defined with respect to the bipartite cubic graph (Fig. \ref{fig:AKLTnotSSS}). As verified numerically in MATLAB, this state violates the small Schmidt span condition and therefore is not FTS.  

\begin{figure}[t]
\includegraphics[width=0.6\columnwidth]{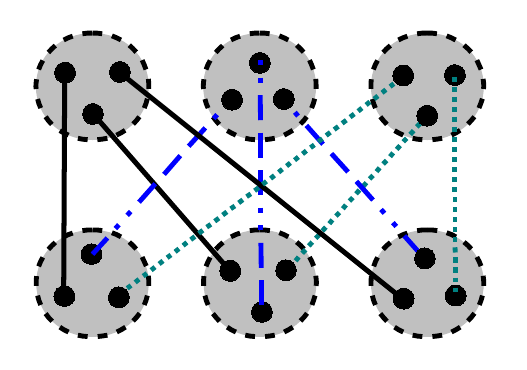}
\vspace*{-3mm}
\caption{(Color online) Example of a QLS but non-FTS state: the spin-$3/2$ AKLT state on a bipartite cubic graph. The pairs of nodes connected by an edge are virtual spin-1/2 particles in a singlet state. The dashed circles contain the systems which are projected into the spin-$3/2$ subspace. As verified numerically, this AKLT state violates the small Schmidt span property since for each top-bottom pair of systems (i.e., each neighborhood), the Schmidt span dimension ($=9$) exceeds half the Hilbert space dimension ($=16/2$).}
\label{fig:AKLTnotSSS}
\end{figure}
\end{exmp}

\subsection{Sufficient conditions}\label{sec:ftssuff}

Next, we construct FTS dynamics for any target state satisfying a particular condition. A crucial component of the scheme that we present is the use of neighborhood-acting \emph{unitary} maps, interspersed with dissipative maps. 

Let $\mathcal{U}(\hilbert)$ denote the unitary group of ($D \times D$) matrices on $\hilbert$, and $\mathfrak{u}(\hilbert)$ the corresponding Lie algebra. It is then useful to introduce the following target-dependent subgroups 
of $\mathcal{U}(\hilbert)$:

\begin{defn}
The \emph{unitary stabilizer group} of a vector $\ket{\psi}\in\hilbert$ is defined as $$\mathcal{U}_{\ket{\psi}}\equiv\{U\in\mathcal{U}(\hilbert)\,|\, U\ketbra{\psi}U^{\dagger}=\ketbra{\psi}\} \subset \mathcal{U}(\hilbert), $$ 
with the associated Lie algebra being denoted by $\mathfrak{u}_{\ket{\psi}}$.
The \emph{neighborhood unitary stabilizer group} of a vector $\ket{\psi}\in\hilbert$ with respect to $\neigh_k$ is defined as $$\mathcal{U}_{\neigh_k,\ket{\psi}}\equiv\{U\in\,\mathcal{U}(\hilbert_{\neigh_k})\otimes\identity_{\overline{\neigh}}\,|\, U\ketbra{\psi}U^{\dagger}=\ketbra{\psi}\},$$ 
with the associated Lie algebra being denoted by $\mathfrak{u}_{\neigh_k,\ket{\psi}}$.
\end{defn}

A crucial step in building our FTS scheme is the decomposition of elements of the global stabilizer group 
$\mathcal{U}_{\ket{\psi}}$ into a {\em finite} product of elements from the neighborhood stabilizer groups 
$\mathcal{U}_{\neigh_k,\ket{\psi}}$. The following proposition describes a condition for determining whether 
such a decomposition is possible:

\begin{prop}[\textbf{Unitary generation property}]
\label{thm:unitarygeneration}
Given a state $\ket{\psi}$ and a neighborhood structure $\neigh$, any element in $\mathcal{U}_{\ket{\psi}}$ may 
be decomposed into a finite product of elements in $\mathcal{U}_{\neigh_k,\ket{\psi}}$ if and only if
\begin{align}
\label{eq:ugenprop}
\langle \mathfrak{u}_{\neigh_k,\ket{\psi}} \rangle_{k}=\mathfrak{u}_{\ket{\psi}},
\end{align}
where $\langle \cdot \rangle_{k}$ denotes the smallest Lie algebra which contains all Lie algebras from the set indexed by $k$.
\end{prop}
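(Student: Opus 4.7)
The plan is to recognize the claim as a specialization of a standard result on Lie subgroup generation: for any finite family of \emph{connected} Lie subgroups $H_1,\dots,H_m$ of a Lie group $G$ with Lie algebras $\mathfrak{h}_1,\dots,\mathfrak{h}_m$, the algebraic subgroup of $G$ generated by $\bigcup_i H_i$ equals the unique connected Lie subgroup of $G$ whose Lie algebra is $\langle \mathfrak{h}_i \rangle_i$. Once this is in hand, the proposition reduces to checking that both $\mathcal{U}_{\ket{\psi}}$ and each $\mathcal{U}_{\neigh_k,\ket{\psi}}$ are connected Lie subgroups of $\mathcal{U}(\hilbert)$, and to exploiting the obvious inclusions $\mathcal{U}_{\neigh_k,\ket{\psi}}\subseteq\mathcal{U}_{\ket{\psi}}$.

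First I would establish connectedness via the Schmidt decomposition. Writing $\ket{\psi}=\sum_{j} s_j \ket{\phi_j}_{\neigh_k}\otimes \ket{\chi_j}_{\overline{\neigh}_k}$ with $s_j>0$ and $\{\ket{\chi_j}\}$ orthonormal, the stabilizer condition $(U\otimes\identity)\ketbra{\psi}(U^\dagger\otimes\identity)=\ketbra{\psi}$ forces $U\ket{\phi_j}=e^{i\theta}\ket{\phi_j}$ with a common phase $\theta$ for all $j$, while $U$ may act as an arbitrary unitary on the orthogonal complement of $\Sigma_{\neigh_k}(\ket{\psi})$ inside $\hilbert_{\neigh_k}$. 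Hence
\begin{equation*}
\mathcal{U}_{\neigh_k,\ket{\psi}}\cong U(1)\times U(m_k), \quad m_k=\dimension{\hilbert_{\neigh_k}}-\dimension{\Sigma_{\neigh_k}(\ket{\psi})},
\end{equation*}
which is connected; the same argument applied to the full Hilbert space yields $\mathcal{U}_{\ket{\psi}}\cong U(1)\times U(D-1)$, also connected. Since each neighborhood-acting unitary that stabilizes $\ket{\psi}$ stabilizes it globally, the inclusion $\mathfrak{u}_{\neigh_k,\ket{\psi}}\subseteq \mathfrak{u}_{\ket{\psi}}$ is automatic, and so $\langle \mathfrak{u}_{\neigh_k,\ket{\psi}}\rangle_k\subseteq \mathfrak{u}_{\ket{\psi}}$ for free.

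The equivalence then follows by direct invocation of the integral-subgroup theorem, applied to the subgroup $H^{\ast}\subseteq \mathcal{U}_{\ket{\psi}}$ generated algebraically by $\bigcup_k \mathcal{U}_{\neigh_k,\ket{\psi}}$. For the ($\Leftarrow$) direction, assuming $\langle \mathfrak{u}_{\neigh_k,\ket{\psi}}\rangle_k=\mathfrak{u}_{\ket{\psi}}$, the theorem identifies $H^{\ast}$ with the connected Lie subgroup whose Lie algebra is $\mathfrak{u}_{\ket{\psi}}$; since $\mathcal{U}_{\ket{\psi}}$ is itself connected with that same Lie algebra, $H^{\ast}=\mathcal{U}_{\ket{\psi}}$, which is precisely the claimed finite decomposition property. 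Conversely, if every element of $\mathcal{U}_{\ket{\psi}}$ admits such a finite product decomposition, then $H^{\ast}=\mathcal{U}_{\ket{\psi}}$, and the theorem forces the Lie algebra of $H^\ast$, namely $\langle \mathfrak{u}_{\neigh_k,\ket{\psi}}\rangle_k$, to equal $\mathfrak{u}_{\ket{\psi}}$, closing the loop.

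The only substantive obstacle is the integral-subgroup theorem itself. I would either cite it as a classical fact in Lie theory (Chevalley/Yamabe; see e.g. Warner, Chapter 3), or sketch its core ingredient: using the Baker--Campbell--Hausdorff formula on a neighborhood of the identity together with the fact that a connected Lie group is generated by any such neighborhood, one shows that the subgroup generated by $\bigcup_k \exp(\mathfrak{u}_{\neigh_k,\ket{\psi}})$ already contains an open neighborhood of $\identity$ inside $\exp \langle \mathfrak{u}_{\neigh_k,\ket{\psi}}\rangle_k$, and hence must equal the entire connected Lie subgroup with that Lie algebra. Beyond this single imported fact, the proof is essentially bookkeeping with Schmidt decompositions.
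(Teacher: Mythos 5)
Your proof is correct and rests on the same underlying mechanism as the paper's -- the correspondence between connected Lie subgroups and their Lie algebras -- but you package it quite differently. The paper splits the argument into three pieces: a repurposed Borel--Bass result (Proposition \ref{thm:gencoro}) asserting that the group generated by the neighborhood stabilizers is connected and that every element is a product of at most $2(D-1)^2$ factors; a lemma stating that the Lie algebra of a group generated by connected subgroups is the generated Lie algebra; and a lemma converting group-level generation into algebra-level generation and back, with the converse direction handled by an iterated Baker--Campbell--Hausdorff argument. You instead invoke the integral-subgroup theorem once, which disposes of both directions simultaneously, and you supply explicitly (via the Schmidt decomposition) the identifications $\mathcal{U}_{\ket{\psi}}\cong U(1)\times U(D-1)$ and $\mathcal{U}_{\neigh_k,\ket{\psi}}\cong U(1)\times U(m_k)$, hence the connectedness that the paper merely asserts. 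Your route is arguably cleaner in the ($\Rightarrow$) direction, where the paper's step claiming that $\exp(X)=\exp(Y)$ forces $X$ and $Y$ to be proportional is delicate (exponentials of non-proportional elements of $\mathfrak{u}(\hilbert)$ can coincide); your version sidesteps this entirely by reading off the Lie algebra of the generated subgroup from the theorem. What you lose is the quantitative content: the Borel--Bass formulation gives the explicit bound $2(D-1)^2$ on the number of neighborhood factors, which the paper reuses in the circuit-size analysis of Sec.~\ref{sec:efficiency}, whereas the bare integral-subgroup theorem only guarantees finiteness.
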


\noindent 
Importantly, the linear-algebraic closure, $\langle \cdot \rangle_{k}$, may be computed numerically. Hence, for a given state, we may determine whether or not the unitary generation property holds using software such as MATLAB. 
We note that constructing an {\em explicit} decomposition may still be difficult in practice, and may 
be regarded as a constrained synthesis problem in geometric control, whose solution is beyond our scope here. 
The following example illustrates the essential features of the general scheme that we will use in verifying 
whether a state can be FTS. 

\begin{exmp}[\textbf{Dicke state}]
\label{ex:Dicke}
Consider a four-qubit system with a neighborhood structure 
$\neigh_1=\{1,2,3\}$ and $\neigh_2=\{2,3,4\}$. The two-excitation Dicke state, 
\begin{align*}
\ket{(0011)}\equiv\frac{1}{\sqrt{6}}(&\ket{0011}+\ket{0101}+\ket{0110}\nonumber\\
+&\ket{1001}+\ket{1010}+\ket{1100}), 
\end{align*}
is known to be QLS \cite{Ticozzi-alternating}. We now show that this state is also FTS with respect 
to the same $\neigh$. As above, we will use the notation $\ket{(X)},$ $X\in {\mathbb Z}_2^L,$ to 
denote the fully symmetric pure state $\frac{1}{\sqrt{L!}}\sum_\pi\ket{\pi(X)},$ 
where $\pi$ are the permutations of $L$ objects. 
The Schmidt span of $\ket{(0011)}$ with respect to $\neigh_1$ is 
$\Sigma_{\neigh_1}(\ket{(0011)})=\textup{span}\{\ket{(001)},\ket{(011)}\}$. Thus, the small Schmidt span condition is satisfied, as 
$ {\textup{dim}(\Sigma_{\neigh_k}(\ket{\psi}))}/ {\textup{dim}(\hilbert_{\neigh_k})} =2/8 \leq 1/2$.

Our strategy will be to use a neighborhood dissipative map, say, $\mathcal{W}$, 
which maps any density operator with support in a particular four-dimensional 
subspace into the target state, and to use neighborhood unitaries which ``rotate'' the 
range of $\mathcal{W}$ into the particular 
subspace; subsequently, a final application of $\mathcal{W}$ maps all states in this space to the pure target state.
Let $\omega\equiv e^{\frac{2\pi i}{3}}$ and let $\mathcal{W} \equiv \sum_i K_i\cdot K_i^{\dagger}$ be defined by its 
Kraus operators, acting non-trivially only on $\neigh_1$:
\begin{align*}
K_0&\equiv (\ketbra{(001)}+\ketbra{(011)})\otimes\identity, \nonumber\\
K_1&\equiv (\ket{(001)}\bra{000}+\ket{(011)}\bra{111})\otimes\identity, \nonumber\\
K_2&\equiv (\ket{(001)}\bra{(001)_{\omega}}+\ket{(011)}\bra{(011)_{\omega}})\otimes\identity, \nonumber\\
K_3&\equiv (\ket{(001)}\bra{(001)_{\omega^2}}+\ket{(011)}\bra{(011)_{\omega^2}})\otimes\identity,
\end{align*}
where $\ket{(abc)_{\nu}}\equiv \frac{1}{\sqrt{3}}(\ket{abc}+\nu\ket{bca}+\nu^2\ket{cab})$. By construction, $\mathcal{W}$ 
maps the following four orthogonal states (including the target state, itself) into 
$\ket{(0011)}$:
\begin{align*}
&\ket{\psi^0}\equiv\ket{(0011)},\\
&\ket{\psi^1}\equiv(\ket{000}\ket{1}+\ket{111}\ket{0})/\sqrt{2},\\
&\ket{\psi^2}\equiv(\ket{(001)_{\omega}}\ket{1}+\ket{(011)_{\omega}}\ket{0})/\sqrt{2},\\
&\ket{\psi^3}\equiv(\ket{(001)_{\omega^2}}\ket{1}+\ket{(011)_{\omega^2}}\ket{0})/\sqrt{2}.
\end{align*}
The range of $\mathcal{W}$ is the set of operators with support on the extended 
Schmidt span $\overline{\Sigma}_{\neigh_1}(\ket{(0011)})$. Thus, we next design a sequence of neighborhood 
unitaries $\{U_i\}$ which maps $\overline{\Sigma}_{\neigh_1}(\ket{(0011)})$ into $\textup{span}\{\ket{\psi^i}, 
\,i=0,\ldots, 3\}$:
\begin{align*}
& U=U_T\ldots U_1 = \ketbra{\psi^0}+\ket{\psi^1}\bra{(001)}\bra{0}+\nonumber\\
& \hspace*{-1mm}+ \ket{\psi^2}\bra{(011)}\bra{1} +\ket{\psi^3} [\bra{(001)}\bra{1}-\bra{(011)}\bra{0}] /\sqrt{2}+U_R,
\end{align*}
where $U_R$ is any matrix which ensures that $U$ is unitary. That $U$ can be decomposed into such a 
finite product is ensured by the fact that $\ket{(0011)}$ satisfies the Lie algebraic generation property of 
Eq. (\ref{eq:ugenprop}), as we checked using MATLAB. Finally, a simple calculation shows that 
\begin{equation*}
\mathcal{W}\circ\mathcal{U}_T\circ\ldots\circ\mathcal{U}_2\circ\mathcal{U}_1\circ\mathcal{W}(\identity/16)=\ketbra{(0011)}.
\end{equation*}
Hence, $\ket{(0011)}$ is FTS, as desired.

\vspace*{1mm}

{\em Remark:} 
While in the above example the dissipative map $\mathcal{W}$ is employed just twice, multiple uses may be required 
in the general case, with a different sequence of unitaries between each application.
Nonetheless, entropy is still removed from $S$ only by a dissipative action on a {\em single} neighborhood. This contrasts the 
QLS scheme of \cite{Ticozzi-alternating}, wherein dissipative maps alternatively act on {\em all} neighborhoods in order to asymptotically drive $S$ towards the target state. In a sense, infinite-time convergence is ensured by suitably tailoring the ``competition'' between dissipative maps, whereas a stronger form of ``cooperative'' action among CPTP maps, involving a non-trivial interplay between unitary and dissipative dynamics, is needed in our scheme for FT convergence.  
It is worth to anticipate that the Dicke state $\ket{(0011)}$ is provably {\em not} RFTS, as it violates a necessary condition 
we establish in Proposition \ref{thm:complementcommutecondition}.
This demonstrates that the RFTS property is strictly stronger than FTS, as expected.
\end{exmp}

We now state our general sufficient condition for FTS:
\begin{thm}
\label{thm:fts}
A state $\ket{\psi}$ is FTS relative to a connected neighborhood structure $\neigh$ if there exists at least one neighborhood $\neigh_k\in \neigh$ satisfying the small Schmidt span condition, 
$2\,\textup{dim}(\Sigma_{\neigh_k}(\ket{\psi})) \leq  \textup{dim}(\hilbert_{\neigh_k})$, 
and the unitary generation property holds, 
$\langle \mathfrak{u}_{\neigh_\ell,\ket{\psi}}\rangle_\ell=\mathfrak{u}_{\ket{\psi}}$.
\end{thm}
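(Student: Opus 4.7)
The plan is to construct the FTS sequence explicitly, generalizing the approach of Example~\ref{ex:Dicke}: a single dissipative ``cooling'' neighborhood map $\mathcal{W}$ acting on the distinguished neighborhood $\neigh_k$, applied repeatedly and interleaved with finite sequences of neighborhood stabilizer unitaries drawn from $\bigcup_\ell \mathcal{U}_{\neigh_\ell,\ket{\psi}}$. By construction each such map preserves $\ket{\psi}$, so the invariance condition Eq.~(\ref{invFT}) is automatic; the work is in showing that the finite composition collapses every input to $\ket{\psi}\bra{\psi}$.

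First I would build $\mathcal{W}$ explicitly from the small Schmidt span condition. Starting from the Schmidt decomposition $\ket{\psi}=\sum_{i=1}^{s}\sqrt{\lambda_i}\,\ket{\alpha_i}\otimes\ket{\beta_i}$ across the $\neigh_k\,|\,\overline{\neigh}_k$ cut, I extend $\{\ket{\alpha_i}\}$ to an orthonormal basis of $\hilbert_{\neigh_k}$; since $2s\le d$, the complementary $d-s$ basis vectors can be partitioned into blocks of size $s$, and for each block one selects an $s\times s$ isometry onto $\Sigma_{\neigh_k}(\ket{\psi})$. Tensoring these isometries (and the projector onto $\Sigma_{\neigh_k}(\ket{\psi})$) with $\identity_{\overline{\neigh}_k}$ yields Kraus operators for a neighborhood CPTP map $\mathcal{W}$ that (i) fixes $\ket{\psi}\bra{\psi}$, (ii) has range $\overline{\Sigma}_{\neigh_k}(\ket{\psi})$, and (iii) admits a non-trivial ``pre-image subspace'' $V\supsetneq\textup{span}\{\ket{\psi}\}$ whose density operators are all mapped to $\ket{\psi}\bra{\psi}$. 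The intermediate unitaries are then supplied by Proposition~\ref{thm:unitarygeneration}: any $U\in \mathcal{U}_{\ket{\psi}}$ decomposes into a finite product of neighborhood stabilizer unitaries, each of which, tensored with identity, is a neighborhood CPTP map preserving $\ket{\psi}$.

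The main obstacle, as hinted by the Remark after Example~\ref{ex:Dicke}, is proving that only \emph{finitely} many rounds of this interleaved scheme suffice. My plan is an iterative dimension-reduction argument in the spirit of the splitting-subspace construction of \cite{BT-TAC:10}: after each cooling, the state is supported in $R=\overline{\Sigma}_{\neigh_k}(\ket{\psi})$, and an appropriately chosen target-preserving unitary $U$ (realized via the unitary generation property, with connectedness of $\neigh$ supplying enough rotational freedom) is used to carry a maximal-dimension subspace of $R$ into $V$, so that the next application of $\mathcal{W}$ strictly shrinks the ``error'' support orthogonal to $\ket{\psi}$. Finite-dimensionality of $R$ then forces termination after at most $\dim R$ rounds, at which point a final $\mathcal{W}$ outputs $\ket{\psi}\bra{\psi}$ exactly. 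The delicate technical point I foresee is verifying that such a dimension reduction is achievable at each step using only rotations drawn from $\mathcal{U}_{\ket{\psi}}$ and compatible with the QL decomposition --- that is, that one can always choose the sequence of stabilizer unitaries so that the support \emph{strictly} reduces rather than merely decays asymptotically; likely this requires picking $V$ and the rotations jointly, using the Lie-algebraic richness guaranteed by Eq.~(\ref{eq:ugenprop}).
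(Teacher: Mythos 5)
Your proposal follows essentially the same route as the paper's proof: a single cooling map $\mathcal{W}$ built from an isometric copy of $\Sigma_{\neigh_k}(\ket{\psi})$ inside its orthogonal complement (guaranteed by the small Schmidt span condition), interleaved with global stabilizer unitaries decomposed into finite products of neighborhood stabilizers via Proposition~\ref{thm:unitarygeneration}, and terminating by rank reduction. The ``delicate point'' you flag dissolves in the paper's treatment: since $\mathcal{U}_{\ket{\psi}}$ contains \emph{every} unitary fixing $\ketbra{\psi}$ --- in particular the transposition $\ketbra{\psi}\oplus(\ket{\psi'}\bra{\alpha}+\ket{\alpha}\bra{\psi'})\oplus\identity$ for any $\ket{\alpha}\perp\ket{\psi}$ --- it suffices to move a \emph{single} basis vector of the current support onto the pre-image state $\ket{\psi'}$ in each round, so the rank strictly drops by one per application of $\mathcal{W}$ and the scheme terminates after at most $D-1$ rounds, with no need to engineer a maximal-dimension transfer into the pre-image subspace.
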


Notice that in the above theorem we request the neighborhood structure to be {\em connected}. To illustrate why this 
is important, consider a neighborhood structure comprised of two disjoint sets of neighborhoods (i.e., no neighborhood 
from the first set and from the second set have non-trivial intersection), giving, a ``left-right'' factorization 
$\hilbert\simeq\hilbert_L\otimes\hilbert_R$. 
The condition for asymptotic QLS, Eq. (\ref{eq:qls}),  
can only be satisfied if the target state is itself factorized, $\ket{\psi} \equiv \ket{\psi}_L\otimes\ket{\psi}_R$.
But then the neighborhood unitary stabilizers can, at most, generate $\mathcal{U}_{\ket{\psi}_{_L}}\otimes
\mathcal{U}_{\ket{\psi}_{_R}},$ which is {\em strictly smaller} than $\mathcal{U}_{\ket{\psi}}$. 
Disconnected neighborhood structures will never allow the unitary generation property to hold. Trivially, 
any product state is FTS with respect to a disconnected neighborhood structure. What is needed, then, is 
that the unitary generation property holds for each connected component of $\neigh$.
This motivates restricting to neighborhood structures which are connected, 
as a disconnected $\neigh$ precludes the possibility of stabilizing entangled target states. 

We now outline our general strategy for FTS. Assume that $\ket{\psi}$ and $\neigh$ obey the 
conditions of Theorem \ref{thm:fts} and, for ease of notation, let $\Sigma^0\equiv\Sigma_{\neigh_k}(\ket{\psi})$. 
Decompose $\hilbert_{\neigh_k}\simeq\bigoplus_{i=0}^{r-1}\Sigma^i\oplus\mathcal{R}$, where $\Sigma^i$ are 
orthogonal isomorphic copies of $\Sigma^0$ and $\mathcal{R}$ is the remainder space of minimal dimension. 
The small Schmidt span condition ensures that $r\geq 2$.  
For simplicity, our general proof is given (in Sec. \ref{sec:proofs}) for $r=2$ which, from a control standpoint, 
may be seen as a QL generalization of the splitting-subspace scheme for FTS introduced in 
\cite{baggio-splitting}.  However, the construction may be easily modified to improve the 
efficiency of the cooling action implemented by ${\mathcal W}$. 
If $S$ consists of $N$ qudits, with $D=d^N$, let 
$s_\ell \equiv  \text{dim}(\Sigma_{\neigh_\ell} (\ket{\psi})$, 
$r_\ell\equiv \lfloor d^{|\neigh_\ell |}/s_\ell \rfloor $, and 
$r\equiv \max_{\ell} r_\ell.$ 
Physically, we may think of $\log_d r_\ell$ as the ``cooling rate'' of $\neigh_\ell$-neighborhood maps 
with respect to $\ket{\psi}$, and of $\log_d r$ as the maximum cooling rate across $\neigh$. A larger cooling 
rate affords $\mathcal{W}$ to more greatly reduce the rank of the input density matrix.
Associating a tensor factor to the index $i$, and further identifying $\Sigma^i \simeq \Sigma$, for all $i$, 
the global Hilbert space $\hilbert= \hilbert_{\neigh_k}\otimes 
\hilbert_{\overline{\neigh}_k}$ decomposes as 
$$\bigg (\bigoplus_{i=0}^{r-1}\Sigma^i\otimes\hilbert_{\overline{\neigh}_k}\bigg)\oplus\mathcal{R}
\otimes\hilbert_{\overline{\neigh}_k}\simeq\complex^r\otimes(\Sigma \otimes\hilbert_{\overline{\neigh}_k})
\, \oplus \,\mathcal{R}\otimes\hilbert_{\overline{\neigh}_k}. $$
Then we can let $\mathcal{W}$ map {\em each} of the $r$ isomorphic copies $\Sigma^i$ onto $\Sigma^0$ as
\( \mathcal{W}\equiv (\ketbra{0}\trn{})\otimes\mathcal{I} \oplus \mathcal{I}. \)
The  unitary CPTP maps $\mathcal{U}_i$ are constructed so as to maximize the 
rank-reduction achieved by each $\mathcal{W}$. This is accomplished using the following algorithm:

\begin{enumerate}
\item Choose an orthonormal basis $\{\ket{\psi^0_{\alpha}}\}$, $\alpha=0, \ldots, \delta$, 
for $\Sigma^0\otimes\hilbert_{\overline{\neigh}_k}$, 
with $\ket{\psi^0_0}\equiv\ket{\psi}$, $\delta = s_k |\overline{\neigh}_k| -1$. 
This determines isomorphic orthonormal bases $\{\ket{\psi^i_{\alpha}}\}$ for 
the copies $\Sigma^i\otimes\hilbert_{\overline{\neigh}_k}$.

\item Choose an orthonormal basis $\{\ket{\lambda_{\beta}}\}$ for $\mathcal{R}\otimes\hilbert_{\overline{\neigh}_k}$, 
$\beta= 0, \ldots, \overline{\delta}$.

\item Order the basis vectors as
\begin{align}
\label{eq:basisorder}
&\ket{\psi^0_0},\ket{\psi^1_0},\ldots,\ket{\psi^{r-1}_0},\ket{\psi^0_1},\ket{\psi^1_1},\ldots,\ket{\psi^{r-1}_1},\nonumber\\
&\vdots\nonumber\\
&\ket{\psi^0_\delta },\ket{\psi^1_\delta},\ldots,\ket{\psi^{r-1}_\delta},\ket{\lambda_{0}},\ldots,\ket{\lambda_{\overline{\delta}}}.
\end{align}

\item The choice of each $\mathcal{U}_i$ depends recursively on the input density matrix $\rho_i=\mathcal{W}(\mathcal{U}_{i-1}(\rho_{i-1}))$, beginning with $\rho_1 = \mathcal{W}(\identity)$. 

\item In each step, $\mathcal{U}_{i}$ is a permutation of the basis vectors, chosen so that the target state is fixed and, iteratively, each basis vector in the support of $\rho_i$ is mapped to the first basis vector 
according to the ordering of Eq. (\ref{eq:basisorder}). Since $\ket{\psi}$ satisfies the unitary generation property with 
respect to $\neigh$, each global stabilizer $\mathcal{U}_i$ can be decomposed into a finite number of neighborhood stabilizers.\end{enumerate}

The sequence of CPTP maps terminates after a finite number of steps because the rank of the 
input (fully mixed) density matrix is necessarily reduced in each step. 
In contrast to the simpler implementation in the proof, this strategy allows $\mathcal{W}$ to simultaneously map 
{\em multiple states} to the target subspace. A concrete implementation of the algorithm is described in the example below.

\begin{figure*}[t]
\includegraphics[width=1.8\columnwidth]{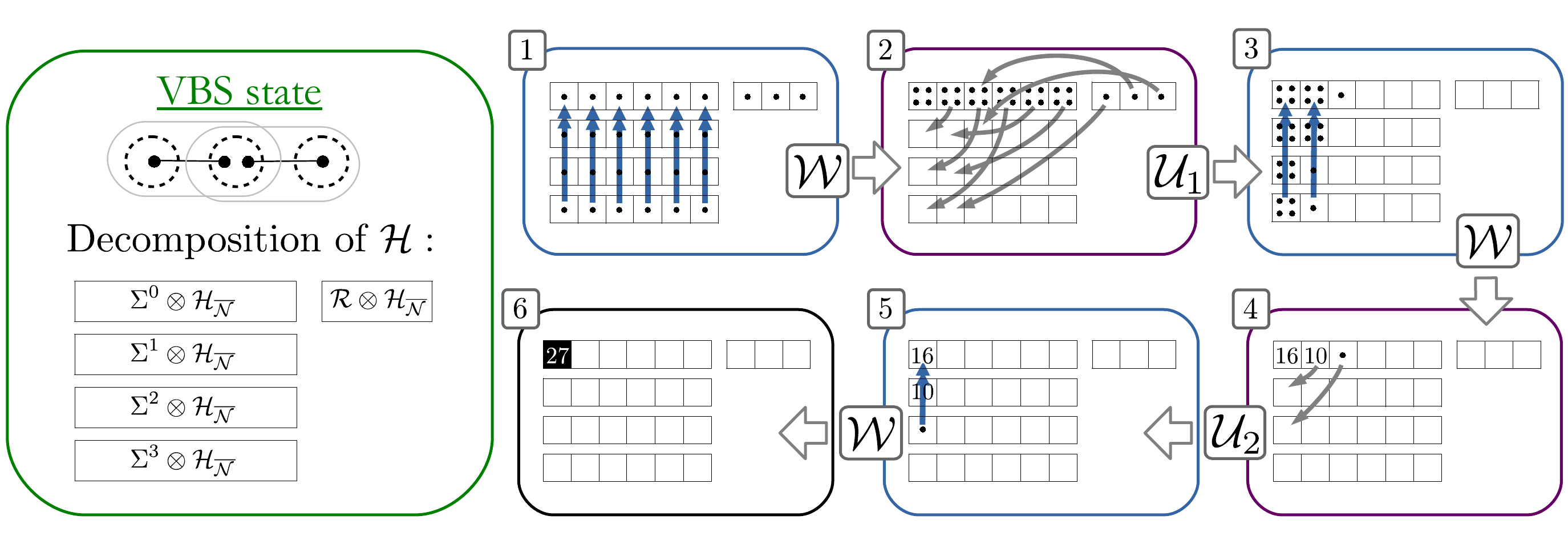}
\vspace*{-4mm}
\caption{(Color online) 
FTS scheme for the $N=3$ spin-$1$ VBS state on the line [Eq. (\ref{aklt1})], under 
NN constraints. In each numbered panel, each square represents one of the $D=27$ dimensions in the $\complex^3\otimes\complex^3\otimes\complex^3$ state space, while the dots represent the probabilistic weight of each basis vector for the current state. The task is to move all the probabilistic weight from the initial flat distribution (completely mixed state) into the box in the upper left-hand corner, corresponding to the target state. The Schmidt span on the first two systems is 
2-dimensional, leading to the 6-dimensional extended Schmidt span $\Sigma^0$, as represented by the first row of boxes. The remaining rows, labeled $\Sigma^i$, are isometric copies of this subspace, leaving the 3-dimensional remainder space 
$\mathcal{R}\otimes\hilbert_{\overline{\neigh}}$. The dissipative map $\mathcal{W}$ acts only on the first two qutrits, cooling each $\Sigma^i$ to $\Sigma^0$. The unitaries $\mathcal{U}_1$ and $\mathcal{U}_2$ leave the target state invariant while preparing probabilistic weight to be cooled by $\mathcal{W}$. Since $\ket{\textup{VBS}_1^N}$ satisfies the unitary generation property,
each $\mathcal{U}_i$ can be decomposed into a finite sequence of neighborhood-acting invariance-satisfying maps. }
\label{fig:FTSAKLT}
\end{figure*}

\begin{exmp}[\textbf{1D VBS states}]
\label{ex:vbs}
Consider an open chain of $N$ spin-1 particles, with a two-body NN neighborhood structure.
Let $\ket{0},\ket{1}$ be spin-$1/2$ basis states, with $\ket{\psi^\pm }\equiv \frac{1}{\sqrt{2}}(\ket{01}\pm \ket{10})$.
A 1D valence-bond-solid (VBS) state $\ket{\text{VBS}^N_1}$ \cite{AKLT1} 
can then be defined as
\begin{equation}
\ket{\textup{VBS}_1^N}\equiv P_1\bigg (\prod_{i=1}^{\lceil N/2 \rceil -1} P_{2i,2i+1}\bigg)P_{N}\ket{\psi^-}^{\otimes (N-1)},
\label{aklt1}
\end{equation}
where $P_1$, $P_N$ are isometries embedding a boundary spin-$1/2$ into
a spin-$1$ via $\ket{0}\mapsto \ket{m=1}$, $\ket{1}\mapsto \ket{m=-1}$, and 
each $P_{2i,2i+1}\equiv \ket{m=1}\bra{00} + \ket{m=0}\bra{\psi^+} +\ket{m=-1}\bra{11}$
projects corresponding spins from adjacent singlet
pairs (``bonds'') into the spin-1 triplet subspace.  $\ket{\textup{VBS}_1^N}$ may be verified to obey 
Eq. (\ref{eq:qls}), hence to be QLS, for arbitrary $N$.  In fact,  $\ket{\textup{VBS}_1^N}$ is 
the unique ground state of a (non-commuting) two-body FF Hamiltonian of the form  
$H= P^{(J=3/2)}_{1,2} + \sum_{i=3}^{N-2} P^{(J=2)}_{i,i+1} + P^{(J=3/2)}_{N-1,N}$, where $P^{(J)}_{i,i+1}$  
are projectors onto the total $J$-subspace \cite{Kirillov}. In the thermodynamic limit, 
the above $H$ reduces to the well-known AKLT model and, correspondingly, $\ket{\textup{VBS}_1^N}$ defines
the (translationally invariant) spin-$1$ AKLT state \cite{AKLT1,AKLT}. 

Direct calculation shows that with respect to the boundary neighborhoods $(1,2)$ and $(N-1,N)$, the Schmidt spans have 
dimension 2, whereas with respect to the remaining, bulk neighborhoods, the Schmidt spans have dimension 4. 
The neighborhood Hilbert spaces have dimension $3\cdot3=9$, so the small Schmidt span condition is satisfied.
It remains to show that the $\ket{\text{VBS}^N_1}$ states satisfy the unitary generation property. For small values of 
$N$ this may be checked numerically, as we have done explicitly for 
$N=3$ and $N=4$. We {\em conjecture} that for all $N$, the 1D VBS state is FTS with respect to the two-body 
NN neighborhood structure. The $N=3$ case is depicted and further described in Figure \ref{fig:FTSAKLT}.
\end{exmp}

Satisfaction of the sufficient conditions for FTS in Theorem \ref{thm:fts} certainly implies satisfaction of the necessary conditions 
for FTS of Theorem \ref{thm:ftsnec}. However, it is interesting to prove a direct connection between {\em asymptotic}, yet  
not necessarily FT, stabilizability and the unitary generation property.  We have the following:

\begin{prop}
\label{thm:qlsalg}
If $\ket{\psi}$ satisfies $\langle \mathfrak{u}_{\neigh_k,\ket{\psi}} \rangle_{k}=\mathfrak{u}_{\ket{\psi}}$ with respect to 
the neighborhood structure $\neigh$, then $\ket{\psi}$ satisfies Eq. (\ref{eq:qls}), and hence is QLS, with respect to $\neigh$.
\end{prop}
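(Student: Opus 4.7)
The plan is to show that every vector in $\bigcap_{k}\overline{\Sigma}_{\neigh_k}(\ket{\psi})$ is necessarily a scalar multiple of $\ket{\psi}$, which is exactly the characterization in Eq.~(\ref{eq:qls}) and hence guarantees QLS via Theorem~\ref{thm:asymptotic}. The bridge between the extended Schmidt spans and the global stabilizer is Proposition~\ref{thm:unitarygeneration}: under the stated hypothesis, every $V\in\mathcal{U}_{\ket{\psi}}$ decomposes as a finite product of elements of $\bigcup_k \mathcal{U}_{\neigh_k,\ket{\psi}}$.

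The first step I would carry out is to observe that, for each $k$, any $U\in\mathcal{U}_{\neigh_k,\ket{\psi}}$ acts as a single scalar phase on the entire extended Schmidt span $\overline{\Sigma}_{\neigh_k}(\ket{\psi})$. Writing $U=U_{\neigh_k}\otimes\identity_{\overline{\neigh}_k}$ and expanding $\ket{\psi}=\sum_{i}s_i\ket{v_1^i}\ket{v_2^i}$ in Schmidt form, the invariance condition $U\ketbra{\psi}U^{\dagger}=\ketbra{\psi}$ together with linear independence of the $\{\ket{v_2^i}\}$ forces $U_{\neigh_k}\ket{v_1^i}=e^{i\eta}\ket{v_1^i}$ with the \emph{same} phase $\eta$ for every $i$. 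Since $\{\ket{v_1^i}\}$ is a basis for $\Sigma_{\neigh_k}(\ket{\psi})$, $U$ acts as $e^{i\eta}\identity$ on $\overline{\Sigma}_{\neigh_k}(\ket{\psi})$, and in particular $U\ket{\phi}\propto\ket{\phi}$ whenever $\ket{\phi}$ lies in that subspace.

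Taking now any $\ket{\phi}\in\bigcap_k\overline{\Sigma}_{\neigh_k}(\ket{\psi})$, the preceding observation renders $\ket{\phi}$ a simultaneous (up-to-phase) eigenvector of every element of $\bigcup_k\mathcal{U}_{\neigh_k,\ket{\psi}}$, and hence of any finite product of such elements. Invoking Proposition~\ref{thm:unitarygeneration}, this upgrades to $V\ket{\phi}\propto\ket{\phi}$ for all $V\in\mathcal{U}_{\ket{\psi}}$. Because $\mathcal{U}_{\ket{\psi}}$ contains every unitary acting as an arbitrary phase on $\textup{span}(\ket{\psi})$ and as an arbitrary unitary on $\ket{\psi}^{\perp}$, decomposing $\ket{\phi}=\alpha\ket{\psi}+\ket{\phi^{\perp}}$ and imposing the eigenvector condition forces $\ket{\phi^{\perp}}$ to be a common eigenvector of every unitary on $\ket{\psi}^{\perp}$, which is impossible unless $\ket{\phi^{\perp}}=0$ (for $D\geq 3$; the cases $D\leq 2$ are trivial). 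Thus $\ket{\phi}\propto\ket{\psi}$.

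The only point requiring any real care is the first step: ensuring that the invariance of $\ketbra{\psi}$ under a neighborhood-acting unitary propagates to a \emph{single} common phase on all Schmidt partners, rather than allowing distinct eigenvalues on different $\ket{v_1^i}$. Once that observation is in place, the rest of the argument is a short combination of the group-theoretic generation afforded by Proposition~\ref{thm:unitarygeneration} with a standard common-eigenvector argument for $\mathcal{U}_{\ket{\psi}}$.
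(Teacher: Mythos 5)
Your proof is correct, but it travels a different road than the paper's. The paper argues by contradiction entirely at the Lie-algebra level: if the intersection $\mathcal{S}\equiv\bigcap_k\overline{\Sigma}_{\neigh_k}(\ket{\psi})$ were strictly larger than $\textup{span}(\ket{\psi})$, then every $Y\in\mathfrak{u}_{\neigh_k,\ket{\psi}}$ would commute with all $\ket{s}\bra{s'}$ for $\ket{s},\ket{s'}\in\mathcal{S}$, so the generated algebra $\langle\mathfrak{u}_{\neigh_k,\ket{\psi}}\rangle_k$ would sit inside the stabilizer algebra $\mathfrak{u}_{\mathcal{S}}$, which is strictly smaller than $\mathfrak{u}_{\ket{\psi}}$ — contradicting the hypothesis. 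You instead work directly at the group level: you prove explicitly (via the Schmidt decomposition) that each neighborhood stabilizer acts as a single scalar phase on $\overline{\Sigma}_{\neigh_k}(\ket{\psi})$, conclude that any $\ket{\phi}$ in the intersection is an up-to-phase common eigenvector of every finite product of neighborhood stabilizers, pass to all of $\mathcal{U}_{\ket{\psi}}$ via Proposition~\ref{thm:unitarygeneration}, and finish with a common-eigenvector argument on $U(1)\times U(D-1)$. Your route makes explicit the single-common-phase fact that the paper folds into its "by definition" claim about $[Y,\ket{r}\bra{r'}]=0$, which is a genuine gain in transparency; the cost is the extra dependence on the group-level decomposition of Proposition~\ref{thm:unitarygeneration} (hence on the connectedness machinery behind it), whereas the paper's argument never leaves the algebra and only needs that $\mathfrak{u}_{\mathcal{S}}$ is closed under linear combination and Lie bracket. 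Both hinge on the same structural fact — that neighborhood stabilizers of $\ket{\psi}$ automatically stabilize the entire extended Schmidt span, while the global stabilizer of $\ket{\psi}$ stabilizes nothing larger than $\textup{span}(\ket{\psi})$ — so the proofs are morally equivalent even though neither subsumes the other step-for-step. Your handling of the edge cases ($\alpha\neq 0$ versus $\alpha=0$, and $D\le 2$) is adequate.
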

We {\em conjecture} that satisfaction of Eq. (\ref{eq:qls}) along with the small Schmidt span condition (i.e., the necessary 
conditions in Theorem \ref{thm:ftsnec}) is in fact \emph{sufficient} to ensure FTS. One avenue to proving this would be to 
establish the converse of Proposition \ref{thm:qlsalg}.

%%%%%%%%%%%%%%%%%%%%%%%%%%%%%%%%%%%%%%%%%%%%%%%%%%%%%%

\section{Robust finite-time stabilization: necessary conditions}
\label{sec:necessary}

We begin our analysis of the robust stabilization setting by revisiting an obvious example: 

\vspace*{1mm}

\begin{exmp}[\textbf{Product states}] Given $\hilbert\simeq\bigotimes_{i=1}^N\hilbert_i$, consider a strictly 
local neighborhood structure, $\neigh = \{\neigh_i\} \equiv \{i\}$, and an arbitrary 
product state $\rho=\bigotimes_{i=1}^N\rho_i$. 
To each $i$, let us associate $\mathcal{E}_i 
\equiv(\rho_i\trn{i}{})\otimes\mathcal{I}_{\overline{i}}$. 
Then, any complete sequence of such maps gives
\begin{align*}
\mathcal{E}_N\circ\ldots\circ\mathcal{E}_2\circ\mathcal{E}_1 =\bigotimes_{i=1}^N(\rho_i\trn{i})
%&=\left(\bigotimes_{i=1}^N\rho_i\right)\left(\bigotimes_{i=1}^N\trn{i}\right)\nonumber\\&=
=\rho\trn{},
\label{eq:rftsprod}
\end{align*}
demonstrating that $\rho$ is RFTS, as expected. Since the maps commute, any ordering works. Of course, by considering 
a different $\neigh'$ with enlarged neighborhoods, relative to the strictly local one $\neigh$ above, any such factorized state 
remains RFTS. Hence, any (pure or mixed) product state is RFTS with respect to any neighborhood structure which covers all systems.
\end{exmp}

Although the above example is trivial, its structure is important: in much of our subsequent analysis, we shall seek ways to {\em represent the target state as a product state with respect to some virtual subsystems inside each neighborhood.} The next example demonstrates this idea by introducing a class of RFTS states which exhibit entanglement with respect to the physical subsystems, but can be seen as factorized with respect to virtual ones: 

\begin{exmp}[\textbf{Graph states}]
\label{exmp:graphstate}
\emph{Graph states} are a paradigmatic class of many-body entangled states which are known to be a resource for universal measurement-based quantum computation \cite{Raussendorf2003}. 
Following \cite{Zeng2015}, a graph state on $N$ qudits is defined by a graph $G=(V,E)$ with $N$ vertices and a choice of Hadamard matrix $H$. The latter must satisfy $H^{\dagger}H=d \identity$, $H=H^T$, and $|[H]_{ij}|=1$ for all $i,j$. The edge-wise action $C^H$ is defined, according to the choice of $H$, by $C^H\ket{ij} \equiv [H]_{ij}\ket{ij}$. 
The standard choice in the qubit case is that $C^{H}$ is a controlled-$Z$ transformation. Note that $C^H$ is diagonal in the computational basis and symmetric under swap of the two systems it acts on. We define the global graph unitary transformation as $U_G\equiv\prod_{(i,j)\in E} C^H_{i,j}$, with $\mathcal{U}_G(\cdot)\equiv U_G\cdot U^{\dagger}_G$. 
Then, the graph state associated to $G$ is
\begin{equation}
\ket{G}\equiv U_G\ket{+}^{\otimes N},\quad \ket{+}=H\ket{0}. 
\end{equation}
The above definition recovers the one derived from the standard (abelian) stabilizer formalism if $H$ 
coincides with the discrete Fourier transform.

A natural neighborhood structure may be associated to $G$ by defining, 
for each physical system $i$, a neighborhood $\neigh_i$ that includes system $i$ along with the graph-adjacent systems 
(i.e., the set of $j$ connected to $i$ by some edge $(i,j)\in E$). For any given $\ket{G}$, 
we may then construct a finite sequence of neighborhood maps which robustly stabilizes $\ket{G}$ relative to $\neigh$.
Let $\hat{\mathcal{E}}:\mathcal{B}(\complex^d)\rightarrow \mathcal{B}(\complex^d)$ be defined by $\hat{\mathcal{E}}
\equiv \ketbra{+} \text{Tr}$, and let $\hat{\mathcal{E}}_i$ indicate the map 
$\hat{\mathcal{E}}$ acting on system $i$ with trivial action on $\overline{i}$. To each $\neigh_i$, we then associate the 
map $\mathcal{E}_i\equiv \mathcal{U}_G\circ\hat{\mathcal{E}}_i\circ\mathcal{U}^{-1}_G$. The Kraus operators of $\hat{\mathcal{E}}_i$ are of the form $X^{\alpha}_i\otimes\identity_{\overline{i}}$. The unitary conjugation of $\hat{\mathcal{E}}_i$ transforms its Kraus operators into those of $\mathcal{E}_i$ as $\mathcal{U}_G(X^{\alpha}_i\otimes\identity_{\overline{i}})={X'}^{\alpha}_i$. Crucially, each ${X'}^{\alpha}_i$ acts non-trivially only on $\neigh_i$. This is seen as follows:
\begin{align*}
{X'}^{\alpha}_k &=U_G ( X^{\alpha}_k\otimes\identity_{\overline{k}})U^{\dagger}_G\nonumber\\
%&=\left(\frac{1}{d^{n/2}}\prod_{(i,j)\in E} C^H_{i,j} H^{\otimes N}\right)(X^{\alpha}_k\otimes\identity_{\overline{k}})
%\left(\frac{1}{d^{n/2}}\prod_{(i,j)\in E} C^H_{i,j} H^{\otimes N}\right)^{\dagger}\\
&=\Big(\prod_{j|(k,j)\in E} C^H_{k,j}\Big)(HX^{\alpha}_kH^{\dagger}\otimes\identity_{\overline{k}})\Big(\prod_{j|(k,j)\in E} C^H_{k,j} \Big)^{\dagger}\nonumber\\
&=({X'}^{\alpha}_k)_{\neigh_k}\otimes\identity_{\overline{\neigh}_k}.
\end{align*}
Hence, each $\mathcal{E}_i$ is a valid neighborhood map. 
Finally, we show that each $\mathcal{E}_i$ leaves $\ket{G}$ invariant and that the composition of any complete sequence of these maps prepares $\ket{G}$. Invariance is demonstrated by 
$\mathcal{E}_i(\ketbra{G})=\mathcal{U}_G [\hat{\mathcal{E}}_i(\mathcal{U}_G^{\dagger}(\ketbra{G}))] 
=\mathcal{U}_G(\ketbra{+}^{\otimes N})=\ketbra{G}$. Preparation is seen as follows:
\begin{align*}
\mathcal{E}_N\circ\ldots\circ\mathcal{E}_2\circ\mathcal{E}_1
%&=\mathcal{U}_G\circ\hat{\mathcal{E}}_N\circ\mathcal{U}^{-1}_G\ldots\circ\mathcal{U}_G\circ\hat{\mathcal{E}}_2\circ
%\mathcal{U}^{-1}_G\circ\mathcal{U}_G\circ\hat{\mathcal{E}}_1\circ\mathcal{U}^{-1}_G\\
&=\mathcal{U}_G\circ\hat{\mathcal{E}}_N\circ\ldots\circ\hat{\mathcal{E}}_2\circ\hat{\mathcal{E}}_1\circ\mathcal{U}^{-1}_G\nonumber\\
&=\mathcal{U}_G\circ(\ketbra{+}\trn{})^{\otimes N}\circ\mathcal{U}^{-1}_G\nonumber\\
&=\mathcal{U}_G(\ketbra{+}^{\otimes N})\trn{}=\ketbra{G}\trn{}.
\end{align*}
\end{exmp}

Graph states are a good starting point to introduce necessary conditions for RFTS.
A common feature of both product and graph states is that their canonical FF parent Hamiltonians 
are commuting. Although we will find later on that 
this commutativity is {\em not} necessary for RFTS, a weaker property \emph{is} 
necessary, nevertheless:

\begin{prop}[\textbf{Commuting projectors}]
\label{thm:complementcommutecondition}
If a target pure state $\ket{\psi}$ is RFTS with respect to neighborhood structure $\neigh$, then 
$[\Pi_k,\Pi_{\overline{k}}]=0$ for all neighborhoods $\neigh_k$, where $\Pi_k$ and $\Pi_{\overline{k}}$ are 
the orthogonal projectors onto 
$\overline{\Sigma}_{\neigh_k}(\ket{\psi})$ and $\cap_{{j\neq k}}\overline{\Sigma}_{\neigh_j}(\ket{\psi})$,
respectively.
\end{prop}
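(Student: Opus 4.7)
The plan is to exploit the order-invariance of RFTS. Fix a neighborhood $\neigh_k$ and write $V_k \equiv \overline{\Sigma}_{\neigh_k}(\ket{\psi}) = \textup{range}(\Pi_k)$ and $V_{\overline{k}} \equiv \bigcap_{j\neq k}\overline{\Sigma}_{\neigh_j}(\ket{\psi}) = \textup{range}(\Pi_{\overline{k}})$. By robustness, I may permute the RFTS sequence so that all maps acting on $\neigh_k$ are grouped together, with the remaining maps grouped on the other side. Let $\mathcal{F}_k$ denote the composition of all $\neigh_k$-maps (a CPTP map of the form $\mathcal{G}_k \otimes \mathcal{I}_{\overline{\neigh}_k}$), and $\mathcal{F}_{\overline{k}}$ that of the rest. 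Both orderings must prepare the target, giving $\mathcal{F}_k \circ \mathcal{F}_{\overline{k}}(\sigma) = \mathcal{F}_{\overline{k}} \circ \mathcal{F}_k(\sigma) = \ket{\psi}\bra{\psi}$ for every input $\sigma$.

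First I would establish a structural lemma: any neighborhood CPTP map on $\neigh_j$ that leaves $\ket{\psi}\bra{\psi}$ invariant must act as the identity on every operator supported in $\overline{\Sigma}_{\neigh_j}(\ket{\psi})$. The reason is that purity of $\ket{\psi}\bra{\psi}$ forces each Kraus operator $K_i = K_i^{(\neigh_j)}\otimes I$ to satisfy $K_i\ket{\psi} = \mu_i\ket{\psi}$; inserting the Schmidt decomposition of $\ket{\psi}$ across $\neigh_j|\overline{\neigh}_j$ then shows $K_i^{(\neigh_j)}$ acts as the scalar $\mu_i$ on all of $\Sigma_{\neigh_j}(\ket{\psi})$, so $(K_i^{(\neigh_j)}\otimes I)|_{\overline{\Sigma}_{\neigh_j}(\ket{\psi})} = \mu_i$, and $\sum_i|\mu_i|^2 = 1$ closes the argument. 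Applied to the compositions, $\mathcal{F}_k$ is the identity on $V_k$-supported operators, and $\mathcal{F}_{\overline{k}}$ is the identity on $V_{\overline{k}}$-supported operators. Combining this with the RFTS identity and linearity yields the cooling relation $\mathcal{F}_k(\ket{u}\bra{v}) = \braket{v|u}\ket{\psi}\bra{\psi}$ for all $\ket{u},\ket{v}\in V_{\overline{k}}$; in particular, $\mathcal{F}_k(\ket{\psi}\bra{w}) = 0$ and $\mathcal{F}_k(\ket{w}\bra{w}) = \ket{\psi}\bra{\psi}$ whenever $\ket{w}\in V_{\overline{k}}$ is normalized and orthogonal to $\ket{\psi}$.

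The core computation fixes such a $\ket{w}$, decomposes $\ket{w} = \ket{w_1} + \ket{w_2}$ with $\ket{w_1} = \Pi_k\ket{w}$ and $\ket{w_2} = (I-\Pi_k)\ket{w}$, and unpacks the Kraus representation of $\mathcal{F}_k$. The key algebraic tool is $K_i\Pi_k = \mu_i\Pi_k$ (a restatement of the lemma), which cleanly peels off $\bra{\psi}$ and $\bra{w_1}$ and reduces $\mathcal{F}_k(\ket{\psi}\bra{w_2})$ to $\ket{\psi}(\bra{\alpha_w} + \bra{\beta_w})$, where $\ket{\alpha_w}\in V_k$ and $\ket{\beta_w}\in V_k^\perp$ are the two orthogonal pieces of $(\sum_i\overline{\mu_i}K_i)\ket{w_2}$. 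Together with $\mathcal{F}_k(\ket{\psi}\bra{w_1}) = \ket{\psi}\bra{w_1}$ from the lemma, the vanishing $\mathcal{F}_k(\ket{\psi}\bra{w}) = 0$ becomes $\bra{w_1} + \bra{\alpha_w} + \bra{\beta_w} = 0$, and projecting onto $V_k$ versus $V_k^\perp$ separately forces $\ket{w_1} = -\ket{\alpha_w}$ and $\ket{\beta_w} = 0$. Substituting these relations into the full expansion of $\mathcal{F}_k(\ket{w}\bra{w}) = \ket{\psi}\bra{\psi}$, three of the four blocks collapse to $-\ket{w_1}\bra{w_1}$, leaving $\mathcal{F}_k(\ket{w_2}\bra{w_2}) = \ket{\psi}\bra{\psi} + \ket{w_1}\bra{w_1}$. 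Taking traces and using trace-preservation yields $\|w_2\|^2 = 1 + \|w_1\|^2$, and combining with the normalization $\|w_1\|^2 + \|w_2\|^2 = 1$ forces $\|w_1\|^2 = 0$, i.e., $\Pi_k\ket{w} = 0$.

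Thus $\Pi_k$ annihilates every vector in $V_{\overline{k}}$ orthogonal to $\ket{\psi}$, while $\Pi_k\ket{\psi} = \ket{\psi}\in V_{\overline{k}}$; hence $V_{\overline{k}}$ is $\Pi_k$-invariant, which is equivalent to $[\Pi_k,\Pi_{\overline{k}}] = 0$. The main obstacle I anticipate is the Kraus-operator bookkeeping in the third paragraph, specifically the clean splitting of $(\sum_i\overline{\mu_i}K_i)\ket{w_2}$ into its $V_k$ and $V_k^\perp$ parts together with the observation that the adjoint action of $K_i$ on $\bra{\psi}$ reduces to a scalar; once the relations $\ket{w_1} = -\ket{\alpha_w}$ and $\ket{\beta_w} = 0$ are in hand, the trace identity closes the argument with essentially no further input.
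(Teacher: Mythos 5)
Your proof is correct, and it shares its backbone with the paper's argument: both exploit robustness to group the circuit as $\mathcal{F}_k\circ\mathcal{F}_{\overline{k}}=\ketbra{\psi}\trn{}$, both observe that invariance forces the Kraus operators of any $\ket{\psi}$-preserving $\neigh_j$-map to act as a scalar on $\overline{\Sigma}_{\neigh_j}(\ket{\psi})$ (the paper packages this as Lemma \ref{thm:invarianceoutputlem}, writing $K_i=\lambda_i\Pi_j+R_i\Pi_j^{\perp}$), and both conclude that $\mathcal{F}_{\overline{k}}$ fixes everything supported in $\cap_{j\neq k}\overline{\Sigma}_{\neigh_j}(\ket{\psi})$, so that $\mathcal{F}_k$ alone must cool that subspace to the target. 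Where you diverge is the endgame. The paper applies $\mathcal{E}_k$ to the single positive operator $\Pi_{\overline{k}}$, splits $\Pi_k\mathcal{E}_k(\Pi_{\overline{k}})\Pi_k$ into $\Pi_k\Pi_{\overline{k}}\Pi_k$ plus a positive remainder, kills both by conjugating with $\Pi_k-\ketbra{\psi}$, and then invokes the separate trace inequality of Lemma \ref{lem:commpen}, $\trn{}(\Pi_1\Pi_2)\geq\trn{}(\Pi_{1\cap 2})+\tfrac{1}{2}\trn{}(|[\Pi_1,\Pi_2]|^2)$, to convert the resulting trace identity into vanishing of the commutator. You instead argue vector by vector: for each normalized $\ket{w}\in\textup{range}(\Pi_{\overline{k}})$ orthogonal to $\ket{\psi}$ you track the four blocks of $\mathcal{F}_k(\ket{w}\bra{w})$ relative to $\Pi_k$, use $\mathcal{F}_k(\ket{\psi}\bra{w})=0$ to pin down the cross terms, and close with norm bookkeeping that forces $\Pi_k\ket{w}=0$. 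I have checked the Kraus manipulations (in particular $K_i\Pi_k=\mu_i\Pi_k$ survives composition of the $\neigh_k$-maps, and the extension of the preparation identity from density operators to all operators by linearity is legitimate) and they hold. Your route yields the sharper conclusion $\Pi_k\Pi_{\overline{k}}=\ketbra{\psi}$ directly, without passing through the intersection projector or the auxiliary trace inequality, at the cost of somewhat heavier bookkeeping; the paper's version is more compact once its two lemmas are available and reuses them elsewhere. Both arguments are valid.
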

 
\noindent 
With this proposition, we can verify that neither the Dicke state on four qubits [Example \ref{ex:Dicke}], 
nor the VBS state on three qutrits [Example \ref{ex:vbs}], are RFTS on account of the lack of commutativity among 
the terms in their canonical FF Hamiltonian (note that in the tripartite setting, we may identify 
$\Pi_k \equiv \Pi_{12},$ $\Pi_{\overline{k}}\equiv \Pi_{\overline{12}}=\Pi_{23}$).  
Notwithstanding, Example \ref{exmp:graphstate} shows that there exist ``resourceful'' many-body 
entangled states which {\em are} RFTS: the key property that graph states obey is that their correlations 
are very strongly clustered -- in fact, they have finite support. 
The necessary conditions we now present show that {\em all} RFTS states must 
indeed possess ``well-behaved'' correlations, in a sense we make precise. 
For a given $\neigh$, let the \emph{neighborhood expansion} of a set of subsystems $A$ be defined as
$A^{\neigh} \equiv \bigcup_{\neigh_i\cap A\neq\emptyset} \neigh_i.$
Intuitively, $A^{\neigh}$ is the set of subsystems which are connected to $A$ by some neighborhood. We then have:

\begin{thm}
\label{thm:necessary_robust}
Let the pure state $\rho=\ketbra{\psi}$ be RFTS with respect to $\neigh$.  Then the following properties hold:

\vspace*{1mm}

\textbf{\em (i)  (Finite correlation)} For any two subsystems 
$A$ and $B$ having disjoint neighborhood expansions (i.e., 
$A^{\neigh}\cap B^{\neigh}=\emptyset$), 
arbitrary observables $X_A$ and $Y_B$  are uncorrelated, that is, 
$\tr{}{X_A Y_B \rho} =\tr{}{X_A \rho}\tr{}{Y_B \rho}$.

\vspace*{1mm}

\textbf{\em (ii) (Recoverability property)} If a map $\mathcal{M}$ acts non-trivially only on subsystem $A$, 
$\mathcal{M}\equiv\tilde{\mathcal{M}}_A\otimes\mathcal{I}_{\overline{A}}$, then there exists a sequence of CPTP neighborhood maps 
$\mathcal{E}_{j}$, each acting only on $A^{\neigh}$, such that $\rho=\mathcal{E}_l\circ\ldots\circ\mathcal{E}_1\circ\mathcal{M}(\rho)$.

\vspace*{1mm}

\textbf{\em (iii) (Zero CMI)} For any two subsets of subsystems $A$ and $B$, with $A^{\neigh}\cap B = \emptyset$, the quantum conditional mutual information (CMI), $I(A:B|C)_{\rho}\equiv S(A,C)+S(B,C)-S(A,B,C)-S(C)$, satisfies $I(A:B|C)_{\rho}=0$, where $C\equiv A^{\neigh}\backslash A$.
\end{thm}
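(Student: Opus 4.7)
The plan is to prove (ii) first, as a direct consequence of RFTS, then to derive (iii) from (ii) through a recovery-map construction, and finally to handle (i)---which I expect to be the main obstacle---by a Heisenberg-picture argument that uses a three-way partition of the RFTS maps.

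For (ii), I would partition the RFTS sequence $\{\mathcal{E}_t\}_{t=1}^T$ into $\mathcal{S}_A$, those maps whose neighborhood intersects $A$ (and therefore lies in $A^{\neigh}$), and $\mathcal{S}_0$, the remaining maps. By robustness I may apply $\mathcal{S}_0$ first and $\mathcal{S}_A$ second on any input. Each map in $\mathcal{S}_0$ acts on subsystems disjoint from $A$ and thus commutes with $\mathcal{M}$, and it leaves $\rho$ invariant; starting from $\mathcal{M}(\rho)$, the $\mathcal{S}_0$ block therefore commutes past $\mathcal{M}$ and acts trivially on $\rho$, so the $\mathcal{S}_A$ block alone recovers $\rho$. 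These are exactly the neighborhood maps on $A^{\neigh}$ demanded by (ii). For (iii), I would apply (ii) with the replace map $\mathcal{M}(\rho)=\rho_A\otimes\rho_{\bar A}$ on $A$, yielding a recovery $\mathcal{R}_A$ on $A^{\neigh}=A\cup C$. Since $A^{\neigh}\cap B=\emptyset$, the partial trace over the complement of $ABC$ commutes with $\mathcal{R}_A$ and gives $\rho_{ABC}=\mathcal{R}_A(\rho_A\otimes\rho_{BC})$. Defining the CPTP map $\Phi_{C\to AC}(\sigma_C)\equiv\mathcal{R}_A(\rho_A\otimes\sigma_C)$, this reads $(\Phi\otimes\mathcal{I}_B)(\rho_{BC})=\rho_{ABC}$, which is the Petz-type characterization of the quantum Markov chain $A-C-B$ (Hayden--Jozsa--Petz--Winter) and hence is equivalent to $I(A:B|C)_\rho=0$.

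For (i), I would reformulate RFTS as $\mathcal{E}_{\text{tot}}=\rho\,\mathrm{Tr}$ under any permutation, equivalently in the Heisenberg picture as $\mathcal{E}_{\text{tot}}^\dagger(M)=\mathrm{Tr}(\rho M)\,I$ for every operator $M$. Partition the maps into three sets $\mathcal{S}_A,\mathcal{S}_B,\mathcal{S}_0$ by whether a map's neighborhood touches $A$, $B$, or neither. The hypothesis $A^{\neigh}\cap B^{\neigh}=\emptyset$ ensures no neighborhood can touch both $A$ and $B$---it would then lie in two disjoint sets---so the tripartition is unambiguous and the composites $\mathcal{R}_A,\mathcal{R}_B,\mathcal{R}_0$ are supported in $A^{\neigh}, B^{\neigh}$, and $\overline{A\cup B}$ respectively. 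Choose the Schr\"odinger ordering $\mathcal{E}_{\text{tot}}=\mathcal{R}_0\circ\mathcal{R}_B\circ\mathcal{R}_A$. In Heisenberg, $\mathcal{R}_0^\dagger$ acts first on an input $M=X_A Y_B$ that is the identity on $\overline{A\cup B}$, and unitality leaves it unchanged; the successive adjoints $\mathcal{R}_B^\dagger$ and $\mathcal{R}_A^\dagger$, acting on the disjoint regions $BC'=B^{\neigh}$ and $AC=A^{\neigh}$, then produce the tensor product
\[
\mathcal{E}_{\text{tot}}^\dagger(X_A Y_B)\,=\,\tilde X_{AC}\otimes \tilde Y_{BC'}\otimes I_D,
\]
where $\tilde X_{AC}=\mathcal{R}_A^\dagger(X_A\otimes I_C)$ depends only on $X_A$ and $\tilde Y_{BC'}=\mathcal{R}_B^\dagger(Y_B\otimes I_{C'})$ only on $Y_B$. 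Setting this equal to $\mathrm{Tr}(\rho X_A Y_B)\,I$ and specializing to $Y_B=I$ (where unitality gives $\tilde Y_{BC'}=I_{BC'}$) identifies $\tilde X_{AC}=\mathrm{Tr}(\rho X_A)\,I_{AC}$; symmetrically $\tilde Y_{BC'}=\mathrm{Tr}(\rho Y_B)\,I_{BC'}$. Reinserting these two identifications in the general case yields the desired factorization $\mathrm{Tr}(\rho X_A Y_B)=\mathrm{Tr}(\rho X_A)\,\mathrm{Tr}(\rho Y_B)$.

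The main delicacy throughout is keeping the supports of every adjoint, partial trace, and composite map carefully localized---the disjointness of $A^{\neigh}$ and $B^{\neigh}$ is the hypothesis that enables every commutation step, and the crucial tripartition used in (i). Once that bookkeeping is in place, the conceptual content of (i) is just the observation that a Schr\"odinger map which is constant on density matrices is dual to a Heisenberg map that crushes observables onto multiples of the identity, and that the disjoint-support structure of the three blocks then forces this crushed identity to factor as a product across $A$ and $B$.
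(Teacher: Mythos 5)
Your proposal is correct and follows essentially the same route as the paper: all three parts rest on the same partition of the RFTS sequence according to which neighborhoods meet $A$ (and $B$), with robustness used to reorder the blocks, invariance and unitality used to absorb the "rest" block, and disjointness of the neighborhood expansions used to localize the remaining composites. The only cosmetic difference is that for part (i) you run the factorization in the Heisenberg picture (crushing $X_A Y_B$ onto a multiple of the identity and identifying the two factors by specializing to $Y_B=\identity$ and $X_A = \identity$), whereas the paper performs the dual Schr\"odinger-picture computation on a product input state; parts (ii) and (iii) match the paper's argument almost verbatim, including the concluding appeal to the Petz/quantum-Markov-chain characterization of vanishing CMI.
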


Returning to Example \ref{ex:vbs}, since no finite length is known to exist beyond which correlations vanish in 
the AKLT spin-$1$ state \cite{Bruno}, this also precludes the possibility for the VBS states of Eq. (\ref{aklt1})
to be RFTS for large $N$. 

\vspace*{1mm}

{\em Remark:} As already noted, in \cite{Brandao2016disc} a scheme is developed to efficiently prepare 
({\em without} ensuring invariance) both Gibbs 
and ground states of certain FF QL Hamiltonians, using a sequence of QL CPTP maps.
Interestingly, their sufficient conditions for preparation are related to the above necessary 
conditions of short-ranged correlations and zero CMI.  Specifically, the scheme in \cite{Brandao2016disc}
succeeds when the target state exhibits exponentially decaying correlations  
and has a sufficiently small CMI with respect to certain regions.

%%%%%%%%%%%%%%%%%%%%%%%%%%%%%%%%%%%%%%%%%%%%%%%%%%%%%%%%

\section{Robust finite-time stabilization: sufficient conditions}
\label{sec:sufficient} 

In this section, we present three distinct sets of sufficient conditions for ensuring RFTS of a pure target state. 
The first set of conditions is satisfied by {\em all} the RFTS states that we know of, and provides a general 
framework for RFTS. However, it is ``non-constructive'' in that it is not easy to operationally verify if a given state 
satisfies the required properties. The other conditions are computable, at the cost of being less general.
In particular, the second set of ``algebraic'' conditions, while being applicable to arbitrary neighborhood 
geometries and able to incorporate a number of important examples (including graph states),  
fails to detect some RFTS states we could identify.  Our third set of sufficient conditions is further specialized 
to a class of neighborhood structures whose overlaps obey suitable ``matching'' properties.

\subsection{Sufficiency criteria from virtual subsystems: basic examples}  
\label{sub:nonop}

To understand what features ensure RFTS of a general pure state, we take a closer look at the graph states 
of Example \ref{exmp:graphstate}. 
Their central property is that they factorize with respect to a decomposition of the Hilbert space into 
\emph{virtual subsystems} \cite{Knill2000, Zanardi2001}, each ``contained'' in a single neighborhood. 
This is key for allowing each map $\mathcal{E}_i$ to independently cool the corresponding virtual degree of 
freedom into the state $\ket{+}$, despite a non-trivially ``overlapping'' action of these maps at the physical level.
More formally, the fact that all the observables for a given virtual subsystem are also neighborhood operators for a 
corresponding physical neighborhood enables each map to stabilize the desired virtual-subsystem state 
while respecting the QL constraint.} As a byproduct, these maps can be chosen to {\em commute} with each other.

Graph states are associated to a virtual-subsystem description that satisfies an additional property: namely, there is a 
one-to-one correspondence between physical and virtual subsystems. This allows for a {\em unitary} mapping $U_{G}$ 
between the physical and virtual degrees of freedom, which is particularly simple to write. As we will find, such a 
strong correspondence is {\em not} necessary for RFTS. 
The ``minimal''  features that allow graph states to be RFTS can be generalized as follows. 
Let $W$ be an isomorphism between the  physical subsystem Hilbert space and a virtual subsystem Hilbert space,
\begin{equation}
\label{eq:basisfactor}
W:\bigotimes_i\hilbert_i\rightarrow\bigotimes_j\hat{\hilbert}_j, 
\end{equation}
where, in general, we need {\em not} require any pair $\hilbert_i$ and $\hat{\hilbert}_j$ to be isomorphic (e.g., 
the physical systems could be qubits, while the virtual subsystems are four-dimensional). For a more compact 
notation, we shall henceforth denote decompositions linked by an identification as in Eq. (\ref{eq:basisfactor}) simply by
$\bigotimes_i\hilbert_i\simeq\bigotimes_j\hat{\hilbert}_j$.

This ``relabeling'' of the degrees of freedom allows us to state two conditions which ensure $\ket{\psi}$ to be RFTS:

(1) The target $\ket{\psi}$ should be factorized with respect to the virtual degrees of freedom, that is, 
$\ket{\psi}\simeq\bigotimes_{j}\ket{\hat{\psi}_j};$

(2) The operators associated to any virtual subsystem should, themselves, be neighborhood operators; 
that is, for every $j$, a neighborhood $\neigh_k$ should exist such that for any virtual-subsystem operator 
$\hat{X}_j\in\mathcal{B}(\hat{\hilbert}_j)$,
$\hat{X}_j\otimes\identity_{\overline{j}} \in \mathcal{B}(\hilbert_{\neigh_k})\otimes\identity_{\overline{\neigh}_k}.$

Assume that the two conditions above hold for some $\ket{\psi}$. We can then construct a finite sequence of commuting QL 
CPTP maps which robustly stabilize $\ket{\psi}$. Define the maps $\mathcal{E}_j,$ strictly local on the virtual subsystems, as:
\begin{align*}
\mathcal{E}_j \equiv (\ketbra{\hat{\psi}_j}\trn{})_j\otimes\mathcal{I}_{\overline{j}}.
\end{align*} 
The Kraus operators of $\mathcal{E}_j$ are contained in $\mathcal{B}(\hat{\hilbert}_j)\otimes\identity_{\overline{j}}$. 
Hence, by the second property, each $\mathcal{E}_j$ is a valid neighborhood map. Each $\mathcal{E}_j$ leaves the target state invariant:
\begin{align*}
&\mathcal{E}_l(\ketbra{\psi})=\hat{\mathcal{E}}_l \Big(\bigotimes_{j=1}^M\ketbra{\hat{\psi}_j}\Big)\nonumber\\
&=\ketbra{\hat{\psi}_l}\tr{}{\ketbra{\hat{\psi}_l}}\otimes\bigotimes_{j\neq l}\ketbra{\hat{\psi}_j}=\ketbra{\psi}. 
\end{align*}
Finally, {\em any} complete sequence of these neighborhood maps prepares $\ket{\psi}$, as desired: 
\begin{align*}
\mathcal{E}_T\circ\ldots\circ\mathcal{E}_2\circ\mathcal{E}_1
&=\bigotimes_{j=1}^T\ketbra{\hat{\psi}_j}\trn{j}\nonumber\\
&\hspace*{-6mm}=\bigg(\bigotimes_{j=1}^T\ketbra{\hat{\psi}_j}\bigg)\bigg(\bigotimes_{j=1}^T\trn{j}\bigg)
=\ketbra{\psi}\trn{}.
\end{align*}

While for $d=2$ graph states are an example of stabilizer states \cite{Nielsen-Chuang:10},
we next demonstrate another class of RFTS qubit states which, while constructed in close analogy to 
graph states, are {\em not} standard stabilizer states.

\begin{exmp}[\textbf{CCZ states}]
\label{exmp:CCZ}
In \cite{Miller2015} the authors introduce 
a class of states that exhibit genuine 2D symmetry-protected topological order,
which we will refer to as \emph{controlled-controlled-Z (CCZ) states}.
While such states may be defined for any 3-uniform hypergraph (i.e., one with only 3-element edges), 
we restrict here to the triangular lattice, which allows for scaling to an arbitrary number of lattice sites, $N$. 
As with graph states, each qubit in the lattice is initialized in $\ket{+}=(\ket{0}+\ket{1})/\sqrt{2}$. Then, on 
each triangular cell a CCZ gate is applied.
\iffalse
\begin{equation*}
 \text{CCZ} =
\begin{bmatrix}
    1 & 0 & 0 & 0 & 0 & 0 & 0 & 0\\
    0 & 1 & 0 & 0 & 0 & 0 & 0 & 0\\
    0 & 0 & 1 & 0 & 0 & 0 & 0 & 0\\
    0 & 0 & 0 & 1 & 0 & 0 & 0 & 0\\
    0 & 0 & 0 & 0 & 1 & 0 & 0 & 0\\
    0 & 0 & 0 & 0 & 0 & 1 & 0 & 0\\
    0 & 0 & 0 & 0 & 0 & 0 & 1 & 0\\
    0 & 0 & 0 & 0 & 0 & 0 & 0 & -1
\end{bmatrix}
\end{equation*}
in the computational basis. 
\fi
% LV: How about dropping the above?
% PJ: Okay with me. Anyone who is truly interested can go to Jacob's paper.
Noting that all CCZ gates commute with one another, we let $U_{\Delta}\equiv\prod_{(i,j,k)\in {\cal T}} \text{CCZ}_{ijk}$, 
where ${\cal T}$ denotes the set of triangular cells on the lattice. The target CCZ state is 
\begin{equation}
\label{eq:ccz}
 \ket{\Delta}\equiv U_{\Delta}\ket{+}^{\otimes N}.
\end{equation}
To each site we associate a neighborhood defined by that qubit along with the six adjacent qubits. We verify that 
$\ket{\Delta}$ is RFTS with respect to this $\neigh$ by identifying a virtual-subsystem decomposition satisfying 
the needed 
properties. As with graph states, we can identify each physical subsystem to a virtual subsystem, 
with the unitary transformation $U_{\Delta}$ taking the physical-subsystem observables into the virtual ones. 
Then, each virtual-subsystem algebra corresponds to a neighborhood-contained algebra thanks to the 
commutativity of the CCZ gates:
\begin{align*}
\mathcal{B}(\hat{\hilbert}_i)\otimes\identity_{\overline{i}}
&=U_{\Delta}(\mathcal{B}(\hilbert_i)\otimes\identity_{\overline{i}})U^{-1}_{\Delta}\\
&=\left[U_{\neigh_i}(\mathcal{B}(\hat{\hilbert}_i)\otimes\identity_{\neigh_i\backslash i}) U^{-1}_{\neigh_i}\right]\otimes\identity_{\overline{\neigh_i}}\\
&\leq\mathcal{B}(\hilbert_{\neigh_i})\otimes\identity_{\overline{\neigh_i}},
\end{align*}
where $U_{\neigh_i}\equiv \prod_{k,l\in\neigh_i\backslash i} \text{CCZ}_{ikl}$ acts non-trivially
only on the physical systems in 
$\neigh_i$. Furthermore, by construction, $\ket{\Delta}$ is a virtual product state: considering  
Eq. (\ref{eq:ccz}), $U_{\Delta}$ maps each physical factor into a corresponding virtual-subsystem factor, 
giving $\ket{\Delta}\simeq \ket{\hat{+}}^{\otimes N}$ with respect to $\bigotimes_{i=1}^N \hat{\hilbert}_i$. 
As the neighborhood containment property of the virtual subsystems and the factorization of $\ket{\Delta}$ 
are satisfied, the CCZ state is verified to be RFTS. 
\end{exmp}

\subsection{Non-constructive general sufficient conditions}
\label{sub:nonconstructive}

The need for introducing a more general type of virtual-subsystem decomposition is illustrated 
by the following target state, which does {\em not} admit a simple neighborhood 
factorization as considered above, yet {\em is} RFTS: 

\begin{figure}[t]
\includegraphics[width=0.7\columnwidth]{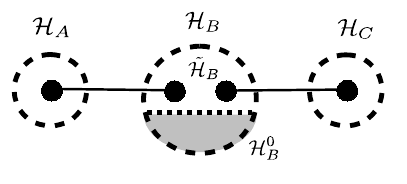}
\vspace*{-5mm}
\caption{Example of the virtual-subsystem decomposition used in constructing 
a state that cannot be obtained as a simple virtual product state yet is RFTS.}
\label{fig:nonfacstate}
\end{figure}

\begin{exmp}[\textbf{Non-factorizable RFTS state}] 
\label{ex:nonfacstate}
Consider $\hilbert\simeq\hilbert_A\otimes\hilbert_B\otimes\hilbert_C\simeq \complex^2\otimes\complex^5\otimes\complex^2$, with 
$\neigh_1=\{A,B\}$, $\neigh_2=\{B,C\}$.
Let the target state be 
$$\ket{\psi}=\ket{000}+\ket{011}+\ket{120}+\ket{131}.$$ 
One may verify that $\ket{\psi}$ would not satisfy the conditions (1)-(2) proposed in the previous subsection.  
Nonetheless, we can decompose system $B$ as 
$$\hilbert_B \simeq(\complex^2\otimes\complex^2)\oplus\complex^1\simeq (\hilbert_{b}\otimes\hilbert_{b'}) \oplus\hilbert^0_B\simeq\tilde{\hilbert}_B\oplus\hilbert^0_B,$$ 
by which we may label its basis vectors as, say, 
\begin{align*}
\ket{0}&=\ket{++},\nonumber\\
\ket{1}&=\ket{+-},\nonumber\\
\ket{2}&=\ket{-+},\nonumber\\
\ket{3}&=\ket{--},\nonumber\\
\ket{4}&=\ket{e}, 
\end{align*}
where $\text{span}\{\ket{e}\}\equiv \hilbert^0_B$.  
With respect to the resulting decomposition, we can write (see Fig. \ref{fig:nonfacstate} for a schematic)
$$\ket{\psi}=[(\ket{0+}+\ket{1-})\otimes(\ket{+0}+\ket{-1})]\oplus 0.$$  
Note that $\ket{\psi}$ is orthogonal to the space $\hilbert_A\otimes\hilbert^0_B\otimes\hilbert_C$. 
We now construct maps which render $\ket{\psi}$ RFTS. Define $\mathcal{E}^0:\mathcal{B}(\hilbert_B)\rightarrow \mathcal{B}(\hilbert_B)$ to be 
$$\mathcal{E}^0(\sigma)\equiv(\identity-\ketbra{e})\sigma(\identity-\ketbra{e})+\frac{1}{4}(\identity-\ketbra{e})\bra{e}\sigma\ket{e}.$$ 
This CPTP map takes probabilistic weight from $\hilbert^0_B$
and maps it uniformly to the complement. Also, define 
$\hat{\mathcal{E}}_1:\mathcal{B}([\hilbert_A\otimes\hilbert_{b}\otimes\hilbert_{b'}]\oplus[\hilbert_{A}\otimes\hilbert^0_B])\rightarrow \mathcal{B}([\hilbert_A\otimes\hilbert_{b}\otimes\hilbert_{b'}]\oplus[\hilbert_{A}\otimes\hilbert^0_B])$ to be
$$ \hat{\mathcal{E}}_1\equiv[\ketbra{\phi^+}\trn{A,b}\otimes\mathcal{I}_{b'}]\oplus \mathcal{I}, $$ 
where $\ket{\phi^+} \equiv \frac{1}{\sqrt{2}}(\ket{0+}+\ket{1-})$. We define $\hat{\mathcal{E}}_2$ acting on $\neigh_2$ similarly. 
With these, let the two neighborhood maps 
\begin{align*}
\mathcal{E}_1&\equiv (\hat{\mathcal{E}}_1\otimes\mathcal{I}_C)\circ(\mathcal{I}_{A}\otimes\mathcal{E}^0\otimes\mathcal{I}_C),\\
\mathcal{E}_2&\equiv (\mathcal{I}_A\otimes\hat{\mathcal{E}}_2)\circ(\mathcal{I}_{A}\otimes\mathcal{E}^0\otimes\mathcal{I}_C).
\end{align*}
Crucially, since the outputs of both $\mathcal{E}_1$ and $\mathcal{E}_2$ cannot have support in $\hilbert_A\otimes\hilbert^0_B\otimes\hilbert_C$, the action of $\mathcal{E}^0$ following either map is trivial, 
$ (\mathcal{I}_{A}\otimes\mathcal{E}^0\otimes\mathcal{I}_C)\circ\mathcal{E}_1=\mathcal{E}_1, $
and similarly for $\mathcal{E}_2$. Hence, the product of either order of the maps is
\begin{align*}
 \mathcal{E}_1\circ\mathcal{E}_2&=(\hat{\mathcal{E}}_1\otimes\mathcal{I}_C)\circ(\mathcal{I}_{A}\otimes\mathcal{E}^0\otimes\mathcal{I}_C)\circ\mathcal{E}_2\nonumber\\
&=(\hat{\mathcal{E}}_1\otimes\mathcal{I}_C)\circ\mathcal{E}_2\nonumber\\
&=(\hat{\mathcal{E}}_1\otimes\mathcal{I}_C)\circ(\mathcal{I}_A\otimes\hat{\mathcal{E}}_2)\circ(\mathcal{I}_{A}\otimes\mathcal{E}^0\otimes\mathcal{I}_C)\nonumber\\
&=\ketbra{\psi}\trn{}.
\end{align*}
\end{exmp}

The key feature of the state in the above example that enables it to be RFTS is that there exists 
a neighborhood factorization 
where the algebra of each factor is contained in a corresponding neighborhood, 
{\em once a subspace of $\hilbert$ is removed ``locally''} ($\hilbert^0_B\leq\hilbert_B$). 
To cover these more general cases, two additional steps may be required before the actual identification of the 
virtual degrees of freedom is made:  {\em subsystem coarse-graining} and {\em local restriction}. 

Coarse-graining may be required as the decomposition in the physical subsystems may be more fine-grained than 
needed, relative to the specified QL constraint.
For example, consider systems $A,B,C,D$, with neighborhoods $\{A,B,C\}$ and $\{B,C,D\}$. While the physical locality 
describes four subsystems, the separation between $B$ and $C$ is ``artificial'', as far as the neighborhood structure 
is concerned. In such a scenario, it is convenient to start from a coarse-grained subsystem decomposition 
where we group subsystems $B$ and $C$, namely, $\hilbert_A\otimes\hilbert_{BC}\otimes\hilbert_D$. 
This idea may be generalized by considering the equivalence classes of the subsystems with respect to the relation ``is contained in the same set of neighborhoods as''. Explicitly, let us define the equivalence relation $\sim_{\text{cg}}$ on the subsystem's indexes 
as $i \sim_{\text{cg}} j$ whenever $i\in\neigh_k$ for some $k$ implies $j\in\neigh_k$, and vice-versa. We then have: 

\begin{defn}
Given $\hilbert\simeq\bigotimes_{i=1}^N\hilbert_i$ and a neighborhood structure $\neigh$, 
the \emph{coarse-grained subsystems} are associated to 
\(\hilbert_\ell \equiv \bigotimes_{i\in {\cal C}_\ell}\hilbert_i, \)
with ${\cal C}_\ell$ denoting equivalence classes under the relationship $\sim_{\text{cg}}$.
\end{defn}

Though usually explicitly stated, in the remainder of the paper the decomposition of the physical Hilbert space 
$\hilbert$ will be taken to refer to the {\em coarse-grained subsystems}, with 
$\neigh$ being understood accordingly. After coarse graining, in order to find a suitable factorization in virtual 
subsystems we may still need to restrict to a {\em subspace} of the coarse-grained particles:

\begin{defn}
Given $\hilbert\simeq \bigotimes_{i}\hilbert_i$ and a set of subspaces $\tilde{\hilbert}_i\leq\hilbert_i$, 
the \emph{locally restricted} Hilbert space is given by  $\tilde{\hilbert}\simeq\bigotimes_{i}\tilde{\hilbert}_i$. 
\end{defn}
\noindent
We are now ready to state the most general sufficient conditions for RFTS we can provide:

\begin{thm}[\textbf{Neighborhood factorization on local restriction}]
\label{thm:lrhilbertdecomp}
A state $\ket{\psi}$ of the coarse-grained subsystems associated to $\hilbert\simeq\bigotimes_{i=1}^N\hilbert_i$ is 
RFTS with respect to the neighborhood structure $\neigh$ if:
\begin{enumerate}
\item There exists a locally restricted space $\tilde{\hilbert}=\bigotimes_{i=1}^N\tilde{\hilbert}_i$ that admits a 
virtual-subsystem decomposition of the form $\tilde{\hilbert}=\bigotimes_{j=1}^M \hat{\hilbert}_j$, such that
\begin{equation}
\label{eq:statefacred}
\ket{\psi}=\bigotimes_{j=1}^M\ket{\hat{\psi}_j}\oplus 0 \in \Big( \bigotimes_{j=1}^M \hat{\hilbert}_j\Big) \oplus \hilbert^0 ,
\end{equation}
where $\hilbert^0\simeq\tilde{\hilbert}^{\perp}$; 
\item For each virtual subsystem $\hat{\hilbert}_j$, there exists a neighborhood $\neigh_k$ such that
\begin{equation}
\mathcal{B}(\hat{\hilbert}_j)\otimes\identity_{\overline{j}}\oplus \identity^0 \leq \mathcal{B}(\hilbert_{\neigh_k})\otimes\identity_{\overline{\neigh_k}}.
\end{equation}
\end{enumerate}
\end{thm}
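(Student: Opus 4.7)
The plan is to construct $M$ neighborhood maps $\{\mathcal{E}_j\}_{j=1}^M$, one per virtual subsystem, each of the ``reset-then-cool'' form suggested by Example~\ref{ex:nonfacstate}, and then to prove that any permutation of their composition both fixes $\ket{\psi}$ and prepares it from an arbitrary input. The pivotal preliminary observation is an algebraic consequence of Hypothesis 2. Writing $\tilde{\Pi}=\bigotimes_i\tilde{\Pi}_i$ for the projector onto $\tilde{\hilbert}$ and $\Pi^0_i=\identity_i-\tilde{\Pi}_i$, the inclusion of $\identity^0=\identity-\tilde{\Pi}$ inside $\mathcal{B}(\hilbert_{\neigh_{k(j)}})\otimes\identity_{\overline{\neigh_{k(j)}}}$ requires $\identity^0$ to act as the identity on every subsystem outside $\neigh_{k(j)}$. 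Expanding $\identity-\bigotimes_i\tilde{\Pi}_i$ via $\identity_i=\tilde{\Pi}_i+\Pi^0_i$ and matching tensor factors, this forces $I_{\text{res}}\equiv\{i:\tilde{\hilbert}_i\subsetneq\hilbert_i\}\subseteq\neigh_{k(j)}$ for every $j$. In words, every distinguished neighborhood already contains the entire locally restricted part of $\hilbert$, and is therefore licensed to perform a simultaneous local reset on every restricted subsystem.

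Building on this, I introduce two families of primitive CPTP maps. For each $i\in I_{\text{res}}$, let $\mathcal{E}^0_i$ be the single-subsystem reset with Kraus operators $\{\tilde{\Pi}_i,\,\ket{\tilde{a}_i}\bra{a^0_{i,\alpha}}\}_{\alpha}$, where $\{\ket{a^0_{i,\alpha}}\}$ is an orthonormal basis of $\hilbert^0_i$ and $\ket{\tilde{a}_i}\in\tilde{\hilbert}_i$ is a fixed unit vector: it acts as the identity on $\tilde{\hilbert}_i$ and collapses every $\hilbert^0_i$ contribution into $\ketbra{\tilde{a}_i}$. For each virtual subsystem $j$, let $\hat{\mathcal{E}}_j$ be the extension to $\hilbert$ of the strictly virtual-local cooling $(\ketbra{\hat{\psi}_j}\trn{j})\otimes\mathcal{I}_{\overline{j}}$ with Kraus operators $\{\hat{K}_\alpha\oplus 0\}\cup\{0\oplus\identity^0\}$, implementing the cooling on the $\tilde{\hilbert}$-block and the identity on $\hilbert^0$ (exactly as in Example~\ref{ex:nonfacstate}). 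The Kraus operators of both families lie inside $\mathcal{B}(\hilbert_{\neigh_{k(j)}})\otimes\identity_{\overline{\neigh_{k(j)}}}$ thanks to $I_{\text{res}}\subseteq\neigh_{k(j)}$ and Hypothesis 2, so the composite
\[
\mathcal{E}_j\equiv\hat{\mathcal{E}}_j\circ\Bigl(\bigcirc_{i\in I_{\text{res}}}\mathcal{E}^0_i\Bigr)
\]
(the internal $\mathcal{E}^0_i$ commute since they act on disjoint subsystems) is a valid $\neigh_{k(j)}$-neighborhood map. Invariance $\mathcal{E}_j(\ketbra{\psi})=\ketbra{\psi}$ follows immediately: by Hypothesis 1 the target lies in $\tilde{\hilbert}$, so each $\mathcal{E}^0_i$ acts trivially on it, and the virtual factorization of $\ket{\psi}$ makes $\hat{\mathcal{E}}_j$ fix it.

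For the finite-time convergence, the key lemma is that $\mathcal{E}_j$ sends an arbitrary $\sigma$ into a state supported entirely in $\tilde{\hilbert}$ whose $j$-th virtual factor is $\ket{\hat{\psi}_j}$: the reset stage drives any $\sigma$ into $\bigcap_{i\in I_{\text{res}}}\tilde{\hilbert}_i\otimes\hilbert_{\overline{i}}=\tilde{\hilbert}$, and the subsequent $\hat{\mathcal{E}}_j$ preserves $\tilde{\hilbert}$ (its $0\oplus\identity^0$ Kraus annihilates the $\tilde{\hilbert}$-block) while performing the virtual cooling. An induction on the position $t$ in any permutation $\pi$ then shows that after $t$ maps the state lies in $\tilde{\hilbert}$ with virtual factors $\{\pi(1),\ldots,\pi(t)\}$ in their targets: once the state is in $\tilde{\hilbert}$, each subsequent $\mathcal{E}^0_i$ reduces to the identity (the reset Kraus operators $\ket{\tilde{a}_i}\bra{a^0_{i,\alpha}}$ annihilate $\tilde{\hilbert}_i$), and each subsequent $\hat{\mathcal{E}}_{\pi(t+1)}$ reduces to the strictly virtual-local cooling, which preserves previously cooled factors because these are separate tensor factors of $\tilde{\hilbert}$. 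After all $M$ maps one obtains $\bigotimes_l\ketbra{\hat{\psi}_l}=\ketbra{\psi}$ irrespective of $\pi$. The main obstacle is the opening algebraic step: recognizing that the $\identity^0$ summand in Hypothesis 2 is precisely what packages the entire $I_{\text{res}}$-reset inside every $\mathcal{E}_j$ while respecting the quasi-locality constraint; without this, the uniform ``reset-then-cool'' recipe could not be deployed across neighborhoods and permutation-robustness would fail.
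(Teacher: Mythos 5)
Your proof is correct and follows the same reset-then-cool architecture as the paper's own argument: local resets onto the subsystem supports $\tilde{\hilbert}_i$, followed by strictly virtual-local cooling maps $\hat{\mathcal{E}}_j$ whose quasi-locality is guaranteed by Hypothesis 2, with invariance coming from $\ket{\psi}$ lying in $\tilde{\hilbert}$ and factorizing, and permutation-robustness from the coolings acting on disjoint virtual tensor factors. The one genuine difference is your opening algebraic observation. The paper defines $\mathcal{E}_j\equiv\hat{\mathcal{E}}_j\circ\mathcal{E}^0_{k(j)}$ with $\mathcal{E}^0_{k(j)}=\bigotimes_{i\in\neigh_{k(j)}}\mathcal{E}^0_i$ resetting only the particles inside the assigned neighborhood, then commutes all resets to the front of an arbitrary ordering and merges them into a single projection onto $\tilde{\hilbert}$, implicitly relying on every restricted particle being covered by some assigned neighborhood. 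You instead extract from Hypothesis 2 that $0\oplus\identity^0$ (equivalently $\tilde{\Pi}=\bigotimes_i\tilde{\Pi}_i$) must itself lie in $\mathcal{B}(\hilbert_{\neigh_{k(j)}})\otimes\identity_{\overline{\neigh_{k(j)}}}$ — take differences of elements of the left-hand algebra — which forces $\tilde{\Pi}_i=\identity_i$ for every $i\notin\neigh_{k(j)}$, i.e. $I_{\text{res}}\subseteq\neigh_{k(j)}$ for every $j$. This is a valid consequence of the hypothesis as stated, and it buys you two things: each single $\mathcal{E}_j$ already maps any input into $\tilde{\hilbert}$, so the permutation-robust induction goes through with no commutation bookkeeping; and it closes the coverage point the paper leaves tacit (a restricted particle lying outside all assigned neighborhoods simply cannot occur under the stated hypothesis). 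The remaining differences — resetting onto a fixed pure state $\ketbra{\tilde{a}_i}$ rather than the normalized projector $P_i/\tr{}{P_i}$ — are immaterial.
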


\noindent 
The following two examples, inspired by the work of Bravyi and Vyalyi in \cite{Bravyi2005}, detail a construction 
of RFTS states whereby such a neighborhood factorization arises.

\begin{exmp}[\textbf{Bravyi-Vyalyi states}]
\label{ex:BV}
The focus of \cite{Bravyi2005} is the complexity of the ``common eigenspace'' problem,  which aims to determine 
whether there exists a common eigenstate $\ket{\psi}$ of some given {\em commuting} Hamiltonians $\{H_i\}$.
The important setting the authors consider is the 2-local case, whereby each $H_i$ is a two-body operator. 
We now revisit their approach and identify a class of states which, on top of being the 
unique ground state of a FF QL Hamiltonian and hence QLS, are {\em also} RFTS. 

Consider a graph $G=(V,E)$, where each vertex corresponds to a physical subsystem of 
$\hilbert =  \bigotimes_{j=1}^N\hilbert_j$, and a set of commuting two-body projectors $\{\Pi_{jk} \}$ is in one-to-one 
correspondence with the edges of $G$, that is, $(j,k)\in E$. Following Lemma 8 in \cite{Bravyi2005} 
(with slightly adapted notation), these commuting two-body projectors, which play the role of projectors onto the 
relevant eigenspace of the $H_j$, are shown to induce a decomposition of each physical subsystem space of the 
form 
\[\hilbert_j= \bigoplus_{\alpha_j}\Hi^{(\alpha_j)}_j=\bigoplus_{\alpha_j}
\bigotimes_{k} \Hi^{(\alpha_j  \alpha_k)}_{jk},\quad j=1,\ldots, N, \]
such that each projector $\{\Pi_{jk}\}$ can be represented as:
\[\Pi_{jk}=\bigg(\bigoplus_{\alpha_j}\bigoplus_{\alpha_k}\Pi_{jk}^{(\alpha_j\alpha_k)}\bigg)\otimes \identity_{\overline{jk}}.\]
\noindent 
Here, each $\Pi_{jk}^{(\alpha_j\alpha_k)}$ is an orthogonal projector acting non-trivially {\em only} on 
$\Hi^{(\alpha_j \alpha_k)}_{jk}\otimes\Hi^{(\alpha_k \alpha_j)}_{kj},$ 
and $\identity_{\overline{jk}}$ is the identity on all physical particles but $j,k$.
Intuitively, $\Hi^{(\alpha_j \alpha_k)}_{jk},$ with $j\neq k,$ is associated to virtual ``subparticles'' of particle $j$,  
that couple via $\Pi_{jk}$ to those of particle $k$, corresponding to $\Hi^{(\alpha_k \alpha_j)}_{kj}$; 
$\Hi^{(\alpha_j \alpha_j)}_{jj}$ represent local degrees of 
freedom that are left invariant by all projectors. 
If $\alpha \equiv (\alpha_1,\ldots,\alpha_N)$,
the total Hilbert space then reads \cite{Bravyi2005}
\begin{align*}
\hilbert   \simeq \bigoplus_\alpha \bigotimes_j \hilbert_j^{(\alpha_j)} \equiv \bigoplus_\alpha \hilbert^{(\alpha)} 
\simeq \bigoplus_\alpha \Big [ \bigotimes_{j \leq k}  \Hi^{(\alpha_j  \alpha_k)}_{jk} \Big].
\end{align*}
The above decomposition implies that the common $1$-eigenspace of the $\Pi_{jk}$ is spanned by 
states that, within a {\em fixed} sector $\alpha$, are simply virtual product states:
\begin{equation}
\label{eq:BVfac}
\ket{\phi} \equiv \bigotimes_{j\leq k}\ket{\phi^{(\alpha_j\alpha_k)}_{jk}} \in \hilbert^{(\alpha)}.
\end{equation}
For $j\neq k$, $\ket{\phi^{(\alpha_j\alpha_k)}_{jk}}$ belongs to the range of $\Pi_{jk}^{(\alpha_j\alpha_k)}$ 
on $\Hi^{(\alpha_j \alpha_k)}_{jk}\otimes\Hi^{(\alpha_k \alpha_j)}_{kj},$ while $\ket{\phi^{(\alpha_j\alpha_k)}_{jj}}$ 
is any state in $\Hi^{(\alpha_{j})}_{jj}.$

The states in Eq. (\ref{eq:BVfac}) can be mapped back to the physical state space $\hilbert$ by the 
isometric embeddings
\begin{eqnarray*}
V_{j}: \hilbert_j^{(\alpha_j)} \simeq 
\bigotimes_{k} \hilbert^{(\alpha_j \alpha_k)}_{jk}\rightarrow\hilbert_j,
\end{eqnarray*}
resulting in states we term \emph{Bravyi-Vyalyi (BV) states}: 
\begin{equation}
\ket{\phi_{BV}} \equiv (V_1\otimes\ldots\otimes V_N)\ket{\phi} 
\in\hilbert , 
\label{BVstates}
\end{equation}
which may also be naturally recast as tensor network states \cite{Orus2014}.
Any BV state is, by construction, RFTS with respect to the neighborhood structure determined 
by its ``interaction graph'': if, for fixed $\alpha$, we compare 
Eq. (\ref{eq:BVfac}) and Eq. (\ref{eq:statefacred}), we note that the virtual subsystems are associated to spaces 
$\hilbert_{jj}^{(\alpha_j \alpha_j)}$ and  $\hilbert_{jk}^{(\alpha_j \alpha_k)}\otimes\hilbert_{kj}^{(\alpha_k \alpha_j)}$. 
Thus, each of them is contained in $\neigh_{jk}=\{j,k\}$, as desired.  Explicitly, we have:
\begin{align*}
\mathcal{B}(\hilbert_{jj}^{(\alpha_j \alpha_j)})\otimes\identity^{(\alpha)}_{\overline{jj}}\oplus 0&
\leq\mathcal{B}(\hilbert_{\neigh_{jk}})\otimes\identity_{\overline{\neigh}_{jk}}\nonumber\\
\mathcal{B}(\hilbert_{jk}^{(\alpha_j \alpha_k)}\otimes\hilbert_{kj}^{(\alpha_k \alpha_j)})
\otimes\identity^{(\alpha)}_{\overline{jk,kj}}\oplus 0
&\leq\mathcal{B}(\hilbert_{\neigh_{jk}})\otimes\identity_{\overline{\neigh}_{jk}},
\end{align*}
where the zero operator acts on the subspace generated by all 
$\bigoplus_{\beta\neq\alpha}\bigotimes_{jk}\hilbert^{(\beta_j \beta_k) }_{jk},$ and each identity 
operator acts on all 
virtual particles associated to the string $\alpha$, except those of $jj$, and $jk,kj$, respectively. Notice that, 
unless there is a unique state in the common 1-eigenspace, the projectors used in constructing a BV state 
are {\em not} the canonical projections (Eq. (\ref{canonicalH})) associated to it.
\end{exmp}

The following example presents a generalization of BV states to QL notions beyond the original two-body
setting. More than a way to test states for RFTS, the importance of both these example lies in the fact that 
they provide non-trivially multipartite entangled states that, by design, are guaranteed to be RFTS: 

\begin{exmp}[\textbf{Generalized Bravyi-Vyalyi states}] 
\label{ex:gbv}
{Consider a neighborhood structure $\neigh$ and a virtual subsystem decomposition of each 
(coarse-grained, if necessary) physical particle in $f_i$ virtual particles, that is, 
$$\hilbert_i \equiv \hilbert^0_i\oplus \tilde{\hilbert}_i\simeq \hilbert^0_i\oplus \bigotimes_{j=1}^{f_i} \hilbert_{ij}.$$ 
Define the local restriction $\tilde{\hilbert}\equiv\bigotimes_i\tilde{\hilbert}_i$ and, 
{for each neighborhood $\neigh_k$, consider a subset $S_k$ of pairs $ij$ such that: 
(1) $\hat{\hilbert}_k\equiv\bigotimes_{ij\in S_k}\hilbert_{ij}$ is contained in $\neigh_k$; 
(2) the sets $S_k$ are disjoint and each $ij$ is contained in some $S_k$.} 
These conditions together ensure that
\begin{equation*}
\tilde{\hilbert}\simeq\bigotimes_{i=1}^N\tilde{\hilbert}_i\simeq\bigotimes_{k=1}^{|\neigh|}\hat{\hilbert}_k,
\end{equation*}
where we emphasize that the local tensor factors $\tilde\Hi_i$ are very different with respect to 
the $\hat\Hi_k$, that directly reflect the QL constraint.
{Let $V_i:\bigotimes_{j=1}^{f_i} \hilbert_{ij}\rightarrow\hilbert_i$ be isometric embeddings from the virtual to 
the physical particles, with $V\equiv V_1\otimes\ldots\otimes V_N$, and consider any virtual product state 
$\ket{\psi}=\bigotimes_k\ket{\hat{\psi}_k}\in\tilde{\hilbert}$.} In analogy to Eq. (\ref{BVstates}), 
we define a \emph{generalized BV state} to be any of the form
\begin{equation}
\label{eq:gbv}
\ket{\psi_{GBV}}\equiv V\ket{\psi}={(V_1\otimes\ldots\otimes V_N)}\bigg(\bigotimes_k\ket{\hat{\psi}_k}\bigg).
\end{equation}
The different tensor product structures and associated factorizations the construction relies upon  
are precisely what allows for $\ket{\psi_{GBV}}$ to exhibit entanglement among the physical particles. 
A concrete example of a generalized BV state  is depicted and discussed in Fig. \ref{fig:GBVU}. 
}
\end{exmp}

\begin{figure}[t]
\includegraphics[width=0.75\columnwidth]{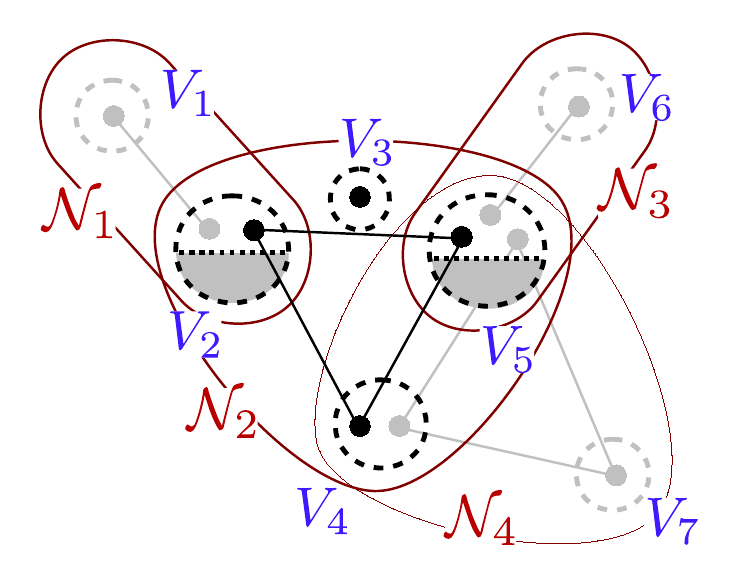}
\vspace*{-1mm}
\caption{(Color online) Example structure of a generalized BV state. Dashed circles denote physical particles, 
nodes correspond to virtual subsystems, and solid lines connect virtual subsystems that are entangled. 
The hatched semicircles indicate the subspaces $\hilbert_2^0$ and $\hilbert_5^0$ ($\hilbert_i^0\oplus\tilde{\hilbert}_i=\hilbert_i$), where the respective reduced states do not have support. Solid curves delineate the four neighborhoods. An entangled state $\ket{\hat{\psi}_k}$ is associated to each of the four groups of virtual particles that are contained in the same $\neigh_k$ of all those connected to them by solid lines. The resulting generalized BV state $\ket{\psi_{GBV}}= 
(V_1\otimes\ldots\otimes V_7) \ket{\psi}=\ket{\hat{\psi}_1}\otimes\ket{\hat{\psi}_2}\otimes\ket{\hat{\psi}_3}\otimes\ket{\hat{\psi}_4}$. 
Entanglement among the virtual subsystems is mapped into multipartite entanglement among the physical particles. 
Notwithstanding, the state is RFTS. In the RFTS scheme, the job of each neighborhood map $\mathcal{E}_k$ is to transfer probabilistic weight into $\tilde{\hilbert}_i$ and then prepare the corresponding virtual factor $\ket{\hat{\psi}_k}\in\hat{\hilbert}_k$, while acting trivially on the rest.
}
\label{fig:GBVU}
\end{figure}

\subsection{Constructive sufficiency criteria}

\subsubsection{Algebraic factorization}

In the BV and generalized BV schemes we just described, the factorization into virtual particles is induced by a
set of commuting projectors, acting on sets of particles, for which the target is the only common eigenstate.
However, given a target state, we might not know if it admits such a description. In this section, we draw
inspiration from the BV approach to investigate ways to {\em construct} a similar factorization of the Hilbert
space, amenable to RFTS, by using the projectors associated to the canonical Hamiltonian 
of a QLS state [Eq. (\ref{canonicalH})].
This construction will also include important examples that the BV schemes cannot accommodate. 
As we know from Example \ref{exmp:graphstate}, the qubit graph state on a 2D square lattice is RFTS, 
admits a virtual subsystem factorization (with each virtual particle including non-trivially degrees of freedom of 
five physical particles), and is the unique 1-eigenstate of a set of 5-body commuting projectors. 
Yet, it cannot be seen as a BV state, since coupling between more than two physical particles are involved, 
nor does it admit a generalized BV decomposition, since  
physical qubit subsystems cannot be decomposed to begin with.

We gain insight into the more general type of factorization we seek by revisiting again 
graph states, from an algebraic point of view.
Each virtual subsystem $\hat \hilbert_i$ can be associated to the operator subalgebra 
${\cal A}_i=\mathcal{B}(\hat{\hilbert}_i)\otimes\identity_{\overline{i}}$. These subalgebras 
satisfy the following properties: (1) each ${\cal A}_i$ acts non trivially only on 
$\neigh_i$; (2) ${\cal A}_i$ commutes with all Schmidt-span projectors $\Pi_k$ for $k\neq i$; (3) the ${\cal A}_i$ 
commute with each other; (4)  the union of these algebras generates the full operator algebra on the multipartite system. 
Actually, each ${\cal A}_i$ {\em can be defined} by (1) and (2), as the algebra of operators acting on $\hilbert_{\neigh_i}$ 
which commute with the remaining neighborhood projectors ($\Pi_k$, for $k\neq i$). In the following, 
we will build on this fact to {\em find} virtual particles that factorize our target. 

Recall that a set of C$^\ast$-subalgebras $\{\mathcal{A}_j\}$ of ${\cal B}(\Hi)$ is {\em commuting} if for each $X_j\in\mathcal{A}_j,X_k\in\mathcal{A}_k,$ we have $[X_j,X_k]=0.$ It is {\em complete} on $\hilbert$ if their union generates ${\cal B}(\Hi).$ A complete set of commuting subalgebras induces a factorization in virtual particles:
\begin{prop}[\textbf{Algebraically induced factorization}]
\label{thm:algfac}
If a set of algebras $\{\mathcal{A}_j\}$, $\mathcal{A}_i\leq\mathcal{B}(\hilbert)$, is complete and commuting, then each $\mathcal{A}_j$ has a trivial center and there exists a decomposition of the Hilbert space $\hilbert\simeq\bigotimes_{j=1}^T\hat{\hilbert}_j$ for which $\mathcal{A}_j\simeq\mathcal{B}(\hat{\hilbert}_j)\otimes\identity_{\overline{j}}$ for each $j$.
\end{prop}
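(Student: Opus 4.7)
The plan is to proceed in three stages: first show that each $\mathcal{A}_j$ is a \emph{factor} (trivial center), then use the structure theory of finite-dimensional C$^\ast$-subalgebras of $\mathcal{B}(\mathcal{H})$ to obtain a bipartite tensor decomposition for each $\mathcal{A}_j$, and finally combine these into the full multipartite factorization by induction on $T$, leveraging completeness.

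For the first step, fix $j$ and pick any $Z$ in the center of $\mathcal{A}_j$. By definition $Z$ commutes with every element of $\mathcal{A}_j$. The commuting hypothesis says $Z \in \mathcal{A}_j$ also commutes with every element of $\mathcal{A}_k$ for $k\neq j$. Hence $Z$ commutes with the $\ast$-algebra generated by $\bigcup_k \mathcal{A}_k$, which by completeness equals $\mathcal{B}(\mathcal{H})$. Thus $Z$ lies in the center of $\mathcal{B}(\mathcal{H})$, which is $\mathbb{C}\mathbb{I}$. So the center of $\mathcal{A}_j$ is trivial and $\mathcal{A}_j$ is a (finite-dimensional) factor.

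For the second step, I invoke the standard Artin--Wedderburn / double commutant fact: a unital C$^\ast$-subalgebra of $\mathcal{B}(\mathcal{H})$ with trivial center is unitarily equivalent to $\mathcal{B}(\hat{\mathcal{H}}_j)\otimes\mathbb{I}_{\hat{\mathcal{K}}_j}$ for some factorization $\mathcal{H}\simeq\hat{\mathcal{H}}_j\otimes\hat{\mathcal{K}}_j$, with commutant $\mathcal{A}_j'=\mathbb{I}_{\hat{\mathcal{H}}_j}\otimes\mathcal{B}(\hat{\mathcal{K}}_j)$ and $\mathcal{A}_j\vee\mathcal{A}_j'=\mathcal{B}(\mathcal{H})$.

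The third step, and the one that carries the main content, is to stitch these bipartitions together. Applying step two to $\mathcal{A}_1$ gives $\mathcal{H}\simeq\hat{\mathcal{H}}_1\otimes\hat{\mathcal{K}}_1$ with $\mathcal{A}_1=\mathcal{B}(\hat{\mathcal{H}}_1)\otimes\mathbb{I}$ and $\mathcal{A}_1'=\mathbb{I}\otimes\mathcal{B}(\hat{\mathcal{K}}_1)$. By the commuting hypothesis each $\mathcal{A}_k$ with $k\geq 2$ is contained in $\mathcal{A}_1'$, so we may identify it with a subalgebra $\tilde{\mathcal{A}}_k$ of $\mathcal{B}(\hat{\mathcal{K}}_1)$ acting as $\mathbb{I}\otimes\tilde{\mathcal{A}}_k$. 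Completeness then forces $\{\tilde{\mathcal{A}}_k\}_{k=2}^T$ to be complete on $\hat{\mathcal{K}}_1$: indeed, $\mathcal{B}(\mathcal{H})=\langle\mathcal{A}_j\rangle_{j=1}^T=\mathcal{A}_1\vee\langle\mathcal{A}_k\rangle_{k=2}^T$, and since $\mathcal{A}_1\vee\mathcal{A}_1'=\mathcal{B}(\mathcal{H})$ already, subtracting off $\mathcal{A}_1$ shows $\langle\mathcal{A}_k\rangle_{k=2}^T=\mathcal{A}_1'$, i.e.\ $\langle\tilde{\mathcal{A}}_k\rangle_{k=2}^T=\mathcal{B}(\hat{\mathcal{K}}_1)$. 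The algebras $\{\tilde{\mathcal{A}}_k\}_{k=2}^T$ remain mutually commuting, so the inductive hypothesis on $T-1$ algebras yields $\hat{\mathcal{K}}_1\simeq\bigotimes_{j=2}^T\hat{\mathcal{H}}_j$ with $\tilde{\mathcal{A}}_j=\mathcal{B}(\hat{\mathcal{H}}_j)\otimes\mathbb{I}_{\overline{j}}$. Combining with the $\mathcal{A}_1$ factorization produces $\mathcal{H}\simeq\bigotimes_{j=1}^T\hat{\mathcal{H}}_j$ with $\mathcal{A}_j\simeq\mathcal{B}(\hat{\mathcal{H}}_j)\otimes\mathbb{I}_{\overline{j}}$, as required.

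The main obstacle I anticipate is the compatibility bookkeeping in the induction: one must verify that the bipartition induced by $\mathcal{A}_1$ is genuinely respected by the other algebras (which is automatic from $\mathcal{A}_k\subset\mathcal{A}_1'$) \emph{and} that the reduced completeness on $\hat{\mathcal{K}}_1$ actually holds — the latter being the subtle point where completeness is not hereditary in general, but works here precisely because $\mathcal{A}_1$ is a factor so that $\mathcal{A}_1\vee\mathcal{A}_1'$ fills $\mathcal{B}(\mathcal{H})$. Everything else reduces to standard finite-dimensional C$^\ast$-algebra theory.
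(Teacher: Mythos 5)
Your proof is correct and follows essentially the same route as the paper's: establish the trivial center of each $\mathcal{A}_j$ from completeness and commutativity, invoke the standard factor-implies-tensor-bipartition fact, and iteratively peel off tensor factors using $\mathcal{A}_k\leq\mathcal{A}_1'$. Your treatment of the induction step is in fact somewhat more careful than the paper's, which simply asserts that "completeness ensures" the residual factor is exhausted, whereas you explicitly verify that completeness descends to the commutant at each stage.
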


For the sake of generality, we want to allow for factorizations on locally-restricted spaces, as in Proposition \ref{thm:lrhilbertdecomp}. Towards this, we provide a means of constructing a locally restricted space from a positive-semidefinite operator, 
such as the target state.
\begin{defn}
\label{def:subsup}
Given an operator $M\geq 0$ acting on (coarse-grained) subsystems $\bigotimes_{i=1}^N\hilbert_i$ with neighborhood structure $\neigh$, we define the \emph{subsystem support} of $M$ on $p$ as $\textup{supp}(\tr{\overline{p}}{M})$ and the \emph{subsystem kernel} of $M$ on $p$ as $\ker(\tr{\overline{p}}{M})$. The \emph{local support} of $M$ is then $\tilde{\hilbert}\equiv\bigotimes_{i=1}^N \textup{supp}(\tr{\overline{i}}{M})$, with $\hilbert=\hilbert^0\oplus\tilde{\hilbert}$.
\end{defn}

\noindent Note that, for a pure state $\ket{\psi}$, the subsystem support of $\rho=\ketbra{\psi}$ on $p$ is simply the Schmidt 
span $\Sigma_p(\ket{\psi})$. By construction, the support of each neighborhood projector $\Pi_j$ is a subspace of the local support of the target state, $\tilde{\hilbert}=\bigotimes_{i=1}^N \textup{supp}(\tr{\overline{i}}{\ketbra{\psi}})$. This allows us to define projectors $\tilde{\Pi}_k\equiv\Pi_k|_{\tilde{\hilbert}}$ restricted to the local support of $\ketbra{\psi}$. 
We denote the local support of $\ketbra{\psi}$ on a given neighborhood as $\tilde{\hilbert}_{\neigh_j}\equiv \bigotimes_{i\in\neigh_j}\textup{supp}(\tr{\overline{i}}{\ketbra{\psi}}),$ and its complement $\tilde{\hilbert}_{\overline{\neigh}_j}\equiv \bigotimes_{i\notin\neigh_j}\textup{supp}(\tr{\overline{i}}{\ketbra{\psi}})$. With these, we consider the following candidate for the algebras that are to induce a factorization of the target state: 

\begin{defn}
\label{defn:neighalg}
Given a target state $\ket{\psi}$ and a neighborhood structure $\neigh$, for each neighborhood $\neigh_j$, let
\begin{equation*}
\tilde{A}_{\neigh_j}\equiv \{X\in\mathcal{B}(\tilde{\hilbert}_{\neigh_j})| [X_{\tilde{\hilbert}_{\neigh_j}}
\otimes\identity_{\tilde{\hilbert}_{\overline{\neigh}_j}},\tilde{\Pi}_k]=0,\;\forall \,k\neq j \}.
\end{equation*}
The \emph{neighborhood algebra} is then defined as
\begin{equation}
\label{eq:neighalg}
\mathcal{A}_j\equiv (\textup{span}(\identity_{\neigh_j,0})\oplus \tilde{\mathcal{A}}_{\neigh_j})
\otimes\identity_{\overline{\neigh}_j},
\end{equation}
relative to the decomposition $\hilbert\simeq (\hilbert_{\neigh_j,0}\oplus \tilde{\hilbert}_{\neigh_j})
\otimes\hilbert_{\overline{\neigh}_j}$, where $\hilbert_{\neigh_j,0}$ is the complement of 
$\tilde{\hilbert}_{\neigh_j}$ in $\hilbert_{\neigh_j}$.
\end{defn}

\noindent 
Each neighborhood algebra $\mathcal{A}_j$ is an associative algebra, as it can be written as a commutant:
\begin{align*}
\mathcal{A}_j&=\{\Pi_k,\forall\;k\neq j;\identity_{\neigh_j}\otimes\mathcal{B}(\hilbert_{\overline{\neigh}_j})\}'.
\end{align*}
%If a set is closed under adjoint, as in our case, then its commutant is closed under adjoint and therefore is a C*-algebra. 
As for graph states, each $\mathcal{A}_j$ is thus the largest C$^\ast$-algebra of $\neigh_j$-neighborhood operators 
which commute with all the remaining neighborhood projectors $\Pi_k$. 

We now give the main result of this section, which states how the structure of the neighborhood algebras can 
ensure a particular factorization of the target state and, hence, that the latter be RFTS:

\begin{thm}[\textbf{Algebraic factorization RFTS}]
\label{thm:factorization} 
Let $\ket{\psi}$ on (coarse-grained) subsystems $\bigotimes_{i=1}^N\hilbert_i$ be QLS with respect to $\neigh$ and 
let the neighborhood algebras $\mathcal{A}_j$ be commuting and complete on the local support space $\tilde{\hilbert}$.
Then $\ket{\psi}$ admits a decomposition \[\ket{\psi}=0\oplus\bigotimes_j \ket{\hat{\psi}_j},\] 
with respect to the neighborhood algebra-induced factorization 
$\hilbert\simeq\hilbert^0\oplus (\bigotimes_j\hat{\hilbert}_j)$, and is thus RFTS.
\end{thm}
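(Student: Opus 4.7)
The plan is to show that under the stated hypotheses, Proposition \ref{thm:algfac} produces a virtual factorization of $\tilde{\hilbert}$ with respect to which $\ket{\psi}$ is a product state, thereby reducing the claim to Theorem \ref{thm:lrhilbertdecomp}. First, I observe that $\ket{\psi}$ automatically lives in $\tilde{\hilbert}$, since the subsystem supports $\supp(\trn{\overline{i}}{\ketbra{\psi}})$ coincide with the Schmidt spans $\Sigma_i(\ket{\psi})$, and any pure state lies inside the tensor product of its single-site Schmidt spans. Applying Proposition \ref{thm:algfac} to the complete and commuting family $\{\mathcal{A}_j\}$ on $\tilde{\hilbert}$ then yields a virtual decomposition $\tilde{\hilbert}\simeq\bigotimes_j\hat{\hilbert}_j$ for which each $\mathcal{A}_j \simeq \mathcal{B}(\hat{\hilbert}_j)\otimes\identity_{\overline{j}}$.

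The pivotal step is to locate each restricted neighborhood projector $\tilde{\Pi}_k$ inside a single virtual factor. By Definition \ref{defn:neighalg}, every element of $\mathcal{A}_j$ commutes with $\tilde{\Pi}_k$ whenever $k\neq j$, so $\tilde{\Pi}_k$ belongs to the commutant of $\mathcal{A}_j$ for every $j\neq k$. Since the commutant of $\mathcal{B}(\hat{\hilbert}_j)\otimes\identity_{\overline{j}}$ inside $\mathcal{B}(\tilde{\hilbert})$ is $\identity_j\otimes\mathcal{B}(\hat{\hilbert}_{\overline{j}})$, intersecting this constraint over all $j\neq k$ forces $\tilde{\Pi}_k = \hat{\Pi}_k\otimes\identity_{\overline{k}}$ for some projector $\hat{\Pi}_k$ on $\hat{\hilbert}_k$. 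In particular the $\tilde{\Pi}_k$'s are mutually commuting on $\tilde{\hilbert}$, and their joint $1$-eigenspace splits across the virtual factorization as $\bigcap_k\textup{range}(\tilde{\Pi}_k) = \bigotimes_k\textup{range}(\hat{\Pi}_k)$.

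Next, I invoke the QLS hypothesis via Theorem \ref{thm:asymptotic}: $\bigcap_k\overline{\Sigma}_{\neigh_k}(\ket{\psi}) = \bigcap_k\textup{range}(\Pi_k) = \textup{span}(\ket{\psi})$. Since $\textup{range}(\tilde{\Pi}_k) = \textup{range}(\Pi_k)\cap\tilde{\hilbert}$ and $\ket{\psi}\in\tilde{\hilbert}$, intersecting with $\tilde{\hilbert}$ preserves the common $1$-eigenspace, giving
\begin{equation*}
\bigotimes_k\textup{range}(\hat{\Pi}_k) \;=\; \bigcap_k\textup{range}(\tilde{\Pi}_k) \;=\; \textup{span}(\ket{\psi}).
\end{equation*}
Since the right-hand side is one-dimensional, each $\textup{range}(\hat{\Pi}_k)$ must itself be spanned by a single vector $\ket{\hat{\psi}_k}$; absorbing a global phase then yields the desired form $\ket{\psi} = 0 \oplus \bigotimes_k \ket{\hat{\psi}_k}$ with respect to the decomposition $\hilbert = \hilbert^0\oplus\tilde{\hilbert}$.

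To conclude RFTS, I verify the two hypotheses of Theorem \ref{thm:lrhilbertdecomp}: (1) is precisely the factorization just established, and (2) follows because, by construction in Definition \ref{defn:neighalg}, each $\mathcal{A}_j$ is contained in $\mathcal{B}(\hilbert_{\neigh_j})\otimes\identity_{\overline{\neigh}_j}$, so every virtual-subsystem operator on $\hat{\hilbert}_j$, extended by the identity on $\hilbert^0$, sits inside a legitimate $\neigh_j$-neighborhood operator algebra. The step I anticipate as the main technical obstacle is exactly this bookkeeping in (2): one must reconcile the virtual tensor decomposition $\tilde{\hilbert}=\bigotimes_k\hat{\hilbert}_k$ with the extension through $\hilbert^0$, which generally does \emph{not} factor cleanly over the coarse-grained physical partition $\bigotimes_i\hilbert_i$, so that the containment $\mathcal{B}(\hat{\hilbert}_j)\otimes\identity_{\overline{j}}\oplus\identity^0 \leq \mathcal{B}(\hilbert_{\neigh_j})\otimes\identity_{\overline{\neigh}_j}$ is honored unambiguously.
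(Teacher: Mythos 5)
Your proposal is correct and follows essentially the same route as the paper's proof: invoke Proposition \ref{thm:algfac} to obtain the virtual factorization, use commutation with the other neighborhood algebras to confine each $\tilde{\Pi}_k$ to a single virtual factor, apply the QLS condition to force each $\hat{\Pi}_k$ to be rank one, and close via Theorem \ref{thm:lrhilbertdecomp}. The only cosmetic difference is that the paper deduces rank-one-ness by taking the trace of the product of commuting projectors (product of ranks equals one) rather than by your tensor-factor dimension-counting argument; both are equivalent.
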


The key feature of this sufficient condition is that it is {\em operationally checkable}:
satisfaction of Eq. (\ref{eq:qls}) is determined by an intersection of vector spaces, and the neighborhood 
algebras and their commutativity can be computationally determined.
This sufficient condition, however, still does {\em not} incorporate all examples of RFTS that we know of. 
In some cases, the reduced states of the target state on a particular neighborhood may contain physical factors which 
are full rank: $\tr{\overline{\neigh}_k}{\ketbra{\psi}}=\rho_{\neigh_k}=\rho_{\neigh_k\backslash i}\otimes \rho_i$, with $\rho_i>0$. 
But then, invariance requires that any neighborhood map $\mathcal{E}_k$ act trivially on system $i$. Thus, if $\ket{\psi}$ 
were RFTS with respect to $\neigh=\{ \neigh_k\}$, it would be RFTS with respect to 
$\neigh' \equiv \{ \neigh_k\backslash i \}$. 
We have found cases in which the sufficient conditions of Theorem \ref{thm:factorization}, while not initially satisfied, 
become satisfied after updating the neighborhood structure as above.

\subsubsection{Matching overlap}
\label{sub:matchingoverlap}

The above algebraic condition may be simplified if the QL constraints satisfy a property that makes them similar to the 
edges of a graph, in the following sense:

\begin{defn}
A neighborhood structure  $\neigh$ satisfies the \emph{matching overlap condition} if for any set of neighborhoods 
that have a common intersection, this common intersection is also the intersection of any pair of the neighborhoods in the set.
\end{defn}

\noindent 
While two-body neighborhoods necessarily satisfy the matching overlap condition, general neighborhood structures, 
as for graph states or those in Fig. \ref{fig:GBVU}, need not (see also Fig. \ref{fig:MO}). The matching overlap condition 
basically ensures that the intersection of any two non-disjoint neighborhoods is a coarse-grained particle. This fact is used in 
establishing the following result:

\begin{figure}[t]
\includegraphics[width=0.45\columnwidth]{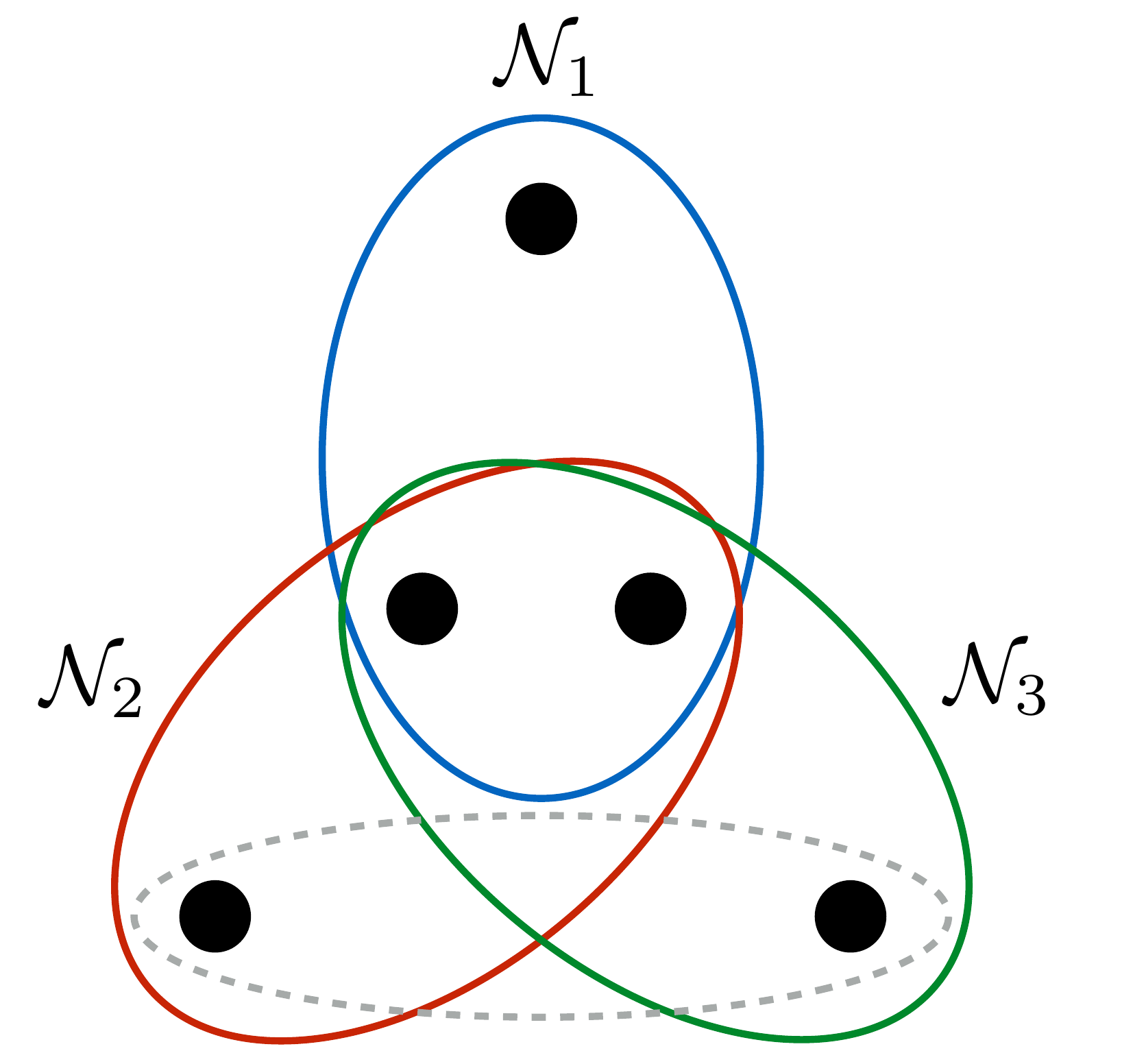}
\includegraphics[width=0.45\columnwidth]{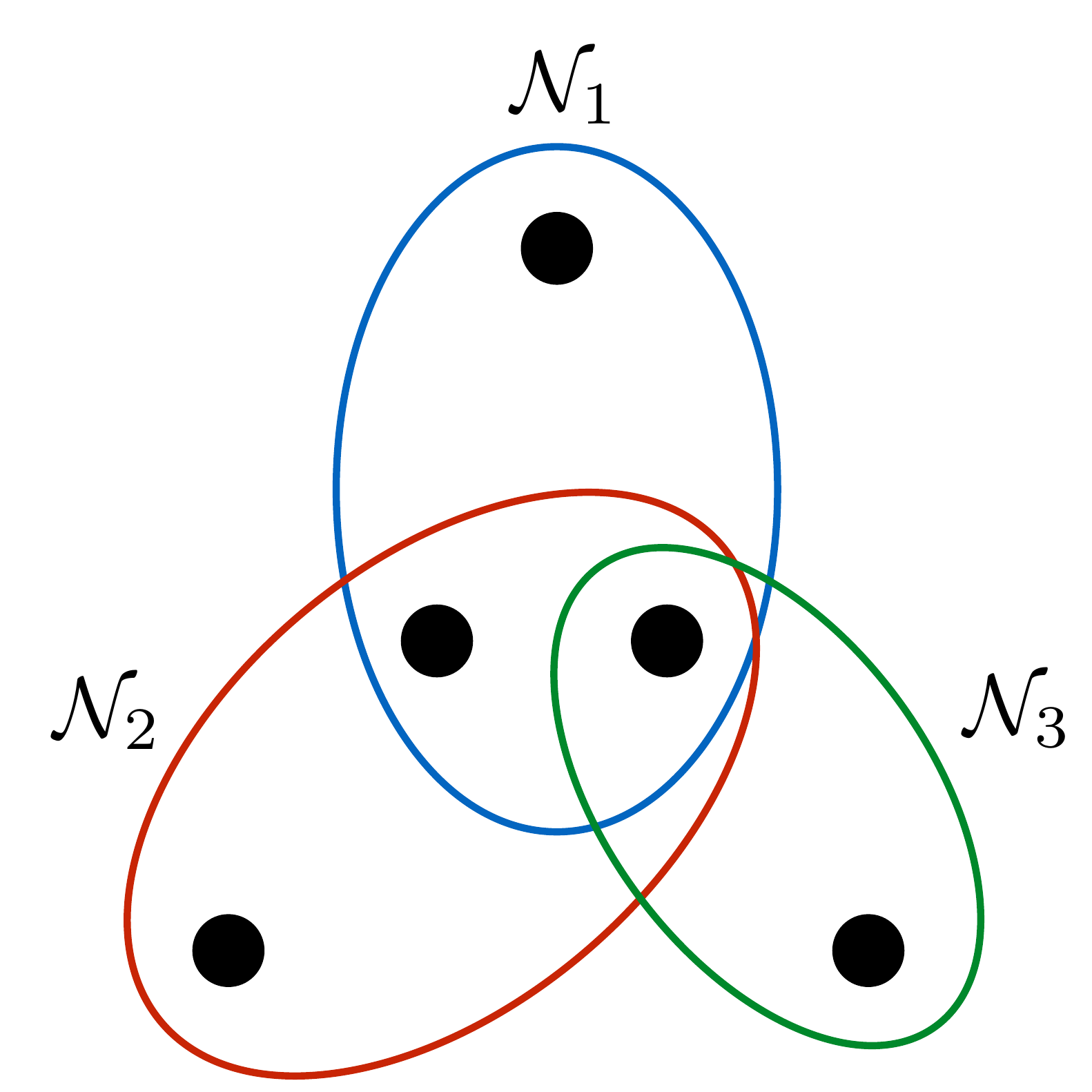}
\vspace*{-1mm}
\caption{(Color online) Illustrative example of a neighborhood structure on $N=5$ systems that does 
obey the matching-overlap property (left) versus one that does not (right). By including the dashed (grey) neighborhood, 
$\neigh$ would admit a non-trivial cycle, and lose its tree-like structure.}
\label{fig:MO}
\end{figure}

\begin{thm}[\textbf{Matching overlap RFTS}]
\label{thm:matchingoverlap}
Assume that $\ket{\psi}$ on (coarse-grained) subsystems $\bigotimes_{i=1}^N\hilbert_i$  is QLS with respect to 
$\neigh$, which satisfies the matching overlap condition. If $[\Pi_j,\Pi_k]=0$ for all pairs of neighborhood projectors, 
then $\ket{\psi}$ is RFTS.
\end{thm}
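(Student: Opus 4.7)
The plan is to reduce the statement to Theorem~\ref{thm:lrhilbertdecomp} by building, under the two structural hypotheses, an explicit virtual-subsystem factorization of $\tilde{\hilbert}$ in which each virtual factor sits inside a single neighborhood and the target factorizes as a tensor product. The construction is in the spirit of Bravyi--Vyalyi, but lifted from the 2-body situation to the present multi-body setting by exploiting the matching overlap.

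First I would show that, after coarse-graining, any two non-disjoint neighborhoods $\neigh_j$ and $\neigh_k$ meet in exactly one coarse-grained particle. Suppose instead that two distinct coarse-grained particles $p,q$ were both in $\neigh_j\cap\neigh_k$. By the coarse-graining equivalence $\sim_{\text{cg}}$, there must exist a third neighborhood $\neigh_\ell$ containing one of $p,q$ but not the other; since the triple intersection $\neigh_j\cap\neigh_k\cap\neigh_\ell$ is nonempty, matching overlap forces $\neigh_j\cap\neigh_k\cap\neigh_\ell=\neigh_j\cap\neigh_k$, which contradicts the choice of $\neigh_\ell$.

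Next, for each coarse-grained particle $p$, let $N(p)\equiv\{j:p\in\neigh_j\}$. By the previous step the ``arms'' $\{\neigh_k\setminus\{p\}:k\in N(p)\}$ are pairwise disjoint, and the restricted projectors $\{\tilde{\Pi}_k:k\in N(p)\}$ commute pairwise and each acts on $\tilde{\hilbert}_p\otimes\tilde{\hilbert}_{\neigh_k\setminus\{p\}}$. I would then apply the Bravyi--Vyalyi decomposition locally at $p$ (Lemma~8 of \cite{Bravyi2005}, adapted to the situation where the ``second particle'' in each 2-body pair is replaced by the disjoint arm $\neigh_k\setminus\{p\}$), to obtain a factorization $\tilde{\hilbert}_p\simeq\bigotimes_{k\in N(p)}\hilbert_{p,k}$ such that each $\tilde{\Pi}_k$ acts only on the factor $\hilbert_{p,k}$ of $p$ together with its arm. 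Assembling these local factorizations, I would define $\hat{\hilbert}_j\equiv\bigotimes_{p\in\neigh_j}\hilbert_{p,j}$, yielding $\tilde{\hilbert}\simeq\bigotimes_j\hat{\hilbert}_j$ with each $\hat{\hilbert}_j$ by construction contained in $\neigh_j$ and each $\tilde{\Pi}_j$ supported on $\hat{\hilbert}_j$. Since $\ket{\psi}$ is the unique common $1$-eigenvector of $\{\tilde{\Pi}_j\}$ on $\tilde{\hilbert}$ (by QLS together with Theorem~\ref{thm:asymptotic}) and the projectors act on distinct virtual factors, $\ket{\psi}$ must be the tensor product of the unique $1$-eigenvector of each $\tilde{\Pi}_j$ inside $\hat{\hilbert}_j$, giving the required decomposition $\ket{\psi}=\bigotimes_j\ket{\hat{\psi}_j}\oplus 0$. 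The hypotheses of Theorem~\ref{thm:lrhilbertdecomp} are then satisfied and RFTS follows.

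The main obstacle is the particle-wise Bravyi--Vyalyi step: the original Lemma~8 of \cite{Bravyi2005} is stated for 2-body commuting projectors, whereas here each $\tilde{\Pi}_k$ may act on many particles within $\neigh_k$. Matching overlap is precisely what makes the extension go through, because at the level of a single particle $p$ the relevant projectors act on $\tilde{\hilbert}_p$ together with pairwise disjoint arms, and the BV argument only uses disjointness of the ``other'' tensor factors, not the fact that they are single particles. A secondary subtlety is checking that the local factorizations at different particles glue into a globally consistent tensor factorization of $\tilde{\hilbert}$; this follows cleanly from the disjointness of arms established in the first step, together with the fact that each virtual factor $\hilbert_{p,j}$ is indexed uniquely by the pair $(p,j)$ and hence appears in only one of the $\hat{\hilbert}_j$.
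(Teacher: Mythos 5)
Your overall strategy is sound and genuinely different from the paper's. You reduce the multi-body problem to the two-body Bravyi--Vyalyi setting by observing that matching overlap forces the ``arms'' $\neigh_k\setminus\{p\}$ around each coarse-grained particle $p$ to be pairwise disjoint, so that locally at $p$ one sees a star graph of two-body commuting projectors, to which Lemma~8 of \cite{Bravyi2005} applies. The paper instead works directly with the algebras $\textup{alg}\{\Sigma_p(\Pi_k)\}_{\neigh_k\ni p}$ generated by the operator Schmidt spans of the projectors: it shows (via Proposition \ref{thm:lsalgdecomp}) that each such algebra lives in $\mathcal{B}(\textup{supp}(\rho_p))\oplus\textup{span}\{\identity_{\ker(\rho_p)}\}$, uses QLS to prove that the algebra they jointly generate has trivial center on $\textup{supp}(\rho_p)$ and hence is all of $\mathcal{B}(\textup{supp}(\rho_p))$, and then reads off the tensor factorization of $\textup{supp}(\rho_p)$ from the fact that these are mutually commuting subalgebras generating a full matrix algebra. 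Your first step (that any two non-disjoint neighborhoods meet in exactly one coarse-grained particle) is correct and is in fact a proof of an assertion the paper only states; your gluing argument is also fine.

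The gap is in the particle-wise Bravyi--Vyalyi step. Lemma~8 of \cite{Bravyi2005} does not produce a tensor factorization $\tilde{\hilbert}_p\simeq\bigotimes_{k\in N(p)}\hilbert_{p,k}$; it produces a \emph{direct sum} of tensor factorizations, $\hilbert_p\simeq\bigoplus_{\alpha_p}\bigotimes_{k}\hilbert_{p,k}^{(\alpha_p)}$, and the commuting projectors are only block-diagonal with respect to the sector label $\alpha_p$. Your subsequent assembly $\tilde{\hilbert}\simeq\bigotimes_j\hat{\hilbert}_j$ and the final ``unique $1$-eigenvector of each $\tilde{\Pi}_j$'' argument both presuppose that only one sector survives. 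This is true, but it is exactly where the QLS hypothesis must enter a second time, beyond your final step: since the common $1$-eigenspace is block-diagonal in the sectors, uniqueness forces $\ket{\psi}$ into a single sector, and hence forces $\textup{supp}(\rho_p)$ into a single block $\bigotimes_{k}\hilbert_{p,k}^{(\alpha_p)}$; only after this restriction does your factorization of $\tilde{\hilbert}_p$ hold. Without this argument one cannot conclude that the $\tilde{\Pi}_j$ act on distinct virtual factors of a single tensor product, and the reduction to Theorem \ref{thm:lrhilbertdecomp} does not go through. The paper's trivial-center computation is precisely the device that rules out this residual direct-sum structure, so you should either import that argument or supply the sector-selection step explicitly.
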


Notice how this allows us to completely by-pass the need for identifying a virtual-particle factorization to 
ascertain whether a state is RFTS. 
%In a sense, the matching overlap condition ensures that neighborhood structures 
%beyond 2-body are still amenable to a decomposition similar to that of the BV construction in Example \ref{ex:gbv}.
From a physical standpoint, the above theorem brings the commutativity properties of the canonical parent Hamiltonian 
$H= \sum_k\Pi_k$ to the fore: it is tempting to ask whether commuting neighborhood projectors may also be necessary 
for a QLS state to, further, be RFTS. The following example shows, however, that this is certainly \emph{not} true if the 
matching overlap condition is relaxed:

\begin{exmp}[\textbf{RFTS ground state of non-commuting canonical parent Hamiltonian}]
Consider nine qubits, labeled 1-9, described by the targetstate
$$ \ket{\psi}_W \equiv \ket{W}_{123}\otimes\ket{W}_{456}\otimes\ket{W}_{789}, $$ 
where $\ket{W} = \frac{1}{\sqrt{3}}(\ket{001}+\ket{010}+\ket{100})$, with the relevant $\neigh$
being depicted in Fig. \ref{fig:wstate}. 
That $\ket{\psi}_W$ is RFTS follows from the fact that it can be factorized such that each factor is contained in 
a neighborhood. The three maps which compose to stabilize $\ket{\psi}_W$ are $\mathcal{E}_{123}
 \equiv (\ketbra{W}_{123}\trn{})\otimes\mathcal{I}_{\overline{123}}$, and similary for $\mathcal{E}_{456}$ and 
 $\mathcal{E}_{789}$. To show that the neighborhood projectors $\Pi_k$ do {\em not} commute, consider 
 $\Pi_A$ and $\Pi_B$. On systems 7, 8, and 9, these, respectively, project onto $\textup{supp}(\identity_{7}\otimes 
 \tr{7}{\ketbra{W}_{789}})$ and $\textup{supp}(\tr{9}{\ketbra{W}_{789}}\otimes\identity_{9})$. A direct calculation shows 
that these two projections 
do not commute with one another. Hence, $[\Pi_A,\Pi_B]\neq 0$, and, 
by symmetry, this holds for any pair of $\Pi_k$.
Despite the fact that $H= \sum_k\Pi_k$ is thus non-commuting, we 
can still construct a different, commuting FF QL Hamiltonian for which $\ket{\psi}_W$ is the unique ground state, 
namely,
\begin{align*}
\tilde{H}_{} &= (\identity - \ketbra{W}_{123}\otimes\identity_{\overline{123}})+ 
(\identity - \ketbra{W}_{456}\otimes\identity_{\overline{456}})\nonumber\\
&+ \: (\identity - \ketbra{W}_{789}\otimes\identity_{\overline{789}}).
\end{align*}
\end{exmp}

\begin{figure}[t]
\includegraphics[width=0.7\columnwidth,viewport= 0 0 800 760,clip]{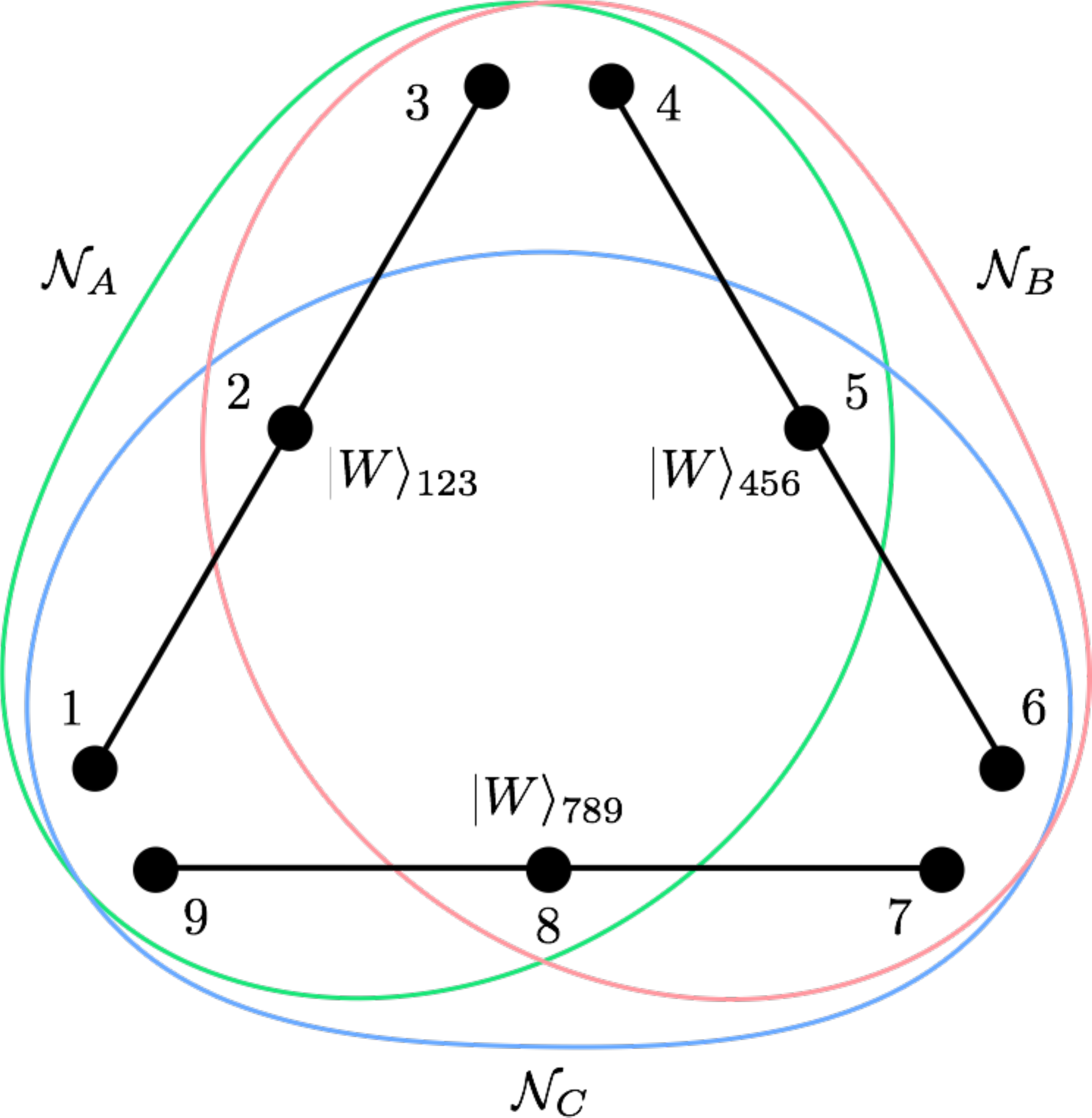}
\vspace*{-1mm}
\caption{(Color online) 
Example of a RFTS state with a non-commuting canonical parent Hamiltonian.
The three neighborhoods are enclosed by ovals and are most easily described by their respective complements. 
Letting $S\equiv \{1,\ldots,9\}$, we define $\neigh_A\equiv S\backslash \{6,7\}$, $\neigh_B\equiv S\backslash \{1,9\}$, 
and $\neigh_C\equiv S\backslash \{3,4\}$.  Note that the matching overlap condition is not obeyed. } 
\label{fig:wstate}
\end{figure}

As this example shows, the canonical Hamiltonian $H$ is {\em not}, in itself, 
useful for diagnosing whether a QLS state can be RFTS.  
In this regard, a few remarks are in order.
First, although we shall not include a formal proof here, one may show that, by 
further restricting the neighborhood structures to {\em both} obey the matching overlap property and
avoid the occurrence of loops (Fig. \ref{fig:MO}, left), commutativity of $H$ is in fact {\em necessary and sufficient} 
for a QLS pure state to be RFTS \cite{CDC2017}. These ``tree-like'' geometries 
include arbitrary 1D NN settings, though not the 2D lattice NN neighborhood structure. 
Equivalently, one can see that for any such tree-like QL constraint, $|\psi\rangle$ is RFTS if and only if 
it is a generalized BV state as in Eq. \eqref{eq:gbv}, further contributing to exact characterizations of 
ground states of commuting FF Hamiltonians \cite{Beigi2012}.
We further {\em conjecture} that, if $\ket{\psi}$ is RFTS, there always exists some FF QL
commuting parent Hamiltonian for which it is the unique ground state.  
Finding such non-canonical parent Hamiltonians remains, however, an interesting 
open problem in general.

\subsection{Extension to mixed target states}
\label{sub:mixed}

Although the focus of this paper is on target pure states, and extending the analysis of FT stabilizabity 
to general mixed states is well beyond our aim, we collect here 
those results that carry over directly to target mixed states.
In our analysis of FTS in Sec. \ref{sec:fts}, 
the purity of the target state played a crucial role. Even the necessary condition of small Schmidt span [Theorem \ref{thm:ftsnec}] 
involved criteria that only apply to target pure states. Thus, analysis of non-robust FTS for target mixed states 
remains unexplored and left to future work.
In contrast, a number of the RFTS results of Secs. \ref{sec:necessary}-\ref{sec:sufficient} are directly 
applicable to, or admit analogs for, the mixed-state case:

\vspace*{1mm}

\noindent 
$\bullet$ Theorem \ref{thm:necessary_robust} constrains the correlations of a state that is to be RFTS. 
Both the statements and the proofs of these results generalize directly to the case of an arbitrary target mixed state.

\vspace*{1mm}

\noindent 
$\bullet$ The existence of a virtual subsystem decomposition of the full Hilbert space $\hilbert$, 
as described in Sec. \ref{sec:sufficient}, still ensures that a mixed target state is RFTS. 
Here, instead of the pure state being factorized with respect to $\hilbert = \bigotimes_j \hat{\hilbert}_j$, the mixed state 
must be of the form $\rho=\bigotimes_j \hat{\rho}_j$. Accordingly, the RFTS scheme employs neighborhood 
maps which prepare the mixed-state factors among the virtual subsystems 
$\mathcal{E}_j=(\hat{\rho}_j\trn{})_j\otimes\mathcal{I}_{\overline{j}}$. 

\vspace*{1mm}
\noindent 
$\bullet$ Theorem \ref{thm:lrhilbertdecomp}, involving a virtual subsystem decomposition on top of coarse-graining
and local restriction to a proper subspace of $\hilbert$, can also be generalized. Here, the local restriction is defined by 
the mixed state's subsystem support, as in Definition \ref{def:subsup}. The construction, then, is completely analogous 
to that of the pure-state case. 

\vspace*{1mm}

As the remaining results on RFTS involve the Schmidt-span projectors derived from $\ket{\psi}$, 
and an analogous object for a mixed state is not known, they cannot be directly extended. 
Among target states for which the above tools suffice, all graph product states on qudits, whose asymptotic QL stability was 
established in \cite{Johnson2016}, are 
RFTS. Interestingly, states with a graph-product structure have been 
recently shown to play a key role toward demonstrating ``quantum supremacy'' in 2D quantum simulators \cite{Eisert2017}.
Likewise, certain thermal states are also RFTS:

\begin{exmp}[\textbf{Gibbs states of virtual-product QL Hamiltonians}]
Let $H=\sum_k H_k$ on $\hilbert\simeq\bigotimes_{i=1}^N \hilbert_i$, 
and assume that a virtual factorization $\hilbert\simeq \bigotimes_{j=1}^M\hat{\hilbert}_j$ exists, such 
that (1) for all $k$ there exists a $j\equiv j_k$ with $H_k \simeq\hat{H}^k_j\otimes\identity_{\overline{j}}$;
and (2) for each $j$ there exists a $k$ such that $\mathcal{B}(\hat{\hilbert}_j)\otimes\identity_{\overline{j}} \leq \mathcal{B}(\hilbert_{\neigh_k})\otimes\identity_{\overline{\neigh_k}}$. Then, the Gibbs state
\begin{equation*}
\rho_G(H) \equiv \exp(-\beta H)/\tr{}{\exp(-\beta H)}, \beta \geq 0, 
\end{equation*}
is RFTS. This follows from the fact that each virtual-subsystem algebra is contained in a neighborhood 
algebra and $\rho_G(H)$ is a virtual product state:  
\begin{align*}
\rho_G(H) &= \exp(-\beta  \sum_k (\hat{H}^k)_{j_k}\otimes\identity_{\overline{j_k}} )/\tr{}{\exp(-\beta H)}\nonumber\\
&= \frac{1}{\tr{}{\exp(-\beta H)}}\bigotimes_{j=1}^M\exp\Big(-\beta\hspace*{-3mm}\sum_{k\,\textup{s.t.}\, j_k = j}
\hspace*{-3mm}\hat{H}^k\Big).
\end{align*}
In particular, we can conclude that the Gibbs state associated to the 
canonical graph-state Hamiltonian is RFTS.
\end{exmp}

%%%%%%%%%%%%%%%%%%%%%%%%%%%%%%%%%%%%%%%%%%%%%%

\section{Efficiency of finite-time stabilization}
\label{sec:efficiency}

In this section we analyze the complexity of the dissipative quantum circuits required to achieve 
FTS, by addressing how the number of CPTP neighborhood maps (circuit size) and the 
degree of parallelization (circuit depth) scale with system size.  If $S$ 
consists of $N$ qudits, with total dimension $D=d^N$, and a neighborhood structure 
$\neigh$ is given, we assume that the target state is scalable, in the sense that a family of states 
$\{ \ket{\psi^{(N)}}\}$ may be defined for any $N,$ while the size of the neighborhoods and the Schmidt-span 
dimension remain the same.

\subsection{Non-robust stabilization setting}

Recall that the design of the FTS scheme we presented in Sec. \ref{sec:ftssuff} is based on two ideas: 
(i) Choose the dissipative map $\mathcal{W}$ to maximally reduce the rank of the fully mixed state; 
(ii) Choose the unitary maps $\mathcal{U}_i$ so that the subsequent action of $\mathcal{W}$ maximally 
reduces the rank of its input. The protocol ${\cal W}\circ {\cal U}_T\circ {\cal W}\circ \ldots \circ {\cal U}_1\circ {\cal W}$
then alternates the dissipative actions with the unitary ``scrambling'' of the relevant degrees of freedom.
The maximum number of neighborhood unitaries comprising each $\mathcal{U}_i$ is 
$2(D-1)^2=2(d^N-1)^2\sim \mathcal{O}(d^{2N})$ (from Proposition \ref{thm:gencoro}), 
whereas each $\mathcal{W}$ counts as a single map. 
In turn, the total number $T$ of steps needed depends on the extent to which $\mathcal{W}$ reduces the 
rank of the input density matrix. If $r$ is the maximum cooling rate, since each $\mathcal{W}$ achieves 
a rank reduction by $d^r$, then $T\sim N/r$, whereby 
the worst-case circuit size scales as $\mathcal{O}[(N/r) d^{2N}]$  \cite{remarkExample}. 

For certain neighborhood structures, the circuit depth can be reduced by acting simultaneously on 
different neighborhoods. Suppose that $\neigh$ admits ``$L$-layering,'' namely, it 
can be partitioned into $L$ sets, such that all neighborhoods in a given set are mutually disjoint. 
If the cooling rate of all neighborhoods in a particular layer is $r$, then 
instead of defining a single neighborhood-acting dissipative map $\mathcal{W}$, 
we can define a dissipative map $\mathcal{W}_i$ for each neighborhood in the layer, 
with $\mathcal{W}\equiv \prod_i \mathcal{W}_i$. 
Since $|\neigh|/L \sim N/L$ maps can now be applied in each round of 
unitaries, a rank reduction of $(d^r)^{(N/L)}$ is achieved per round, allowing to shorten the total 
number of steps to $T\sim L/r$.  Still, the scaling of the circuit size, ${\mathcal O}[(L/r) d^{2N}]$, 
remains exponential. This unfavorable scaling is  due to the compilation the neighborhood 
stabilizer unitaries making up the global stabilizer unitaries.  While this worst-case may be drastically r
educed for particular cases in principle, we turn now attention to 
the more practically relevant case of RFTS circuits, which are entirely built out of non-unitary maps.

\subsection{Robust stabilization setting}
\label{sub:circuit}

We focus on systems and neighborhood structures defined with respect to a {\em finite $m$-dimensional lattice}. 
The importance of the lattice structure of the subsystems and neighborhoods is that is affords 
a layering, as introduced before, wherein the neighborhood maps within a given layer are mutually disjoint. 
By fixing a type of QL constraint (say, next-NN as in Fig. \ref{fig:parallel}), we will show 
how, in the RFTS setting, the resulting high degree of parallelization allows to upper-bound 
the depth of the corresponding dissipative circuit by a \emph{constant}.

To appreciate the role played by the lattice structure, consider the following example of a neighborhood structure 
which is scalable yet not amenable to support a constant-depth RFTS circuit 
Let $\neigh$ be given by the set of all pairs of subsystems, giving $|\neigh|={N \choose 2}=\frac{N(N-1)}{2}$. 
The largest number of neighborhood maps which may act in parallel is $\lfloor N/2\rfloor$. 
Hence, the best possible parallelization will still require at least $|\neigh|/\lfloor N/2\rfloor=N-1$ layers of maps.
We first describe our approach to achieving constant depth in a concrete example:

\begin{figure}[t]
\includegraphics[width=0.8\columnwidth]{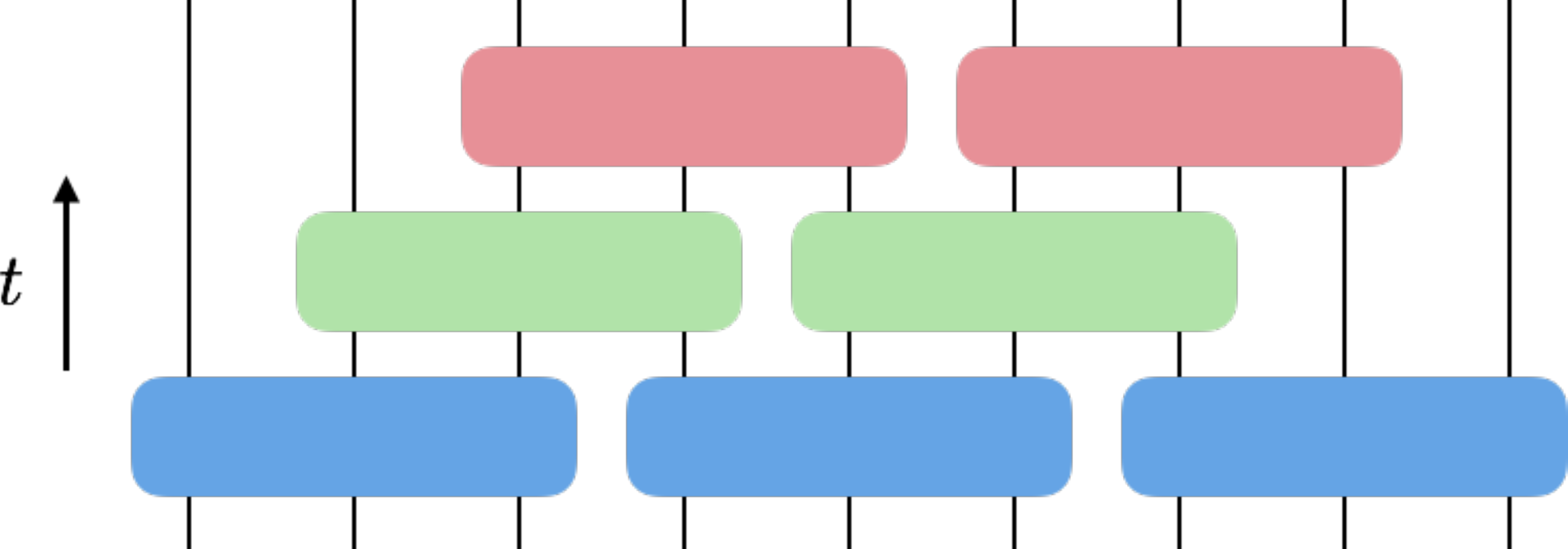}
\vspace*{-1mm}
\caption{(Color online) Given a 1D lattice of nine systems subject to a next-NN QL constraint, a state which is RFTS can be 
stabilized with a depth-$3$ QL dissipative circuit by organizing the application of maps into layers as shown. }
\label{fig:parallel}
\end{figure}

\begin{exmp}[\textbf{CCZ states on kagome lattice}]
\label{exmp:kagome}
The CCZ state we considered in Example \ref{exmp:CCZ} can be similarly defined on the kagome lattice, 
with CCZ gates acting on each triangle of systems. As depicted in Fig. \ref{fig:kagome}, to each physical system we associate 
the five-body neighborhood made of that system along with its four nearest neighbors. Similar to 
Example \ref{exmp:CCZ}, it is simple to see that the CCZ state defined on this lattice is RFTS with respect to $\neigh$.
We now show that, for arbitrary size $N$, RFTS can be achieved by a dissipative circuit of 
depth $12$. The unit cell of the kagome lattice consists of three physical systems, and, therefore, three neighborhoods 
(Fig. \ref{fig:kagome}). By translating these three physical systems and three neighborhoods by the group of lattice translations 
(generated by unit lattice vectors $\hat{e}_1$ and $\hat{e}_2$), we can obtain the set of all systems and all neighborhoods.

In a RFTS scheme, the irrelevance of the map ordering allows us to organize the neighborhood maps into layers. 
To construct a layer, consider the set of neighborhoods $\neigh^0$ in the unit cell labeled $\neigh^0_1$, $\neigh^0_2$, 
and $\neigh^0_3$  in Fig. \ref{fig:kagome}. For each direction, translate this set until it becomes disjoint with respect to 
the original set. The \emph{diameter} of the set, the maximum number of such translations needed over all directions, 
is found to be two. By translating any neighborhood in the unit cell by this diameter, the resulting neighborhood is ensured 
to be disjoint from the former. We can generate a layer of disjoint neighborhoods by repeatedly translating a unit cell neighborhood by multiples of the diameter (i.e., an even number of translations) in each direction. Three of the layers will correspond to the three neighborhoods in the unit cell. We still need to account for the neighborhoods translated by an odd number of lattice vectors in either direction. These nine remaining layers are obtained by translating each of the previous three layers by lattice translations $(0,1)$, $(1,0)$ or $(1,1)$. Thus, we have partitioned the neighborhood maps into $12$ layers. In each layer, the dissipative neighborhood maps act in parallel ensuring that, for any lattice size, the CCZ state is RFTS with respect to a depth-$12$ dissipative circuit.
\end{exmp}

\begin{figure}[t]
\includegraphics[width=0.8\columnwidth]{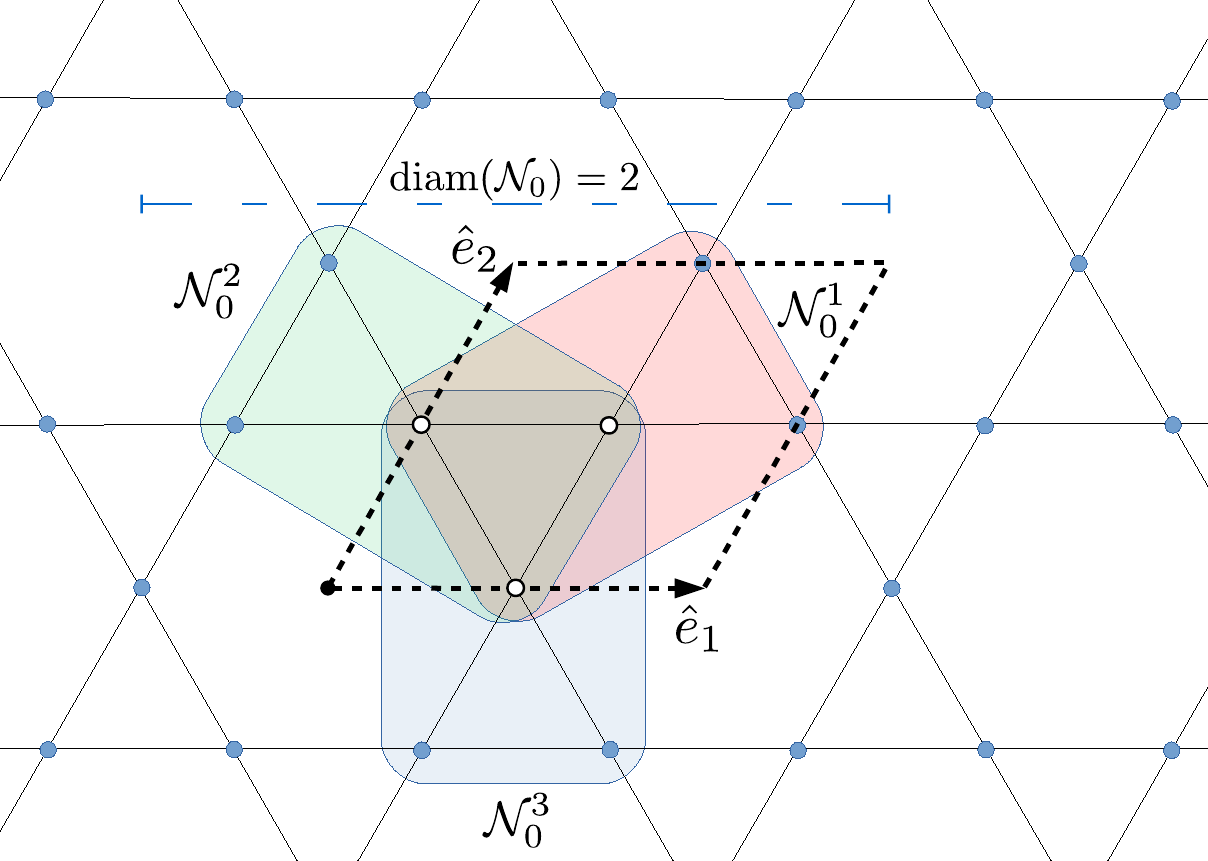}
\vspace*{-1mm}
\caption{(Color online) 
Kagome lattice with its unit cell, and neighborhood structure for the CCZ state. 
In constructing such state, the system is initialized in $\ket{+}^{\otimes N}$ and a CCZ-gate is applied to 
each triangle of adjacent systems [cf. Eq. (\ref{eq:ccz})].}
\label{fig:kagome}
\end{figure}

{The above scheme may be generalized to neighborhood structures defined on an arbitrary lattice. A lattice system is 
obtained from a unit cell containing an arrangement of ${\sf c}$ physical systems along with a discrete group of transformations, 
generated by the set of translations by $\hat{e}_1, \ldots, \hat{e}_m$. 
These can be seen as a representation of the abstract group $\mathbb{L}
\simeq \mathbb{Z}^m,$ where the $j$th component is associated to the number of (forward or backward) translations by $\hat e_j.$ 
The obtained lattice is, by construction, invariant under the action of  $\mathbb{L}$. As in Example \ref{exmp:kagome}, 
we also construct the neighborhood structure to be invariant under $\mathbb{L}$, 
by starting from a unit cell of neighborhoods, $\neigh^0,$ and thereby generating the global  $\neigh$
through translations in $\mathbb{L}$. We denote the diameter of the generating set $\textup{diam}(\neigh^0)$.} 
{In order to describe how circuit size and depth scale with $N$, we consider a sequence of finite-sized subsets of the infinite 
lattice. We take the system to be a width-$L$, $m$-dimensional hypercube of the lattice. This system contains $L^m$ 
unit cells, totaling $N = {\sf c}L^m$ subsystems and $|\neigh_0|L^m$ neighborhoods.
This induces a total $N= {\sf c} L^m$ subsystems and $|\neigh^0| L^m$ neighborhoods. For each $N$, we denote the 
corresponding neighborhood structure as $\neigh^{(N)}$. We can then bound the circuit complexity as follows:

\begin{prop}[\textbf{Lattice circuit-size scaling}]
\label{thm:latticecomplexity}
Consider an $N$-dimensional subset and neighborhood structure $\neigh^{(N)}$ on 
a $m$-dimensional lattice. If $\ket{\psi^{(N)}}$ 
is RFTS with respect to $\neigh^{(N)}$, then $\ket{\psi}$ can be stabilized by a dissipative circuit of 
size at most $|\neigh^0|(N/ {\sf c})$ and depth at most $|\neigh^0|\textup{diam}(\neigh^0)^m$.
\end{prop}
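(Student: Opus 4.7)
The strategy breaks naturally into two essentially independent tasks: bounding the circuit size by counting neighborhoods, and bounding the depth by exploiting translational symmetry together with the permutation-invariance guaranteed by RFTS.

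For the size bound, I would invoke the RFTS property via the sufficient-conditions schemes developed in Sec.~\ref{sec:sufficient} (in particular Theorem~\ref{thm:lrhilbertdecomp} and the generalized Bravyi--Vyalyi construction of Example~\ref{ex:gbv}): these produce a stabilizing sequence that uses at most one neighborhood CPTP map per neighborhood in $\neigh^{(N)}$, each acting non-trivially only on its associated $\neigh_k$ and preparing the virtual-subsystem factor(s) of $\ket{\psi^{(N)}}$ contained in that neighborhood. Because $\neigh^{(N)}$ is obtained by translating the generating set $\neigh^0$ through the $L^m=N/{\sf c}$ lattice translations in $\mathbb{L}$, one simply counts $|\neigh^{(N)}|=|\neigh^0|L^m=|\neigh^0|(N/{\sf c})$, yielding the stated size bound.

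For the depth bound, I would organize these $|\neigh^{(N)}|$ maps into parallel layers indexed by pairs $(i,\mathbf{a})$, where $i\in\{1,\ldots,|\neigh^0|\}$ labels a generator $\neigh^0_i$ and $\mathbf{a}\in\{0,\ldots,D-1\}^m$ is a residue class modulo $D\equiv\textup{diam}(\neigh^0)$. The layer $\mathcal{L}_{i,\mathbf{a}}$ collects the maps assigned to every translate $T_{\mathbf{v}}\neigh^0_i$ whose lattice-translation vector satisfies $v_j\equiv a_j\pmod{D}$ for all $j$. Any two translates within the same residue class differ by a lattice vector whose coordinates are all multiples of $D$; by the defining property of $\textup{diam}(\neigh^0)$, iterating the disjointness-by-$D$-translation of the generating set, their subsystem supports are disjoint. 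Consequently the maps in $\mathcal{L}_{i,\mathbf{a}}$ commute trivially and combine into a single tensor-product CPTP map counting as one circuit layer. There are exactly $|\neigh^0|\cdot D^m$ such pairs $(i,\mathbf{a})$, giving the claimed depth bound.

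The RFTS hypothesis is essential in the reordering step: by Eqs.~(\ref{invFT})--(\ref{attractFT}), stabilization holds for \emph{every} permutation of the maps in the RFTS sequence, so we may freely recompose them into the layered schedule above without changing the output. The main obstacle I anticipate is bookkeeping rather than conceptual: one must (i) verify that the sufficient-conditions scheme really yields one map per neighborhood in the lattice case (straightforwardly inherited from translational symmetry of $\ket{\psi^{(N)}}$), and (ii) handle boundary effects from restricting the $\mathbb{L}$-action to a finite width-$L$ hypercube, where the counts $|\neigh^0|L^m$ and $|\neigh^0|D^m$ serve as upper bounds rather than exact equalities. The construction generalizes directly the explicit kagome-lattice argument of Example~\ref{exmp:kagome}, which I would use as the template for the formal proof.
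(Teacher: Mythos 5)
Your proposal is correct and follows essentially the same route as the paper's proof: the size bound comes from counting one neighborhood map per neighborhood ($|\neigh^0|L^m = |\neigh^0|(N/{\sf c})$), and the depth bound comes from partitioning the translates of each generating neighborhood into the ${\sf d}^m$ residue classes modulo ${\sf d}=\textup{diam}(\neigh^0)$ — i.e., the cosets of $({\sf d}\mathbb{Z})^m$ in $\mathbb{Z}^m$ — and using the RFTS permutation freedom to schedule each class as one parallel layer. The only cosmetic difference is that you justify the one-map-per-neighborhood count via the sufficient-condition constructions, whereas the paper simply asserts that the circuit size equals the number of neighborhoods.
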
}
\noindent 
As the scheme we described merely captures the essential features of the lattice toward ensuring finite depth, it is 
not guaranteed to be optimal. The following example gives a case in which a different partition of neighborhoods 
may be used to achieve improved circuit depth: 

\begin{exmp}[\textbf{Optimal depth for 2D graph states}] 
Consider graph states on the 2D square lattice. The group of lattice translations is isomorphic to 
${\mathbb L} \simeq\mathbb{Z}\times \mathbb{Z}$. Define a single neighborhood on site $(0,0)$ as that site along with the four adjacent sites, $(1,0)$, $(0,1)$, $(-1,0)$, and $(0,-1)$. 
We generate the neighborhood structure by translating this neighborhood with respect to ${\mathbb  L}$. 
Hence, there is one neighborhood per physical system and each neighborhood is labeled by an element of ${\mathbb L}$. There is one neighborhood per unit cell, and the diameter of the neighborhoods in a unit cell is $\textup{diam}(\neigh^0)=3$. Therefore, using the above scheme, we may stabilize the graph state with a circuit of depth $|\neigh^0|\,\textup{diam}(\neigh)^m=1\cdot3^2=9$. However, we can choose a different parallelization scheme which results in a depth-five circuit. By translating the neighborhood on site $(0,0)$ with just the subgroup 
${\mathbb H}\simeq \langle(1,2), (2,-1)\rangle\leq {\mathbb L}$, the generated neighborhoods are disjoint. 
The size of the coset group is $|{\mathbb L}/{\mathbb H}|=5$. Each coset 
corresponds to a layer of disjoint neighborhood maps which may act in parallel. 
The number of layers needed so that the resulting circuit includes all neighborhood maps is thus itself equal to five.
This shows that the 2D graph states on $N$ systems is RFTS with a circuit of size $N$ and depth $5$.
\end{exmp}

\section{Connection with rapid mixing}
\label{sec:rapid}

Since RFT convergence is an especially strong form of convergence, it is natural to explore 
the extent to which this may relate to the existence of continuous-time QL dynamics which 
efficiently stabilize the target state, that is, obey {\em rapid mixing} properties. After reviewing 
some relevant concepts, we show how, given a RFTS target, one may construct 
rapidly-mixing Lindblad dynamics starting from a set of stabilizing maps that commute.

\subsection{Rapidly mixing Lindblad dynamics}
\label{subsec:rapidL}

Consider a one-parameter semigroup of CPTP maps $\{\mathcal{E}_t=e^{{\cal L}t}\}_{t\geq 0},$ 
with a time-independent Lindblad generator ${\cal L}$ subject to QL constraints, ${\cal L} = \sum_j {\cal L}_{\neigh_j}\otimes 
{\cal I}_{\overline{\neigh_j}}$ (Sec. \ref{sub:nogo}). Two special CPTP maps derived from $\mathcal{L}$ are used in characterizing 
asymptotic convergence rates \cite{Wolf-lecture:12}:
\begin{itemize}
\item $\mathcal{E}_{\phi}$ is the CPTP map projecting onto the operators for which $\mathcal{L}$ has eigenvalue 
obeying $\textup{Re}(\lambda)=0$.
\item $\mathcal{E}_{\infty}$ is the CPTP map projecting onto the operators for which $\mathcal{L}$ has 
eigenvalue $\lambda=0$.
\end{itemize}
With these, the following definition provides a measure of how far the ``worst-case'' evolution is from an 
equilibrium state of the continuous-time dynamics:
{\begin{defn}
Given a CPTP map $\mathcal{E}$, its \emph{(trace-norm) contraction coefficient} is given by 
\begin{equation*}
\eta(\mathcal{E})\equiv \frac{1}{2} \sup_{\rho\geq 0, \tr{}{\rho} =1} ||(\mathcal{I}-\mathcal{E}_{\phi}) 
(\mathcal{E}(\rho))||_1.
\end{equation*}
\end{defn}
\noindent 
Since, in the stabilization settings we are interested in, $\mathcal{L}$ is engineered to have a trivial peripheral spectrum 
(no purely imaginary eigenvalues), and precisely one eigenvalue equal to zero, we can identify 
$\mathcal{E}_{\phi}=\mathcal{E}_{\infty}$ in the above.}
From the contraction coefficient, the \emph{mixing time} of the semigroup $\mathcal{E}_t=e^{{\cal L}t}$ is 
defined to be the minimum time such that $\eta(\mathcal{E}_t)=\frac{1}{2}$.
The contraction coefficient $\eta$ generated by $\mathcal{L}$ may be bounded using 
the \emph{spectral gap} $\bar{\lambda}$, namely, 
\begin{equation*}
\bar{\lambda}(\mathcal{L})\equiv \inf\{ \text{abs}(\textup{Re}(\lambda)) \, | \, \textup{Re}(\lambda)<0, \lambda \in \textup{spec}(\mathcal{L})\}.
\end{equation*}
\noindent
Note that the spectral gap of ${\cal L}$ is related to the spectral radius $\bar{\mu}_t$ of ${\cal E}_t$ 
(that is, the eigenvalue of ${\cal E}_t$ which is largest in magnitude) via $\bar{\mu}_t=e^{-\bar{\lambda} t}$ \cite{Wolf-lecture:12}. 

In the following, we will assume to have a family of states that is RFTS and scalable, in a suitable sense: 
\begin{defn} 
A {\em scalable family of RFTS pure states} $\ket{\psi^{(s)}}$, 
parametrized by $s$, is specified by:  
\begin{enumerate}
\item $\ket{\psi^{(s)}}$ on  $\Hi^{(s)}=\bigotimes_{j=1}^{N^{(s)}}\Hi_j$, where $N^{(s)}$ is monotonically 
increasing and unbounded in $s,$ and $\dim(\Hi_j)=d$ for all $j$; 
\item A neighborhood structure $\neigh^{(s)}$ such that: 
(i) the number of neighborhoods $\neigh_k^{(s)}\in\neigh^{(s)}$ scales at most {\em polynomially} in system size, that is, 
$|\neigh^{(s)}|\leq b \,N^{(s)}$, 
for some constant $b>0$; 
(ii) the  neighborhood size is {\em uniformly bounded}, that is, there exists $B>0$ 
such that $\textup{dim}(\hilbert_{\neigh_k^{(s)}}) \leq B,$ for all $k,s$.  
\end{enumerate}
\end{defn} 
\noindent 
We aim to show that a corresponding family of semigroups $\{{\cal E}^{(s)}_t\}$ satisfying 
{\em rapid mixing} relative to its unique equilibrium exists, that is, the relevant mixing time scales polynomially 
with system size:

\begin{defn}
\label{def:rm}
A family of one-parameter semigroups of CPTP maps $\{\mathcal{E}_t^{(s)}\}_{t \geq 0}$ 
satisfies \emph{rapid mixing} if there exists constants $c,\gamma,\delta > 0$ such that
\begin{equation*}
% \eta(\mathcal{E}^{(s)}_t)\leq c \log^{\delta}(\dim\hilbert^{(s)})e^{-\gamma t}.
\eta(\mathcal{E}^{(s)}_t)\leq c\, [N^{(s)}]^\delta  e^{-\gamma t}, \quad \forall t\geq 0.
\end{equation*}
\end{defn} 
\noindent 
In our case, we will make use of the spectral gaps of the neighborhood Liouvillians ${\cal L}_j$, as opposed to 
those of the global Liouvillian, 
where ${\cal L}=\sum_j {\cal L}_{\neigh_j}\otimes {\cal I}_{\overline{\neigh_j}} \equiv \sum_j {\cal L}_j$.
It is easy to see that the spectral gap of a semigroup is inversely proportional to the 
operator norm of its generator. Thus, to make our results non-trivial, we impose a uniform bound on 
the norm of the neighborhood generators: $||\mathcal{L}_j|| < C$, for all $j$, for some constant $C$. 
Finally, we shall build on the following useful results: 

\begin{thm} [\cite{Reeb2012}] 
\label{thm:commutingcontraction}
Let $\mathcal{L}$ be a Liouvillian with spectral gap $\bar{\lambda}(\mathcal{L})$. 
Then, there exists $L>0$ and, for any $\nu <\bar{\lambda}(\mathcal{L})$, there exists $R>0$, such that
\begin{equation*}
 L\,e^{-\bar{\lambda}(\mathcal{L})t} \leq \eta\left(e^{\mathcal{L}t}\right)\leq R\,e^{-\nu t}.
\end{equation*}
If ${\cal L}=\sum_j{\cal L}_j$ and $\{\mathcal{L}_j\}$ commute with each other, then,
\begin{equation*}
\eta\left(e^{\mathcal{L}t}\right)=\eta\Big(e^{\sum_j\mathcal{L}_j t}\Big) \leq \sum_j 
\eta\left(e^{\mathcal{L}_jt}\right).
\end{equation*}
\end{thm}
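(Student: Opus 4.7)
The plan is to treat the two parts of the statement separately. For the two-sided exponential bounds on $\eta(e^{\mathcal{L}t})$, I would exploit the Jordan decomposition of $\mathcal{L}$ as a linear operator on $\mathcal{B}(\hilbert)$. In the stabilization setting of interest the peripheral spectrum is trivial, so $\mathcal{E}_\phi=\mathcal{E}_\infty$ projects onto $\ker\mathcal{L}$, while $\mathcal{L}$ restricted to $\textup{range}(\mathcal{I}-\mathcal{E}_\infty)$ is invertible with every generalized eigenvalue satisfying $\textup{Re}(\lambda)\leq -\bar{\lambda}(\mathcal{L})$. Both bounds then reduce to essentially matrix-analytic statements about $e^{\mathcal{L}t}$ on this complementary invariant subspace.

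For the lower bound I would exhibit a witness state realizing the slowest decay. Picking an eigenoperator $X$ with $\mathcal{L}X=\lambda X$ and $\textup{Re}(\lambda)=-\bar{\lambda}(\mathcal{L})$, any density operator $\rho_0$ whose expansion in the eigenbasis of $\mathcal{L}$ has nonzero coefficient on $X$ (for instance, the completely mixed state, perturbed generically if needed) will satisfy $(\mathcal{I}-\mathcal{E}_\infty)e^{\mathcal{L}t}(\rho_0)=c\,e^{\lambda t}X+(\textup{strictly faster-decaying terms})$, whose trace norm is bounded below by $L\,e^{-\bar{\lambda}t}$ for a suitable $L>0$ and all $t\geq 0$. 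For the upper bound, the Jordan form gives $e^{\mathcal{L}t}|_{\textup{range}(\mathcal{I}-\mathcal{E}_\infty)}=\sum_i p_i(t)\,e^{\lambda_i t}N_i$ with polynomials $p_i$ of degree bounded by the largest Jordan block size minus one. Using the elementary estimate $t^{k}e^{-\bar{\lambda}t}\leq R\,e^{-\nu t}$, valid for any $\nu<\bar{\lambda}$ and a suitable $R=R(k,\bar{\lambda}-\nu)$, absorbs the polynomial prefactors into a single constant and gives the claimed upper bound.

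For the subadditivity claim under commuting $\{\mathcal{L}_j\}$, I would exploit $e^{\mathcal{L}t}=\prod_j e^{\mathcal{L}_jt}$ together with the fact that the single-generator steady-state projectors $P_j\equiv\lim_{t\to\infty}e^{\mathcal{L}_jt}$ commute with each other and with every $e^{\mathcal{L}_kt}$, so that the full steady-state projector factorizes as $\mathcal{E}_\infty=\prod_j P_j$. Applying the telescoping identity
\[ \prod_j A_j - \prod_j B_j = \sum_k \Bigl(\prod_{j<k}A_j\Bigr)(A_k-B_k)\Bigl(\prod_{j>k}B_j\Bigr), \]
with $A_j=e^{\mathcal{L}_jt}$ and $B_j=P_j$, I would then bound each $\prod_{j<k}A_j$ by its trace-norm contractivity (equal to $1$ for CPTP maps), observe that $\prod_{j>k}P_j(\rho)$ is a valid density operator (since each $P_j$ is CPTP as a limit of CPTP maps), and finally use $\|(\mathcal{I}-P_k)e^{\mathcal{L}_kt}(\sigma)\|_1\leq 2\,\eta(e^{\mathcal{L}_kt})$ for any state $\sigma$ to conclude, after summing over $k$ and dividing by $2$, that $\eta(e^{\mathcal{L}t})\leq\sum_j\eta(e^{\mathcal{L}_jt})$. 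The main obstacle, and the technical heart of the argument, is justifying the factorization $\mathcal{E}_\infty=\prod_j P_j$ together with the mutual commutativity of all projectors involved from the commutation of the generators alone; once this is in place, the telescope reduces to routine non-expansion estimates.
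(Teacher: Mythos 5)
The paper does not prove this statement: it is imported verbatim from Ref.~\cite{Reeb2012} (Kastoryano, Reeb, and Wolf), and the appendix contains no argument for it, so there is no in-paper proof to compare against. Your sketch essentially reconstructs the proofs given in that reference -- Jordan decomposition on the complement of the peripheral eigenspace for the two-sided exponential bounds, and a telescoping identity with trace-norm contractivity of CPTP maps for the subadditivity -- and I see no fatal flaw. Two caveats are worth recording. First, as stated the theorem allows a nontrivial peripheral spectrum, in which case $P_j\equiv\lim_{t\to\infty}e^{\mathcal{L}_jt}$ need not exist (the peripheral part oscillates); the telescope should be run with $B_j=\mathcal{E}_{\phi,j}\,e^{\mathcal{L}_jt}$, using that peripheral eigenvalues of a Liouvillian are semisimple so that $\mathcal{E}_{\phi,j}e^{\mathcal{L}_jt}$ is CPTP, and one must then argue $\mathcal{E}_\phi=\prod_j\mathcal{E}_{\phi,j}$ from simultaneous triangularization (a sum of eigenvalues in the closed left half-plane is purely imaginary only if each summand is). The paper sidesteps this entirely by assuming, immediately after defining $\eta$, that the engineered $\mathcal{L}$ has trivial peripheral spectrum, which is the only regime in which the theorem is actually invoked. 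Second, for the lower bound your witness needs a word when the gap is attained at a complex eigenvalue: the Hermitian perturbation excites both $X$ and $X^\dagger$, and you should note that their linear independence (plus equivalence of norms in finite dimension, and continuity of $\eta(e^{\mathcal{L}t})>0$ on compact time intervals) prevents the oscillating prefactor from driving the trace norm below $L\,e^{-\bar{\lambda}t}$. Neither point undermines the argument; both are routine to patch.
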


\subsection{RFTS implies rapid mixing}
{To show that a scalable family of RFTS states 
can be associated to a rapidly-mixing semigroup, note that for all the sets of sufficient 
RFTS conditions we proposed, there exist choices of stabilizing maps that commute with each other -- 
in particular, those that stabilize each factor of the target in a QL virtual particle. Thus, without loss of 
generality, we can restrict to RFTS schemes where the neighborhood maps ${\cal E}_j$ in the sequence 
commute; if so, the corresponding neighborhood generators, ${\cal L}_j \equiv {\cal E}_j- {\cal I}$, 
also commute. We first use Theorem \ref{thm:commutingcontraction} to  upper-bound the 
contraction coefficient of sums of commuting Liouvillians, which  scales linearly in their number:

\begin{prop}[\textbf{Commuting Liouvillian contraction bound}]
\label{thm:systemsizebound}
Let $\{\mathcal{L}_j\}$ be uniformly-bounded Liouvillians, each acting on a neighborhood of uniformly-bounded size. Assume that the spectral gaps obey $\bar\lambda(\mathcal{L}_j)\geq\nu>0$, for all $j$. Then, there exists $R>0$ such that for any subset $\mathcal{S}$ of mutually commuting $\mathcal{L}_j$, we have:
\begin{equation*}
\eta(e^{\sum_{{\cal L}_j\in\mathcal{S}}\mathcal{L}_jt})\leq |\mathcal{S}|Re^{-\nu t}.
\end{equation*}
\end{prop}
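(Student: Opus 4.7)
The plan is to combine the two parts of Theorem~\ref{thm:commutingcontraction} in a single step: first use the subadditivity of the contraction coefficient over commuting summands to reduce the bound on $\eta(e^{\sum_j \mathcal{L}_j t})$ to a sum of single-generator contractions, and then apply the exponential decay bound $\eta(e^{\mathcal{L}_j t}) \leq R_j e^{-\nu t}$ to each term. Formally, for any mutually commuting subset $\mathcal{S}$,
\begin{equation*}
\eta\bigl(e^{\sum_{\mathcal{L}_j\in\mathcal{S}}\mathcal{L}_j t}\bigr) \leq \sum_{\mathcal{L}_j\in\mathcal{S}} \eta\bigl(e^{\mathcal{L}_j t}\bigr) \leq \sum_{\mathcal{L}_j \in \mathcal{S}} R_j\, e^{-\nu t}.
\end{equation*}
If each $R_j$ could be replaced by a single universal constant $R$, the bound $|\mathcal{S}|\,R\,e^{-\nu t}$ follows immediately.

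The main obstacle, therefore, is to show that the constants $R_j$ appearing in Theorem~\ref{thm:commutingcontraction} can be chosen \emph{uniformly} in $j$, despite the fact that different neighborhood generators may act on different local Hilbert spaces and arise from different families in the scalable sequence. Here the two uniform-boundedness hypotheses become essential. First, since every $\mathcal{L}_j$ acts nontrivially only on a neighborhood space of dimension at most $B$, it can be regarded (up to a trivial tensor factor, which does not alter $\eta$) as an element of the finite-dimensional Banach space of Liouvillians on $\mathbb{C}^{B'}$ for some fixed $B'\leq B$. Second, the operator-norm bound $\|\mathcal{L}_j\|<C$ confines the $\mathcal{L}_j$ to a bounded subset; together with the spectral-gap lower bound $\bar\lambda(\mathcal{L}_j)\geq \nu$ this gives a compact family (after taking closures, using that the spectral-gap condition is closed when the gap is bounded away from zero). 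Since the map $\mathcal{L}\mapsto R(\mathcal{L})$ producing the best prefactor in the single-generator bound $\eta(e^{\mathcal{L}t})\leq R(\mathcal{L})e^{-\nu t}$ depends continuously on the generator, it attains its supremum on this compact family, yielding a finite uniform constant $R \equiv \sup_j R_j$.

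For concreteness, one can argue as follows: expand $e^{\mathcal{L}t}$ in its Jordan form; the peripheral (zero-eigenvalue) part contributes $\mathcal{E}_\phi$ and is subtracted in the definition of $\eta$, while the remaining part decays as $e^{-\bar\lambda t}$ times a polynomial whose degree and coefficients are controlled by the Jordan structure. Both the Jordan structure and the transition to spectral projectors depend continuously (in fact, semi-algebraically) on $\mathcal{L}$ on the compact set defined by $\|\mathcal{L}\|\leq C$ and $\bar\lambda(\mathcal{L})\geq \nu$, so the implicit constant $R_j$ is uniformly controlled. Substituting this uniform $R$ into the preceding display completes the proof.

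One subtlety worth flagging is the use of $\nu$ both as a universal lower bound on the individual spectral gaps \emph{and} as the decay rate in the conclusion; strictly, Theorem~\ref{thm:commutingcontraction} only gives $\eta(e^{\mathcal{L}_j t})\leq R_j e^{-\nu' t}$ for any $\nu' < \bar\lambda(\mathcal{L}_j)$. The cleanest way to obtain exactly $e^{-\nu t}$ is to invoke the bound with the common rate $\nu$ (permissible because $\nu \leq \bar\lambda(\mathcal{L}_j)$ for all $j$, and one can absorb the boundary case into the constant $R$ via the polynomial prefactor coming from any Jordan blocks at the peripheral spectrum); alternatively, the statement can be read as allowing any $\nu' < \nu$, which is harmless for downstream applications to rapid mixing.
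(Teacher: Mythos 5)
Your proof follows the same skeleton as the paper's: first invoke the subadditivity of the contraction coefficient over commuting summands (the second part of Theorem \ref{thm:commutingcontraction}), then replace each $\eta(e^{\mathcal{L}_j t})\leq R_j e^{-\nu t}$ by a uniform bound $R$. Where you diverge is in how uniformity of the prefactor is established. The paper does this in one line by citing the quantitative result of \cite{Reeb2012}, which bounds $R_j$ by a function of order $d_j^{d_j^2}$ in the dimension $d_j$ of the space on which $\mathcal{L}_j$ acts; the uniform neighborhood bound $d_j\leq B$ then immediately gives $R_j < cB^{B^2}\equiv R$. You instead argue by compactness: the generators live in a bounded, gap-constrained subset of a finite-dimensional space, and the optimal prefactor $\mathcal{L}\mapsto R(\mathcal{L})$ attains a finite supremum there. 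This is a legitimate alternative and has the virtue of not depending on the precise form of the bound in \cite{Reeb2012}, but it is more delicate than you acknowledge: (i) $R(\mathcal{L})=\sup_{t\geq 0}\eta(e^{\mathcal{L}t})e^{\nu t}$ is a supremum of continuous functions and hence only lower semicontinuous a priori, whereas boundedness on a compact set requires upper semicontinuity (or at least local boundedness), which you must argue via a uniform resolvent estimate on the set where the gap exceeds $\nu$; (ii) at boundary points with $\bar\lambda(\mathcal{L})=\nu$ and a nontrivial Jordan block at an eigenvalue of real part exactly $-\nu$, the quantity $R(\mathcal{L})$ is actually infinite, so the polynomial prefactor cannot simply be ``absorbed'' into $R$ --- the honest fix is the one you mention last, namely reading the conclusion with any rate $\nu'<\nu$ (the paper's own proof quietly assumes the strict inequality $\nu<\bar\lambda(\mathcal{L}_j)$ when invoking Theorem \ref{thm:commutingcontraction}, so this imprecision is inherited from the statement rather than introduced by you); and (iii) the generators act on neighborhoods of varying dimension $d_j\leq B$, so they do not all sit in a single finite-dimensional space --- one must either treat each dimension class separately (there are finitely many, so the argument survives) or justify that tensoring with an identity factor leaves $\eta$ unchanged. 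The explicit dimension-dependent bound used in the paper sidesteps all three points, which is why it is the cleaner route here.
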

\noindent 

We are  now ready for the main result of the section: commuting maps ensuring RFTS can be used 
to construct rapidly-mixing Lindblad dynamics, provided their spectral radius is bounded away from one:

\begin{thm}[\textbf{Rapid mixing for commuting RFTS}]
\label{thm:RFTSrapidmixing}
Consider a scalable family of
$ \ket{\psi^{(s)}}$ that is made RFTS  by a set of commuting neighborhood maps $\{\mathcal{E}^{(s)}_k\}$. 
Assume that there exists $\nu>0$, such that each $\lambda\in\textup{eig}(\mathcal{E}^{(s)}_k)$ satisfies either 
$\lambda=1$ or $| \lambda |< 1-\nu$. Then, there exists a family of bounded-norm QL Liouvillians ${\cal L}^{(s)}$ 
satisfying rapid mixing with respect to $ \ket{\psi^{(s)} }$.
\end{thm}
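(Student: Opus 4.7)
The plan is to build the Lindblad generator directly from the given commuting CPTP maps. For each $s$ and each neighborhood map in the RFTS sequence, I set $\mathcal{L}_k^{(s)} \equiv \mathcal{E}_k^{(s)} - \mathcal{I}$ and $\mathcal{L}^{(s)} \equiv \sum_k \mathcal{L}_k^{(s)}$. Since
\begin{equation*}
e^{t\mathcal{L}_k^{(s)}} = e^{-t}\sum_{n \geq 0} \frac{t^n}{n!}\,(\mathcal{E}_k^{(s)})^n
\end{equation*}
is a convex combination of CPTP maps, each $\mathcal{L}_k^{(s)}$ is a bona fide Lindblad generator. Because $\mathcal{E}_k^{(s)}$ acts on $\neigh_k$ with uniformly bounded size, $\mathcal{L}_k^{(s)}$ is QL and its norm is bounded by a universal constant, meeting the uniform-norm hypothesis of Sec. \ref{subsec:rapidL}. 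Commutativity of the $\mathcal{E}_k^{(s)}$ is inherited by the $\mathcal{L}_k^{(s)}$.

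The key step is to transfer the spectral assumption. Every eigenvalue of $\mathcal{L}_k^{(s)}$ has the form $\mu = \lambda - 1$ for some $\lambda \in \textup{spec}(\mathcal{E}_k^{(s)})$: if $\lambda = 1$ then $\mu = 0$, while $|\lambda| < 1-\nu$ yields $\textup{Re}(\mu) = \textup{Re}(\lambda) - 1 \leq |\lambda|-1 < -\nu$. Thus each neighborhood Liouvillian has spectral gap $\bar\lambda(\mathcal{L}_k^{(s)}) \geq \nu$, uniformly in $k$ and $s$. Applying Proposition \ref{thm:systemsizebound} to the commuting set $\{\mathcal{L}_k^{(s)}\}$ then produces a constant $R>0$ such that
\begin{equation*}
\eta\!\left(e^{\mathcal{L}^{(s)} t}\right) \leq |\neigh^{(s)}|\,R\,e^{-\nu t} \leq bR\,N^{(s)}\,e^{-\nu t},
\end{equation*}
using the scalability bound $|\neigh^{(s)}| \leq b N^{(s)}$. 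Setting $c \equiv bR$, $\delta \equiv 1$, $\gamma \equiv \nu$ in Definition \ref{def:rm} gives exactly the rapid-mixing estimate.

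To close the argument, I would confirm that $\ketbra{\psi^{(s)}}$ is the unique equilibrium of the constructed semigroup, so that the projector $\mathcal{E}_\phi = \mathcal{E}_\infty$ appearing in the contraction coefficient is the target preparation map. Commutativity of the $\mathcal{L}_k^{(s)}$ allows them to be jointly triangularized, so eigenvalues of $\mathcal{L}^{(s)}$ are sums of eigenvalues of the individual terms; since each such term is either $0$ or has real part $\leq -\nu$, no peripheral eigenvalue can arise from cancellation, and $\ker(\mathcal{L}^{(s)})$ coincides with the common fixed-point set of the $\mathcal{E}_k^{(s)}$. The RFTS property then forces any such common fixed state to equal $\ketbra{\psi^{(s)}}$, since the full composition $\prod_k \mathcal{E}_k^{(s)}$ would simultaneously leave $\sigma$ invariant and map it to the target. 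The main obstacle in the proof is precisely this control of the peripheral spectrum: without the uniform spectral-gap hypothesis on the $\mathcal{E}_k^{(s)}$, the commuting-Liouvillian contraction bound of Proposition \ref{thm:systemsizebound} would degrade with $s$ and spurious imaginary eigenvalues could arise from additive cancellations, breaking both the uniqueness of the steady state and the uniform exponential decay needed for rapid mixing.
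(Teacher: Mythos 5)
Your proposal is correct and follows essentially the same route as the paper: define $\mathcal{L}^{(s)}_k\equiv\mathcal{E}^{(s)}_k-\mathcal{I}$, note that commutativity and the uniform bounds are inherited, translate the eigenvalue hypothesis into a uniform spectral gap $\bar\lambda(\mathcal{L}^{(s)}_k)\geq\nu$, and invoke Proposition \ref{thm:systemsizebound} together with $|\neigh^{(s)}|\leq bN^{(s)}$ to match Definition \ref{def:rm}. Your added verifications (that $\mathcal{E}_k-\mathcal{I}$ generates a CPTP semigroup, the explicit computation $\textup{Re}(\lambda-1)\leq|\lambda|-1<-\nu$, and the discussion of the peripheral spectrum and uniqueness of the steady state) are details the paper asserts without proof, and they are sound.
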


It is worth remarking that rapid mixing also ensures that the dynamics are ``stable'' with respect to local perturbations of 
the generator \cite{Lucia2015}: these stability results clearly apply to the QL Lindblad dynamics we constructed above. 
We conclude by showing that, while RFT convergence implies, in the sense we characterized, 
rapid convergence in continuous time, the converse does {\em not} hold in general: there exist 
target states which admit rapidly mixing continuous-time dynamics, yet support correlations beyond what is allowed for RFTS:

\begin{exmp}[\textbf{Non-RFTS commuting Gibbs state}] In \cite{Brandao2016} it is shown that for 
1D lattice systems, the Davies generator derived from a commuting Hamiltonian achieves rapid 
mixing with respect to the corresponding Gibbs state. Consider the 1D ferromagnetic NN Ising model, 
$H_N=-J\sum_{i=1}^{N-1} \sigma_z^i\otimes \sigma_z^{i+1}$, with $J>0$. It is well known that, in the 
thermodynamic limit, for any finite temperature, the two-point correlations of this Gibbs state are exponentially decaying 
with distance: $\tr{}{\sigma_z^{i} \otimes\sigma_z^{i+L}\rho}\sim e^{-\xi L}$, with $\xi$ finite, 
see e.g. \cite{ortiz}. Therefore, spins with disjoint neighborhood expansions are correlated, which violates the necessary 
condition for RFTS given in Theorem \ref{thm:necessary_robust}.
\end{exmp}
}

\section{Conclusions and Outlook}
\label{sec:end}

We have explored the task of {\em exactly} stabilizing a target pure state in finite time using dissipative quantum circuits 
consisting of sequences of QL CPTP maps -- a task that,  even in ideal 
conditions, may only be achieved with finite error by using continuous-time Markovian dynamics. 
In developing both necessary and sufficient conditions for FTS,
one important aim was to elucidate the role of {\em commutativity} of a frustration-free parent Hamiltonian 
one may naturally associate to the target state. 
We showed that certain cases of the well-known valence-bond-solid states {\em are} FTS relative to NN 
constraints, despite not being the ground state of any commuting NN Hamiltonian.
The remainder of the paper focused on the case where stabilization may be achieved {\em robustly}, 
i.e., independently of the order of the dissipative maps -- 
a setting which is especially attractive from both a control-theoretic and implementation standpoint.  
We developed several examples of non-trivially entangled RFTS states beyond the stabilizer formalism.  Notably, these include 
controlled-controlled-Z states, which are a universal resource for measurement-based quantum computation 
and display 2D symmetry-protected topological order \cite{Miller2015}, as well as a 
class of tensor network states obtained from a generalization of the Bravyi-Vyalyi construction \cite{Bravyi2005}, 
whereby the tensors of the construction are leveraged to build a commuting sequence of stabilizing maps.
The common feature of these examples, which ensures their RFTS, is a factorization of the target state 
with respect to a set of {\em neighborhood-compatible virtual subsystems}.
Through this decomposition, we have further clarified the role played by ``commuting structures'' in 
robustly stabilizing a state: although commutativity of the canonical parent Hamiltonian is, remarkably, 
\emph{not} necessary for RFTS, our general sufficient condition ensures the existence of some commuting 
parent Hamiltonian, although in a non-constructive way.
Finally, we showed that if a state admits such a neighborhood-respecting virtual-subsystem factorization, then there exist QL 
Lindblad dynamics which efficiently prepare this state (i.e., which satisfy rapid-mixing conditions).

We leave a number of directions for future work. 
At this point, we have only developed a non-constructive scheme for FTS in the non-robust case. Obtaining  
{\em constructive} procedures for synthesizing such stabilizing dynamics while complying with the QL constraint is an 
interesting control problem, which might be able to be tackled with methods from geometric control theory.
Likewise, in the RFTS setting, we {\em conjecture} that a virtual-subsystem factorization of the target state 
is not only sufficient, as we prove, but, in fact, needed. Were this conjecture shown to be false, such a counterexample 
would correspond to a surprising class of quantum states that are very efficient to exactly stabilize, yet exhibit entanglement 
between particles with respect to any neighborhood-respecting virtual-subsystem decomposition. 
As we mentioned, the authors of \cite{Brandao2016disc} have also characterized dissipative circuits for efficiently preparing 
(albeit {\em not} stabilizing) thermal states, under appropriate conditions. 
It would be interesting to further investigate connections between their use of the Petz recovery map \cite{Petz1988} 
and the stabilizing maps that we employ for RFTS. Related to that, 
towards extending their approach to a continuous-time scheme, it may be useful to first develop a \emph{robust} 
variation of their algorithm. The algebraic construction that we provide, along with the connections to rapid mixing, 
may shed light on such an extension.

From a more practical perspective, the assumption of error-free control dynamics we have employed throughout 
our analysis is clearly an idealization.  
In order to assess the viability of the proposed stabilization schemes, analyzing their performance against 
different kinds of implementation errors will be crucial, both in terms of obtaining rigorous, system-independent bounds for 
{\em approximate FTS} and quantitative results for specific dissipative-engineering platforms.  In this respect, 
we emphasize that implementation of discrete-time dynamics and entangled-state ``pumping'' 
via engineered dissipation has already been experimentally demonstrated in trapped ions 
\cite{Barreiro-Nature:11,Blatt2013}; 
thus trapped ions could further be natural candidates for exploring RFTS  protocols.  
Very recently, a proposal for discrete-time dissipative control in circuit QED systems has also been 
put forward \cite{Liang2016}, based on the idea of using the Fock states of a microwave-cavity mode to 
encode a $d$-dimensional target system, dispersively coupled to a transmon-qubit ancilla.  
Despite important differences (notably, the infinite-dimensional nature of the oscillator mode)
it would be very interesting to explore protocols for FT state stabilization or, more 
generally, FT encoding and quantum error correction -- by respecting 
the constraints native to such a system.

\begin{acknowledgments}

We are especially indebted to Giacomo Baggio and Jacob Miller for useful input on the  
problem of finite-time stabilization and on CCZ states, respectively.  
It is a pleasure to also thank Pierre Clare, Ramis Movassagh, and 
Salini Karuvade for valuable discussions and exchange during this project.
Work at Dartmouth was partially supported by the NSF under grants No.
PHY-1104403 and PHY-1620541, 
and the Constance and Walter Burke {\em Special Projects Fund in Quantum 
Information Science}.   P.D.J. also gratefully acknowledges support 
from a {\em Gordon F. Hull} Dartmouth graduate fellowship. F.T. has been 
partially funded by the QUINTET, QFUTURE and QCOS 
projects of the Universit\`a degli Studi di Padova.

\end{acknowledgments}

%%%%%%%%%%%%%%%%%%%%%%%%%%%%%%%%%%%%%%%%%%%%%%%%%%%%%
%%%%%%%%%%%%%%%%%%%%%%%%%%%%%%%%%%%%%%%%%%%%%%%%%%%%%

\begin{appendix}

\section{Technical proofs}
\label{sec:proofs}

We present here complete proofs of all the technical results stated in the main text.

\vspace*{2mm}

\noindent
$\bullet$ \textbf{No FT-convergence with Lindblad dynamics:}

\vspace*{1mm}

\noindent 
{\bf Proposition \ref{prop:no-go}}
{\em Consider a dynamics driven by a (time-varying) linear equation on a linear space ${\cal X}$:
\[ \dot X_t = {\cal L}_t (X_t), \quad X_0=x_0. \]
Assume that ${\cal S} \leq
{\cal X}$ is an invariant and attractive subspace for ${\cal L}_t$,
and that ${\cal L}_t$ is modulus-integrable, that is, $\int_0^t | {\cal L}_s|\,ds < \infty$, for all finite $t$.  
Then if $X_0$ does not belong to ${\cal S}$, $X_t$ will not be in ${\cal S}$ for all finite $t,$ 
namely, there cannot be exact convergence in finite time. }

\begin{proof}
Consider the orthogonal projectors $\Pi_S,\Pi_S^\perp$ on ${\cal S}$ and its orthogonal complement. 
${\cal S}$ being invariant means that
\( \Pi_S^\perp {\cal L}_t \Pi_S = 0.\)
Hence, the dynamics on the orthogonal complement $X^\perp_t =\Pi_S^\perp X_t$ is just
\[\dot X^\perp_t = \Pi_S^\perp {\cal L}_t X_t  =\Pi_S^\perp {\cal L}_t \Pi_S^\perp X_t 
\equiv  {\cal L}^\perp_t X^\perp_t .\]
By a classical result (see e.g. Ref. \cite{bellman}, Chap. 10, Section 10), 
we have that $X^\perp_t \equiv \Phi^\perp_{t,0}X^\perp_0,$ 
where the propagator $\Phi^\perp_{t,0}$ is {\em invertible} at all times (that is, it has no zero eigenvalues), 
as long as ${\cal L}^\perp_t$ is modulus-integrable.
If $\Phi^\perp_{t,0}$ is invertible at all times, then it follows that $X^\perp_t \neq 0$ for all $t$, 
for all non-zero initial conditions.  
\end{proof}

\noindent
$\bullet$ \textbf{Necessary conditions for FTS:}

\vspace*{1mm}

\noindent 
{\bf Theorem \ref{thm:ftsnec}}
{\em A pure state $\ket{\psi}$ is FTS with respect to $\neigh$ only if 
it is QLS [Eq. (\ref{eq:qls})] 
and there exists at least one neighborhood $\neigh_k\in\neigh$ for which}
\[ 2\,\textup{dim}(\Sigma_{\neigh_k}(\ket{\psi})) \leq  \textup{dim}(\hilbert_{\neigh_k}). \]

\begin{proof}
{We first prove that $\ket{\psi}$ being FTS implies that $\ket{\psi}$ is QLS.
By negation, assume that $\ket{\psi}$ does not satisfy Eq. (\ref{eq:qls}). Then there exists some $\ket{\phi}\notin\textup{span}(\ket{\psi})$, for which $\ket{\phi}\in\bigcap_{k}\overline{\Sigma}_{\neigh_k}(\ket{\psi})$. Any $\ket{\psi}$-preserving neighborhood map $\mathcal{E}_{k}$ must fix all states in $\overline{\Sigma}_{\neigh_k}(\ket{\psi})$. Any sequence of 
such maps fixes $\ketbra{\phi}$ and, hence, it cannot map $\ketbra{\phi}$ into $\ketbra{\psi}$, as is required for FTS. 

We continue by showing that the remaining small Schmidt span condition is also necessary.
Assuming that $\ket{\psi}$ is FTS, let $\mathcal{E}_T\ldots\mathcal{E}_1(\cdot)=\ketbra{\psi}\tr{}{\cdot}$ be a sequence of CPTP stabilizing maps. 
Then, there must exist some map $\mathcal{E}_k$ for which $\mathcal{E}_k(\sigma)=\ketbra{\psi}$ 
for some $\sigma\neq \ketbra{\psi}$. Using the locality and $\ket{\psi}$-invariance of 
$\mathcal{E}_k$, we show that the condition $\mathcal{E}_k(\sigma)=\ketbra{\psi}$ places an upper bound 
on the Schmidt rank of $\ket{\psi}$ with respect to the $\neigh_k|\overline{\neigh}_k$ bipartition. 
The analysis is made easier considering a purification of the equation $\mathcal{E}_k(\sigma)=\ketbra{\psi}$, the purified equation being linear in $\ket{\psi}$. In purifying, ancilla systems must be introduced for both $\sigma$ and $\mathcal{E}_k$. 
Letting $\hilbert_A$ be the state space of the ancilla purifying $\sigma$, we have
\begin{align*}
\sigma \rightarrow\,\, &\ket{\phi}\in\hilbert_A\otimes\hilbert_{\overline{\neigh}_k}\otimes\hilbert_{\neigh_k},
\quad \tr{A}{\ketbra{\phi}}=\sigma.
\end{align*}
Letting $\hilbert_B$ be the ancilla purifying $\mathcal{E}_k$, we obtain an isometry representation,
\begin{align*}
 \mathcal{E}_k \rightarrow\,\, & V: \hilbert_{\overline{\neigh}_k}\otimes\hilbert_{\neigh_k}\rightarrow 
 \hilbert_{\overline{\neigh}_k}\otimes\hilbert_{\neigh_k}\otimes\hilbert_B,\nonumber\\
& 
\hspace*{-10mm} V = \identity_{\overline{\neigh}_k}\otimes \tilde{V}_{\neigh_k\rightarrow\neigh_k B},
\quad \tr{B}{V\cdot V^{\dagger}}=\mathcal{E}_k(\cdot).
\end{align*}
Accordingly, $\mathcal{E}_k(\sigma)=\ketbra{\psi}$ becomes
\begin{align*}
\ketbra{\psi}_{\neigh_k\overline{\neigh}_k} & =\tr{B}{V \sigma V^{\dagger}} 
\nonumber\\
&
=\tr{AB}{(\identity_A\otimes V)\ketbra{\phi}(\identity_A\otimes V)^{\dagger}}.
\end{align*}
Hence, $(\identity_A\otimes V)\ket{\phi}$ is some pure state, which, upon tracing out $AB$, 
leaves the pure state $\ket{\psi}$. Therefore,
\begin{equation}
\label{eq:finalstep}
 (\identity_A\otimes V)\ket{\phi}=\ket{\lambda}_{AB}\ket{\psi}_{\neigh_k\overline{\neigh}_k},
\end{equation}
where $\ket{\lambda}_{AB}$ is some pure state on $\hilbert_A\otimes\hilbert_B$.

The invariance condition, $\mathcal{E}_k(\ketbra{\psi})=\ketbra{\psi}$, constrains the form of the isometry 
$V$. Invariance requires
\begin{align*}
&&\tr{B}{(\identity_{\overline{\neigh}_k}\otimes \tilde{V}_{\neigh_k\rightarrow\neigh_k B})\ketbra{\psi}(\identity_{\overline{\neigh}_k}\otimes\tilde{V}_{\neigh_k\rightarrow\neigh_k B})^{\dagger}} 
\\ 
&&=\ketbra{\psi}. 
\end{align*}
Hence, $(\identity_{\overline{\neigh}_k}\otimes \tilde{V}_{\neigh_k\rightarrow\neigh_k B})\ket{\psi}$ is some pure state, which, upon tracing out $B$, leaves the pure state $\ket{\psi}$. Therefore, $(\identity_{\overline{\neigh}_k}\otimes \tilde{V}_{\neigh_k\rightarrow\neigh_k B})\ket{\psi}=\ket{0}_B\otimes\ket{\psi}$, where $\ket{0}_B$ is some pure state on $B$. This ensures that $\tilde{V}_{\neigh_k\rightarrow\neigh_k B}$ acts trivially on $\Sigma_{\neigh_k}(\ket{\psi})$, outputting $\ket{0}$ on $B$.
The action of $\tilde{V}_{\neigh_k \rightarrow\neigh_k B}$ on $\Sigma_{\neigh_k}(\ket{\psi})^{\perp}$, 
which we denote $\tilde{V}^{\perp}_{\neigh_k\rightarrow\neigh_k B}$, is unconstrained as of yet. 
In summary, invariance ensures that $\tilde{V}_{\neigh_k\rightarrow\neigh_k B}$ acts trivially on
$\Sigma_{\neigh_k}(\ket{\psi})$, giving 
$\tilde{V}_{\neigh_k \rightarrow\neigh_k B}=\Pi_{\neigh_k}\otimes\ket{0}_B + 
\tilde{V}^{\perp}_{\neigh_k \rightarrow\neigh_k B},$
where $\Pi_{\neigh_k}$ is the projector onto $\Sigma_{\neigh_k}(\ket{\psi})$, and 
$\tilde{V}^{\perp}_{\neigh_k\rightarrow\neigh_k B}$ satisfies $\tilde{V}^{\perp}_{\neigh_k\rightarrow\neigh_k B}\Pi_{\neigh_k}=0$. 
Trace-preservation of $\mathcal{E}_k$ constrains $\tilde{V}^{\perp}_{\neigh_k\rightarrow\neigh_k B}$. In terms of $V$, 
the latter requires $\identity=V^{\dagger}V(=\mathcal{E}_k^{\dagger}(\identity))$. Evaluating this in terms of the above decomposition yields 
\begin{align*}
\identity_{\neigh_k \overline{\neigh_k}}&=[\Pi_{\neigh_k}+(\Pi_{\neigh_k}\otimes\bra{0}_B)
\tilde{V}^{\perp}_{\neigh_k\rightarrow\neigh_k B}+\text{H.c.} \nonumber\\
&+(\tilde{V}^{\perp}_{\neigh_k \rightarrow\neigh_k B})^{\dagger}\tilde{V}^{\perp}_{\neigh_k\rightarrow\neigh_k B}]\otimes\identity_{\overline{\neigh}_k}.
\end{align*}
The non-trivial part of this equation is on system $\neigh_k$, where the equation may be block-decomposed as
\begin{equation*}
\left[
\begin{array}{c|c}
\identity & 0 \\
\hline
0 & \identity
\end{array}
\right]=
\left[
\begin{array}{c|c}
\Pi_{\neigh_k} & (\Pi_{\neigh_k}\otimes\bra{0}_B)\tilde{V}^{\perp} \\
\hline
((\Pi_{\neigh_k}\otimes\bra{0}_B)\tilde{V}^{\perp})^\dagger & (\tilde{V}^{\perp})^\dagger \tilde{V}^{\perp}
\end{array}
\right],
\end{equation*}
showing that $(\Pi_{\neigh_k}\otimes\bra{0}_B) \tilde{V}_{\neigh_k\rightarrow\neigh_k B}^{\perp}=0$. With these conditions on $V$, we return to Eq. (\ref{eq:finalstep}), which becomes 
\begin{align}
\label{eq:assumedeq}
\ket{\lambda}_{AB}\ket{\psi}_{\overline{\neigh}_k\neigh_k}&=\identity_A\otimes V\ket{\phi}\nonumber\\ 
&=(\identity_A\otimes\identity_{\overline{\neigh}_k}\otimes\Pi_{\neigh_k})\ket{\phi}\otimes\ket{0}_B \nonumber\\
&+(\identity_A\otimes\identity_{\overline{\neigh}_k}\otimes \tilde{V}^{\perp}_{\neigh_k\rightarrow\neigh_k B})\ket{\phi}.
\end{align}
Decompose this equation 
%$V\ket{\phi}=\ket{\lambda}_{AB}\ket{\psi}_{\overline{\neigh}_k\neigh_k}$ 
into three parts according to:
\begin{align*}
\hilbert\simeq&\, \textup{supp}(\identity_{A\overline{\neigh}_k}\otimes\Pi_{\neigh_k}\otimes\ketbra{0}_B)\nonumber\\
&\oplus\textup{supp}(\identity_{A\overline{\neigh}_k}\otimes\Pi_{\neigh_k}\otimes(\identity-\ketbra{0}_B))\nonumber\\
&\oplus\textup{supp}(\identity_{A\overline{\neigh}_k}\otimes(\identity-\Pi_{\neigh_k})\otimes\identity_B).
\end{align*}
The vector $\ket{\lambda}_{AB}\ket{\psi}_{\overline{\neigh}_k\neigh_k}$ lies entirely in the first two blocks. Later, we will also need to use the fact that $\identity_A\otimes(\identity-\ketbra{0})_B\ket{\lambda}_{AB}\neq 0$. This follows from $\ket{\lambda}_{AB}\ket{\psi}_{\overline{\neigh}_k \neigh_k}$ having a non-trivial part in the second block, which we now show to follows from the assumption $\sigma \notin\textup{span}(\ketbra{\psi})$.

Assume, by contradiction, that the norm-1 vector $\ket{\lambda}_{AB}\ket{\psi}_{\overline{\neigh}_k\neigh_k}$ 
lies completely in the first block, and let 
$\ket{\lambda^0}_A\equiv (\identity_A\otimes\bra{0}_B)\ket{\lambda}_{AB}$.
Then, $\ket{\lambda}_{AB}\ket{\psi}_{\overline{\neigh}_k\neigh_k}=(\identity_A\otimes\ketbra{0}_B\otimes\identity_{\overline{\neigh}_k}\otimes\Pi_{\neigh_k})\ket{\lambda}_{AB}\ket{\psi}_{\overline{\neigh}_k\neigh_k}=\ket{\lambda^0}_A\ket{0}_B\ket{\psi}_{\overline{\neigh}_k \neigh_k}$, where $\|\ket{\lambda^0}\|=1$.
Projecting the right-hand side of Eq. (\ref{eq:assumedeq}) into the first block yields 
\begin{align*}
&\ket{\lambda^0}_A\ket{\psi}_{\overline{\neigh}_k\neigh_k}\ket{0}_B=(\identity_{A\overline{\neigh}_k}\otimes\Pi_{\neigh_k})\ket{\phi}_{A\overline{\neigh_k}\neigh_k}\ket{0}_B\nonumber\\
&+\identity_{A\overline{\neigh}_k}\otimes[(\Pi_{\neigh_k}\otimes\bra{0}_B) \tilde{V}^{\perp}_{\neigh_k \rightarrow \neigh_k B}]\ket{\phi}_{A\overline{\neigh}_k\neigh_k}\ket{0}_B.
\end{align*}
The last term is zero, as $(\Pi_{\neigh_k}\otimes\bra{0}_B)\tilde{V}_{\neigh_k \rightarrow\neigh_k B}^{\perp}=0$. Then, removing the common factor of $\ket{0}_B$ from the remaining terms, we have
$ \ket{\lambda^0}_A\ket{\psi}_{\overline{\neigh}_k \neigh_k}=(\identity_{A\overline{\neigh}_k}\otimes\Pi_{\neigh_k})\ket{\phi}.$
The vector $\ket{\lambda^0}_A\ket{\psi}_{\overline{\neigh}_k \neigh_k}$ is assumed to be norm-1, and hence $(\identity_{A\overline{\neigh}_k}\otimes\Pi_{\neigh_k})\ket{\phi}$ is as well. Since $\Pi_{\neigh_k}$ is a projector, $\|(\identity_{A\overline{\neigh}_k}\otimes\Pi_{\neigh_k})\ket{\phi}\|=\|\ket{\phi}\|$ only if $(\identity_{A\overline{\neigh}_k}\otimes\Pi_{\neigh_k})\ket{\phi}=\ket{\phi}=\ket{\lambda^0}_A\ket{\psi}_{\overline{\neigh}_k\neigh_k}$. Tracing out system $A$, this last equation becomes $\ketbra{\psi}=\tr{A}{\ketbra{\phi}}=\sigma$. 
{This contradicts $\sigma \not \in \textup{span}(\ketbra{\psi})$, so we conclude that 
$\ket{\lambda}_{AB}\ket{\psi}_{\overline{\neigh}_k\neigh_k}$ lies, at least partly, in the second block.} From this, it follows that 
\begin{align*}
&& 0\neq (\identity_{A \overline{\neigh}_k \neigh_k}\otimes(\identity-\ketbra{0})_B)\ket{\lambda}_{AB}
\ket{\psi}_{\overline{\neigh}_k \neigh_k} \\
&& \equiv \ket{\lambda^{\perp}}_{AB}\ket{\psi}_{\overline{\neigh}_k\neigh_k}.
\end{align*}
Defining the matrix 
$\hat{V}\equiv\Pi_{\neigh_k}\otimes(\identity_B-\ketbra{0}_B)\tilde{V}^{\perp}_{\neigh_k\rightarrow\neigh_k B}$, 
the second block equation reads
\begin{equation}
\label{eq:veceq}
\ket{\lambda^{\perp}}_{AB}\ket{\psi}_{\overline{\neigh}_k\neigh_k}=(\identity_A\otimes\hat{V}\otimes \identity_{\overline{\neigh}_k} )\ket{\phi}.
\end{equation}

Towards bounding the Schmidt rank of $\ket{\psi}$, it is useful to transform the above vector equation into a matrix equation by applying partial-transpose to the composite Hilbert space $\hilbert_A\otimes\hilbert_{\overline{\neigh}_k}$. Hence, the vectors $\ket{\lambda^{\perp}}_{AB}$, $\ket{\psi}_{\overline{\neigh}_k\neigh_k}$, and $\ket{\phi}_{A\overline{\neigh}_k \neigh_k}$ are transformed into matrices:
$$\lambda^{\perp}:\hilbert_A\rightarrow\hilbert_B \;
\psi:\hilbert_{\overline{\neigh}_k}\rightarrow\hilbert_{\neigh_k} ,\;
\phi:\hilbert_{A\overline{\neigh}_k}\rightarrow\hilbert_{\neigh_k} .$$
Note that $\textup{rank}(\psi)=\textup{dim}(\Sigma_{\neigh_k}(\ket{\psi}))$. 
Eq. (\ref{eq:veceq}) is transformed into the matrix equation
$\lambda^{\perp}\otimes\psi=\hat{V}\phi.$
It follows that $\textup{rank}(\lambda^{\perp}\otimes\psi)=\textup{rank}(\hat{V}\phi)$. On the left hand side, 
\[ \textup{dim}(\Sigma_{\neigh_k}(\ket{\psi}))=\textup{rank}(\psi)\leq \textup{rank}(\lambda^{\perp}\otimes\psi), 
\]
using the fact that $\ket{\lambda^{\perp}}_{AB}\neq 0$, as shown earlier. On the right hand side, 
\[ \textup{rank}(\hat{V}\phi)\leq\textup{rank}(\hat{V})\leq \textup{dim}(\Sigma_{\neigh_k}(\ket{\psi})^{\perp}),
\]
where the last inequality follows from $\textup{ker}(\hat{V})\geq \textup{ker}(\tilde{V}^{\perp}_{\neigh_k \rightarrow\neigh_k B})\geq\Sigma_{\neigh_k}(\ket{\psi})$. The above two inequalities together imply that $\textup{dim}(\Sigma_{\neigh_k}(\ket{\psi}))\leq\textup{dim}(\Sigma_{\neigh_k}(\ket{\psi})^{\perp})$. With $\hilbert_{\neigh_k} \simeq\Sigma_{\neigh_k}(\ket{\psi})\oplus\Sigma_{\neigh_k}(\ket{\psi})^{\perp}$, we obtain
$ \textup{dim}(\hilbert_{\neigh_k}) \geq 2 \,{\textup{dim}(\Sigma_{\neigh_k}(\ket{\psi}))}$, as claimed.  
}
\end{proof}

\vspace*{1mm}

\noindent
$\bullet$ \textbf{Unitary generation property:}

\vspace*{1mm}

We first develop a few results which build up to a proof of Proposition \ref{thm:unitarygeneration}. The following is a repurposing of Proposition 1.2.2 in A. Borel and H. Bass, {\em Linear Algebraic Groups} (W. A. Benjamin, 1969), to our setting:
%% LV: MEMO, Borel uses \leq 2*dim(G).... 

\begin{prop}
\label{thm:gencoro}
Consider a Hilbert space, $\hilbert\simeq\bigotimes_i \hilbert_i$, of finite dimension $D=\prod_i d_i$, a neighborhood structure $\neigh$, and a pure state $\ket{\psi}\in\hilbert$. Let $\mathcal{U}_{\neigh_k,\ket{\psi}}$ be the neighborhood stabilizer groups of $\ket{\psi}$. Then, for any element $U\in\langle \mathcal{U}_{\neigh_k,\ket{\psi}}\rangle_k$, there exists a sequence of at most $2(D-1)^2$ elements $U_j$, drawn from the $\mathcal{U}_{\neigh_k,\ket{\psi}}$, such that $U=U_1\ldots U_{2(D-1)^2}$. Further, 
the group $\langle \mathcal{U}_{\neigh_k,\ket{\psi}}\rangle_k$ generated by the neighborhood stabilizer groups is connected.
\end{prop}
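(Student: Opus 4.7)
The plan is to adapt the classical algebraic-group argument of Borel (Prop.~2.2 in \emph{Linear Algebraic Groups}) on generation by irreducible subvarieties through the identity to our compact setting, viewing $\mathcal{U}(\hilbert)$ as a real algebraic group defined by $U^\dagger U = \identity$. First I would verify the two structural hypotheses the theorem requires. Each $\mathcal{U}_{\neigh_k,\ket{\psi}}$ is a closed Lie subgroup of $\mathcal{U}(\hilbert)$, since it is the preimage of $\ketbra{\psi}$ under the continuous map $U\mapsto U\ketbra{\psi}U^\dagger$. To obtain connectedness, I would use the Schmidt decomposition of $\ket{\psi}$ with respect to the $\neigh_k|\overline{\neigh}_k$ bipartition: a unitary of the form $U_{\neigh_k}\otimes\identity$ preserves $\ketbra{\psi}$ iff $U_{\neigh_k}$ acts as a global phase on each Schmidt-span vector (or, in the degenerate case, as a unitary on each Schmidt-degenerate block) and arbitrarily on the orthogonal complement of the Schmidt span. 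The resulting structure factors as a direct product of a torus with a full unitary group, hence is connected. Each subgroup therefore passes through $e$ and is irreducible.

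Next I would carry out the standard Borel-style dimension-increment argument. Consider the chain of product sets $V_m \equiv \mathcal{U}_{\neigh_{k_1},\ket{\psi}}\cdot\mathcal{U}_{\neigh_{k_2},\ket{\psi}}\cdots\mathcal{U}_{\neigh_{k_m},\ket{\psi}}$ ranging over all index sequences. Each $V_m$ is a product of irreducible connected sets containing $e$, hence irreducible and connected, and $V_m\subseteq V_{m+1}$. The dimensions $\dim V_m$ are non-decreasing and bounded by $\dim H$, so there is a smallest $m^\ast$ at which the dimension saturates. At that point, for any further factor $\mathcal{U}_{\neigh_\ell,\ket{\psi}}$, the set $V_{m^\ast}\cdot\mathcal{U}_{\neigh_\ell,\ket{\psi}}$ has the same dimension as $V_{m^\ast}$, and since both are irreducible with $V_{m^\ast}\subseteq V_{m^\ast}\cdot\mathcal{U}_{\neigh_\ell,\ket{\psi}}$ and their closures coincide, we conclude $V_{m^\ast}\cdot\mathcal{U}_{\neigh_\ell,\ket{\psi}}=\overline{V_{m^\ast}}$. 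Iterating yields that $\overline{V_{m^\ast}}$ is stable under right multiplication by every generator, and by a symmetric argument under left multiplication and inversion, so $\overline{V_{m^\ast}}$ is the generated subgroup $H$. Being the closure of a connected irreducible set, $H$ is itself connected and closed; and since a single doubling of the length suffices to realize all of $H$ (one pass to reach the maximal-dimension product, one more to absorb inverses), the length of any product representation is at most $2\dim H$.

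Finally I would bound $\dim H$. Since $H\subseteq\mathcal{U}_{\ket{\psi}}$ and $\mathcal{U}_{\ket{\psi}}\simeq U(1)\times U(D-1)$, the relevant ``non-phase'' part has real dimension $(D-1)^2$; the residual central $U(1)$ phase is absorbed into any single nontrivial generator from one of the $\mathcal{U}_{\neigh_k,\ket{\psi}}$, and can be carried along inside the existing product without increasing its effective length. This yields the quoted bound of $2(D-1)^2$ factors. Connectedness of $\langle\mathcal{U}_{\neigh_k,\ket{\psi}}\rangle_k$ follows directly from the construction above, since $H=\overline{V_{m^\ast}}$ and $V_{m^\ast}$ is the continuous image of a product of connected sets.

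The main obstacle I expect is the dimension-saturation argument: in the algebraic category one uses irreducibility and Zariski closure to equate $V_{m^\ast}\cdot X$ with $\overline{V_{m^\ast}}$ whenever dimensions agree, and some care is needed to ensure this step transfers cleanly to the compact real-analytic setting. The standard workaround is to realize $\mathcal{U}(D)$ as a real algebraic group and invoke Chevalley's theorem on constructible sets together with the fact that connected Lie subgroups of $\mathcal{U}(D)$ are precisely the real points of their Zariski closures; then Borel's argument applies verbatim. A secondary bookkeeping issue is showing that the $U(1)$ phase factor can be merged into an existing neighborhood-stabilizer factor rather than requiring its own slot in the product, which is needed to get the sharper constant $2(D-1)^2$ rather than $2[(D-1)^2+1]$.
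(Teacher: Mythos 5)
Your proposal follows essentially the same route as the paper, which gives no proof of its own beyond citing Proposition 1.2.2 of Borel's \emph{Linear Algebraic Groups}: you reconstruct exactly that generation-by-irreducible-connected-subsets argument (dimension saturation, closure under the generators, connectedness of the product), supplying the connectedness of each $\mathcal{U}_{\neigh_k,\ket{\psi}}$ and the projective treatment of the global phase that the paper leaves implicit. One minor quibble: the Schmidt condition $U\ket{a_i}=e^{i\theta}\ket{a_i}$ forces a \emph{single} common phase on all Schmidt vectors regardless of degeneracy of the Schmidt coefficients, so your parenthetical about degenerate blocks is unnecessary, though connectedness holds either way.
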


\begin{lem} 
\label{thm:alggen}
Let a Lie subgroup $H$ of a Lie group $G$ be generated by connected Lie subgroups $H_{\alpha}$, $\alpha\in A$ for some set $A$. Then the Lie algebra $\mathfrak{h}$ of $H$ is generated by the Lie algebras $\mathfrak{h}_{\alpha}$ of the corresponding subgroups $H_{\alpha}$.
\end{lem}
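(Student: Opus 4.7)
The plan is to prove the two inclusions $\mathfrak{k} \subseteq \mathfrak{h}$ and $\mathfrak{h} \subseteq \mathfrak{k}$, where $\mathfrak{k}$ denotes the Lie subalgebra of $\mathfrak{g} = \text{Lie}(G)$ generated by the family $\{\mathfrak{h}_\alpha\}_{\alpha \in A}$. The first inclusion is immediate: each $\mathfrak{h}_\alpha$ sits inside $\mathfrak{h}$ because $H_\alpha \leq H$, so any Lie-bracket expression in the $\mathfrak{h}_\alpha$ lies in $\mathfrak{h}$, giving $\mathfrak{k} \subseteq \mathfrak{h}$.

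For the reverse inclusion, I would invoke the standard Lie-group/Lie-algebra correspondence to produce a unique connected (immersed) Lie subgroup $K \leq G$ with Lie algebra $\mathfrak{k}$. The key fact I plan to use is that whenever $L \leq G$ is a connected Lie subgroup and $\text{Lie}(L) \subseteq \text{Lie}(K)$, then $L \subseteq K$ as subgroups. Applying this to each $H_\alpha$ -- which is connected by hypothesis and satisfies $\mathfrak{h}_\alpha \subseteq \mathfrak{k}$ by construction -- gives $H_\alpha \subseteq K$ for every $\alpha \in A$.

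Now, $H$ is by assumption the subgroup \emph{generated} by the family $\{H_\alpha\}$, i.e.\ the smallest subgroup of $G$ containing all the $H_\alpha$. Since $K$ is itself a subgroup of $G$ containing all the $H_\alpha$, minimality forces $H \subseteq K$. Moreover, $H$ inherits a connected-Lie-subgroup structure (it is a union of products of the path-connected $H_\alpha$'s through the identity), so passing to Lie algebras yields $\mathfrak{h} \subseteq \mathfrak{k}$, completing the proof.

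The main conceptual point to be careful about -- and the only nontrivial step -- is the implication ``$H_\alpha$ connected with $\text{Lie}(H_\alpha) \subseteq \mathfrak{k}$ implies $H_\alpha \subseteq K$.'' This is where connectedness of the $H_\alpha$ is essential: without it, one could have discrete pieces of $H_\alpha$ lying outside $K$ even though their Lie algebras agree. The argument itself is standard and uses that a connected Lie group is generated (as a group) by any neighborhood of the identity, which in turn is the image under the exponential map of a neighborhood of $0$ in its Lie algebra; since $\exp(\mathfrak{h}_\alpha) \subseteq \exp(\mathfrak{k}) \subseteq K$, the whole of $H_\alpha$ lies in $K$.
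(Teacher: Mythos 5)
Your proposal is correct and follows essentially the same route as the paper's proof: both define the Lie algebra generated by the $\mathfrak{h}_\alpha$, exponentiate it to a connected Lie subgroup, use connectedness of each $H_\alpha$ to conclude $H_\alpha$ lies inside that subgroup, and then invoke minimality of $H$ as the group generated by the $H_\alpha$ to obtain the reverse containment. The only cosmetic difference is that the paper phrases the conclusion as an equality of groups $\hat H = H$ before passing to Lie algebras, whereas you argue the two Lie-algebra inclusions directly.
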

\begin{proof}
Let $\hat{\mathfrak{h}}\equiv \langle \mathfrak{h}_{\alpha}\rangle_{\alpha}\subseteq\mathfrak{h}$, and $\hat{H}\subseteq H$ the corresponding connected Lie subgroup. To show that two Lie algebras are equal, $\hat{\mathfrak{h}}=\mathfrak{h}$, it suffices to show that their Lie groups are equal, $\hat{H}=H$. Thus, it remains to show that $\hat{H}\supseteq H$. Since each $H_{\alpha}$ is connected, each is the exponential of its Lie algebra $\mathfrak{h}_{\alpha}$. Hence, $\hat{\mathfrak{h}}\supseteq \mathfrak{h}_{\alpha}$ for all $\alpha$ implies $\hat{H}\supseteq H_{\alpha}$ for all $\alpha$. $H$ is the smallest Lie subgroup of $G$ containing all $H_{\alpha}$. Thus, $\hat{H}$ being a group and $\hat{H}\supseteq H_{\alpha}$ for all $\alpha$ implies that $\hat{H}\supseteq H$. Finally, since $\hat{H}=H$, we have $\hat{\mathfrak{h}}=\mathfrak{h}$, as desired.
\end{proof}

\begin{lem} 
\label{thm:groupalgequiv}
$\langle \mathfrak{u}_{\neigh_k,\ket{\psi}}\rangle_k=\mathfrak{u}_{\ket{\psi}}$ if and only if $\langle \mathcal{U}_{\neigh_k,\ket{\psi}}\rangle_k=\mathcal{U}_{\ket{\psi}}$.
\end{lem}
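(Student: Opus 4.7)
The plan is to deduce this equivalence directly from Proposition~\ref{thm:gencoro} and Lemma~\ref{thm:alggen}, supplemented by the observation that all of the Lie subgroups of $\mathcal{U}(\hilbert)$ appearing in the statement are connected. Connectedness of the global stabilizer $\mathcal{U}_{\ket{\psi}}$ follows by decomposing $\hilbert \simeq \textup{span}(\ket{\psi})\oplus \ket{\psi}^\perp$: any $U$ satisfying $U\ketbra{\psi}U^\dagger=\ketbra{\psi}$ acts as a scalar $e^{i\theta}$ on the first block and as an arbitrary unitary on the second, so $\mathcal{U}_{\ket{\psi}}\cong U(1)\times U(D-1)$. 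Connectedness of each neighborhood stabilizer $\mathcal{U}_{\neigh_k,\ket{\psi}}$ follows by a parallel argument using the Schmidt decomposition of $\ket{\psi}$ across the bipartition $\neigh_k|\overline{\neigh}_k$: $U_{\neigh_k}\otimes \identity$ lies in $\mathcal{U}_{\neigh_k,\ket{\psi}}$ iff $U_{\neigh_k}$ acts as a scalar on $\Sigma_{\neigh_k}(\ket{\psi})$ and arbitrarily on its orthogonal complement inside $\hilbert_{\neigh_k}$, identifying $\mathcal{U}_{\neigh_k,\ket{\psi}}$ with a product of two unitary groups and hence a connected Lie subgroup.

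With connectedness in hand, Proposition~\ref{thm:gencoro} tells us that the generated subgroup $\langle \mathcal{U}_{\neigh_k,\ket{\psi}}\rangle_k$ is itself a connected Lie subgroup of $\mathcal{U}_{\ket{\psi}}$, and Lemma~\ref{thm:alggen} identifies its Lie algebra as $\langle \mathfrak{u}_{\neigh_k,\ket{\psi}}\rangle_k$. Since $\mathcal{U}_{\neigh_k,\ket{\psi}}\subseteq \mathcal{U}_{\ket{\psi}}$ for every $k$, we automatically obtain the inclusions $\langle \mathcal{U}_{\neigh_k,\ket{\psi}}\rangle_k\subseteq \mathcal{U}_{\ket{\psi}}$ and, dually, $\langle \mathfrak{u}_{\neigh_k,\ket{\psi}}\rangle_k\subseteq \mathfrak{u}_{\ket{\psi}}$, so only the reverse containments are at issue in each direction.

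For the forward implication, assuming $\langle \mathfrak{u}_{\neigh_k,\ket{\psi}}\rangle_k=\mathfrak{u}_{\ket{\psi}}$, the connected Lie subgroup $\langle \mathcal{U}_{\neigh_k,\ket{\psi}}\rangle_k$ of $\mathcal{U}_{\ket{\psi}}$ carries the latter's full Lie algebra; hence it contains a neighborhood of the identity in $\mathcal{U}_{\ket{\psi}}$ via the exponential map and is therefore open in $\mathcal{U}_{\ket{\psi}}$. Any open subgroup of a connected Lie group is also closed (its complement is a union of cosets) and therefore coincides with the whole group, yielding the desired group-level equality. For the reverse implication, equality of the groups immediately forces equality of their Lie algebras, and the Lie algebra of $\langle \mathcal{U}_{\neigh_k,\ket{\psi}}\rangle_k$ has already been identified as $\langle \mathfrak{u}_{\neigh_k,\ket{\psi}}\rangle_k$ by Lemma~\ref{thm:alggen}.

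The only real obstacle is cleanly verifying connectedness of the stabilizer subgroups, which the Schmidt-span analysis handles uniformly; once that is established, the two directions reduce to standard Lie-theoretic bookkeeping on top of the two preceding lemmas.
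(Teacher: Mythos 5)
Your proof is correct and follows essentially the same route as the paper's: both rest on the connectedness of the stabilizer groups, on Proposition~\ref{thm:gencoro} for the connectedness of the generated subgroup, and on Lemma~\ref{thm:alggen} to identify its Lie algebra with $\langle \mathfrak{u}_{\neigh_k,\ket{\psi}}\rangle_k$, after which equality of connected Lie subgroups carrying equal Lie algebras settles the forward direction, while the reverse direction is immediate. The only differences are that you explicitly verify connectedness of $\mathcal{U}_{\ket{\psi}}\cong U(1)\times U(D-1)$ and of each $\mathcal{U}_{\neigh_k,\ket{\psi}}$ via the Schmidt decomposition --- a fact the paper simply asserts --- and that you close the forward direction with the open-subgroup argument rather than the paper's appeal to connected groups being the exponential of their Lie algebras; both steps are valid.
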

\begin{proof}
($\Leftarrow $) Equal Lie groups have equal Lie algebras.
($\Rightarrow $) Assume that $\langle \mathfrak{u}_{\neigh_k,\ket{\psi}}\rangle_k=\mathfrak{u}_{\ket{\psi}}$, and let $\tilde{\mathfrak{u}}$ be the Lie algebra of $\langle \mathcal{U}_{\neigh_k,\ket{\psi}}\rangle_k$. $\mathcal{U}_{\ket{\psi}}$ and all of the $\mathcal{U}_{\neigh_k,\ket{\psi}}$ are connected Lie subgroups of $\mathcal{U}(\hilbert)$. Therefore, they are equal to the exponential of their respective Lie algebras. Furthermore, Proposition \ref{thm:gencoro} ensures that $\langle \mathcal{U}_{\neigh_k, \ket{\psi}}\rangle_k$ is connected due to the connectedness of the $\mathcal{U}_{\neigh_k,\ket{\psi}}$. Hence, $\langle\mathcal{U}_{\neigh_k,\ket{\psi}}\rangle_k$ and $\mathcal{U}_{\ket{\psi}}$ being connected implies that if $\langle\mathcal{U}_{\neigh_k,\ket{\psi}}\rangle_k=\mathcal{U}_{\ket{\psi}}$, then $\tilde{\mathfrak{u}}=\mathfrak{u}_{\ket{\psi}}$. The $\mathcal{U}_{\neigh_k,\ket{\psi}}$ being connected ensures, by Lemma \ref{thm:alggen}, that the Lie algebra $\tilde{\mathfrak{u}}$ is generated by the Lie algebras $\mathfrak{u}_{\neigh_k,\ket{\psi}}$. Thus, it follows that 
$\langle \mathcal{U}_{\neigh_k,\ket{\psi}}\rangle_k=\mathcal{U}_{\ket{\psi}}$. 
\end{proof}

\noindent 
{\bf Proposition \ref{thm:unitarygeneration} (\textbf{Unitary generation property})}
{\em Given a state $\ket{\psi}$ and a neighborhood structure $\neigh$, any element in $\mathcal{U}_{\ket{\psi}}$ can be decomposed into a finite product of elements in $\mathcal{U}_{\neigh_k,\ket{\psi}}$ if and only if}
\begin{align*}
\langle \mathfrak{u}_{\neigh_k,\ket{\psi}} \rangle_{k}=\mathfrak{u}_{\ket{\psi}},
\end{align*}
{\em where}
$\langle \cdot \rangle_{k}$ {\em denotes the smallest Lie algebra which contains all Lie algebras from the set indexed by $k$.}
\begin{proof}

($\Leftarrow$) 
By Lemma \ref{thm:groupalgequiv}, the Lie algebra generation implies the Lie group generation. By Proposition \ref{thm:gencoro}, then, it follows that elements of $\mathcal{U}_{\ket{\psi}}$ decompose into finite products of elements of the $\mathcal{U}_{\neigh_k,\ket{\psi}}$. 

($\Rightarrow$) $\langle \mathfrak{u}_{\neigh_k,\ket{\psi}} \rangle_{k}\leq\mathfrak{u}_{\ket{\psi}}$ is true by construction. We show that $\langle \mathfrak{u}_{\neigh_k,\ket{\psi}} \rangle_{k}\geq\mathfrak{u}_{\ket{\psi}}$ under the decomposition assumption. Consider an arbitrary element $X\in \mathfrak{u}_{\ket{\psi}}$. Then, by assumption, $\textup{exp}(X)=U\in\mathcal{U}_{\ket{\psi}}$ admits a decomposition, $U=U_{T}\ldots U_1$, with each $U_i$ in a $\mathcal{U}_{\neigh_k,\ket{\psi}}$. As the $\mathcal{U}_{\neigh_k,\ket{\psi}}$ are connected, each element in $\mathcal{U}_{\neigh_k,\ket{\psi}}$ is the exponentiation of an element in $\mathfrak{u}_{\neigh_k,\ket{\psi}}$: $U_i=\textup{exp}(X_i)$. Hence, $U=\textup{exp}(X_T)\ldots \textup{exp}(X_1)$. Iterating the Baker-Campbell-Hausdorff formula, we can write $U=\textup{exp}(Y)$ for some element $Y\in \langle \mathfrak{u}_{\neigh_k,\ket{\psi}} \rangle_{k}$. Then, since $U=\textup{exp}(X)=\textup{exp}(Y)$, $X$ and $Y$ are proportional to one another, ensuring $\lambda Y=X\in \langle \mathfrak{u}_{\neigh_k,\ket{\psi}} \rangle_{k}$, for $\lambda\in\mathbb{R}$. 
\end{proof}

\vspace*{1mm}

\noindent
$\bullet$ \textbf{Sufficient conditions for FTS:}

\vspace*{1mm}

\noindent 
{\bf Theorem \ref{thm:fts}.}
{\em A state $\ket{\psi}$ is FTS relative to a connected neighborhood structure $\neigh$ if there exists 
at least one neighborhood 
$\neigh_k\in \neigh$ satisfying the small Schmidt span condition, 
$2\,\textup{dim}(\Sigma_{\neigh_k}(\ket{\psi})) \leq  \textup{dim}(\hilbert_{\neigh_k})$, 
and the unitary generation property holds, 
$\langle \mathfrak{u}_{\neigh_\ell,\ket{\psi}}\rangle_\ell=\mathfrak{u}_{\ket{\psi}}$.}

\begin{proof}
We construct a finite sequence of CPTP maps which is guaranteed to stabilize $\ket{\psi}$
under the given assumptions. Let dim$(\Sigma_{\neigh_k}(\ket{\psi}))\equiv s_k$.  
The Schmidt span dimension condition ensures that
 $s_k \leq \textup{dim}(\Sigma_{\neigh_k}(\ket{\psi})^{\perp})$. 
Then, we can choose any subspace $\Sigma_{\neigh_k}^1\leq\textup{dim}(\Sigma_{\neigh_k}(\ket{\psi})^{\perp})$, such that 
$\textup{dim}(\Sigma_{\neigh_k}^1)=s_k$. 
We think of $\Sigma_{\neigh_k}^1$ as a ``copy'' of $\Sigma_{\neigh_k}(\ket{\psi})$ lying inside 
$\Sigma_{\neigh_k}(\ket{\psi})^{\perp}$. For convenience, 
{let $\Sigma_{\neigh_k}(\ket{\psi}) \equiv \Sigma_{\neigh_k}^0$ (further abbreviated to just $\Sigma^0$ in the main text). Then, $\hilbert_{\neigh_k}=\Sigma_{\neigh_k}^0 \oplus\Sigma_{\neigh_k}^1\oplus {\cal R}$, where ${\cal R}$ is the remaining subspace of $\hilbert_{\neigh_k}$. 
Choosing an identification between $\Sigma^0_{\neigh_k}$ and $\Sigma_{\neigh_k}^1$, we can write
$\Sigma^0_{\neigh_k}\oplus\Sigma_{\neigh_k}^1\simeq \complex^2\otimes\Sigma_{\neigh_k},$ 
such that
\begin{align*}
\Sigma_{\neigh_k}^0\simeq \ket{0}\otimes\Sigma_{\neigh_k},\quad
\Sigma_{\neigh_k}^1\simeq \ket{1}\otimes\Sigma_{\neigh_k}.
\end{align*}
The first CPTP map in our FTS sequence is 
$ \mathcal{E}_{\neigh_k}\equiv(\ketbra{0}\trn{}\otimes\mathcal{I}_{\Sigma_{\neigh_k}} ) \oplus\mathcal{I}_{\cal R},$
with respect to $\hilbert_{\neigh_k}=(\complex^2\otimes\Sigma_{\neigh_k})\oplus {\cal R}$. 
The corresponding global map is $\mathcal{W}\equiv\mathcal{E}_{\neigh_k}\otimes\mathcal{I}_{\overline{\neigh}_k}$. 
The global Hilbert space decomposes as
\begin{equation} 
\hspace*{-2mm}
\hilbert\simeq\hilbert_{\neigh_k}\otimes\hilbert_{\overline{\neigh}_k}=(\complex^2\otimes\Sigma_{\neigh_k}\otimes\hilbert_{\overline{\neigh}_k})\oplus (\mathcal{R}\otimes\hilbert_{\overline{\neigh}_k}),
\label{Adec}
\end{equation}
whereby the target state can be written as $\ket{\psi}=(\ket{0}\otimes\ket{\tilde{\psi}})\oplus 0$. From this 
and the definition of $\mathcal{W}$, we can see that $\mathcal{W}(\ketbra{\psi})=\ketbra{\psi}$, as desired.
%, thus $\mathcal{W}$ satisfies the invariance condition. 
Furthermore, the only state orthogonal to $\ket{\psi}$ whose density operator is mapped to $\ketbra{\psi}$ is $\ket{\psi'}\equiv(\ket{1}\otimes\ket{\tilde{\psi}})\oplus 0$. Hence, we may interpret $\mathcal{W}$ as correcting an arbitrary error $U$ acting on the qubit subsystem associated to $\mathbb{C}^2$ in Eq. (\ref{Adec}).
The strategy we employ towards stabilizing $\ket{\psi}$ is  to iterate the following procedure: (1) apply a sequence of invariance-satisfying, neighborhood unitaries to map a state $\ket{\alpha}$ to $\ket{\psi'}$; (2) apply $\mathcal{W}$ to map $\ket{\psi'}$ to $\ket{\psi}$.}

For each $\ket{\alpha}\in\ker(\bra{\psi})$, let the unitary transformation to be used in step (1) be defined by 
$ U_{\alpha}\equiv\ketbra{\psi}\oplus(\ket{\psi'}\bra{\alpha}+\ket{\alpha}\bra{\psi'})\oplus\identity.$ 
This unitary has a non-trivial action only on $\textup{span}\{\ket{\alpha},\ket{\psi'}\}$, acting as $U_{\alpha}\ket{\alpha}=\ket{\psi'}$. Thus, the composition of $U_{\alpha}$ and $\mathcal{W}$ gives a map which takes $\ket{\alpha}$ to $\ket{\psi}$, as needed. We label the corresponding CPTP map with $\mathcal{U}_{\alpha}(\cdot)\equiv U_{\alpha}\cdot U_{\alpha}^{\dagger}$.

From Proposition \ref{thm:unitarygeneration}, we know that the assumed property $\langle \mathfrak{u}_{\neigh_\ell,\ket{\psi}}\rangle_{\ell}=\mathfrak{u}_{\ket{\psi}}$ ensures that any $U\in\mathcal{U}_{\ket{\psi}}$ can be decomposed into a finite product of invariance-satisfying, neighborhood unitaries. Since any $U_{\alpha}$, with $\inprod{\psi}{\alpha}=0$ is in $\mathcal{U}_{\ket{\psi}}$, such a $U_{\alpha}$ may be composed from a finite sequence of $\ket{\psi}$-preserving neighborhood maps. 

Finally, we construct the sequence of CPTP maps which renders $\ket{\psi}$ FTS. Let $\{\ket{\alpha}\}$ label an orthonormal basis set for $\textup{ker}(\bra{\psi})$ with the following ordering:
\begin{align*}
&\ket{0'}\equiv \ket{\psi'} ,\nonumber\\
&\textup{span}\{\ket{1'},\ldots,\ket{d_k'}\} \equiv (\ket{0}\otimes\Sigma_{\neigh_k})\otimes\hilbert_{\overline{\neigh}_k})\ominus\textup{span}(\ket{\psi}), \nonumber\\
&\textup{span}\{\ket{(d_k+1)'},\ldots,\ket{T'}\}\equiv {\cal R}\otimes\hilbert_{\overline{\neigh}_k}.
\end{align*}
Then, we define the FTS sequence of CPTP as
$ \mathcal{E}\equiv\mathcal{W}\circ\mathcal{U}_{T'}\circ\ldots\circ\mathcal{W}\circ\mathcal{U}_{1'}\circ\mathcal{W}\circ\mathcal{U}_{0'}.$
The individual maps manifestly satisfy invariance. It remains to show that $\mathcal{E}=\ketbra{\psi}\trn{}$. It suffices to show that $\mathcal{E}(\identity)=D\ketbra{\psi}$, where $D=\text{dim}(\hilbert)$. In the first step, 
\begin{align*}
\mathcal{W}\mathcal{U}_{0'}(\identity)&=\mathcal{W}(\identity)\nonumber\\
&=\mathcal{W} \{[(\ketbra{0}+\ketbra{1})\otimes\identity]\oplus 0 +(0\otimes 0)\oplus \identity \}\nonumber\\
&=2(\ketbra{0}\otimes\identity)\oplus 0 +(0\otimes 0)\oplus \identity\nonumber\\
&=2\Pi_{\neigh_k}+\Pi_{\cal R},
\end{align*}
where the decomposition used in the second and third lines is given in Eq. (\ref{Adec})
and $\Pi_{\cal R}$ is the projector onto ${\cal R}\otimes\hilbert_{\overline{\neigh}_k}$. 
In the next step,
\begin{align*}
\mathcal{W}\mathcal{U}_{1'}\mathcal{W}\mathcal{U}_{0'}(\identity)&=
\mathcal{W} [ 2\,\mathcal{U}_{1'}(\Pi_{\neigh_k}) +\mathcal{U}_{1'}(\Pi_{\cal R}) ] \nonumber\\
 &=\mathcal{W} \{2 \,[\mathcal{U}_{1'}(\Pi_{\neigh_k}-\ketbra{1'})\nonumber\\
 & \hspace*{4mm} +\mathcal{U}_{1'}(\ketbra{1'})]+\Pi_{\cal R}\} \nonumber\\
 &=\mathcal{W} [ 2( \Pi_{\neigh_k}-\ketbra{1'})+ {2 \ketbra{\psi'}}   +\Pi_{\cal R} ]\nonumber\\
 &=2\, \mathcal{W}(\Pi_{\neigh_k}-\ketbra{1'})  \nonumber\\
 & \hspace*{4mm} +2\, \mathcal{W}(\ketbra{\psi'})+\mathcal{W}(\Pi_{\cal R})\nonumber\\
 &=2\, (\Pi_{\neigh_k}-\ketbra{1'})+2\ketbra{\psi}+\Pi_{\cal R},
\end{align*} 
\noindent
where we used the fact that all operators have support in $(\ket{0}\otimes\Sigma_{\neigh_k}(\ket{\psi})\oplus {\cal R})\otimes\hilbert_{\overline{\neigh}_k}$, on which $\mathcal{W}$ acts trivially. Similarly, in the next step we have,
\begin{align*}
 \mathcal{W}\mathcal{U}_{2'}\mathcal{W}\mathcal{U}_{1'}\mathcal{W}\mathcal{U}_{0'}(\identity)&=2 \,
 (\Pi_{\neigh_k}-\ketbra{1'}-\ketbra{2'})\nonumber\\
&+4\, \ketbra{\psi}+\Pi_{\cal R}.
\end{align*}
Continuing until $\ket{d_k'}$, we obtain
\[ \mathcal{W}\circ\mathcal{U}_{d_k'}\circ\ldots\circ\mathcal{W}\circ\mathcal{U}_{1'}\circ\mathcal{W}\circ\mathcal{U}_{0'}(\identity)=2 s_k\ketbra{\psi}+\Pi_{\cal R},\]
where recall that $s_k=\text{dim}(\Sigma_{\neigh_k}(\ket{\psi}))$. At this point, we continue the sequence with the unitaries 
transferring vectors from ${\cal R}\otimes\hilbert_{\overline{\neigh}_k}$ to $\ket{\psi'}$, namely,
\begin{align*}
\mathcal{W}\mathcal{U}_{(d_k+1)'}(2 s_k\ketbra{\psi}+\Pi_{\cal R})&=[s_k+(s_k+1)]\ketbra{\psi}\nonumber\\
&\hspace*{-8mm}+\Pi_{\cal R} -\ketbra{(d_k+1)'}.
\end{align*}
Continuing in this way, the sequence terminates at
\begin{align*}
 \mathcal{W}\mathcal{U}_{T'}\mathcal{W}\mathcal{U}_{(d_k+1)'}(2 s_k\ketbra{\psi}+\Pi_R)
 & =(s_k+T)\ketbra{\psi} \\
 & = D \ketbra{\psi},
\end{align*}
which establishes the desired result. 
\end{proof}

\vspace*{1mm}

\noindent
{\bf Proposition \ref{thm:qlsalg}.}
{\em If $\ket{\psi}$ satisfies $\langle \mathfrak{u}_{\neigh_k,\ket{\psi}} \rangle_{k}=\mathfrak{u}_{\ket{\psi}}$ 
with respect to the neighborhood structure $\neigh$, then $\ket{\psi}$ satisfies Eq. (\ref{eq:qls}), and hence is QLS, 
with respect to $\neigh$.}

\begin{proof}
{By contradiction, assume that $\ket{\psi}$ does not satisfy Eq. (\ref{eq:qls}). Then $\bigcap_k \overline{\Sigma}_{\neigh_k}(\ket{\psi})=\mathcal{S}>\textup{span}(\ket{\psi})$. We show that $\langle \mathfrak{u}_{\neigh_k,\ket{\psi}} \rangle_{k} \leq\mathfrak{u}_{\mathcal{S}}$, 
where $\mathfrak{u}_{\mathcal{S}}$ is the Lie algebra associated to the Lie group $\mathcal{U}_{\mathcal{S}}$ that stabilizes 
$\mathcal{S}$. }The defining property of $\mathcal{U}_{\mathcal{S}}$ is that for all $U\in\mathcal{U}_{\mathcal{S}}$, $U\ket{s}\bra{s'}U^{\dagger}=\ket{s}\bra{s'}$ for all $\ket{s},\ket{s'}\in\mathcal{S}$. Then, the defining property of the corresponding Lie algebra is that for all $X\in\mathfrak{u}_{\mathcal{S}}$,  $[X,\ket{s}\bra{s'}]=0$ for all $\ket{s},\ket{s'}\in\mathcal{S}$. Consider an arbitrary neighborhood $\neigh_k$ and an element $Y\in\mathfrak{u}_{\neigh_k,\ket{\psi}}$. By definition, $Y$ satisfies $[Y,\ket{r}\bra{r'}]=0$ for all $\ket{r},\ket{r'}\in\overline{\Sigma}_{\neigh_k}(\ket{\psi})$. Since $\mathcal{S}\leq\overline{\Sigma}_{\neigh_k}(\ket{\psi})$, we have $[Y,\ket{s}\bra{s'}]=0$ for all $\ket{s},\ket{s'}\in\mathcal{S}$. Thus, $Y\in\mathfrak{u}_{\mathcal{S}}$. As this inclusion holds for all elements in $\mathfrak{u}_{\neigh_k,\ket{\psi}}$ for any $k$, and $\mathfrak{u}_{\mathcal{S}}$ is closed with respect to linear combination and Lie product, then any element in $\langle \mathfrak{u}_{\neigh_k,\ket{\psi}} \rangle_{k}$ is contained in $\mathfrak{u}_{\mathcal{S}}$. Since $\mathcal{S}>\textup{span}(\ket{\psi})$, we have $\mathfrak{u}_{\mathcal{S}}<\mathfrak{u}_{\ket{\psi}}$. 
{Finally, since $\langle \mathfrak{u}_{\neigh_k,\ket{\psi}} \rangle_{k}\leq\mathfrak{u}_{\mathcal{S}}$,
%% PJ: We only need containment here because we are using transitivity of the following chain of inequalities:
%% \langle \mathfrak{u}_{\neigh_k,\ket{\psi}} \rangle_{k}\leq \mathfrak{u}_{\mathcal{S}} <\mathfrak{u}_{\ket{\psi}}
it follows that $\langle \mathfrak{u}_{\neigh_k,\ket{\psi}} \rangle_{k}<\mathfrak{u}_{\ket{\psi}}$, which contradicts the assumption.
}
\end{proof}

\vspace*{1mm}

\noindent
$\bullet$ \textbf{Necessary conditions for RFTS:}

\vspace*{1mm}

\noindent In order to prove Proposition \ref{thm:complementcommutecondition}, we need some preliminary results. 
We start with a lemma which constrains the form of stabilizing CPTP maps:

\begin{lem}
\label{thm:invarianceoutputlem}
If a CPTP map $\mathcal{E}_k$ acting on neighborhood $\neigh_k$ preserves 
$\ket{\psi}$, then, for arbitrary $\rho$, $\mathcal{E}_k$ satisfies
\[ \Pi_k \mathcal{E}_k(\rho) \Pi_k = \Pi_k \rho \Pi_k + \Pi_k\sigma \Pi_k,\]
where $\Pi_k$ is the orthogonal projector onto $\overline{\Sigma}_{\neigh_k}(\ket{\psi})$ and $\sigma\equiv 
\mathcal{E}_k(\Pi^{\perp}_k \rho \Pi^{\perp}_k)\geq 0$.
\end{lem}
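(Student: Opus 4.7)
The strategy will be to use invariance of $\ket{\psi}$ to pin down the Kraus operators of $\tilde{\mathcal{E}}_k$ in a block-triangular form, and then to read off the ``top-left'' block of $\mathcal{E}_k(\rho)$ directly. First, I would expand the target in its Schmidt decomposition across the $\neigh_k\,|\,\overline{\neigh}_k$ bipartition, $\ket{\psi}=\sum_i s_i\,\ket{\alpha_i}_{\neigh_k}\ket{\beta_i}_{\overline{\neigh}_k}$, with $s_i>0$ and $\{\ket{\alpha_i}\}$ an orthonormal basis of $\Sigma_{\neigh_k}(\ket{\psi})$. Substituting into $(\tilde{\mathcal{E}}_k\otimes\mathcal{I}_{\overline{\neigh}_k})(\ketbra{\psi})=\ketbra{\psi}$ and exploiting the linear independence of $\{\ket{\beta_i}\bra{\beta_j}\}$ together with $s_is_j>0$ yields $\tilde{\mathcal{E}}_k(\ket{\alpha_i}\bra{\alpha_j})=\ket{\alpha_i}\bra{\alpha_j}$ for every $i,j$; by linearity, $\tilde{\mathcal{E}}_k$ restricts to the identity channel on $\mathcal{B}(\Sigma_{\neigh_k}(\ket{\psi}))$.

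Next, I would pick any Kraus representation $\tilde{\mathcal{E}}_k=\sum_l K_l\cdot K_l^\dagger$ and write each $K_l$ in $2\!\times\!2$ block form $K_l=\bigl(\begin{smallmatrix} A_l & B_l \\ C_l & D_l \end{smallmatrix}\bigr)$ relative to $\hilbert_{\neigh_k}=\Sigma_{\neigh_k}(\ket{\psi})\oplus\Sigma_{\neigh_k}(\ket{\psi})^\perp$. Expanding the fixed-point equation $\tilde{\mathcal{E}}_k(\ketbra{\alpha})=\ketbra{\alpha}$ for $\ket{\alpha}\in\Sigma_{\neigh_k}(\ket{\psi})$ block by block, the lower-right entry gives $\sum_l C_l\ketbra{\alpha} C_l^\dagger=0$, a sum of positive operators, forcing $C_l=0$; the upper-left entry, combined with the standard rigidity of Kraus representations of the identity channel, forces $A_l=a_l\,\tilde{\Pi}_{\neigh_k}$ for scalars $a_l$ obeying $\sum_l|a_l|^2=1$, where $\tilde{\Pi}_{\neigh_k}$ denotes the projector onto $\Sigma_{\neigh_k}(\ket{\psi})$. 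Finally, the trace-preservation identity $\sum_l K_l^\dagger K_l=\identity$, read off its top-right block, yields the crucial orthogonality $\sum_l a_l^* B_l=0$.

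With these constraints in hand, tensoring with $\mathcal{I}_{\overline{\neigh}_k}$ and writing $\rho$ in the four blocks $\rho_{00}=\Pi_k\rho\Pi_k$, $\rho_{11}=\Pi_k^\perp\rho\Pi_k^\perp$, $\rho_{01}$, $\rho_{10}$, a direct block multiplication computes the top-left block of $\mathcal{E}_k(\rho)$: the $\rho_{00}$ contribution collapses to $\rho_{00}$ itself (using $\sum_l|a_l|^2=1$); the $\rho_{01}$ and $\rho_{10}$ cross contributions are each proportional to $\sum_l a_l^*B_l$ and therefore vanish; and the $\rho_{11}$ contribution equals $\Pi_k\mathcal{E}_k(\Pi_k^\perp\rho\Pi_k^\perp)\Pi_k=\Pi_k\sigma\Pi_k$, as one sees by applying the same block formula to the block-diagonal operator $\Pi_k^\perp\rho\Pi_k^\perp$. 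Positivity of $\sigma$ when $\rho\geq 0$ is immediate from $\Pi_k^\perp\rho\Pi_k^\perp\geq 0$ and complete positivity of $\mathcal{E}_k$.

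The main subtlety I anticipate is the deduction $A_l=a_l\,\tilde{\Pi}_{\neigh_k}$, which rests on the standard uniqueness-up-to-isometric-mixing of Kraus representations of the identity channel; I would invoke this well-known fact rather than reprove it.
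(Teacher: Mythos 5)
Your proof is correct and follows essentially the same route as the paper's: both pin the Kraus operators into the block form $K_i=\lambda_i\Pi_k+R_i\Pi_k^{\perp}$ via invariance, extract the key cancellation $\Pi_k(\sum_i\lambda_i^*R_i)\Pi_k^{\perp}=0$ from trace preservation, and then read off the $\Pi_k(\cdot)\Pi_k$ block directly. The only difference is that you derive the Kraus form from scratch (Schmidt decomposition plus rigidity of the identity channel), whereas the paper simply cites Ref.~\cite{BT-TAC:10} for it.
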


\noindent
{\em Proof.}
If $\mathcal{E}_k$ is to preserve $\ket{\psi}$, the Kraus operators of $\mathcal{E}_k$ must act trivially on $\textup{supp}[\trn{\overline{\neigh_k}}(\ket{\psi}\bra{\psi})]$. This requires the form 
{$K_i \equiv \lambda_i \Pi_k + R_i\Pi_k^{\perp},$ $\lambda_i \in {\mathbb C}$, for some operator $R_i$ \cite{BT-TAC:10}.}
In turn, trace preservation of $\mathcal{E}_k$ requires that
\[ \identity = \sum_i|\lambda_i|^2\Pi_k+\lambda_i^*\Pi_kR_i\Pi_k^{\perp}+\lambda_i\Pi_k^{\perp}R_i^{\dagger}\Pi_k+\Pi_k^{\perp}R_i^{\dagger}R_i\Pi_k^{\perp}.
\]
From this, it follows that $\sum_i|\lambda_i|^2=1$, $\Pi_k^{\perp}\sum_iR_i^{\dagger}R_i\Pi_k^{\perp}=\Pi_k^{\perp}$, and, most importantly,
\( \Pi_k(\sum_i\lambda_i^*R_i)\Pi_k^{\perp}=0.\)
Finally, applying these conditions to $\Pi_k \mathcal{E}_k(\rho) \Pi_k$, we find
\begin{align*}
\hspace*{1cm}
\Pi_k \mathcal{E}_k(\rho) \Pi_k&=\Pi_k(\sum_i |\lambda_i|^2\Pi_k\rho\Pi_k+ \lambda_i\Pi_k\rho\Pi_k^{\perp}R_i^{\dagger}\\ 
& + \text{H.c.}+ R_i\Pi_k^{\perp}\rho\Pi_k^{\perp}R_i^{\dagger})\Pi_k\\
&=\Pi_k\rho\Pi_k+ \Pi_k\rho(\Pi_k^{\perp}\sum_i\lambda_i R_i^{\dagger}\Pi_k)\\
& + \text{H.c.}+ \sum_i \Pi_kR_i\Pi_k^{\perp}\rho\Pi_k^{\perp}R_i^{\dagger}\Pi_k\\
&=\Pi_k\rho\Pi_k+\Pi_k\sigma\Pi_k.\hspace*{2.5cm}\Box
\end{align*}

\noindent 
We will also make use of the following trace inequality:
\begin{lem} 
\label{lem:commpen}
 Let $\Pi_1$ and $\Pi_2$ be orthogonal projectors onto subspaces ${\cal V}_1,{\cal V}_2,$ respectively, and $\Pi_{1\cap 2}$ the projector onto ${\cal V}_1\cap{\cal V}_2,$. Then
\[ \rm{Tr}\,({\Pi_1\Pi_2}) \geq \rm{Tr}\,{\Pi_{1\cap 2}}+\frac{1}{2}\rm{Tr}\,({|[\Pi_1,\Pi_2]|^2}).\]
\end{lem}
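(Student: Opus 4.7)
My plan is to reduce the stated inequality to a cleaner spectral statement about the product of two projectors, then verify it using the Jordan/CS decomposition structure.

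First, I would expand the commutator-squared term. Writing $[\Pi_1,\Pi_2]^\dagger = -[\Pi_1,\Pi_2]$ and using $\Pi_i^2=\Pi_i$ together with the cyclicity of the trace, a direct computation gives
\[
\tfrac{1}{2}\mathrm{Tr}\bigl(|[\Pi_1,\Pi_2]|^2\bigr)
=\mathrm{Tr}(\Pi_1\Pi_2)-\mathrm{Tr}\bigl((\Pi_1\Pi_2)^2\bigr).
\]
Substituting this into the claimed inequality, everything involving $\mathrm{Tr}(\Pi_1\Pi_2)$ cancels, and the lemma becomes equivalent to the cleaner statement
\[
\mathrm{Tr}\bigl((\Pi_1\Pi_2)^2\bigr)\;\geq\;\mathrm{Tr}(\Pi_{1\cap 2}).
\]

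Next I would analyze the positive semidefinite operator $A\equiv\Pi_1\Pi_2\Pi_1$. Since $\|\Pi_2\|=1$, we have $0\leq A\leq \Pi_1\leq \identity$, so every eigenvalue $\lambda_i$ of $A$ lies in $[0,1]$. By cyclicity, $\mathrm{Tr}(A^2)=\mathrm{Tr}\bigl((\Pi_1\Pi_2)^2\bigr)=\sum_i\lambda_i^2$. The main step is to identify the $\lambda_i=1$ eigenspace with ${\cal V}_1\cap{\cal V}_2$: if $Ax=x$ then necessarily $x\in{\cal V}_1$ (since $\mathrm{ran}(A)\subseteq{\cal V}_1$), and then
\[
\|\Pi_2 x\|^2=\langle x,\Pi_2 x\rangle=\langle \Pi_1 x,\Pi_2\Pi_1 x\rangle=\langle x,Ax\rangle=\|x\|^2,
\]
forcing $\Pi_2 x=x$, hence $x\in{\cal V}_1\cap{\cal V}_2$; conversely any such $x$ trivially satisfies $Ax=x$. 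Therefore exactly $d\equiv\mathrm{Tr}(\Pi_{1\cap 2})$ of the eigenvalues $\lambda_i$ equal $1$, and the remaining ones are nonnegative. Dropping the nonnegative contribution,
\[
\mathrm{Tr}(A^2)=\sum_i\lambda_i^2\;\geq\;d\cdot 1^2=\mathrm{Tr}(\Pi_{1\cap 2}),
\]
which is exactly the reduced inequality. Combined with the algebraic identity from the first paragraph, this yields the lemma.

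I do not anticipate a serious obstacle: the only subtle point is the characterization of the $1$-eigenspace of $\Pi_1\Pi_2\Pi_1$, which is a standard consequence of the CS/Jordan decomposition of a pair of projectors, and which I prefer to establish by the short direct norm computation above rather than by invoking the decomposition as a black box.
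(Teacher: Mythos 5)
Your proof is correct and follows essentially the same route as the paper's: both first establish the algebraic identity $\tfrac{1}{2}\mathrm{Tr}\bigl(|[\Pi_1,\Pi_2]|^2\bigr)=\mathrm{Tr}(\Pi_1\Pi_2)-\mathrm{Tr}\bigl((\Pi_1\Pi_2)^2\bigr)$ and then reduce the lemma to the single inequality $\mathrm{Tr}\bigl((\Pi_1\Pi_2)^2\bigr)\geq\mathrm{Tr}(\Pi_{1\cap 2})$. The only divergence is in how that last inequality is justified: the paper conjugates by $\Pi_{1\cap 2}$ and invokes monotonicity of the trace under such conjugation, whereas you analyze the spectrum of $\Pi_1\Pi_2\Pi_1$ and identify its $1$-eigenspace with ${\cal V}_1\cap{\cal V}_2$ --- your version is, if anything, the more carefully justified, since the paper applies the monotonicity claim to the non-Hermitian operator $(\Pi_1\Pi_2)^2$ rather than to the positive semidefinite $(\Pi_1\Pi_2\Pi_1)^2$ of equal trace.
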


\begin{proof}
First, note that
$\frac{1}{2}|[\Pi_1,\Pi_2]|^2=\frac{1}{2}(\Pi_1\Pi_2\Pi_1+\Pi_2\Pi_1\Pi_2-(\Pi_1\Pi_2)^2-(\Pi_2\Pi_1)^2).$
Taking the trace of both sides and rearranging terms, 
\[  \text{Tr}\,({\Pi_1\Pi_2})=\text{Tr}({(\Pi_1\Pi_2)^2})+\frac{1}{2}\text{Tr}\,({|[\Pi_1,\Pi_2]|^2}).\]
Observe that, under conjugation, $\Pi_{1\cap 2}(\Pi_1\Pi_2)^2\Pi_{1\cap 2}=\Pi_{1\cap 2}$. Using that trace is non-increasing under conjugation with respect to a projector, we obtain $\tr{}{(\Pi_1\Pi_2)^2}\geq\tr{}{\Pi_{1\cap 2}(\Pi_1\Pi_2)^2\Pi_{1\cap 2}}=\tr{}{\Pi_{1\cap 2}}$. Making this replacement, the result follows.
\end{proof}

\vspace*{1mm}

\noindent 
{\bf Proposition \ref{thm:complementcommutecondition} (\textbf{Commuting FF Hamiltonian})}
{\em If a target pure state $\ket{\psi}$ is RFTS with respect to neighborhood structure $\neigh$, then 
$[\Pi_k,\Pi_{\overline{k}}]=0$ for all neighborhoods $\neigh_k$, where $\Pi_k$ and $\Pi_{\overline{k}}$ are 
the orthogonal projectors onto 
$\overline{\Sigma}_{\neigh_k}(\ket{\psi})$ and $\cap_{{j\neq k}}\overline{\Sigma}_{\neigh_j}(\ket{\psi})$,
respectively.}

\begin{proof}
Assume that $\ket{\psi}$ is RFTS under $\mathcal{E}_T\circ\ldots\mathcal{E}_1$.
Let $\mathcal{E}_k$ be the neighborhood map on $\neigh_k$ and $\mathcal{E}_{\overline{k}}$
be the composition of the remaining neighborhood maps.
Robust stabilizability implies $\mathcal{E}_{k}\circ\mathcal{E}_{\overline{k}}=\ket{\psi}\bra{\psi}\rm{Tr}$, whereas invariance
requires $\mathcal{E}_j(X)=X$ for any $X\in\overline{\Sigma}_{\neigh_j}(\ketbra{\psi})$. Since $\Pi_{\overline{k}}\in \overline{\Sigma}_{\neigh_j}(\ketbra{\psi})$ for all $j\neq k$, each $\mathcal{E}_j$ with $j\neq k$ must fix $\Pi_{\overline{k}}$. Hence, we have
\[ \mathcal{E}_{\overline{k}}(\Pi_{\overline{k}})=\Big(\prod_{j\neq k}\mathcal{E}_j\Big)(\Pi_{\overline{k}})=\Pi_{\overline{k}}.\]
Thus, applying the full sequence of CPTP maps to $\Pi_{\overline{k}}$, we also have
$\ket{\psi}\bra{\psi}\trn{}(\Pi_{\overline{k}})=\mathcal{E}_{k}\circ\mathcal{E}_{\overline{k}}(\Pi_{\overline{k}})
=\mathcal{E}_{k}(\Pi_{\overline{k}}).$
{Conjugating the equation 
$\ket{\psi}\bra{\psi}\trn{}(\Pi_{\overline{k}})=\mathcal{E}_{k}(\Pi_{\overline{k}})$
with respect to $\Pi_k$, we can apply 
Lemma \ref{thm:invarianceoutputlem} to obtain}
\begin{align}
\label{eq:P}
\ket{\psi}\bra{\psi}\trn{}(\Pi_{\overline{k}})&=\Pi_k\Pi_{\overline{k}}\Pi_k+\Pi_k\sigma\Pi_k,
\end{align}
where $\sigma$ is a positive-semidefinite operator. Next, conjugating both sides of Eq. (\ref{eq:P}) 
with respect to the projector $\tilde{\Pi}_k\equiv\Pi_k-\ket{\psi}\bra{\psi}$ kills the left hand-side, while 
leaving the sum of two positive semidefinite operators on the right hand-side,
$0=\tilde{\Pi}_k\Pi_k\Pi_{\overline{k}}\Pi_k\tilde{\Pi}_k+\tilde{\Pi}_k\Pi_k\sigma\Pi_k\tilde{\Pi}_k.$
Since such a sum is zero only if both matrices are zero, taking the trace of the first zero matrix gives
\begin{align*}
0&=\trn{}(\tilde{\Pi}_k\Pi_k\Pi_{\overline{k}})
=\trn{}((\Pi_k-\ket{\psi}\bra{\psi})\Pi_{\overline{k}})\\
&=\trn{}(\Pi_k\Pi_{\overline{k}})-\trn{}(\Pi_{k\cap{\overline{k}}})
\geq\trn{}(|[\Pi_k,\Pi_{\overline{k}}]|^2), 
\end{align*}
where the last inequality follows from Lemma \ref{lem:commpen}. This can only be true if $[\Pi_k,\Pi_{\overline{k}}]=0$. As
the above arguments are made for an arbitrary $\neigh_k$, they must hold for all neighborhoods, whereby 
the conclusion follows.
\end{proof}

\vspace*{1mm}

\noindent
{\bf Theorem \ref{thm:necessary_robust}.}
{\em Let the pure state $\rho=\ketbra{\psi}$ be RFTS with respect to $\neigh$.  Then the following properties hold:

\vspace*{1mm}

\textbf{\em (i)  (Finite correlation)} For any two subsystems 
$A$ and $B$ having disjoint neighborhood expansions (i.e., 
$\mathcal{A}^{\neigh}\cap\mathcal{B}^{\neigh}=\emptyset$), 
arbitrary observables $X_A$ and $Y_B$  are uncorrelated, that is, 
$\tr{}{X_A Y_B \rho} =\tr{}{X_A \rho}\tr{}{Y_B \rho}$.

\vspace*{1mm}

\textbf{\em (ii) (Recoverability property)} If a map $\mathcal{M}$ acts non-trivially only on subsystem $A$, $\mathcal{M}\equiv\tilde{\mathcal{M}}_A\otimes\mathcal{I}_{\overline{A}}$, then there exists a sequence of CPTP neighborhood maps 
$\mathcal{E}_{j}$, each acting only on $A^{\neigh}$, such that $\rho=\mathcal{E}_l\circ\ldots\circ\mathcal{E}_1\circ\mathcal{M}(\rho)$.

\vspace*{1mm}

\textbf{\em (iii) (Zero CMI)} For any two subsets of subsystems $A$ and $B$, with $A^{\neigh}\cap B = \emptyset$, the quantum conditional mutual information (CMI), $I(A:B|C)_{\rho}\equiv S(A,C)+S(B,C)-S(A,B,C)-S(C)$, satisfies $I(A:B|C)_{\rho}=0$, where $C\equiv A^{\neigh}\backslash A$.}

\vspace*{1mm}

\noindent 
{\em Proof.} {\bf (i)}
Since $\rho$ is RFTS with respect to $\neigh$, there exists a sequence of neighborhood maps such that $\rho\rm{Tr}=\mathcal{E}_T\ldots\mathcal{E}_1$. Let $\mathcal{E}_{A^{\neigh}}$ be the composition of all such maps which act non-trivially on $A$, and similarly for $\mathcal{E}_{B^{\neigh}}$ with $B$. By assumption, $\mathcal{E}_{A^{\neigh}}$ and $\mathcal{E}_{B^{\neigh}}$ act disjointly. Let $\mathcal{E}_{\textup{rest}}$ be the composition of the remaining maps. By the robustness assumption, we may reorder the maps to write $\rho=\mathcal{E}_{\textup{rest}}\mathcal{E}_{A^{\neigh}}\mathcal{E}_{B^{\neigh}}(\sigma_{A^{\neigh}}\otimes\tau_{B^{\neigh}}\otimes\omega_{\overline{A^{\neigh}B^{\neigh}}})$, for arbitrary input density operators. Let $X_A$ and $Y_B$ be arbitrary observables acting on $A$ and $B$. We have $\tr{}{X_A Y_B \rho}=\tr{}{X_A Y_B \mathcal{E}_{\textup{rest}}\mathcal{E}_{A^{\neigh}}\mathcal{E}_{B^{\neigh}}(\sigma_{A^{\neigh}}\otimes\tau_{B^{\neigh}}\otimes\omega_{\overline{A^{\neigh}B^{\neigh}}})}$. Since $\mathcal{E}_{\textup{rest}}^{\dagger}$ is unital and both $X_A$ and $Y_B$ are trivial where the map acts, the latter expression simplifies to $\tr{}{X_A Y_B \rho}=\tr{}{X_A \overline{\mathcal{E}}_{A^{\neigh}}(\sigma_{A^{\neigh}})\otimes Y_B\overline{\mathcal{E}}_{B^{\neigh}}(\tau_{B^\neigh})}$, where $\overline{\mathcal{E}}_{A^{\neigh}}$, $\overline{\mathcal{E}}_{B^{\neigh}}$ are defined to act on their respective systems and we have traced out $\omega_{\overline{A^{\neigh}B^{\neigh}}}$. The trace can be separated as $\tr{}{X_A Y_B \rho}=\tr{}{X_A \overline{\mathcal{E}}_{A^{\neigh}}(\sigma_{A^{\neigh}})}\tr{}{Y_B\overline{\mathcal{E}}_{B^{\neigh}}(\tau_{B^\neigh})}$. 
For the remaining steps, we first note that 
\begin{align*}
& \tr{}{X_A \overline{\mathcal{E}}_{A^{\neigh}}(\sigma_{A^{\neigh}})}  \nonumber\\
& \hspace*{4mm}=\tr{}{X_A \overline{\mathcal{E}}_{A^{\neigh}}(\sigma_{A^{\neigh}})\otimes\overline{\mathcal{E}}_{B^{\neigh}}(\tau_{B^\neigh})\otimes\omega_{\overline{A^{\neigh}B^{\neigh}}}},\\
&\tr{}{Y_B\overline{\mathcal{E}}_{B^{\neigh}}(\tau_{B^\neigh})}\nonumber\\
& \hspace*{4mm}=\tr{}{\overline{\mathcal{E}}_{A^{\neigh}}(\sigma_{A^{\neigh}})\otimes Y_B\overline{\mathcal{E}}_{B^{\neigh}}(\tau_{B^\neigh})\otimes\omega_{\overline{A^{\neigh}B^{\neigh}}}}.
\end{align*}
Finally, with $\mathcal{E}_{\textup{rest}}$ being trace-preserving, we may re-insert it into the trace to obtain
\begin{align*}
&\tr{}{X_A Y_B \rho}\nonumber =\tr{}{X_A \overline{\mathcal{E}}_{A^{\neigh}}(\sigma_{A^{\neigh}})}\tr{}{Y_B\overline{\mathcal{E}}_{B^{\neigh}}(\tau_{B^\neigh})}\nonumber\\
&=\tr{}{\mathcal{E}_{\textup{rest}}[X_A \overline{\mathcal{E}}_{A^{\neigh}}(\sigma_{A^{\neigh}})\otimes\overline{\mathcal{E}}_{B^{\neigh}}(\tau_{B^\neigh})\otimes\omega_{\overline{A^{\neigh}B^{\neigh}}}]}\nonumber\\
&\times \tr{}{\mathcal{E}_{\textup{rest}}[\overline{\mathcal{E}}_{A^{\neigh}}(\sigma_{A^{\neigh}})\otimes Y_B\overline{\mathcal{E}}_{B^{\neigh}}(\tau_{B^\neigh})\otimes\omega_{\overline{A^{\neigh}B^{\neigh}}}]}\nonumber\\
&=\tr{}{X_A \mathcal{E}_{\textup{rest}}\mathcal{E}_{A^{\neigh}}\mathcal{E}_{B^{\neigh}}(\sigma_{A^{\neigh}}\otimes\tau_{B^{\neigh}}\otimes\omega_{\overline{A^{\neigh}B^{\neigh}}})}\nonumber\\
&\times \tr{}{Y_B \mathcal{E}_{\textup{rest}}\mathcal{E}_{A^{\neigh}}\mathcal{E}_{B^{\neigh}}(\sigma_{A^{\neigh}}\otimes\tau_{B^{\neigh}}\otimes\omega_{\overline{A^{\neigh}B^{\neigh}}})}\nonumber\\
&=\tr{}{X_A\rho}\tr{}{Y_B\rho}, 
\end{align*}
where in the second-to-last step we have used the fact that $\mathcal{E}_{\textup{rest}}$ acts trivially on $X_A$ and $Y_B$.
%Inserting the trace-one operators $\mathcal{E}_{\textup{rest}}\overline{\mathcal{E}}_{B^{\neigh}}(\tau_{B^{\neigh}})\otimes\omega_{\overline{A^{\neigh}B^{\neigh}}}$ and $\mathcal{E}_{\textup{rest}}\overline{\mathcal{E}}_{A^{\neigh}}(\sigma_{A^{\neigh}})\otimes\omega_{\overline{A^{\neigh}B^{\neigh}}}$, respectively, into the traces, we obtain the desired result: $\tr{}{X_A Y_B \rho}=\tr{}{X_A Y_B \mathcal{E}_{\textup{rest}}\mathcal{E}_{A^{\neigh}}\mathcal{E}_{B^{\neigh}}(\sigma_{A^{\neigh}}\otimes\tau_{B^{\neigh}}\otimes\omega_{\overline{A^{\neigh}B^{\neigh}}})} =\tr{}{X_A \rho}\tr{}{Y_B \rho}$.
%\end{proof}

\vspace*{1mm}

\noindent 
{\bf (ii)}  
Let $\mathcal{E}'_k$ be the sequence of neighborhood maps which renders $\rho$ RFTS. Define the subsequence of maps 
$\mathcal{E}_{A^\neigh}\equiv\prod_{\neigh_k\cap A\neq \emptyset}\mathcal{E}'_k$, and let $\mathcal{E}_{R}$ be the product of 
the remaining $\mathcal{E}'_k$. We then have that $\mathcal{E}_{R}\circ\mathcal{E}_{A^\neigh}(\sigma)=\rho$ for any density operator $\sigma$. We show that $\mathcal{E}_{A^\neigh}$, acting on the transformed target state, $\mathcal{M}(\rho)$, recovers $\rho$:
\begin{align*}
\mathcal{E}_{A^\neigh}\circ\mathcal{M}(\rho) &= \mathcal{E}_{A^\neigh}\circ\mathcal{M}\circ\mathcal{E}_{R}\circ\mathcal{E}_{A^\neigh}(\sigma)\nonumber\\
&= \mathcal{E}_{A^\neigh}\circ\mathcal{E}_{R}\circ\mathcal{M}\circ\mathcal{E}_{A^\neigh}(\sigma)\nonumber\\
&= \mathcal{E}_{A^\neigh}\circ\mathcal{E}_{R}(\rho')\nonumber = \rho,
\end{align*}
where $\sigma$ is any density operator and $\rho' \equiv \mathcal{M}\circ\mathcal{E}_{A^\neigh}(\sigma)$.

\vspace*{1mm}

\noindent 
{\bf (iii)} {By specializing property (ii) to the case where $\mathcal{M}=(\tau_A\trn{A})\otimes\mathcal{I}_{\overline{A}}$, with $\tau_A$ being the completely mixed state on $A$, we have $\mathcal{E}^{A^\neigh}\mathcal{M}(\rho)=(\tau_A\trn{A})\otimes\mathcal{I}_{\overline{A}} (\rho)
=\mathcal{E}^{A^\neigh}(\tau_A\otimes\rho_{\overline{A}})$.} Then, using the fact that $A^{\neigh}\cap B = \emptyset$, we trace out all but $A^{\neigh}$ and $B$ (i.e., all but systems $ABC$) to obtain $\rho_{ABC}=\tr{\overline{ABC}}{\rho}=\tr{\overline{ABC}}{\mathcal{E}^{A^\neigh}(\tau_A\otimes\rho_{\overline{A}})}=\mathcal{E}^{A^\neigh}(\tau_A\otimes\rho_{BC})$.
\iffalse
Let $\{\mathcal{E}_k\}$ be a set of neighborhood maps for which $\rho$ is RFTS. Partition these neighborhood maps into two sets, defining,
\begin{align}
&\mathcal{E}^{A^{\neigh}}=\prod_{\neigh_k\cap A\neq \emptyset}\mathcal{E}_k,\\
&\mathcal{E}^{R}=\prod_{\neigh_k\cap A = \emptyset}\mathcal{E}_k.
\end{align}
Robustness ensures that $\mathcal{E}^{A^{\neigh}}\mathcal{E}^{R}=\rho\trn{}$.
\begin{align}
 \rho_{ABC}&=\tr{\overline{ABC}}{\rho}\\
&=\tr{\overline{ABC}}{\mathcal{E}^{A^{\neigh}}\mathcal{E}^{\overline{A}}(\identity)}\\
&=\tr{\overline{ABC}}{\mathcal{E}^{A^{\neigh}}(\identity_A\otimes\rho'_{\overline{A}})}\\
&=\tr{\overline{ABC}}{\mathcal{E}^{A^{\neigh}}(\identity_A\otimes\rho'_{\overline{A}})}\\
&=(\mathcal{E}^{A^{\neigh}})_{AC}(\identity_A\otimes\rho_{BC})\\
\end{align}
\fi
Since $\rho$ is written as a short quantum Markov chain, we have $I(A:B|C)_{\rho}=0$. \hfill$\Box$

\vspace*{3mm}

\noindent 
$\bullet$ \hspace*{-0mm}\textbf{Non-constructive RFTS sufficient conditions:} 

\vspace*{1mm}

\noindent 
{\bf Theorem \ref{thm:lrhilbertdecomp} (\textbf{Neighborhood factorization on local restriction})}
{\em A state $\ket{\psi}$ of the coarse-grained subsystems associated to $\hilbert\simeq\bigotimes_{i=1}^N\hilbert_i$ is 
RFTS with respect to the neighborhood structure $\neigh$ if:}
\begin{enumerate}
\item 
{\em There exists a locally restricted space $\tilde{\hilbert}=\bigotimes_{i=1}^N\tilde{\hilbert}_i$ that admits a 
virtual-subsystem decomposition of the form $\tilde{\hilbert}=\bigotimes_{j=1}^M \hat{\hilbert}_j$, such that}
\begin{equation*}
\ket{\psi}=\bigotimes_{j=1}^M\ket{\hat{\psi}_j}\oplus 0 \in \Big( \bigotimes_{j=1}^M \hat{\hilbert}_j\Big) \oplus \hilbert^0 ,
\end{equation*}
{\em where $\hilbert^0\simeq\tilde{\hilbert}^{\perp}$;} 
\item {\em For each virtual subsystem $\hat{\hilbert}_j$, there exists a neighborhood $\neigh_k$ such that} 
\begin{equation*}
\mathcal{B}(\hat{\hilbert}_j)\otimes\identity_{\overline{j}}\oplus \identity^0 \leq \mathcal{B}(\hilbert_{\neigh_k})\otimes\identity_{\overline{\neigh_k}}.
\end{equation*}
\end{enumerate}

\noindent
{\em Proof.}
Assume that the above conditions hold. We construct a finite sequence of commuting QL CPTP maps which robustly stabilize $\rho$. First, we construct the maps which prepare the locally restricted space. Define the map $\mathcal{E}^0_i:\mathcal{B}(\hilbert_i)\rightarrow \mathcal{B}(\hilbert_i)$ to be $$\mathcal{E}^0_i(\cdot)\equiv P_i\cdot P_i^{\dagger}+\frac{P_i}{\tr{}{P_i}}\tr{}{(\identity-P_i)\cdot},$$ 
where $P_i$ is the projector onto $\tilde{\hilbert}_i$. For each $\neigh_k$, we construct a map $\mathcal{E}^0_{k}\equiv\bigotimes_{i\in\neigh_k}\mathcal{E}^0_i$, which prepares support on the locally restricted space of all coarse-grained subsystems contained in $\neigh_k$.

On the virtual systems, let $\hat{\mathcal{E}}_j:\mathcal{B}(\hilbert^0\oplus\hat{\hilbert}_j\otimes\hat{\hilbert}_{\overline{j}})\rightarrow \mathcal{B}(\hilbert^0\oplus\hat{\hilbert}_j\otimes\hat{\hilbert}_{\overline{j}})$ as 
\( \hat{\mathcal{E}}_j(\cdot)=\mathcal{I}^0\oplus(\hat{\rho}_j\trn{})_j\otimes\mathcal{I}_{\overline{j}}.\)
Each virtual subsystem labeled $j$ is associated to a neighborhood $\neigh_k$ on which its operators act non-trivially. Correspondingly, each neighborhood-acting map $\mathcal{E}_j$ is constructed from $\hat{\mathcal{E}}_j$ by pre-composing it with $\mathcal{E}^0_{k}$,
$ \mathcal{E}_j\equiv \hat{\mathcal{E}}_j\circ\mathcal{E}^0_{k}(\cdot).$
The Kraus operators of $\hat{\mathcal{E}}_j$ are contained in $\identity^0\oplus\mathcal{B}(\hat{\hilbert}_j)\otimes\identity_{\overline{j}}$. Hence, by $\identity^0\oplus\mathcal{B}(\hat{\hilbert}_j)\otimes\identity_{\overline{j}} \leq \mathcal{B}(\hilbert_{\neigh_k})\otimes\identity_{\overline{\neigh_k}}$, we have that the Kraus operators of $\mathcal{E}_j$ act non-trivially only on  $\neigh_k$. Thus, each map $\mathcal{E}_j$ is a valid neighborhood map. Finally, we must show that an arbitrary sequence of these maps prepares $\rho$ while leaving it invariant. For invariance, we have $\mathcal{E}_j(\rho)=\hat{\mathcal{E}}_j\mathcal{E}^0_{k}(\rho)=\hat{\mathcal{E}}_j(0\oplus\bigotimes_{j=1}^M\hat{\rho}_j)=0\oplus\bigotimes_{j=1}^M\hat{\rho}_j=\rho.$ To prove preparation of $\rho$, we use the fact that $\mathcal{E}^0_i\mathcal{E}_j=\mathcal{E}_j\mathcal{E}^0_i$. 
%Let $k_j$ give the neighborhood index of the neighborhood containing the virtual subsystem labeled by $j$. 
Consider an arbitrary complete sequence of the neighborhood maps,
\begin{align*}
\mathcal{E}_M\circ\ldots\circ\mathcal{E}_1&=(\hat{\mathcal{E}}_M\circ\mathcal{E}^0_{M})\circ\mathcal{E}_{M-1}\circ\ldots\circ\mathcal{E}_2\circ\mathcal{E}_1\nonumber\\
&=\hat{\mathcal{E}}_M\circ\mathcal{E}_{M-1}\circ\ldots\circ\mathcal{E}_2\circ\mathcal{E}_1\circ\mathcal{E}^0_{M} .
\end{align*}
We continue in this way, using the commutativity of the support projections with the $\mathcal{E}_j$ to move all of the support projections to act first. Since every coarse-grained particle will have been accounted for, we may combine the action of all of these projections $\mathcal{E}^0_{k}$ into a single projection $\mathcal{E}^{\ell}$ which has the effect of projecting onto $\tilde{\hilbert}$,
\begin{align*}
\mathcal{E}_M\circ\ldots\circ\mathcal{E}_1&=(\hat{\mathcal{E}}_M\circ\ldots\circ\hat{\mathcal{E}}_1)\circ(\mathcal{E}^0_M\circ\ldots\circ\mathcal{E}^0_1)\nonumber\\
&=\hat{\mathcal{E}}_M\circ\ldots\circ\hat{\mathcal{E}}_1\circ\mathcal{E}^{\ell}.
\end{align*}
Finally, we see that the composition of these maps constitutes a preparation of the target state,
\begin{align*}
\hspace*{1cm}
\mathcal{E}_M\circ\ldots\circ\mathcal{E}_1&=\mathcal{I}^0\oplus\bigotimes_{j=1}^M(\hat{\rho}_j\trn{})\circ\mathcal{E}^{\ell}\nonumber\\
&=(0\oplus\bigotimes_{j=1}^M\hat{\rho}_j)\trn{} =\rho\trn{}.\hspace*{1.7cm}\Box
\end{align*}

%\vspace*{1mm}

\noindent 
$\bullet$ \textbf{Algebraic sufficient conditions for RFTS:}

\vspace*{1mm}

\noindent 
{\bf Proposition \ref{thm:algfac} (\textbf{Algebraically induced factorization})}
{\em If a set of algebras $\{\mathcal{A}_j\}$, $\mathcal{A}_i\in\mathcal{B}(\hilbert)$, is complete and commuting, then each $\mathcal{A}_j$ has a trivial center and there exists a decomposition of the Hilbert space $\hilbert\simeq\bigotimes_{j=1}^T\hat{\hilbert}_j$ for which $\mathcal{A}_j\simeq\mathcal{B}(\hat{\hilbert}_j)\otimes\identity_{\overline{j}}$ for each $j$.}

\begin{proof}
First, by contradiction, 
assume that a neighborhood algebra $\mathcal{A}_j$ were reducible, so that there exists some $X\in\mathcal{A}_j$ where $X\in\mathcal{A}_j'$, but $X\neq c\cdot \identity$. As $X\in\mathcal{A}_j$, it commutes with all elements of $\mathcal{A}_k$ for $k\neq j$, and hence the algebra generated by all the neighborhood algebras has a non-trivial commutant, which violates completeness. We obtain the Hilbert space factorization as follows. For any algebra $\mathcal{A}_j$ with trivial center acting on $\hilbert$, there exists a decomposition $\hilbert\simeq\hat{\hilbert}_j\otimes\hilbert_{\overline{j}}$ for which $\mathcal{A}_j=\mathcal{B}(\hat{\hilbert}_j)\otimes\identity_{\overline{j}}$. Starting with $\mathcal{A}_1$, we have $\hilbert\simeq\hat{\hilbert}_1\otimes\hilbert_{\overline{1}}$. From this, $\mathcal{A}_1'=\identity_{1}\otimes\mathcal{B}(\hat{\hilbert}_{\overline{1}})$. As the algebras are all commuting, $\mathcal{A}_2\leq\mathcal{A}_1'=\identity_{1}\otimes\mathcal{B}(\hat{\hilbert}_{\overline{1}})$. Hence, $\mathcal{A}_2$ carries a natural action on $\hilbert_{\overline{1}}$, and $\mathcal{A}_2$ having a trivial center implies that there is a decomposition $\hilbert_{\overline{1}}=\hat{\hilbert}_2\otimes\hilbert_{\overline{1,2}}$, for which $\mathcal{A}_2=\identity_1\otimes\mathcal{B}(\hilbert_2)\otimes\identity_{\overline{1,2}}$. So far we have $\hilbert\simeq\hat{\hilbert}_1\otimes\hat{\hilbert}_2\otimes\hilbert_{\overline{1,2}}$. With the introduction of each additional algebra, we obtain another factor in $\hilbert$. Continuing in this way, completeness of the set of $\mathcal{A}_j$ ensures that once all $\mathcal{A}_j$ have been included, $\hilbert$ will have been decomposed 
as $\hilbert\simeq\bigotimes_j\hat{\hilbert}_j$.
\end{proof}

\vspace*{1mm}

\noindent The following two lemmas will be used to formulate the decomposition of Proposition \ref{thm:lsalgdecomp}. The latter will then be used for proving Theorem \ref{thm:factorization}.

\begin{lem}
\label{thm:stateprojker}
Consider a Hilbert space $\hilbert\simeq\bigotimes_i \hilbert_i$, a neighborhood $\neigh_k$ containing $\hilbert_p$, and 
$\ket{\psi}\in\hilbert$. Then, the subsystem kernel of $\ketbra{\psi}$ on $p$ coincides with the subsystem kernel of the neighborhood projector $\Pi_k$ on $p$.
\end{lem}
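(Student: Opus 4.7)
The plan is to translate both subsystem kernels into subsystem supports (since they are orthogonal complements), and then show that the two supports coincide. Throughout, I will use the identification $\Sigma_{\neigh_k}(\ket{\psi}) = \textup{supp}(\textup{Tr}_{\overline{\neigh}_k} \ketbra{\psi})$, and the fact that $\Pi_k = \Pi_{\neigh_k} \otimes \identity_{\overline{\neigh}_k}$, where $\Pi_{\neigh_k}$ is the orthogonal projector onto $\Sigma_{\neigh_k}(\ket{\psi})$, i.e., onto the support of the reduced state $\rho_{\neigh_k} \equiv \textup{Tr}_{\overline{\neigh}_k}(\ketbra{\psi})$.

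First I would reduce the problem using $p \in \neigh_k$, which lets me split the partial trace as
\begin{equation*}
\textup{Tr}_{\overline{p}}(\Pi_k) \;=\; \textup{Tr}_{\neigh_k \setminus p}(\Pi_{\neigh_k}) \cdot \dim(\hilbert_{\overline{\neigh}_k}),
\end{equation*}
so that $\textup{supp}(\textup{Tr}_{\overline{p}}(\Pi_k)) = \textup{supp}(\textup{Tr}_{\neigh_k \setminus p}(\Pi_{\neigh_k}))$. Analogously, $\textup{Tr}_{\overline{p}}(\ketbra{\psi}) = \textup{Tr}_{\neigh_k \setminus p}(\rho_{\neigh_k})$. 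Hence the claim reduces to showing the \emph{support-preservation under partial trace} statement: if $\Pi_{\neigh_k}$ is the projector onto $\textup{supp}(\rho_{\neigh_k})$, then $\textup{supp}(\textup{Tr}_{\neigh_k \setminus p}(\Pi_{\neigh_k})) = \textup{supp}(\textup{Tr}_{\neigh_k \setminus p}(\rho_{\neigh_k}))$.

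This auxiliary fact is standard and follows from a duality argument: for any positive operator $M \geq 0$ on a bipartite space $AB$ and any vector $\ket{\phi} \in \hilbert_A$, one has $\bra{\phi}\textup{Tr}_B(M)\ket{\phi} = \textup{Tr}[(\ketbra{\phi}\otimes\identity_B)\,M]$, and this trace is strictly positive if and only if $\textup{supp}(M)$ is not contained in $\ker(\ketbra{\phi}\otimes\identity_B)$. Since $\rho_{\neigh_k}$ and $\Pi_{\neigh_k}$ have the same support by construction, $\ket{\phi}$ lies in $\textup{supp}(\textup{Tr}_B(\rho_{\neigh_k}))$ if and only if it lies in $\textup{supp}(\textup{Tr}_B(\Pi_{\neigh_k}))$. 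Applied with $B = \neigh_k \setminus p$, this is exactly what we need.

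Combining steps, $\textup{supp}(\textup{Tr}_{\overline{p}}(\Pi_k)) = \textup{supp}(\textup{Tr}_{\overline{p}}(\ketbra{\psi}))$, so the two subsystem supports on $p$ coincide, and hence so do their orthogonal complements, i.e., the subsystem kernels. No step seems to present a serious obstacle; the only subtlety is being careful with the factor $\hilbert_{\overline{\neigh}_k}$ (which contributes only a positive scalar to the partial trace and therefore cannot alter support/kernel).
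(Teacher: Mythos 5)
Your proof is correct and follows essentially the same route as the paper's: both reduce the claim, via $p\in\neigh_k$, to comparing $\textup{Tr}_{\neigh_k\setminus p}$ applied to $\rho_{\neigh_k}$ and to the projector $\Pi_{\neigh_k}$ onto its support (the trace over $\hilbert_{\overline{\neigh}_k}$ contributing only a positive scalar), and both rest on the standard fact that the kernel of the partial trace of a positive-semidefinite operator depends only on that operator's support. The only cosmetic difference is in how that last fact is justified: the paper uses the spectral decomposition $\rho_{\neigh_k}=\sum_j\lambda_j\ketbra{j}$ together with the insensitivity of $\ker\bigl(\sum_j\lambda_j A_j\bigr)$ to the positive weights $\lambda_j$, whereas you use the pairing $\bra{\phi}\textup{Tr}_B(M)\ket{\phi}=\textup{Tr}[(\ketbra{\phi}\otimes\identity_B)M]$.
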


\noindent
{\em Proof.}
With $p\in\neigh_k$, $ \ker(\tr{\overline{p}}{\ketbra{\psi}})=\ker(\tr{\overline{p}}{\rho_{\neigh_k}}).$
Using the spectral decomposition $\rho_{\neigh_k}=\sum_j\lambda_j \ketbra{j}$ along with properties of the kernel function, we have: 
\begin{align*}
\hspace*{1cm}\ker(\tr{\overline{p}}{\ketbra{\psi}})&=\ker(\sum_j\lambda_j\tr{\overline{p}}{\ketbra{j}})\nonumber\\
&=\ker(\sum_j\tr{\overline{p}}{\ketbra{j}})\nonumber\\
&=\ker( \tr{ \overline{p}}{ \tr{ \overline{\neigh}_k}{\tilde{\Pi}_k\otimes\identity_{\overline{\neigh}_k}}}) \nonumber\\
&=\ker(\tr{\overline{p}}{\Pi_k}).\hspace*{2.2cm}\Box
\end{align*}

\begin{lem}
\label{thm:ptracelem}
Given a positive-semidefinite operator $P$ acting on $\hilbert_A\otimes\hilbert_B$, let $P_A \equiv \tr{B}{P}$. Then, $\ker(P_A)=\ker(\Sigma_A(P))$.
\end{lem}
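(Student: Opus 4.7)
The plan is to unpack both sides of the claimed equality in terms of the operator Schmidt decomposition of $P$ and then match them via a linear-algebraic argument, with positivity of $P$ playing an essential role in controlling the partial trace.

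First, I would write the Hilbert--Schmidt Schmidt decomposition of $P$ as an element of $\mathcal{B}(\hilbert_A)\otimes\mathcal{B}(\hilbert_B)$, $P=\sum_i s_i\,A_i\otimes B_i$ with $s_i>0$ and with $\{A_i\}\subset\mathcal{B}(\hilbert_A)$, $\{B_i\}\subset\mathcal{B}(\hilbert_B)$ Hilbert--Schmidt orthonormal. By the Schmidt-span definition, $\Sigma_A(P)=\operatorname{span}\{A_i\}$, so the common kernel of $\Sigma_A(P)$ is exactly $\bigcap_i\ker(A_i)$. Thus the lemma reduces to showing $\ker(P_A)=\bigcap_i\ker(A_i)$.

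Next, I would exploit positivity. Since $P\geq 0$ gives $P_A\geq 0$, a vector $\ket{v}\in\hilbert_A$ lies in $\ker(P_A)$ iff $\bra{v}P_A\ket{v}=0$. Rewriting
\begin{equation*}
\bra{v}P_A\ket{v}=\operatorname{Tr}_B\!\left[(\bra{v}\!\otimes\!I_B)\,P\,(\ket{v}\!\otimes\!I_B)\right]=\|P^{1/2}(\ket{v}\otimes I_B)\|_{HS}^{\,2},
\end{equation*}
I get that $\ket{v}\in\ker(P_A)$ iff $P^{1/2}(\ket{v}\otimes I_B)=0$, iff (using $\ker P=\ker P^{1/2}$) $P(\ket{v}\otimes\ket{w})=0$ for every $\ket{w}\in\hilbert_B$. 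Substituting the Schmidt expansion, this becomes
\begin{equation*}
\sum_i s_i\,(A_i\ket{v})\otimes(B_i\ket{w})=0\quad\forall\,\ket{w}\in\hilbert_B,
\end{equation*}
equivalently, $\sum_i s_i\,(A_i\ket{v})\otimes B_i=0$ in $\hilbert_A\otimes\mathcal{B}(\hilbert_B)$.

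Then I would close the argument by invoking linear independence of the $\{B_i\}$: taking Hilbert--Schmidt pairings with a biorthogonal dual family $\{B_j^\ast\}$ (which exists by linear independence) yields $s_j\,A_j\ket{v}=0$, and since $s_j>0$, $A_j\ket{v}=0$ for every $j$. Hence $\ker(P_A)\subseteq\bigcap_i\ker(A_i)$; the reverse inclusion is immediate, since if $A_i\ket{v}=0$ for all $i$ then $P(\ket{v}\otimes\ket{w})=0$ for every $\ket{w}$, giving $\ket{v}\in\ker(P_A)$. Combining these establishes $\ker(P_A)=\bigcap_i\ker(A_i)=\ker(\Sigma_A(P))$.

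The main obstacle is the bookkeeping in the linear-independence step: one has to be careful that vanishing of $\sum_i s_i(A_i\ket{v})\otimes B_i$ as an element of $\hilbert_A\otimes\mathcal{B}(\hilbert_B)$ really forces each $A_i\ket{v}$ to be zero, which is where the orthonormality of the Schmidt decomposition enters decisively. It is also worth emphasizing that positivity of $P$ is not ornamental: without it, partial-trace cancellations (e.g., $P=A\otimes B$ with $\operatorname{Tr}B=0$) make $\ker(P_A)$ strictly larger than $\ker(\Sigma_A(P))$, so the PSD hypothesis must be used precisely at the quadratic-form step above.
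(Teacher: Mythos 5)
Your proof is correct, and it takes a genuinely different route from the paper's. The paper never writes down the operator Schmidt decomposition of $P$: it fixes $\ket{v}\in\ker(P_A)$, expands $\tr{}{(\ketbra{v}\otimes\identity)P}=0$ over an informationally complete POVM $\{E_i\}$ on $\hilbert_B$, uses positivity to force each nonnegative term $\tr{}{(\ketbra{v}\otimes E_i)P}$ to vanish separately, rewrites each as the quadratic form of $\tr{B}{(\identity\otimes\sqrt{E_i})P(\identity\otimes\sqrt{E_i})}$, and concludes because these partial traces span $\Sigma_A(P)$ by informational completeness. You instead characterize $\ker(P_A)$ via $P^{1/2}$, showing $\bra{v}P_A\ket{v}=0$ iff $\ket{v}\otimes\ket{w}\in\ker(P)$ for all $\ket{w}$, and then extract $A_i\ket{v}=0$ from the Schmidt expansion by pairing against the orthonormal family $\{B_i\}$. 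Both arguments invoke positivity at exactly the same juncture (reducing membership in $\ker(P_A)$ to the vanishing of a quadratic form), and your closing remark that the PSD hypothesis is not ornamental — e.g.\ $P=A\otimes B$ with $\tr{}{B}=0$ breaks the statement — is the right sanity check. What each approach buys: the paper's version avoids any explicit decomposition of $P$ and works directly with spanning sets of $\Sigma_A(P)$, which dovetails with how Schmidt spans are manipulated elsewhere in the text; yours makes the mechanism more transparent (the kernel of $P_A$ is literally the set of $\ket{v}$ for which the entire slice $\ket{v}\otimes\hilbert_B$ sits in $\ker P$), at the modest cost of the bookkeeping you flag in the biorthogonality step, which is handled correctly since the $\{B_i\}$ are Hilbert--Schmidt orthonormal and the $s_i$ strictly positive.
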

\begin{proof}
The direction $\ker(P_A)\supseteq\ker(\Sigma_A(P))$ is trivial since $P_A\in\Sigma_A(P)$. For $\ker(P_A)\subseteq\ker(\Sigma_A(P))$, assume $P_A\ket{v}=0$. Since $P_A\geq 0$, this is equivalent to $\tr{}{\ketbra{v}P_A}=0$. In terms of $P$ then, we have $\tr{}{\ketbra{v}\otimes\identity P}=0$. Let $\{E_i\}_{i=1}^{d_B^2}$ constitute an informationally complete POVM on $\hilbert_B$ (i.e., $\textup{span}\{E_i\}=\mathcal{B}(\hilbert_B)$). Then $\sum_i\tr{}{\ketbra{v}\otimes E_i P}=0$. Since each term must be non-negative, we have $\tr{}{\ketbra{v}\otimes E_i P}=0$ for all $i$. We may rewrite this as $\bra{v}\tr{B}{(\identity\otimes\sqrt{E_i})P(\identity\otimes\sqrt{E_i})}\ket{v}=0$, which implies $\tr{B}{(\identity\otimes\sqrt{E_i})P(\identity\otimes\sqrt{E_i})}\ket{v}=0$ for all $i$. Since the POVM is informationally complete, $$\textup{span}\{\tr{B}{(\identity\otimes E_i)P}|i=1,\ldots,d_B^2\}=\Sigma_A(P).$$ Thus, $\ket{v}\in\ker(\Sigma_A(P))$.
\end{proof}

\begin{prop}
\label{thm:lsalgdecomp}
Let $\bigotimes_{i=1}^N\hilbert_i$ be a Hilbert space with a neighborhood structure $\neigh$, and let $\ket{\psi}$ be any state in $\hilbert$. For any neighborhood $\neigh_k$ containing a system $\hilbert_p$, consider the reduced state $\rho_p=\tr{}{\ketbra{\psi}}$ and the decomposition $\hilbert_p\simeq\textup{supp}(\rho_p)\oplus\ker(\rho_p)$. Then there exists a decomposition $\hilbert_p\simeq(\bigoplus_l \hilbert_l\otimes\hilbert'_l)\oplus \ker(\rho_p)$ such that
\begin{equation*}
\textup{alg}\{\Sigma_p(\Pi_k)\}=\Big( \bigoplus_{l}\mathcal{B}(\hilbert_l^p)\otimes\identity_{{\hilbert'}_l^p}\Big)
\oplus\textup{span}\{\identity\}.
\end{equation*}
\end{prop}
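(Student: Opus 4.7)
\textbf{Proof proposal for Proposition \ref{thm:lsalgdecomp}.}
The plan is to realize $\mathcal{A}\equiv\textup{alg}\{\Sigma_p(\Pi_k)\}$ as a finite-dimensional unital $\ast$-subalgebra of $\mathcal{B}(\hilbert_p)$, to identify its ``ineffective subspace'' with $\ker(\rho_p)$, and then to apply the Artin--Wedderburn theorem to its restriction to $\supp(\rho_p)$.

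First I would check that $\Sigma_p(\Pi_k)$ is closed under Hermitian conjugation, so that $\mathcal{A}$ is indeed a $\ast$-algebra. Since $\Pi_k=\Pi_{\neigh_k}\otimes\identity_{\overline{\neigh}_k}$ is Hermitian and positive-semidefinite, an operator Schmidt decomposition of $\Pi_{\neigh_k}$ across the bipartition $p\,|\,\neigh_k\setminus p$ can be taken with Hermitian Schmidt factors on each side, exhibiting $\Sigma_p(\Pi_k)=\Sigma_p(\Pi_{\neigh_k})$ as the span of Hermitian operators on $\hilbert_p$. Next, combining Lemma~\ref{thm:stateprojker} with Lemma~\ref{thm:ptracelem} applied to $P=\Pi_{\neigh_k}\geq 0$, I obtain the central identity
\begin{equation*}
\ker(\rho_p)=\ker\!\left(\tr{\overline{p}}{\ketbra{\psi}}\right)=\ker\!\left(\tr{\neigh_k\setminus p}{\Pi_{\neigh_k}}\right)=\ker\!\left(\Sigma_p(\Pi_k)\right),
\end{equation*}
so that the common kernel of all generators of $\mathcal{A}$ is precisely $\ker(\rho_p)$.

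Every Hermitian generator therefore has the block form $X=X|_{\supp(\rho_p)}\oplus 0$ relative to $\hilbert_p\simeq\supp(\rho_p)\oplus\ker(\rho_p)$, and so does every element of the non-unital $\ast$-algebra $\mathcal{A}_0$ they generate. I would next argue that $\mathcal{A}_0$, being a finite-dimensional $\ast$-algebra whose common kernel on $\hilbert_p$ is exactly $\ker(\rho_p)$, has as its intrinsic unit the orthogonal projector onto $\supp(\rho_p)$. Consequently the restriction $\tilde{\mathcal{A}}\equiv\mathcal{A}_0|_{\supp(\rho_p)}$ is a \emph{unital} $\ast$-subalgebra of $\mathcal{B}(\supp(\rho_p))$, and the unital closure $\mathcal{A}=\mathbb{C}\identity_p+\mathcal{A}_0$ consists exactly of operators of the form $B\oplus\mu\identity_{\ker(\rho_p)}$ with $B\in\tilde{\mathcal{A}}$ and $\mu\in\mathbb{C}$; equivalently $\mathcal{A}=\tilde{\mathcal{A}}\oplus\textup{span}\{\identity_{\ker(\rho_p)}\}$.

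To close, I apply the Artin--Wedderburn theorem to the finite-dimensional unital $\ast$-algebra $\tilde{\mathcal{A}}$ on $\supp(\rho_p)$, which yields an isomorphism $\supp(\rho_p)\simeq\bigoplus_l \hilbert_l^p\otimes{\hilbert'}_l^p$ under which $\tilde{\mathcal{A}}=\bigoplus_l\mathcal{B}(\hilbert_l^p)\otimes\identity_{{\hilbert'}_l^p}$, producing the claimed decomposition. The main obstacle I anticipate is the third step: one must be careful that, although $\mathcal{A}_0$ is not unital as a subalgebra of $\mathcal{B}(\hilbert_p)$, reinserting $\mathbb{C}\identity_p$ produces the clean direct-sum block structure $\tilde{\mathcal{A}}\oplus\textup{span}\{\identity_{\ker(\rho_p)}\}$ rather than some coupled subset of $\mathcal{B}(\hilbert_p)$ -- a point that hinges on identifying the intrinsic unit of $\mathcal{A}_0$ with the projector onto $\supp(\rho_p)$, which in turn relies on the kernel identity furnished by Lemmas~\ref{thm:stateprojker} and \ref{thm:ptracelem}.
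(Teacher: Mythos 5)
Your proof is correct and follows essentially the same route as the paper's: both hinge on the kernel identity $\ker(\rho_p)=\ker(\Sigma_p(\Pi_k))$ obtained by chaining Lemmas \ref{thm:stateprojker} and \ref{thm:ptracelem}, and then invoke the structure theorem for finite-dimensional $\ast$-algebras to obtain the block decomposition over $\textup{supp}(\rho_p)$. The only difference is presentational -- the paper phrases the block structure via a commutant argument (showing that $\ket{\alpha}\bra{\beta}$, for $\ket{\alpha},\ket{\beta}$ in the kernel, commutes with all of $\Sigma_p(\Pi_k)$), whereas you establish it directly; your explicit treatment of the $\dagger$-closedness of the Schmidt span and of the intrinsic unit of the non-unital generated algebra fills in details the paper leaves implicit.
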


\begin{proof}
The above decomposition is ensured as long as $\textup{alg}\{\Sigma_p(\Pi_k)\}$ commutes with all of $\identity_{\textup{supp}(\rho_p)}\oplus\mathcal{B}(\ker(\rho_p))$. We show that an arbitrary basis element in $\identity_{\textup{supp}(\rho_p)}\oplus\mathcal{B}(\ker(\rho_p))$ commutes with all elements in $\Sigma_p(\Pi_k)$. Consider the non-orthonormal basis $\{\identity,\ket{\alpha}\bra{\beta}\}$, where $\ket{\alpha},\ket{\beta}$ are basis elements of $\ker(\rho_p)$. We need only verify that elements $\ket{\alpha}\bra{\beta}$ commute with $\Sigma_p(\Pi_k)$, as $\identity$ does trivially. Since $p\in\neigh_k$, we may apply Lemma \ref{thm:stateprojker} to obtain that $\ker(\rho_p)=\ker(\tr{\overline{p}}{\Pi_k})$. From Lemma \ref{thm:ptracelem} we have $\ker(\tr{\overline{p}}{\Pi_k})=\ker(\Sigma_p(\Pi_k))$. Thus, $\ket{\alpha},\ket{\beta}\in \ker(\Sigma_p(\Pi_k))$, ensuring that $\ket{\alpha}\bra{\beta}\in \Sigma_p(\Pi_k)'$.
\end{proof}

\vspace*{1mm}

\noindent 
{\bf Theorem \ref{thm:factorization} (Algebraic factorization RFTS).}
{\em Let $\ket{\psi}$ on (coarse-grained) subsystems $\bigotimes_{i=1}^N\hilbert_i$ be QLS with respect to $\neigh$ and 
let the neighborhood algebras $\mathcal{A}_j$ be commuting and complete on the local support space $\tilde{\hilbert}$.
Then $\ket{\psi}$ admits a decomposition \[\ket{\psi}=0\oplus\bigotimes_j \ket{\hat{\psi}_j},\] 
with respect to the neighborhood algebra-induced factorization 
$\hilbert\simeq\hilbert^0\oplus (\bigotimes_j\hat{\hilbert}_j)$, and is thus RFTS.}

\begin{proof}
Completeness and commutativity of the $\mathcal{A}_j$ induce the decomposition 
$\tilde{\hilbert}\simeq\bigotimes_j\hat{\hilbert}_j$.
The latter ensures that each $\mathcal{A}_j$ is of the form $\identity_0\oplus\mathcal{B}(\hat{\hilbert}_j)\otimes\identity_{\overline{j}}$. Each $\Pi_k$ commutes with all elements in $\mathcal{A}_j$ for $j\neq k$. This can only be the case if $\Pi_k$ acts as identity on each factor $\hat{\hilbert}_j$ with $j\neq k$,
\( \Pi_k=0\oplus\hat{\Pi}_k\otimes\bigotimes_{j\neq k}\identity, \)
where we have used the fact that the $\Pi_j$ do not have support on the local kernel space $\hilbert_0$.
Thus, the $\Pi_k$ are mutually commuting with one another. This commutativity along with 
{asymptotic QLS} [Eq. (\ref{eq:qls})] 
ensures that
\( \Pi_1\Pi_2\ldots\Pi_T=0\oplus\bigotimes_j \hat{\Pi}_j = \ketbra{\psi}. \)
The trace of the left hand side is the product of ranks of projectors $\hat{\Pi}_j$ and is equal to the trace of $\ketbra{\psi}$, which is 1. Hence, each projector satisfies $\hat{\Pi}_j=\ketbra{\hat{\psi}_j}$. Thus,
\( \ket{\psi}=0\oplus\bigotimes_j \ket{\hat{\psi}_j}.\)
With this factorization of $\ket{\psi}$, as well as the fact that $\identity_0\oplus\mathcal{B}(\hat{\hilbert}_j)\otimes\identity_{\overline{j}}\leq\mathcal{B}(\hilbert_{\neigh_j})\otimes\identity_{\overline{\neigh}_j}$ for each $j$ (by construction), Thm. \ref{thm:lrhilbertdecomp} ensures the $\ket{\psi}$ is RFTS with respect to $\neigh$, as desired.
\end{proof}

%\vspace*{1mm}

\noindent 
$\bullet$ \textbf{Matching overlap condition for RFTS:}

\vspace*{1mm}

\noindent 
{\bf Theorem \ref{thm:matchingoverlap} (\textbf{Matching overlap RFTS}).}
{\em Assume $\ket{\psi}$ on (coarse-grained) subsystems $\bigotimes_{i=1}^N\hilbert_i$  is QLS with respect to $\neigh$, which satisfies the matching overlap condition. If $[\Pi_j,\Pi_k]=0$ for all pairs of neighborhood projectors, then $\ket{\psi}$ is RFTS.}

\begin{proof}
We obtain a decomposition of each $\hilbert_p$ that constitutes a global change of basis leading to a neighborhood factorization as in Prop. \ref{thm:lrhilbertdecomp} which implies RFTS. Consider an arbitrary coarse-grained particle $p$ with Hilbert space $\hilbert_p$. The decomposition of $\hilbert_p$ is induced by the algebra $\textup{alg}\{\Sigma_p(\Pi_k)\}_{\neigh_k\ni p}$. By Prop. \ref{thm:lsalgdecomp}, each $\textup{alg}\{\Sigma_p(\Pi_k)\}$ is contained in $\mathcal{B}(\textup{supp}(\rho_p))\oplus\textup{span}\{\identity_{\ker(\rho_p)}\}$. We show that, furthermore, $\textup{alg}\{\Sigma_p(\Pi_k)\}_{\neigh_k\ni p}=\mathcal{B}(\textup{supp}(\rho_p))\oplus\textup{span}\{\identity_{\ker(\rho_p)}\}$, by establishing that its center is equal to $\textup{span}\{\identity,\identity_{\textup{supp}(\rho_p)}\oplus 0\}.$

Assuming otherwise, there exists an $X=\tilde{X}\oplus 0 \notin \textup{span}\{\identity,\identity_{\textup{supp}(\rho_p)}\oplus 0\}$ such that $X\in\Sigma_p(\Pi_k)'$ for each $\neigh_k\ni p$. Then, $[\identity_{\overline{p}}\otimes X_p,\Pi_k]=0$ for all $\neigh_k$ (including $\neigh_k$ \reflectbox{$\notin$}$p$). Since $X$ acts non-trivially on $\textup{supp}(\rho_p)$, we have $\identity_{\overline{p}}\otimes X_p \ket{\psi} = \ket{\tau}\notin\textup{span}(\ket{\psi})$. Since $\ket{\psi}$ 
satisfies Eq. (\ref{eq:qls}), it is the only vector for which $\Pi_k\ket{\psi}=\ket{\psi}$ for all neighborhoods $\neigh_k$. However, for $\ket{\tau}$, $ \Pi_k\ket{\tau}=
%&=
\Pi_k(\identity_{\overline{p}}\otimes X_p)\ket{\psi}=(\identity_{\overline{p}}\otimes X_p)\Pi_k\ket{\psi}
%\\&
=(\identity_{\overline{p}}\otimes X_p)\ket{\psi}=\ket{\tau},$
which is a contradiction. Hence, no such $X$ can exist, implying that the center of $\textup{alg}\{\Sigma_p(\Pi_k)\}_{\neigh_k\ni p}$ is equal to $\textup{span}\{\identity,\identity_{\textup{supp}(\rho_p)}\oplus 0\}$. Together with the fact that $\textup{alg}\{\Sigma_p(\Pi_k)\}_{\neigh_k\ni p}$ is contained in $\mathcal{B}(\hilbert_{\textup{supp}(\rho_p)})\oplus\textup{span}\{\identity_{\ker(\rho_p)}\}$, this ensures that 
\[ \textup{alg}\{\Sigma_p(\Pi_k)\}_{\neigh_k\ni p}=\mathcal{B}(\hilbert_{\textup{supp}(\rho_p)})\oplus\textup{span}\{\identity_{\ker(\rho_p)}\}. \]
As described, the matching overlap condition ensures that the intersection of any non-disjoint neighborhoods $\neigh_j$ and $\neigh_k$ is some coarse-grained particle $p$. 
Thus, from $[\Pi_j,\Pi_k]=0$, we have $[\Sigma_p(\Pi_j),\Sigma_p(\Pi_k)]=0$, abusing notation. Hence, for any two neighborhoods $\neigh_j$ and $\neigh_k$ containing $p$, we have $\textup{alg}\{\Sigma_p(\Pi_j)\}\leq\textup{alg}\{\Sigma_p(\Pi_k)\}'$. The algebra $\textup{alg}\{\Sigma_p(\Pi_k)\}_{\neigh_k\ni p}$, then, is seen to be generated by a finite number of mutually commuting algebras. Given the form of this algebra in 
the above equation, these generating subalgebras $\textup{alg}\{\Sigma_p(\Pi_k)\}$ can only mutually commute if $\hilbert_{\textup{supp}(\rho_p)}=\bigotimes_{k|\neigh_k\ni p}\hat{\hilbert}^k_p$, whereby 
$$\textup{alg}\{\Sigma_p(\Pi_k)\}=(\mathcal{B}(\hat{\hilbert}^k_p)\otimes\identity_{\hat{\hilbert}^{\overline{k}}_p})\oplus\textup{span}\{\identity_{\ker(\rho_p)}\},$$
for each neighborhood $\neigh_k\ni p$. 

We have obtained a decomposition for each coarse-grained particle Hilbert space $\hilbert_p\simeq(\bigotimes_{k|\neigh_k\ni p}\hat{\hilbert}^k_p)\oplus\hilbert_{\ker(\rho_p)}$. Thus, the global Hilbert space decomposes as
\begin{align*}
\hilbert&\simeq\bigotimes_p \hilbert_p\simeq \bigotimes_p\bigg((\bigotimes_{k|\neigh_k\ni p}\hat{\hilbert}^k_p)\oplus\hilbert_{\ker(\rho_p)}\bigg)\nonumber\\
&\simeq \bigg(\bigotimes_p\bigotimes_{k|\neigh_k\ni p}\hat{\hilbert}^k_p\bigg)\oplus\hilbert_{0} 
\simeq \bigg(\bigotimes_k\bigotimes_{p\in\neigh_k}\hat{\hilbert}^k_p\bigg)\oplus\hilbert_{0}\nonumber\\ 
&\equiv \bigg(\bigotimes_k\hat{\hilbert}^k\bigg)\oplus\hilbert_{0}.
\end{align*}
By the way this decomposition was formed, the $\Pi_k$ act trivially on all but one of the virtual factors, $\Pi_k=0\oplus\hat{\Pi}_k\bigotimes_{j\neq k}\hat{\identity}_j$. Hence, $\ket{\psi}$ satisfying Eq. (\ref{eq:qls}) implies that
\( \Pi_1\Pi_2\ldots\Pi_T=0\oplus\bigotimes_j \hat{\Pi}_j = \ketbra{\psi}.\)
Similar to the proof of Thm. \ref{thm:factorization}, the trace of the left hand side is the product of ranks of projectors $\hat{\Pi}_j$ and is equal to the trace of $\ketbra{\psi}$, which is 1. Hence, each projector satisfies $\hat{\Pi}_j=\ketbra{\hat{\psi}_j}$. Thus,
\( \ket{\psi}=0\oplus\bigotimes_j \ket{\hat{\psi}_j}.\)
With this factorization of $\ket{\psi}$, as well as the fact that $\identity_0\oplus\mathcal{B}(\hat{\hilbert}_j)\otimes\identity_{\overline{j}}\leq\mathcal{B}(\hilbert_{\neigh_j})\otimes\identity_{\overline{\neigh}_j}$ for each $j$, Thm. \ref{thm:lrhilbertdecomp} ensures the $\ket{\psi}$ is RFTS with respect to $\neigh$.
\end{proof}

\vspace*{1mm}

\noindent 
$\bullet$ \textbf{Efficiency of FTS/RFTS:}

\vspace*{1mm}

\noindent 
{{\bf Proposition \ref{thm:latticecomplexity} (\textbf{Lattice circuit size scaling}).}  
{\em Consider an $N$-dimensional subset and neighborhood structure $\neigh^{(N)}$ on 
a $m$-dimensional lattice. If $\ket{\psi^{(N)}}$ 
is RFTS with respect to $\neigh^{(N)}$, then $\ket{\psi}$ can be stabilized by a dissipative circuit of 
size at most $|\neigh^0|(N/ {\sf c})$ and depth at most $|\neigh^0|\textup{diam}(\neigh^0)^m$.}

\begin{proof}
For any RFTS state, the circuit size is equal to the number of neighborhoods. From the unit cell definition, 
the latter is $|\neigh^0|(N/{\sf c})$.
To bound the depth of the circuit, we devise a scheme which parallelizes the circuit to one with constant depth. 
Specifically, we show that there exists a partitioning of the neighborhoods of $\neigh$, and hence $\neigh^{(N)}$, 
into $|\neigh^0|\textup{diam}(\neigh^0)^m$ parts, such that each part consists of a set of mutually disjoint neighborhoods. 
If the union of the unit cell neighborhoods $\neigh^0$ is translated in any direction a distance ${\sf d} \equiv \textup{diam}(\neigh^0)$, 
the resulting set is disjoint from $\neigh^0$. In particular, if we select a single neighborhood $\neigh_k\in\neigh^0$ and construct the set of neighborhoods generated by linear combinations of ${\sf d}\hat{e}_i$ for each $i=1,\ldots,m$, the neighborhoods in this set are ensured to 
be disjoint from one another. Hence, the sequence of the corresponding neighborhood maps act in parallel and constitute a layer of the circuit. This set of neighborhoods is generated by a subgroup $({\sf d}\mathbb{Z})^m={\sf d}\mathbb{Z}\times\ldots\times {\sf d}\mathbb{Z}$ of the discrete translation group $\mathbb{Z}^m\simeq {\mathbb L}$. Therefore, the translated copies of $\neigh_k$ for which this did not account each correspond to a coset of $({\sf d}\mathbb{Z})^m$ in $\mathbb{Z}^m$ with respect to elements 
$\vec{\ell}=(\ell_1,\ldots,\ell_m)\in\mathbb{Z}^m$. This coset group is isomorphic to $\mathbb{Z}_{\sf d}^m=
\mathbb{Z}_{\sf d}\times\ldots\times\mathbb{Z}_{\sf d}$, whose size is $|\mathbb{Z}_{\sf d}^m |={\sf d}^m$. 
Using group-action notation, we denote 
the $\vec{\ell}$-translated version of $\neigh_0$ as $\vec{\ell}\neigh_0$. 
Each layer of neighborhood maps corresponds then to a set of disjoint neighborhoods,
\( \vec{\ell}({\sf d}\mathbb{Z})^m\neigh_k,\,\, k=1,\ldots,|\neigh^0|,\) \(\vec{\ell}\in\mathbb{Z}_{\sf d}^m.\)
\noindent
Each neighborhood is accounted for, and there are $|\neigh^0|{\sf d}^m$ layers. With this scheme, we define
\[ \mathcal{E}_{\vec{\ell},k}\equiv \prod_{\vec{v}\in({\sf d} \mathbb{Z})^m}
\mathcal{E}_{(\vec{v}+\vec{\ell})\neigh_k}, \quad k=1,\ldots,|\neigh^0|,\;\vec{\ell}\in\mathbb{Z}_{\sf d}^m .
\]
The sequence of neighborhood maps that prepares the target state can then be parallelized as 
$\rho\trn{}=\mathcal{E}_N \circ\ldots \circ\mathcal{E}_1=
\prod_{k=1}^{|\neigh^0|}\prod_{\vec{\ell}\in\mathbb{Z}_{\sf d}^m}\mathcal{E}_{\vec{\ell},k}$, 
of which there are $|\neigh^0| \textup{diam}(\neigh^0)^m$ parallelized maps, as claimed.
\end{proof}}

%\vspace*{1mm}

\noindent 
$\bullet$ \textbf{RFTS implies rapid mixing:}

\vspace*{1mm}

\noindent 
{\bf Proposition \ref{thm:systemsizebound}} (\textbf{Commuting Liouvillian contraction bound}).
{\em Let $\{\mathcal{L}_j\}$ be uniformly-bounded Liouvillians, each acting on a neighborhood of uniformly-bounded size. Assume that the spectral gaps obey $\bar\lambda(\mathcal{L}_j)\geq\nu>0$, for all $j$. Then, there exists $R>0$ such that for any subset $\mathcal{S}$ of mutually commuting $\mathcal{L}_j$, we have:}
\begin{equation*}
\eta(e^{\sum_{{\cal L}_j\in\mathcal{S}}\mathcal{L}_jt})\leq |\mathcal{S}|Re^{-\nu t}.
\end{equation*}

\noindent
{\em Proof.}
From Theorem \ref{thm:commutingcontraction}, commutativity of the terms implies $\eta(e^{\mathcal{L} t}) \leq \sum_{\mathcal{S}} \eta(e^{\mathcal{L}_jt})$. With $\nu<\lambda({\cal L}_j)$, Theorem \ref{thm:commutingcontraction} also ensures that, for each $\mathcal{L}_j\in\{\mathcal{L}_j\}$, there exists $R_j>0$ such that $\eta(e^{\mathcal{L}_jt})\leq R_je^{-\nu t}$. In \cite{Reeb2012}, it is shown that, for fixed $\nu$, $R_j$ is upper-bounded by a function of order $d_j^{d_j^2}$, where $d_j$ is the dimension of the system on which $\mathcal{L}_j$ acts. 
{Let $B \geq d_j$ be the uniform subsystem dimension bound. Then, we can find constants $R$ and $c$ such that, for all $j$, 
$R>c B^{B^2}>c d_j^{d_j^2}>R_j$.} Hence,
\begin{align*}
\hspace*{1cm}\eta(e^{\mathcal{L} t}) &\leq \sum_{\mathcal{S}} \eta(e^{\mathcal{L}_jt})\leq\sum_{\mathcal{S}} R_je^{-\nu t}\nonumber\\
&\leq\sum_{\mathcal{S}} Re^{-\nu t}=|\mathcal{S}|Re^{-\nu t}.\hspace*{2.4cm}\Box
\end{align*}

\noindent {\bf Theorem \ref{thm:RFTSrapidmixing}} (\textbf{Rapid mixing for commuting RFTS}).
{\em Consider a scalable family of $ \ket{\psi^{(s)}}$ that is made RFTS  by a set of commuting 
neighborhood maps $\{\mathcal{E}^{(s)}_k\}$. 
Assume that there exists $\nu>0$, such that each $\lambda\in\textup{eig}(\mathcal{E}^{(s)}_k)$ satisfies either 
$\lambda=1$ or $| \lambda |< 1-\nu$. Then, there exists a family of bounded-norm QL Liouvillians ${\cal L}^{(s)}$ 
satisfying rapid mixing with respect to $ \ket{\psi^{(s)} }$.}

\vspace*{1mm}

\begin{proof}
For each $s$, let the neighborhood Liouvillian operators 
$\mathcal{L}^{(s)}_k\equiv\mathcal{E}^{(s)}_k-\mathcal{I}^{(s)}$. 
These Liouvillians have bounded norm and 
the spectral gap $\bar{\lambda}_k$ of each $\mathcal{L}^{(s)}_k$  satisfies $\bar{\lambda}_k^{(s)} > \nu>0$. 
Take $\{\mathcal{L}^{(s)}_k\}_{k,{(s)}}$ as a set of Liouvillians, and define the sequence of subsets 
$\mathcal{S}^{(s)}=\{\mathcal{L}^{(s)}_k\}_k$, indexed by ${s}$. Then, for each $s$, the global Liouvillian is
\( \mathcal{L}^{(s)}=\sum_k \mathcal{L}^{(s)}_k. \)
For each ${s}$, this Liouvillian is a sum of commuting terms with $B$ and $\nu$ satisfying the conditions in Proposition 
\ref{thm:systemsizebound} for some finite prefactor $R$. Thus,
\begin{align*}
\eta( e^{ \mathcal{L}^{(s)} t }  )\leq |\mathcal{S}^{(s)} |R \,e^{-\nu t} = & \\ |\mathcal{\neigh}^{(s)}| R\, e^{-\nu t}
&\leq R\, b \, N^{(s)}
%\log(\textup{dim}\hilbert^{(s)})
e^{-\nu t}.
\end{align*}
Identifying $c=Rb$, $\gamma= \nu$, and $\delta=1$ in Definition \ref{def:rm} verifies rapid mixing, as desired.
\end{proof}

\end{appendix}

%\bibliographystyle{apsrev}
%\bibliographystyle{plain}
%\bibliography{bib-purification-2}
%\bibliography{francescos_bib.bib}
%\bibliography{robustbibfinal.bib}

\begin{thebibliography}{79}

\expandafter\ifx\csname natexlab\endcsname\relax\def\natexlab#1{#1}\fi
\expandafter\ifx\csname bibnamefont\endcsname\relax
  \def\bibnamefont#1{#1}\fi
\expandafter\ifx\csname bibfnamefont\endcsname\relax
  \def\bibfnamefont#1{#1}\fi
\expandafter\ifx\csname citenamefont\endcsname\relax
  \def\citenamefont#1{#1}\fi
\expandafter\ifx\csname url\endcsname\relax
  \def\url#1{\texttt{#1}}\fi
\expandafter\ifx\csname urlprefix\endcsname\relax\def\urlprefix{URL }\fi
\providecommand{\bibinfo}[2]{#2}
\providecommand{\eprint}[2][]{\url{#2}}

\bibitem[{\citenamefont{Davies}(1976)}]{davies}
\bibinfo{author}{\bibfnamefont{E.~B.} \bibnamefont{Davies}},
  \emph{\bibinfo{title}{Quantum Theory of Open Systems}}
  (\bibinfo{publisher}{Academic Press, New York}, \bibinfo{year}{1976}).

\bibitem[{\citenamefont{Alicki and Lendi}(1987)}]{alicki-lendi}
\bibinfo{author}{\bibfnamefont{R.}~\bibnamefont{Alicki}} \bibnamefont{and}
  \bibinfo{author}{\bibfnamefont{K.}~\bibnamefont{Lendi}},
  \emph{\bibinfo{title}{Quantum Dynamical Semigroups and Applications}}
  (\bibinfo{publisher}{Springer-Verlag, Berlin}, \bibinfo{year}{1987}).

\bibitem[{\citenamefont{Jaksic and Pillet}(2002)}]{JaksicReview}
\bibinfo{author}{\bibfnamefont{V.}~\bibnamefont{Jaksic}} \bibnamefont{and}
  \bibinfo{author}{\bibfnamefont{C.-A.} \bibnamefont{Pillet}},
  \bibinfo{journal}{J. Stat. Phys.} \textbf{\bibinfo{volume}{108}},
  \bibinfo{pages}{787} (\bibinfo{year}{2002}).

\bibitem[{\citenamefont{Eisert et~al.}(2015)\citenamefont{Eisert, Friesdorf,
  and Gogolin}}]{EisertReview}
\bibinfo{author}{\bibfnamefont{J.}~\bibnamefont{Eisert}},
  \bibinfo{author}{\bibfnamefont{M.}~\bibnamefont{Friesdorf}},
  \bibnamefont{and} \bibinfo{author}{\bibfnamefont{C.}~\bibnamefont{Gogolin}},
  \bibinfo{journal}{Nature Phys.} \textbf{\bibinfo{volume}{11}},
  \bibinfo{pages}{124} (\bibinfo{year}{2015}).

\bibitem[{\citenamefont{Poyatos et~al.}(1996)\citenamefont{Poyatos, Cirac, and
  Zoller}}]{Zoller1996}
\bibinfo{author}{\bibfnamefont{J.~F.} \bibnamefont{Poyatos}},
  \bibinfo{author}{\bibfnamefont{J.~I.} \bibnamefont{Cirac}}, \bibnamefont{and}
  \bibinfo{author}{\bibfnamefont{P.}~\bibnamefont{Zoller}},
  \bibinfo{journal}{Phys. Rev. Lett.} \textbf{\bibinfo{volume}{77}},
  \bibinfo{pages}{4728} (\bibinfo{year}{1996}).

\bibitem[{\citenamefont{Lloyd and Viola}(2001)}]{Viola2001}
\bibinfo{author}{\bibfnamefont{S.}~\bibnamefont{Lloyd}} \bibnamefont{and}
  \bibinfo{author}{\bibfnamefont{L.}~\bibnamefont{Viola}},
  \bibinfo{journal}{Phys. Rev. A} \textbf{\bibinfo{volume}{65}},
  \bibinfo{pages}{010101} (\bibinfo{year}{2001}).

\bibitem[{\citenamefont{Wu et~al.}(2007)\citenamefont{Wu, Pechen, Brif, and
  Rabitz}}]{Rabitz2007}
\bibinfo{author}{\bibfnamefont{R.}~\bibnamefont{Wu}},
  \bibinfo{author}{\bibfnamefont{A.}~\bibnamefont{Pechen}},
  \bibinfo{author}{\bibfnamefont{C.}~\bibnamefont{Brif}}, \bibnamefont{and}
  \bibinfo{author}{\bibfnamefont{H.}~\bibnamefont{Rabitz}},
  \bibinfo{journal}{J. Phys. A} \textbf{\bibinfo{volume}{40}},
  \bibinfo{pages}{5681} (\bibinfo{year}{2007}).

\bibitem[{\citenamefont{Altafini and Ticozzi}(2012)}]{TicozziTutorial}
\bibinfo{author}{\bibfnamefont{C.}~\bibnamefont{Altafini}} \bibnamefont{and}
  \bibinfo{author}{\bibfnamefont{F.}~\bibnamefont{Ticozzi}},
  \bibinfo{journal}{IEEE Trans. Autom. Control} \textbf{\bibinfo{volume}{57}},
  \bibinfo{pages}{1898} (\bibinfo{year}{2012}).

\bibitem[{\citenamefont{Nielsen and Chuang}(2010)}]{Nielsen-Chuang:10}
\bibinfo{author}{\bibfnamefont{M.~A.} \bibnamefont{Nielsen}} \bibnamefont{and}
  \bibinfo{author}{\bibfnamefont{I.~L.} \bibnamefont{Chuang}},
  \emph{\bibinfo{title}{Quantum Computation and Quantum Information}}
  (\bibinfo{publisher}{Cambridge University Press}, \bibinfo{year}{2010}).

\bibitem[{\citenamefont{Raussendorf et~al.}(2003)\citenamefont{Raussendorf,
  Browne, and Briegel}}]{Raussendorf2003}
\bibinfo{author}{\bibfnamefont{R.}~\bibnamefont{Raussendorf}},
  \bibinfo{author}{\bibfnamefont{D.~E.} \bibnamefont{Browne}},
  \bibnamefont{and} \bibinfo{author}{\bibfnamefont{H.~J.}
  \bibnamefont{Briegel}}, \bibinfo{journal}{Phys. Rev. A}
  \textbf{\bibinfo{volume}{68}}, \bibinfo{pages}{022312}
  (\bibinfo{year}{2003}).

\bibitem[{\citenamefont{Miyake}(2011)}]{Miyake2011}
\bibinfo{author}{\bibfnamefont{A.}~\bibnamefont{Miyake}},
  \bibinfo{journal}{Ann. Phys.} \textbf{\bibinfo{volume}{326}},
  \bibinfo{pages}{1656} (\bibinfo{year}{2011}).

\bibitem[{\citenamefont{Wei and Raussendorf}(2015)}]{Wei2015}
\bibinfo{author}{\bibfnamefont{T.-C.} \bibnamefont{Wei}} \bibnamefont{and}
  \bibinfo{author}{\bibfnamefont{R.}~\bibnamefont{Raussendorf}},
  \bibinfo{journal}{Phys. Rev. A} \textbf{\bibinfo{volume}{92}},
  \bibinfo{pages}{012310} (\bibinfo{year}{2015}).

\bibitem[{\citenamefont{Gisin et~al.}(2002)\citenamefont{Gisin, Ribordy,
  Tittel, and Zbinden}}]{gisin}
\bibinfo{author}{\bibfnamefont{N.}~\bibnamefont{Gisin}},
  \bibinfo{author}{\bibfnamefont{G.}~\bibnamefont{Ribordy}},
  \bibinfo{author}{\bibfnamefont{W.}~\bibnamefont{Tittel}}, \bibnamefont{and}
  \bibinfo{author}{\bibfnamefont{H.}~\bibnamefont{Zbinden}},
  \bibinfo{journal}{Rev. Mod. Phys.} \textbf{\bibinfo{volume}{74}},
  \bibinfo{pages}{145} (\bibinfo{year}{2002}).

\bibitem[{\citenamefont{Lloyd}(1996)}]{Lloyd1996}
\bibinfo{author}{\bibfnamefont{S.}~\bibnamefont{Lloyd}},
  \bibinfo{journal}{Science} \textbf{\bibinfo{volume}{273}},
  \bibinfo{pages}{1073} (\bibinfo{year}{1996}).

\bibitem[{\citenamefont{Ward et~al.}(2009)\citenamefont{Ward, Kassal, and
  Aspuru-Guzik}}]{Ward2009}
\bibinfo{author}{\bibfnamefont{N.~J.} \bibnamefont{Ward}},
  \bibinfo{author}{\bibfnamefont{I.}~\bibnamefont{Kassal}}, \bibnamefont{and}
  \bibinfo{author}{\bibfnamefont{A.}~\bibnamefont{Aspuru-Guzik}},
  \bibinfo{journal}{J. Chem. Phys.} \textbf{\bibinfo{volume}{130}},
  \bibinfo{pages}{194105} (\bibinfo{year}{2009}).

\bibitem[{\citenamefont{Temme et~al.}(2011)\citenamefont{Temme, Osborne,
  Vollbrecht, Poulin, and Verstraete}}]{Temme2011}
\bibinfo{author}{\bibfnamefont{K.}~\bibnamefont{Temme}},
  \bibinfo{author}{\bibfnamefont{T.~J.} \bibnamefont{Osborne}},
  \bibinfo{author}{\bibfnamefont{K.~G.} \bibnamefont{Vollbrecht}},
  \bibinfo{author}{\bibfnamefont{D.}~\bibnamefont{Poulin}}, \bibnamefont{and}
  \bibinfo{author}{\bibfnamefont{F.}~\bibnamefont{Verstraete}},
  \bibinfo{journal}{Nature} \textbf{\bibinfo{volume}{471}}, \bibinfo{pages}{87}
  (\bibinfo{year}{2011}).

\bibitem[{\citenamefont{Kastoryano and Brand{\~a}o}(2016)}]{Brandao2016}
\bibinfo{author}{\bibfnamefont{M.~J.} \bibnamefont{Kastoryano}}
  \bibnamefont{and} \bibinfo{author}{\bibfnamefont{F.~G. S.~L.}
  \bibnamefont{Brand{\~a}o}}, \bibinfo{journal}{Commun. Math. Phys.}
  \textbf{\bibinfo{volume}{344}}, \bibinfo{pages}{915} (\bibinfo{year}{2016}).

\bibitem[{\citenamefont{Diehl et~al.}(2008)\citenamefont{Diehl, Micheli,
  Kantian, Kraus, Buchler, and Zoller}}]{Diehl2008}
\bibinfo{author}{\bibfnamefont{S.}~\bibnamefont{Diehl}},
  \bibinfo{author}{\bibfnamefont{A.}~\bibnamefont{Micheli}},
  \bibinfo{author}{\bibfnamefont{A.}~\bibnamefont{Kantian}},
  \bibinfo{author}{\bibfnamefont{B.}~\bibnamefont{Kraus}},
  \bibinfo{author}{\bibfnamefont{H.~P.} \bibnamefont{Buchler}},
  \bibnamefont{and} \bibinfo{author}{\bibfnamefont{P.}~\bibnamefont{Zoller}},
  \bibinfo{journal}{Nature Phys.} \textbf{\bibinfo{volume}{4}},
  \bibinfo{pages}{878} (\bibinfo{year}{2008}).

\bibitem[{\citenamefont{Bardyn et~al.}(2013)\citenamefont{Bardyn, Baranov,
  Kraus, Rico, Imamoglu, Zoller, and Diehl}}]{Diehl2013}
\bibinfo{author}{\bibfnamefont{C.-E.} \bibnamefont{Bardyn}},
  \bibinfo{author}{\bibfnamefont{M.~A.} \bibnamefont{Baranov}},
  \bibinfo{author}{\bibfnamefont{C.~V.} \bibnamefont{Kraus}},
  \bibinfo{author}{\bibfnamefont{E.}~\bibnamefont{Rico}},
  \bibinfo{author}{\bibfnamefont{A.}~\bibnamefont{Imamoglu}},
  \bibinfo{author}{\bibfnamefont{P.}~\bibnamefont{Zoller}}, \bibnamefont{and}
  \bibinfo{author}{\bibfnamefont{S.}~\bibnamefont{Diehl}},
  \bibinfo{journal}{New J. Phys.} \textbf{\bibinfo{volume}{15}},
  \bibinfo{pages}{1367} (\bibinfo{year}{2013}).

\bibitem[{\citenamefont{Krauter et~al.}(2011)\citenamefont{Krauter, Muschik,
  Jensen, Wasilewski, Petersen, Cirac, and Polzik}}]{Cirac2011}
\bibinfo{author}{\bibfnamefont{H.}~\bibnamefont{Krauter}},
  \bibinfo{author}{\bibfnamefont{C.~A.} \bibnamefont{Muschik}},
  \bibinfo{author}{\bibfnamefont{K.}~\bibnamefont{Jensen}},
  \bibinfo{author}{\bibfnamefont{W.}~\bibnamefont{Wasilewski}},
  \bibinfo{author}{\bibfnamefont{J.~M.} \bibnamefont{Petersen}},
  \bibinfo{author}{\bibfnamefont{J.~I.} \bibnamefont{Cirac}}, \bibnamefont{and}
  \bibinfo{author}{\bibfnamefont{E.~S.} \bibnamefont{Polzik}},
  \bibinfo{journal}{Phys. Rev. Lett.} \textbf{\bibinfo{volume}{107}},
  \bibinfo{pages}{080503} (\bibinfo{year}{2011}).

\bibitem[{\citenamefont{Barreiro et~al.}(2011)\citenamefont{Barreiro,
  M{\"u}ller, Schindler, Nigg, Monz, Chwalla, Hennrich, Roos, Zoller, and
  Blatt}}]{Barreiro-Nature:11}
\bibinfo{author}{\bibfnamefont{J.~T.} \bibnamefont{Barreiro}},
  \bibinfo{author}{\bibfnamefont{M.}~\bibnamefont{M{\"u}ller}},
  \bibinfo{author}{\bibfnamefont{P.}~\bibnamefont{Schindler}},
  \bibinfo{author}{\bibfnamefont{D.}~\bibnamefont{Nigg}},
  \bibinfo{author}{\bibfnamefont{T.}~\bibnamefont{Monz}},
  \bibinfo{author}{\bibfnamefont{M.}~\bibnamefont{Chwalla}},
  \bibinfo{author}{\bibfnamefont{M.}~\bibnamefont{Hennrich}},
  \bibinfo{author}{\bibfnamefont{C.~F.} \bibnamefont{Roos}},
  \bibinfo{author}{\bibfnamefont{P.}~\bibnamefont{Zoller}}, \bibnamefont{and}
  \bibinfo{author}{\bibfnamefont{R.}~\bibnamefont{Blatt}},
  \bibinfo{journal}{Nature} \textbf{\bibinfo{volume}{470}},
  \bibinfo{pages}{486} (\bibinfo{year}{2011}).

\bibitem[{\citenamefont{Lin et~al.}(2013)\citenamefont{Lin, Gaebler, Reiter,
  Tan, Bowler, Sorensen, Leibfried, and Wineland}}]{Yin2013}
\bibinfo{author}{\bibfnamefont{Y.}~\bibnamefont{Lin}},
  \bibinfo{author}{\bibfnamefont{J.~P.} \bibnamefont{Gaebler}},
  \bibinfo{author}{\bibfnamefont{F.}~\bibnamefont{Reiter}},
  \bibinfo{author}{\bibfnamefont{T.~R.} \bibnamefont{Tan}},
  \bibinfo{author}{\bibfnamefont{R.}~\bibnamefont{Bowler}},
  \bibinfo{author}{\bibfnamefont{A.~S.} \bibnamefont{Sorensen}},
  \bibinfo{author}{\bibfnamefont{D.}~\bibnamefont{Leibfried}},
  \bibnamefont{and} \bibinfo{author}{\bibfnamefont{D.~J.}
  \bibnamefont{Wineland}}, \bibinfo{journal}{Nature}
  \textbf{\bibinfo{volume}{504}}, \bibinfo{pages}{415} (\bibinfo{year}{2013}).

\bibitem[{\citenamefont{Shankar et~al.}(2013)\citenamefont{Shankar, Hatridge,
  Leghtas, Sliwa, Narla, Vool, Girvin, Frunzio, Mirrahimi, and
  Devoret}}]{Devoret}
\bibinfo{author}{\bibfnamefont{S.}~\bibnamefont{Shankar}},
  \bibinfo{author}{\bibfnamefont{M.}~\bibnamefont{Hatridge}},
  \bibinfo{author}{\bibfnamefont{Z.}~\bibnamefont{Leghtas}},
  \bibinfo{author}{\bibfnamefont{K.~M.} \bibnamefont{Sliwa}},
  \bibinfo{author}{\bibfnamefont{A.}~\bibnamefont{Narla}},
  \bibinfo{author}{\bibfnamefont{U.}~\bibnamefont{Vool}},
  \bibinfo{author}{\bibfnamefont{S.~M.} \bibnamefont{Girvin}},
  \bibinfo{author}{\bibfnamefont{L.}~\bibnamefont{Frunzio}},
  \bibinfo{author}{\bibfnamefont{M.}~\bibnamefont{Mirrahimi}},
  \bibnamefont{and} \bibinfo{author}{\bibfnamefont{M.~H.}
  \bibnamefont{Devoret}}, \bibinfo{journal}{Nature}
  \textbf{\bibinfo{volume}{504}}, \bibinfo{pages}{419} (\bibinfo{year}{2013}).

\bibitem[{\citenamefont{Schwartz et~al.}(2016)\citenamefont{Schwartz, Martin,
  Flurin, Aron, Kulkarni, Tureci, and Siddiqi}}]{Siddiqi}
\bibinfo{author}{\bibfnamefont{M.~E.} \bibnamefont{Schwartz}},
  \bibinfo{author}{\bibfnamefont{L.}~\bibnamefont{Martin}},
  \bibinfo{author}{\bibfnamefont{E.}~\bibnamefont{Flurin}},
  \bibinfo{author}{\bibfnamefont{C.}~\bibnamefont{Aron}},
  \bibinfo{author}{\bibfnamefont{M.}~\bibnamefont{Kulkarni}},
  \bibinfo{author}{\bibfnamefont{H.~E.} \bibnamefont{Tureci}},
  \bibnamefont{and} \bibinfo{author}{\bibfnamefont{I.}~\bibnamefont{Siddiqi}},
  \bibinfo{journal}{Phys. Rev. Lett.} \textbf{\bibinfo{volume}{116}},
  \bibinfo{pages}{240503} (\bibinfo{year}{2016}).

\bibitem[{\citenamefont{Rao et~al.}(2016)\citenamefont{Rao, Yang, and
  Wrachtrup}}]{Wrachtrup2016}
\bibinfo{author}{\bibfnamefont{D.~D.~B.} \bibnamefont{Rao}},
  \bibinfo{author}{\bibfnamefont{S.}~\bibnamefont{Yang}}, \bibnamefont{and}
  \bibinfo{author}{\bibfnamefont{J.}~\bibnamefont{Wrachtrup}}
  (\bibinfo{year}{2016}), \bibinfo{note}{eprint arXiv:1609.00622}.

\bibitem[{\citenamefont{Kraus}(1983)}]{kraus}
\bibinfo{author}{\bibfnamefont{K.}~\bibnamefont{Kraus}},
  \emph{\bibinfo{title}{States, Effects, and Operations: Fundamental Notions of
  Quantum Theory}}, Lecture notes in Physics
  (\bibinfo{publisher}{Springer-Verlag, Berlin}, \bibinfo{year}{1983}).

\bibitem[{rem({\natexlab{a}})}]{remarkOnestep}
\bibinfo{note}{If $\rho$ is the desired quantum state, it suffice to consider a
  CPTP map of the form $\mathcal{E}\equiv \rho \text{Tr}(\cdot)$, where Tr
  denotes the trace functional.}

\bibitem[{\citenamefont{Lindblad}(1976)}]{lindblad}
\bibinfo{author}{\bibfnamefont{G.}~\bibnamefont{Lindblad}},
  \bibinfo{journal}{Commun. Math. Phys.} \textbf{\bibinfo{volume}{48}},
  \bibinfo{pages}{119} (\bibinfo{year}{1976}).

\bibitem[{\citenamefont{Ticozzi and Viola}(2009)}]{ticozzi-markovian}
\bibinfo{author}{\bibfnamefont{F.}~\bibnamefont{Ticozzi}} \bibnamefont{and}
  \bibinfo{author}{\bibfnamefont{L.}~\bibnamefont{Viola}},
  \bibinfo{journal}{Automatica} \textbf{\bibinfo{volume}{45}},
  \bibinfo{pages}{2002} (\bibinfo{year}{2009}).

\bibitem[{\citenamefont{Ticozzi et~al.}(2009)\citenamefont{Ticozzi, Schirmer,
  and Wang}}]{ticozzi-schirmer}
\bibinfo{author}{\bibfnamefont{F.}~\bibnamefont{Ticozzi}},
  \bibinfo{author}{\bibfnamefont{S.~G.} \bibnamefont{Schirmer}},
  \bibnamefont{and} \bibinfo{author}{\bibfnamefont{X.}~\bibnamefont{Wang}},
  \bibinfo{journal}{IEEE Trans. Autom. Control} \textbf{\bibinfo{volume}{45}},
  \bibinfo{pages}{2002} (\bibinfo{year}{2009}).

\bibitem[{\citenamefont{Kraus et~al.}(2008)\citenamefont{Kraus, B\"uchler,
  Diehl, Kantian, Micheli, and Zoller}}]{Kraus2008}
\bibinfo{author}{\bibfnamefont{B.}~\bibnamefont{Kraus}},
  \bibinfo{author}{\bibfnamefont{H.~P.} \bibnamefont{B\"uchler}},
  \bibinfo{author}{\bibfnamefont{S.}~\bibnamefont{Diehl}},
  \bibinfo{author}{\bibfnamefont{A.}~\bibnamefont{Kantian}},
  \bibinfo{author}{\bibfnamefont{A.}~\bibnamefont{Micheli}}, \bibnamefont{and}
  \bibinfo{author}{\bibfnamefont{P.}~\bibnamefont{Zoller}},
  \bibinfo{journal}{Phys. Rev. A} \textbf{\bibinfo{volume}{78}},
  \bibinfo{pages}{042307} (\bibinfo{year}{2008}).

\bibitem[{\citenamefont{Verstraete et~al.}(2009)\citenamefont{Verstraete, Wolf,
  and Cirac}}]{Cirac2009}
\bibinfo{author}{\bibfnamefont{F.}~\bibnamefont{Verstraete}},
  \bibinfo{author}{\bibfnamefont{M.~M.} \bibnamefont{Wolf}}, \bibnamefont{and}
  \bibinfo{author}{\bibfnamefont{J.~I.} \bibnamefont{Cirac}},
  \bibinfo{journal}{Nature Phys.} \textbf{\bibinfo{volume}{5}},
  \bibinfo{pages}{633} (\bibinfo{year}{2009}).

\bibitem[{\citenamefont{Kastoryano et~al.}(2011)\citenamefont{Kastoryano,
  Reiter, and Sorensen}}]{Sorensen2011}
\bibinfo{author}{\bibfnamefont{M.~J.} \bibnamefont{Kastoryano}},
  \bibinfo{author}{\bibfnamefont{F.}~\bibnamefont{Reiter}}, \bibnamefont{and}
  \bibinfo{author}{\bibfnamefont{A.~S.} \bibnamefont{Sorensen}},
  \bibinfo{journal}{Phys. Rev. Lett.} \textbf{\bibinfo{volume}{106}},
  \bibinfo{pages}{090502} (\bibinfo{year}{2011}).

\bibitem[{\citenamefont{Dalla~Torre et~al.}(2013)\citenamefont{Dalla~Torre,
  Otterbach, Demler, Vuletic, and Lukin}}]{DallaTorre}
\bibinfo{author}{\bibfnamefont{E.~G.} \bibnamefont{Dalla~Torre}},
  \bibinfo{author}{\bibfnamefont{J.}~\bibnamefont{Otterbach}},
  \bibinfo{author}{\bibfnamefont{E.}~\bibnamefont{Demler}},
  \bibinfo{author}{\bibfnamefont{V.}~\bibnamefont{Vuletic}}, \bibnamefont{and}
  \bibinfo{author}{\bibfnamefont{M.~D.} \bibnamefont{Lukin}},
  \bibinfo{journal}{Phys. Rev. Lett.} \textbf{\bibinfo{volume}{110}},
  \bibinfo{pages}{120402} (\bibinfo{year}{2013}).

\bibitem[{\citenamefont{Wang and Clerk}(2013)}]{Clerk}
\bibinfo{author}{\bibfnamefont{Y.-D.} \bibnamefont{Wang}} \bibnamefont{and}
  \bibinfo{author}{\bibfnamefont{A.~A.} \bibnamefont{Clerk}},
  \bibinfo{journal}{Phys. Rev. Lett.} \textbf{\bibinfo{volume}{110}},
  \bibinfo{pages}{253601} (\bibinfo{year}{2013}).

\bibitem[{\citenamefont{Reiter et~al.}(2016)\citenamefont{Reiter, Reeb, and
  S\o{}rensen}}]{Reeb2016}
\bibinfo{author}{\bibfnamefont{F.}~\bibnamefont{Reiter}},
  \bibinfo{author}{\bibfnamefont{D.}~\bibnamefont{Reeb}}, \bibnamefont{and}
  \bibinfo{author}{\bibfnamefont{A.~S.} \bibnamefont{S\o{}rensen}},
  \bibinfo{journal}{Phys. Rev. Lett.} \textbf{\bibinfo{volume}{117}},
  \bibinfo{pages}{040501} (\bibinfo{year}{2016}).

\bibitem[{\citenamefont{Roghani and Weimer}(2016)}]{WeimerRydberg}
\bibinfo{author}{\bibfnamefont{M.}~\bibnamefont{Roghani}} \bibnamefont{and}
  \bibinfo{author}{\bibfnamefont{H.}~\bibnamefont{Weimer}}
  (\bibinfo{year}{2016}), \bibinfo{note}{eprint arXiv:1611.09612}.

\bibitem[{\citenamefont{Kaczmarczyk et~al.}(2016)\citenamefont{Kaczmarczyk,
  Weimer, and Lemeshko}}]{WeimerHubbard}
\bibinfo{author}{\bibfnamefont{J.}~\bibnamefont{Kaczmarczyk}},
  \bibinfo{author}{\bibfnamefont{H.}~\bibnamefont{Weimer}}, \bibnamefont{and}
  \bibinfo{author}{\bibfnamefont{M.}~\bibnamefont{Lemeshko}},
  \bibinfo{journal}{New J. Phys.} \textbf{\bibinfo{volume}{18}},
  \bibinfo{pages}{093042} (\bibinfo{year}{2016}).

\bibitem[{\citenamefont{Abdi et~al.}(2016)\citenamefont{Abdi,
  Degenfeld-Schonburg, Sameti, Navarrete-Benlloch, and Hartmann}}]{Hartmann}
\bibinfo{author}{\bibfnamefont{M.}~\bibnamefont{Abdi}},
  \bibinfo{author}{\bibfnamefont{P.}~\bibnamefont{Degenfeld-Schonburg}},
  \bibinfo{author}{\bibfnamefont{M.}~\bibnamefont{Sameti}},
  \bibinfo{author}{\bibfnamefont{C.}~\bibnamefont{Navarrete-Benlloch}},
  \bibnamefont{and} \bibinfo{author}{\bibfnamefont{M.~J.}
  \bibnamefont{Hartmann}}, \bibinfo{journal}{Phys. Rev. Lett.}
  \textbf{\bibinfo{volume}{116}}, \bibinfo{pages}{233604}
  (\bibinfo{year}{2016}).

\bibitem[{\citenamefont{\ifmmode \check{Z}\else
  \v{Z}\fi{}nidari\ifmmode~\check{c}\else \v{c}\fi{}}(2016)}]{Znidaric}
\bibinfo{author}{\bibfnamefont{M.}~\bibnamefont{\ifmmode \check{Z}\else
  \v{Z}\fi{}nidari\ifmmode~\check{c}\else \v{c}\fi{}}}, \bibinfo{journal}{Phys.
  Rev. Lett.} \textbf{\bibinfo{volume}{116}}, \bibinfo{pages}{030403}
  (\bibinfo{year}{2016}).

\bibitem[{\citenamefont{Ticozzi and Viola}(2012)}]{Ticozzi2012}
\bibinfo{author}{\bibfnamefont{F.}~\bibnamefont{Ticozzi}} \bibnamefont{and}
  \bibinfo{author}{\bibfnamefont{L.}~\bibnamefont{Viola}},
  \bibinfo{journal}{Phil. Trans. R. Soc. London A}
  \textbf{\bibinfo{volume}{370}}, \bibinfo{pages}{5259} (\bibinfo{year}{2012}).

\bibitem[{\citenamefont{Ticozzi and Viola}(2014)}]{TicozziQIC2014}
\bibinfo{author}{\bibfnamefont{F.}~\bibnamefont{Ticozzi}} \bibnamefont{and}
  \bibinfo{author}{\bibfnamefont{L.}~\bibnamefont{Viola}},
  \bibinfo{journal}{Quantum Inf. Comput.} \textbf{\bibinfo{volume}{14}},
  \bibinfo{pages}{265} (\bibinfo{year}{2014}).

\bibitem[{\citenamefont{Scaramuzza and Ticozzi}(2015)}]{switching}
\bibinfo{author}{\bibfnamefont{P.}~\bibnamefont{Scaramuzza}} \bibnamefont{and}
  \bibinfo{author}{\bibfnamefont{F.}~\bibnamefont{Ticozzi}},
  \bibinfo{journal}{Phys. Rev. A} \textbf{\bibinfo{volume}{91}},
  \bibinfo{pages}{062314} (\bibinfo{year}{2015}).

\bibitem[{\citenamefont{Johnson et~al.}(2016)\citenamefont{Johnson, Ticozzi,
  and Viola}}]{Johnson2016}
\bibinfo{author}{\bibfnamefont{P.}~\bibnamefont{Johnson}},
  \bibinfo{author}{\bibfnamefont{F.}~\bibnamefont{Ticozzi}}, \bibnamefont{and}
  \bibinfo{author}{\bibfnamefont{L.}~\bibnamefont{Viola}},
  \bibinfo{journal}{Quantum Inf. Comput.} \textbf{\bibinfo{volume}{16}},
  \bibinfo{pages}{0657} (\bibinfo{year}{2016}).

\bibitem[{\citenamefont{Schindler et~al.}(2013)\citenamefont{Schindler, Muller,
  Nigg, Barreiro, Martinez, Hennrich, Monz, Diehl, Zoller, and
  Blatt}}]{Blatt2013}
\bibinfo{author}{\bibfnamefont{P.}~\bibnamefont{Schindler}},
  \bibinfo{author}{\bibfnamefont{M.}~\bibnamefont{Muller}},
  \bibinfo{author}{\bibfnamefont{D.}~\bibnamefont{Nigg}},
  \bibinfo{author}{\bibfnamefont{J.~T.} \bibnamefont{Barreiro}},
  \bibinfo{author}{\bibfnamefont{E.~A.} \bibnamefont{Martinez}},
  \bibinfo{author}{\bibfnamefont{M.}~\bibnamefont{Hennrich}},
  \bibinfo{author}{\bibfnamefont{T.}~\bibnamefont{Monz}},
  \bibinfo{author}{\bibfnamefont{S.}~\bibnamefont{Diehl}},
  \bibinfo{author}{\bibfnamefont{P.}~\bibnamefont{Zoller}}, \bibnamefont{and}
  \bibinfo{author}{\bibfnamefont{R.}~\bibnamefont{Blatt}},
  \bibinfo{journal}{Nature Phys.} \textbf{\bibinfo{volume}{9}},
  \bibinfo{pages}{361} (\bibinfo{year}{2013}).

\bibitem[{\citenamefont{Lu et~al.}(2015)\citenamefont{Lu, Liu, Wang, Chen, Li,
  Yao, Li, Liu, Peng, Sanders et~al.}}]{Pan}
\bibinfo{author}{\bibfnamefont{H.}~\bibnamefont{Lu}},
  \bibinfo{author}{\bibfnamefont{C.}~\bibnamefont{Liu}},
  \bibinfo{author}{\bibfnamefont{D.-S.} \bibnamefont{Wang}},
  \bibinfo{author}{\bibfnamefont{L.-K.} \bibnamefont{Chen}},
  \bibinfo{author}{\bibfnamefont{Z.-D.} \bibnamefont{Li}},
  \bibinfo{author}{\bibfnamefont{X.-C.} \bibnamefont{Yao}},
  \bibinfo{author}{\bibfnamefont{L.}~\bibnamefont{Li}},
  \bibinfo{author}{\bibfnamefont{N.-L.} \bibnamefont{Liu}},
  \bibinfo{author}{\bibfnamefont{C.-Z.} \bibnamefont{Peng}},
  \bibinfo{author}{\bibfnamefont{B.~C.} \bibnamefont{Sanders}},
  \bibnamefont{et~al.} (\bibinfo{year}{2015}), \bibinfo{note}{eprint
  arXiv:1505.02879}.

\bibitem[{\citenamefont{Wolf and Cirac}(2008)}]{WolfCirac}
\bibinfo{author}{\bibfnamefont{M.~M.} \bibnamefont{Wolf}} \bibnamefont{and}
  \bibinfo{author}{\bibfnamefont{J.~I.} \bibnamefont{Cirac}},
  \bibinfo{journal}{Commun. Math. Phys.} \textbf{\bibinfo{volume}{279}},
  \bibinfo{pages}{147} (\bibinfo{year}{2008}).

\bibitem[{\citenamefont{Shen et~al.}(2016)\citenamefont{Shen, Noh, Albert,
  Krastanov, Devoret, Schoelkopf, Girvin, and Jiang}}]{Liang2016}
\bibinfo{author}{\bibfnamefont{C.}~\bibnamefont{Shen}},
  \bibinfo{author}{\bibfnamefont{K.}~\bibnamefont{Noh}},
  \bibinfo{author}{\bibfnamefont{V.~V.} \bibnamefont{Albert}},
  \bibinfo{author}{\bibfnamefont{S.}~\bibnamefont{Krastanov}},
  \bibinfo{author}{\bibfnamefont{M.~H.} \bibnamefont{Devoret}},
  \bibinfo{author}{\bibfnamefont{R.~J.} \bibnamefont{Schoelkopf}},
  \bibinfo{author}{\bibfnamefont{S.~M.} \bibnamefont{Girvin}},
  \bibnamefont{and} \bibinfo{author}{\bibfnamefont{L.}~\bibnamefont{Jiang}}
  (\bibinfo{year}{2016}), \bibinfo{note}{eprint arXiv:1611.03463}.

\bibitem[{\citenamefont{Bellman and Cooke}(1963)}]{bellman}
\bibinfo{author}{\bibfnamefont{R.}~\bibnamefont{Bellman}} \bibnamefont{and}
  \bibinfo{author}{\bibfnamefont{K.~L.} \bibnamefont{Cooke}},
  \emph{\bibinfo{title}{Differential-Difference Equations}}
  (\bibinfo{publisher}{Academic Press, New York}, \bibinfo{year}{1963}).

\bibitem[{\citenamefont{Philips and Nagle}(1995)}]{philips}
\bibinfo{author}{\bibfnamefont{C.}~\bibnamefont{Philips}} \bibnamefont{and}
  \bibinfo{author}{\bibfnamefont{H.}~\bibnamefont{Nagle}},
  \emph{\bibinfo{title}{Digital Control Systems: Analysis and Design}}
  (\bibinfo{publisher}{Prentice Hall, 3rd. ed.}, \bibinfo{year}{1995}).

\bibitem[{\citenamefont{Bolognani and Ticozzi}(2010)}]{BT-TAC:10}
\bibinfo{author}{\bibfnamefont{S.}~\bibnamefont{Bolognani}} \bibnamefont{and}
  \bibinfo{author}{\bibfnamefont{F.}~\bibnamefont{Ticozzi}},
  \bibinfo{journal}{IEEE Trans. Autom. Control} \textbf{\bibinfo{volume}{55}},
  \bibinfo{pages}{272} (\bibinfo{year}{2010}).

\bibitem[{\citenamefont{Ticozzi et~al.}(2016)\citenamefont{Ticozzi, Zuccato,
  Johnson, and Viola}}]{Ticozzi-alternating}
\bibinfo{author}{\bibfnamefont{F.}~\bibnamefont{Ticozzi}},
  \bibinfo{author}{\bibfnamefont{L.}~\bibnamefont{Zuccato}},
  \bibinfo{author}{\bibfnamefont{P.}~\bibnamefont{Johnson}}, \bibnamefont{and}
  \bibinfo{author}{\bibfnamefont{L.}~\bibnamefont{Viola}}
  (\bibinfo{year}{2016}), \bibinfo{note}{eprint arXiv:1612.05554}.

\bibitem[{\citenamefont{Sch\"on et~al.}(2005)\citenamefont{Sch\"on, Solano,
  Verstraete, Cirac, and Wolf}}]{Schon2005}
\bibinfo{author}{\bibfnamefont{C.}~\bibnamefont{Sch\"on}},
  \bibinfo{author}{\bibfnamefont{E.}~\bibnamefont{Solano}},
  \bibinfo{author}{\bibfnamefont{F.}~\bibnamefont{Verstraete}},
  \bibinfo{author}{\bibfnamefont{J.~I.} \bibnamefont{Cirac}}, \bibnamefont{and}
  \bibinfo{author}{\bibfnamefont{M.~M.} \bibnamefont{Wolf}},
  \bibinfo{journal}{Phys. Rev. Lett.} \textbf{\bibinfo{volume}{95}},
  \bibinfo{pages}{110503} (\bibinfo{year}{2005}).

\bibitem[{\citenamefont{Brand{\~a}o and Kastoryano}(2016)}]{Brandao2016disc}
\bibinfo{author}{\bibfnamefont{F.~G. S.~L.} \bibnamefont{Brand{\~a}o}}
  \bibnamefont{and} \bibinfo{author}{\bibfnamefont{M.~J.}
  \bibnamefont{Kastoryano}} (\bibinfo{year}{2016}), \bibinfo{note}{eprint
  arXiv:1609.07877}.

\bibitem[{\citenamefont{Knill et~al.}(2000)\citenamefont{Knill, Laflamme, and
  Viola}}]{Knill2000}
\bibinfo{author}{\bibfnamefont{E.}~\bibnamefont{Knill}},
  \bibinfo{author}{\bibfnamefont{R.}~\bibnamefont{Laflamme}}, \bibnamefont{and}
  \bibinfo{author}{\bibfnamefont{L.}~\bibnamefont{Viola}},
  \bibinfo{journal}{Phys. Rev. Lett.} \textbf{\bibinfo{volume}{84}},
  \bibinfo{pages}{2525} (\bibinfo{year}{2000}).

\bibitem[{\citenamefont{Zanardi}(2001)}]{Zanardi2001}
\bibinfo{author}{\bibfnamefont{P.}~\bibnamefont{Zanardi}},
  \bibinfo{journal}{Phys. Rev. Lett.} \textbf{\bibinfo{volume}{87}},
  \bibinfo{pages}{077901} (\bibinfo{year}{2001}).

\bibitem[{\citenamefont{Or\'{u}s}(2014)}]{Orus2014}
\bibinfo{author}{\bibfnamefont{R.}~\bibnamefont{Or\'{u}s}},
  \bibinfo{journal}{Ann. Phys.} \textbf{\bibinfo{volume}{349}},
  \bibinfo{pages}{117} (\bibinfo{year}{2014}).

\bibitem[{\citenamefont{Bravyi and Vyalyi}(2005)}]{Bravyi2005}
\bibinfo{author}{\bibfnamefont{S.}~\bibnamefont{Bravyi}} \bibnamefont{and}
  \bibinfo{author}{\bibfnamefont{M.}~\bibnamefont{Vyalyi}},
  \bibinfo{journal}{Quantum Info. Comput.} \textbf{\bibinfo{volume}{5}},
  \bibinfo{pages}{187} (\bibinfo{year}{2005}).

\bibitem[{\citenamefont{Kastoryano et~al.}(2012)\citenamefont{Kastoryano, Reeb,
  and Wolf}}]{Reeb2012}
\bibinfo{author}{\bibfnamefont{M.~J.} \bibnamefont{Kastoryano}},
  \bibinfo{author}{\bibfnamefont{D.}~\bibnamefont{Reeb}}, \bibnamefont{and}
  \bibinfo{author}{\bibfnamefont{M.~M.} \bibnamefont{Wolf}},
  \bibinfo{journal}{J. Phys. A} \textbf{\bibinfo{volume}{45}},
  \bibinfo{pages}{075307} (\bibinfo{year}{2012}).

\bibitem[{\citenamefont{Temme et~al.}(2014)\citenamefont{Temme, Pastawski, and
  Kastoryano}}]{Temme2014}
\bibinfo{author}{\bibfnamefont{K.}~\bibnamefont{Temme}},
  \bibinfo{author}{\bibfnamefont{F.}~\bibnamefont{Pastawski}},
  \bibnamefont{and} \bibinfo{author}{\bibfnamefont{M.~J.}
  \bibnamefont{Kastoryano}}, \bibinfo{journal}{J. Phys. A}
  \textbf{\bibinfo{volume}{47}}, \bibinfo{pages}{405303}
  (\bibinfo{year}{2014}).

\bibitem[{\citenamefont{Cubitt et~al.}(2015)\citenamefont{Cubitt, Lucia,
  Michalakis, and Perez-Garcia}}]{Lucia2015}
\bibinfo{author}{\bibfnamefont{T.~S.} \bibnamefont{Cubitt}},
  \bibinfo{author}{\bibfnamefont{A.}~\bibnamefont{Lucia}},
  \bibinfo{author}{\bibfnamefont{S.}~\bibnamefont{Michalakis}},
  \bibnamefont{and}
  \bibinfo{author}{\bibfnamefont{D.}~\bibnamefont{Perez-Garcia}},
  \bibinfo{journal}{Commun. Math. Phys.} \textbf{\bibinfo{volume}{337}},
  \bibinfo{pages}{1275} (\bibinfo{year}{2015}).

\bibitem[{rem({\natexlab{b}})}]{remarkHeisenberg}
\bibinfo{note}{Stabilization problems may also be formulated and studied in the
  Heisenberg picture, see Y. Pan, H. Amini, Z. Miao, J. Gough, V. Ugrinovskii,
  and M. R. James, J. Math. Phys. {\bf 55}, 062701 (2014).}

\bibitem[{\citenamefont{Gorini et~al.}(1976)\citenamefont{Gorini, Kossakowski,
  and Sudarshan}}]{gks}
\bibinfo{author}{\bibfnamefont{V.}~\bibnamefont{Gorini}},
  \bibinfo{author}{\bibfnamefont{A.}~\bibnamefont{Kossakowski}},
  \bibnamefont{and} \bibinfo{author}{\bibfnamefont{E.~C.~G.}
  \bibnamefont{Sudarshan}}, \bibinfo{journal}{J. Math. Phys.}
  \textbf{\bibinfo{volume}{17}}, \bibinfo{pages}{119} (\bibinfo{year}{1976}).

\bibitem[{\citenamefont{Chruscinski and Kossakowski}(2010)}]{kossakowski}
\bibinfo{author}{\bibfnamefont{D.}~\bibnamefont{Chruscinski}} \bibnamefont{and}
  \bibinfo{author}{\bibfnamefont{A.}~\bibnamefont{Kossakowski}},
  \bibinfo{journal}{Phys. Rev. Lett.} \textbf{\bibinfo{volume}{104}},
  \bibinfo{pages}{070406} (\bibinfo{year}{2010}).

\bibitem[{\citenamefont{Ticozzi et~al.}(2012)\citenamefont{Ticozzi, Lucchese,
  Cappellaro, and Viola}}]{ticozzi-DID}
\bibinfo{author}{\bibfnamefont{F.}~\bibnamefont{Ticozzi}},
  \bibinfo{author}{\bibfnamefont{R.}~\bibnamefont{Lucchese}},
  \bibinfo{author}{\bibfnamefont{P.}~\bibnamefont{Cappellaro}},
  \bibnamefont{and} \bibinfo{author}{\bibfnamefont{L.}~\bibnamefont{Viola}},
  \bibinfo{journal}{IEEE Trans. Aut. Contr.} \textbf{\bibinfo{volume}{57}},
  \bibinfo{pages}{1931} (\bibinfo{year}{2012}).

\bibitem[{\citenamefont{Affleck et~al.}(1988)\citenamefont{Affleck, Kennedy,
  Lieb, and Tasaki}}]{AKLT}
\bibinfo{author}{\bibfnamefont{I.}~\bibnamefont{Affleck}},
  \bibinfo{author}{\bibfnamefont{T.}~\bibnamefont{Kennedy}},
  \bibinfo{author}{\bibfnamefont{E.~H.} \bibnamefont{Lieb}}, \bibnamefont{and}
  \bibinfo{author}{\bibfnamefont{H.}~\bibnamefont{Tasaki}},
  \bibinfo{journal}{Commun. Math. Phys.} \textbf{\bibinfo{volume}{115}},
  \bibinfo{pages}{477} (\bibinfo{year}{1988}).

\bibitem[{\citenamefont{Baggio et~al.}(2012)\citenamefont{Baggio, Ticozzi, and
  Viola}}]{baggio-splitting}
\bibinfo{author}{\bibfnamefont{G.}~\bibnamefont{Baggio}},
  \bibinfo{author}{\bibfnamefont{F.}~\bibnamefont{Ticozzi}}, \bibnamefont{and}
  \bibinfo{author}{\bibfnamefont{L.}~\bibnamefont{Viola}}, in
  \emph{\bibinfo{booktitle}{\href{http://ieeexplore.ieee.org/xpls/abs_all.jsp?arnumber=6426787&tag=1}{2012
  IEEE 51st Annual Conference on Decision and Control (CDC)}}}
  (\bibinfo{year}{2012}), pp. \bibinfo{pages}{1072--1077}.

\bibitem[{\citenamefont{Affleck et~al.}(1987)\citenamefont{Affleck, Kennedy,
  Lieb, and Tasaki}}]{AKLT1}
\bibinfo{author}{\bibfnamefont{I.}~\bibnamefont{Affleck}},
  \bibinfo{author}{\bibfnamefont{T.}~\bibnamefont{Kennedy}},
  \bibinfo{author}{\bibfnamefont{E.~H.} \bibnamefont{Lieb}}, \bibnamefont{and}
  \bibinfo{author}{\bibfnamefont{H.}~\bibnamefont{Tasaki}},
  \bibinfo{journal}{Phys. Rev. Lett.} \textbf{\bibinfo{volume}{59}},
  \bibinfo{pages}{799} (\bibinfo{year}{1987}).

\bibitem[{\citenamefont{Kirillov and Korepin}(1989)}]{Kirillov}
\bibinfo{author}{\bibfnamefont{A.~N.} \bibnamefont{Kirillov}} \bibnamefont{and}
  \bibinfo{author}{\bibfnamefont{V.~E.} \bibnamefont{Korepin}},
  \bibinfo{journal}{Alg. Anal.} \textbf{\bibinfo{volume}{1}},
  \bibinfo{pages}{47} (\bibinfo{year}{1989}), \bibinfo{note}{available at
  arXiv:0909.2211}.

\bibitem[{\citenamefont{Cui et~al.}(2015)\citenamefont{Cui, Yu, and
  Zeng}}]{Zeng2015}
\bibinfo{author}{\bibfnamefont{S.~X.} \bibnamefont{Cui}},
  \bibinfo{author}{\bibfnamefont{N.}~\bibnamefont{Yu}}, \bibnamefont{and}
  \bibinfo{author}{\bibfnamefont{B.}~\bibnamefont{Zeng}}, \bibinfo{journal}{J.
  Math. Phys.} \textbf{\bibinfo{volume}{56}}, \bibinfo{pages}{072201}
  (\bibinfo{year}{2015}).

\bibitem[{\citenamefont{Nachtergaele and Sims}(2010)}]{Bruno}
\bibinfo{author}{\bibfnamefont{B.}~\bibnamefont{Nachtergaele}}
  \bibnamefont{and} \bibinfo{author}{\bibfnamefont{R.}~\bibnamefont{Sims}},
  \bibinfo{journal}{Contemp. Math.} \textbf{\bibinfo{volume}{529}},
  \bibinfo{pages}{141} (\bibinfo{year}{2010}).

\bibitem[{\citenamefont{Miller and Miyake}(2016)}]{Miller2015}
\bibinfo{author}{\bibfnamefont{J.}~\bibnamefont{Miller}} \bibnamefont{and}
  \bibinfo{author}{\bibfnamefont{A.}~\bibnamefont{Miyake}},
  \bibinfo{journal}{Nature Partner Journal: Quantum Information}
  \textbf{\bibinfo{volume}{2}}, \bibinfo{pages}{16036} (\bibinfo{year}{2016}).

\bibitem[{\citenamefont{Ticozzi et~al.}(2017)\citenamefont{Ticozzi, Johnson,
  and Viola}}]{CDC2017}
\bibinfo{author}{\bibfnamefont{F.}~\bibnamefont{Ticozzi}},
  \bibinfo{author}{\bibfnamefont{P.~D.} \bibnamefont{Johnson}},
  \bibnamefont{and} \bibinfo{author}{\bibfnamefont{L.}~\bibnamefont{Viola}}, in
  \emph{\bibinfo{booktitle}{\href{http://ieeexplore.ieee.org/xpls/abs_all.jsp?arnumber=6426787&tag=1}
  {2017 IEEE 57th Annual Conference on Decision and Control (CDC)}, {\em
  forthcoming}}} (\bibinfo{year}{2017}).

\bibitem[{\citenamefont{Beigi}(2012)}]{Beigi2012}
\bibinfo{author}{\bibfnamefont{S.}~\bibnamefont{Beigi}}, \bibinfo{journal}{J.
  Phys. A} \textbf{\bibinfo{volume}{45}}, \bibinfo{pages}{025306}
  (\bibinfo{year}{2012}).

\bibitem[{\citenamefont{Bermejo-Vega et~al.}(2017)\citenamefont{Bermejo-Vega,
  Hangleiter, Schwarz, Raussendorf, and Eisert}}]{Eisert2017}
\bibinfo{author}{\bibfnamefont{J.}~\bibnamefont{Bermejo-Vega}},
  \bibinfo{author}{\bibfnamefont{D.}~\bibnamefont{Hangleiter}},
  \bibinfo{author}{\bibfnamefont{M.}~\bibnamefont{Schwarz}},
  \bibinfo{author}{\bibfnamefont{R.}~\bibnamefont{Raussendorf}},
  \bibnamefont{and} \bibinfo{author}{\bibfnamefont{J.}~\bibnamefont{Eisert}}
  (\bibinfo{year}{2017}), \bibinfo{note}{eprint arXiv:1703.00466}.

\bibitem[{rem({\natexlab{c}})}]{remarkExample}
\bibinfo{note}{For instance, in the scheme presented for the three spin-1/2
  AKLT target state in Fig. \ref{fig:FTSAKLT}, the cooling rate $\log_d
  r=\log_3 (4)$, as $\mathcal{W}$ maps four such copies of $\Sigma^0$
  (including itself) into $\Sigma^0.$ Each global unitary stabilizer map is
  decomposable into no more than $2\cdot 26^2$ neighborhood stabilizer
  unitaries, giving a circuit size upper bounded by $1+2\cdot 26^2+1+ 2\cdot
  26^2+1=2,707$.}

\bibitem[{\citenamefont{Wolf}(2012)}]{Wolf-lecture:12}
\bibinfo{author}{\bibfnamefont{M.~M.} \bibnamefont{Wolf}},
  \emph{\bibinfo{title}{\href{https://www-m5.ma.tum.de/foswiki/pub/M5/Allgemeines/MichaelWolf/QChannelLecture.pdf}{Quantum
  Channels \& Operations: Guided Tour}}} (\bibinfo{publisher}{Lecture notes},
  \bibinfo{year}{2012}).

\bibitem[{\citenamefont{Nishimori and Ortiz}(2011)}]{ortiz}
\bibinfo{author}{\bibfnamefont{H.}~\bibnamefont{Nishimori}} \bibnamefont{and}
  \bibinfo{author}{\bibfnamefont{G.}~\bibnamefont{Ortiz}},
  \emph{\bibinfo{title}{Elements of Phase Transitions and Critical Phenomena}}
  (\bibinfo{publisher}{Oxford University Press}, \bibinfo{year}{2011}).

\bibitem[{\citenamefont{Petz}(1988)}]{Petz1988}
\bibinfo{author}{\bibfnamefont{D.}~\bibnamefont{Petz}}, \bibinfo{journal}{Q. J.
  Math.} \textbf{\bibinfo{volume}{39}}, \bibinfo{pages}{97}
  (\bibinfo{year}{1988}).

\end{thebibliography}

\end{document}